\documentclass[10pt]{article}
\usepackage{graphicx} 

\title{Succinct Graph Representations and Algorithmic Applications}

\author{
Ahammed Ullah\textsuperscript{\dag} \quad
Alex Pothen\textsuperscript{\dag}
}
\date{}

\usepackage[margin=0.9in]{geometry}

\usepackage{float}
 
\usepackage{booktabs}
\usepackage{multirow}
\usepackage{multicol}
\usepackage{subcaption}
\usepackage{enumitem}

\usepackage[dvipsnames]{xcolor}
\usepackage[many]{tcolorbox}

\usepackage[framemethod=TikZ]{mdframed}

\usepackage{bm}
\usepackage{amsmath}
\usepackage{amsthm}
\usepackage{amsfonts}
\usepackage{amssymb}

\usepackage{newpxtext}

\usepackage[T1]{fontenc}
\usepackage[utf8]{inputenc}

\usepackage{thmtools,thm-restate}

\newtheorem*{theorem*}{Theorem}

\newtheorem{definition}{Definition}[section]
\newtheorem{proposition}[definition]{Proposition}
\newtheorem{lemma}[definition]{Lemma}
\newtheorem{theorem}[definition]{Theorem}
\newtheorem{corollary}[definition]{Corollary}

\newtheorem{remark}[definition]{Remark}

\usepackage{hyperref}

\hypersetup{
    colorlinks=true,
    linkcolor=Blue,
    filecolor=magenta,      
    urlcolor=Sepia,
    citecolor=Sepia,
    pdftitle={Succinct Graph Representations and Algorithmic Applications},
    pdfauthor={Ahammed Ullah},
    pdfpagemode=FullScreen,
    }

\usepackage{tikz}
\usetikzlibrary{arrows.meta, positioning}
\usetikzlibrary{shapes.multipart, fit}
\usetikzlibrary{calc}
\usetikzlibrary{patterns}
\usetikzlibrary{decorations.pathmorphing}

\tikzset{
  implies/.style={
    thin,
    double,
    double distance=2.0pt,
    -{Implies[angle'=60,length=8pt]}
  }
}

\usepackage{pifont} 
\usepackage{longtable}

\usepackage{pgfplots}
\usepgfplotslibrary{groupplots}

\pgfplotsset{compat=1.11,
    /pgfplots/ybar legend/.style={
    /pgfplots/legend image code/.code={%
       \draw[##1,/tikz/.cd,yshift=-0.25em]
        (0cm,0cm) rectangle (3pt,0.8em);},
   },
}

\newcommand{\hyref}[2]{\hyperref[#2]{#1~\ref*{#2}}}

\newcommand{\aref}[2]{\hyperref[#2]{#1}}

\newcommand{\fref}[1]{\hyref{Figure}{#1}}

\newcommand{\lemref}[1]{\hyref{Lemma}{#1}}

\newcommand{\A}{\mathcal{A}}

\newcommand{\C}{\mathcal{C}}

\newcommand{\D}{\mathcal{D}}

\newcommand{\F}{\mathcal{F}}

\newcommand{\K}{\mathcal{K}}

\newcommand{\Lcal}{\mathcal{L}}

\newcommand{\dcc}[1]{\operatorname{\mathcal{R}}_{#1}}

\newcommand{\M}{\mathcal{M}}

\newcommand{\inline}[1]{\ensuremath{#1}}

\newcommand{\brsq}[1]{\left[#1\right]}
\newcommand{\brcu}[1]{\left\{#1\right\}}
\newcommand{\br}[1]{\left(#1\right)}

\newcommand{\size}[1]{\operatorname{size}(#1)}

\DeclareMathOperator{\Dist}{dist}
\newcommand{\dist}[1]{\Dist(#1)}

\DeclareMathOperator{\Diam}{diam}
\newcommand{\diam}[1]{\Diam(#1)}

\DeclareMathOperator{\Radius}{rad}
\newcommand{\radius}[1]{\Radius(#1)}

\DeclareMathOperator{\Center}{center}
\newcommand{\cent}[1]{\Center(#1)}

\DeclareMathOperator{\MakeSet}{MakeSet}
\newcommand{\ufmakeset}[1]{\MakeSet(#1)}

\DeclareMathOperator{\Find}{Find}
\newcommand{\uffind}[1]{\Find(#1)}

\DeclareMathOperator{\Union}{Union}
\newcommand{\ufunion}[2]{\Union(#1 #2)}

\DeclareMathOperator{\Comp}{comp}
\newcommand{\comp}[1]{\Comp(#1)}

\DeclareMathOperator{\Color}{color}
\newcommand{\col}[1]{\Color\br{#1}}

\newcommand{\core}[1]{\kappa\br{#1}}

\DeclareMathOperator{\Degc}{deg}
\newcommand{\degc}[1]{\Degc_c\br{#1}}

\DeclareMathOperator{\Mark}{mark}
\newcommand{\flag}[1]{\Mark\br{#1}}

\DeclareMathOperator{\Seen}{seen}
\newcommand{\seen}[1]{\Seen\br{#1}}

\DeclareMathOperator{\Func}{f}
\newcommand{\func}[1]{\Func\br{#1}}

\DeclareMathOperator{\Tsize}{count}
\newcommand{\tsize}[1]{\Tsize\br{#1}}

\DeclareMathOperator{\Rsize}{rem}
\newcommand{\rsize}[1]{\Rsize\br{#1}}

\DeclareMathOperator*{\argmax}{arg\,max}

\newcommand{\parent}[1]{\pi\br{#1}}

\newcommand{\td}[1]{\delta\br{#1}}

\newcommand{\la}{\textsc{la}}
\newcommand{\las}{\la-succinct}

\newcommand{\card}[1]{\left|#1\right|}

\newcommand{\bigO}[1]{\mathcal{O}\br{#1}}
\newcommand{\bigOmega}[1]{\Omega\br{#1}}
\newcommand{\bigTheta}[1]{\Theta\br{#1}}

\newcommand{\smallO}[1]{o\br{#1}}

\let\emptyset\varnothing

\newcommand{\Erdos}{Erd\H{o}s}
\newcommand{\Renyi}{R\'{e}nyi}
\renewcommand{\epsilon}{\varepsilon}
\begin{document}

\maketitle
\renewcommand{\thefootnote}{\fnsymbol{footnote}}
\footnotetext[2]{Purdue University, West Lafayette, IN, USA}

\begin{abstract}
We propose new graph representations that exploit dense local structure to improve time and space simultaneously. Given an undirected graph \inline{G}, we define a \emph{dual clique cover (DCC)} representation of \inline{G} to be the pair \inline{\br{\C, \Lcal}}, where \inline{\C} is a collection of cliques that covers the edges of \inline{G} and \inline{\Lcal} is the incidence dual of \inline{\C}. We identify classes of polynomial-time constructible DCC representations that are compact and call them \emph{succinct} DCC representations. We then develop representation-aware algorithms for several fundamental graph problems.

We show that graph primitives such as connected components, breadth-first search forests, depth-first search forests, and maximal matchings can be computed in time proportional to the size of a DCC representation rather than the number of edges. 
Combined with our succinct DCC representations, these results give a class of algorithms that either match or improve the time and space bounds of their counterparts on standard graph representations. Furthermore, we design several algorithms for constructing succinct DCC representations and establish provable guarantees on their efficiency.

We evaluate several graph algorithms on DCC representations against adjacency-list-based implementations on a large collection of real-world and synthetic graphs. All evaluated applications show substantial execution memory savings and total-time speedups; for example, the connected components algorithm achieves about $9\times$ execution memory savings on average, with a maximum of $35\times$, and about $6.5\times$ total-time speedups on average, with a maximum of $35\times$. We also evaluate several DCC construction algorithms and find that the succinctness property plays a key role in making DCC representations effective for algorithmic applications.
\end{abstract}

\section{Introduction}

Time and space are two fundamental resources in graph algorithms, and improving them simultaneously is often difficult. 
A major reason is that graph representations govern both how much structure is stored explicitly and what kinds of access to that structure can be supported efficiently.
This work studies new graph representations and algorithms that exploit them to obtain joint improvements in time and space.

\begin{figure}[h]
    \centering
    \begin{tikzpicture}[scale=1.7,
  every node/.style={circle, draw, fill=white, inner sep=1pt}
]

  \def\n{7}
  \def\r{1.2}

  \foreach \i in {1,...,\n} {
    \node (A\i) at ({\r*cos(360/\n * \i)}, {\r*sin(360/\n * \i)}) {$a_{\i}$};
  }

  \foreach \i in {1,...,\n} {
    \node (B\i) at ({\r*cos(360/\n * \i) + 5}, {\r*sin(360/\n * \i)}) {$b_{\i}$};
  }

  \foreach \i in {1,...,\n} {
    \foreach \j in {1,...,\n} {
      \ifnum \i<\j
        \draw[teal] (A\i) -- (A\j);
      \fi
    }
  }

  \foreach \i in {1,...,\n} {
    \foreach \j in {1,...,\n} {
      \ifnum \i<\j
        \draw[teal] (B\i) -- (B\j);
      \fi
    }
  }

  \foreach \i in {1,...,\n} {
    \draw[blue, thick] (A\i) -- (B\i);
  }


\end{tikzpicture}
    \caption{Two cliques of order \inline{n} connected by a perfect matching (shown here for \inline{n=7}). A \emph{dual clique cover} (DCC) representation of this graph family requires only \inline{\bigTheta{n}} space, whereas standard representations such as adjacency lists require \inline{\bigTheta{n^2}} space.}
    \label{fig:match_cliques}
\end{figure}
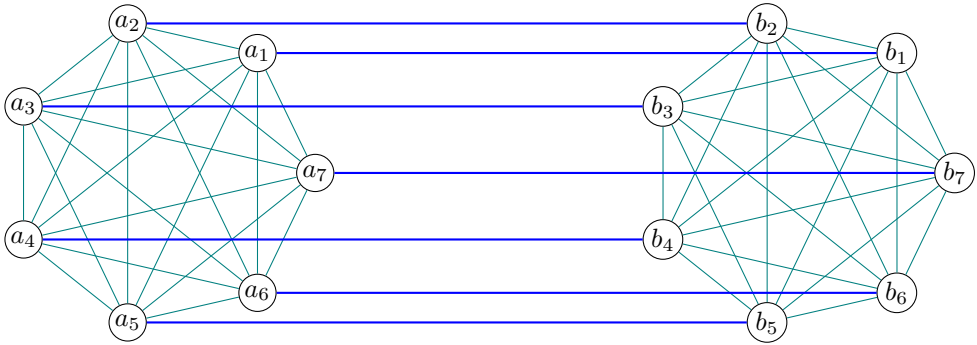

Standard graph representations such as adjacency lists store dense and sparse regions at the same edge-level granularity, and thus do not exploit redundancy in graphs with dense local structure. 
\fref{fig:match_cliques} illustrates this with a graph consisting of two cliques of order \inline{n} connected by a perfect matching (shown for \inline{n=7}). 
This family can be represented using $n+2$ cliques and \inline{\bigTheta{n}} space: two $n$-vertex cliques \inline{\brcu{a_1,\dots, a_n}} and \inline{\brcu{b_1,\dots,b_n}}, together with $n$ two-vertex cliques \inline{\brcu{a_i, b_i}} for \inline{i \in [n]}. In contrast, a standard representation such as adjacency lists requires \inline{\bigTheta{n^2}} space. This motivates graph representations that encode edge structure losslessly through collections of cliques.

A \emph{clique cover}\footnote{Throughout "clique cover" denotes an edge clique cover, while a \emph{vertex clique cover} denotes a collection of cliques that covers the vertices.} of a graph \inline{G=\br{V,E}} is a collection of cliques of \inline{G} such that every edge of \inline{G} is contained in at least one clique. A dual clique cover (DCC) representation of \inline{G} is a pair \inline{\dcc{G}=\br{\C, \Lcal}}, where \inline{\C:=\brcu{C_1, \dots, C_k}} is a clique cover of \inline{G} and \inline{\Lcal:=\brcu{L_v}_{v \in V}} is the incidence dual of \inline{\C}, that is, \inline{L_v :=\brcu{ \ell \in \brsq{k} \mid v \in C_\ell}}. 

Two immediate challenges arise in making DCC representations useful for storage and algorithmic applications. First,
arbitrary DCC representations can be exponentially larger than standard representations, and hence are not suitable as compact representations.
Second, algorithms that treat DCC only as a black-box representation may incur extra overhead and therefore may not match the running-time bounds of their counterparts on standard graph representations.

To address the first challenge, we identify classes of compact DCC representations that can be constructed in polynomial time. To address the second challenge, we adapt algorithms to make them representation-aware by exploiting the structure present in \inline{\C} and \inline{\Lcal}. Together, these yield a class of algorithms that, given a compact DCC representation, either match or improve the time and space bounds of their counterparts on standard graph representations.

We first summarize our results on algorithmic applications of DCC. 
The \emph{clique number} of \inline{G} is the size of a largest clique in \inline{G}.
For a family of sets \inline{\F}, we define \inline{\size{\F}:= \sum_{F_i \in \F} \card{F_i}}. The size of a DCC representation \inline{\dcc{G}=\br{\C, \Lcal}} is defined by \inline{\size{\dcc{G}}:=\size{\C}+ \size{\Lcal}}.

\begin{theorem}
\label{thm:apps}
Let \inline{\dcc{G}} be a DCC representation of a graph \inline{G=(V,E)} with \inline{n} vertices and clique number \inline{\omega}. Let \inline{\alpha\br{\cdot}} denote the inverse Ackermann function. Given \inline{\dcc{G}}, each of the following can be computed using \inline{\bigO{\size{\dcc{G}}}} space:
\begin{enumerate}[label=(\roman*)]
    \item A breadth-first search forest, a depth-first search forest, a maximal matching, and a maximal independent set, each in \inline{\bigO{\size{\dcc{G}}}} time.
    \item The connected components and a spanning forest via union--find, in \inline{\bigO{\alpha\br{n} \cdot \size{\dcc{G}}}} time.    
    \item The diameter, radius, and center, in \inline{\bigO{n \cdot \size{\dcc{G}}}} time.
    \item A proper vertex coloring, $k$-cores, and a maximal clique, each in \inline{\bigO{\omega \cdot \size{\dcc{G}}}} time.
\end{enumerate}    
\end{theorem}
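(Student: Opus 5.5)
The unifying idea is to replace each adjacency-list scan by a scan of the clique incidences of the current vertex. A second ingredient is a ``clique retirement'' rule, which makes each clique contribute work only once, or a bounded number of times, over the whole algorithm. Concretely, for a vertex $v$ the neighbourhood is $N(v)=\bigcup_{\ell\in L_v} C_\ell\setminus\{v\}$. Scanning this union naively costs $\sum_{\ell\in L_v}|C_\ell|$, which summed over $v$ gives $\sum_\ell |C_\ell|^2$ rather than $\size{\dcc{G}}$. The key observation is therefore that in most primitives, once a clique $C_\ell$ has been fully scanned, all of its vertices have been ``handled.'' Examples: in BFS they are all discovered; in union--find they are all in one set; in maximal matching or maximal independent set, at most one further useful event can occur inside $C_\ell$. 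So we keep a boolean $\mathrm{used}[\ell]$ and skip $C_\ell$ on every later encounter. Each clique is then scanned in full at most once, at cost $\size{\C}$ in total, and each entry of each $L_v$ is inspected $O(1)$ times, at cost $\size{\Lcal}$ in total. Space is the representation plus $O(n+k)$ arrays, and $O(n+k)=O(\size{\dcc{G}})$ after discarding empty cliques and handling isolated vertices.

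\textbf{Part (i).} For BFS, pop $u$, and for each unused $\ell\in L_u$ mark $\ell$ used and enqueue every undiscovered $w\in C_\ell$ with parent $u$. This is correct because $u$'s layer is the earliest at which any vertex of $C_\ell$ can be expanded, so all of $C_\ell$ lies within one layer of $u$. DFS is the delicate case, since we must not retire a clique before all its vertices are visited. Instead I would maintain a per-clique pointer $p_\ell$ into $C_\ell$ that advances past visited vertices, and a per-vertex pointer into $L_u$. When $u$ is on top of the stack, advance the pointer of the current clique $\ell\in L_u$ until an unvisited $w$ is found, then descend to $w$. Pointers only move forward, so the total work is $\size{\C}+\size{\Lcal}$. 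This gives a valid DFS tree because every unvisited neighbour of $u$ is found before $u$ is popped. For maximal matching, process vertices in order; for an unmatched $u$, scan unused cliques in $L_u$, again with advancing pointers, to find an unmatched partner $w\ne u$, then match them. Pointers skip matched vertices, so each position is passed once. Maximal independent set is similar: choosing $u$ kills every vertex in every $C_\ell$ with $\ell\in L_u$, and those cliques are then retired because none of their vertices can be chosen again.

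\textbf{Part (ii).} For each clique $C_\ell$, union its first vertex with each other vertex. This uses $\size{\C}$ union/find operations and so runs in $O(\alpha(n)\size{\dcc{G}})$ time. Record the successful unions as the spanning forest edges.

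\textbf{Part (iii).} Run the BFS from (i) from each of the $n$ sources, obtaining all eccentricities in $O(n\,\size{\dcc{G}})$ time. Diameter, radius and center follow directly, and only $O(n)$ extra space is needed beyond the representation.

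\textbf{Part (iv).} Here every clique has size at most $\omega$. That makes naive neighbourhood scanning cost $\sum_\ell|C_\ell|^2\le\omega\,\size{\C}$, which is exactly the allowed budget. Greedy colouring scans $N(v)$ in this way and marks forbidden colours in a scratch array of size at most $\deg(v)+1$, then clears it. For $k$-cores, first compute the degrees: since the cover may count an edge several times, deduplicate with a timestamp array during each scan, which costs $O(\omega\,\size{\C})$. Then run the standard bucket peeling, where deleting $v$ decrements each distinct neighbour once, again using a timestamp scan. A maximal clique is built greedily: start from some $C_\ell$, maintain the candidate set $P=\bigcap$ of neighbourhoods, and extend while $P\neq\emptyset$. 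Each extension intersects with $N(x)$ at $O(\omega\cdot|L_x|)$ cost, and at most $\omega$ extensions occur. I expect the main obstacle to be the DFS bound in (i) and the duplicate-edge handling in the $k$-core bound. The pointer-advancing argument and the timestamp deduplication are where I would check the amortisation most carefully.
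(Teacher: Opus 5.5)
Your proposal is correct and follows essentially the same approach as the paper. BFS, DFS, and MIS touch each clique at most once, either through a scanned flag or a monotone per-clique cursor, so they run in time linear in $\size{\C} + \size{\Lcal}$. Connected components perform one union per clique member, and diameter, radius, and center come from $n$ BFS runs. Coloring, $k$-cores, and maximal clique pay the $\sum_\ell \card{C_\ell}^2 \le \omega \cdot \size{\C}$ scanning cost, using timestamp deduplication where distinct neighbours matter.

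The one real deviation is maximal matching. Your vertex-driven pointer scan over $\Lcal$ is valid, but the paper simply pairs up the currently unmatched vertices of each clique in turn. That version needs only $\C$, not the incidence dual, which the paper later exploits for memory savings.
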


Using our compact DCC representations, all algorithms in \hyref{Theorem}{thm:apps} either match or improve the space bounds of their counterparts on standard graph representations.
The running-time bounds in (i), (ii), and (iii) mirror the standard edge-based bound with the edge term replaced by \inline{\size{\dcc{G}}}, and therefore either match or improve them. The running-time bounds in (iv) match standard edge-based bounds whenever \inline{\size{\dcc{G}} = \bigO{m/\omega}}. Despite the \inline{\omega}-factor overhead in running time, the algorithms in (iv) are still useful in practice, since the time saved in reading a compact representation from disk can outweigh this overhead in total execution time.

We now summarize our results on the design and construction of DCC representations \inline{\dcc{G}=\br{\C, \Lcal}}. Since \inline{\Lcal} is the incidence dual of \inline{\C}, \inline{\size{\Lcal}= \size{\C}} and \inline{\size{\dcc{G}}=2\size{\C}}. Thus, designing compact DCC representations reduces to identifying clique covers with small size. Two natural approaches are to minimize the number of cliques, \inline{\card{\C}}, or the total number of vertex-clique assignments, \inline{\size{\C}}. Both objectives are NP-hard \cite{kou1978covering, orlin1977contentment, ullah2022computing}, and strong approximation hardness is known for one but remains unresolved for the other.

We therefore identify a class of clique covers that we call \emph{succinct clique covers}. 
These covers satisfy a non-edge-witness property, meaning that every pair of distinct cliques witnesses a non-edge between them, together with the polynomial size bound \inline{\size{\C} = \bigO{m}}, where \inline{m} is the number of edges. 
We design and implement several efficient algorithms for constructing such covers. 
Some of these constructions additionally satisfy stronger bounds in terms of the number \inline{\sigma-1} of non-edges.
The following theorem highlights two representative bounds from our results on representation design and construction.

\begin{theorem}
\label{thm:sp}
Let \inline{G=(V,E)} be a graph with \inline{n} vertices, \inline{m} edges, clique number \inline{\omega}, and degeneracy \inline{d}, that is, the maximum minimum degree over all subgraphs of \inline{G}. Define \inline{\sigma := \binom{n}{2}-m+1}. Then \inline{G} admits a succinct DCC representation \inline{\dcc{G}} such that:
\begin{enumerate}[label=(\roman*)]
    \item \inline{\size{\dcc{G}}=\bigO{\min\brcu{m, \omega \sigma}}}.
    \item \inline{\dcc{G}} can be computed in \inline{\bigO{d^2 \cdot \min\brcu{m, \sigma}}} time using \inline{\bigO{m}} space.     
\end{enumerate}    
\end{theorem}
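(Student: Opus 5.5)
The plan is to build the cover from a degeneracy ordering and then merge cliques until the non-edge-witness property holds, with one potential argument giving both terms of (i).

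\textbf{Step 1: degeneracy-based cover.} Compute a degeneracy ordering $v_1,\dots,v_n$ in $\bigO{m}$ time, so each vertex has at most $d$ later neighbors $N^+(v_i)$. Process vertices in reverse order and cover every edge $v_iv_j$ ($i<j$) by a clique containing $v_i$ and a subset of $N^+(v_i)\cup\{v_i\}$. For each $v_i$, partition the edges from $v_i$ to $N^+(v_i)$ greedily into cliques of $G[N^+(v_i)]$ extended by $v_i$. Adjacency tests inside $N^+(v_i)$ cost $\bigO{d^2}$ per vertex using hashed or marked adjacency. Every clique is generated from distinct edges at its lowest vertex, so $\size{\C}\le \sum_i(\card{N^+(v_i)}+\text{\#cliques at } v_i)\le 2m$. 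This gives $\size{\C}=\bigO{m}$ in $\bigO{nd^2}=\bigO{d\,m}$ time, which is within $\bigO{d^2 m}$.

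\textbf{Step 2: enforce the non-edge-witness property.} Repeatedly merge any two cliques $C_a,C_b$ whose union is a clique, i.e.\ no non-edge lies between $C_a\setminus C_b$ and $C_b\setminus C_a$. Merging never increases $\size{\C}$, so the $\bigO{m}$ bound survives. When no merge is possible, every pair of cliques witnesses a non-edge, so the cover is succinct.

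\textbf{Step 3: the $\omega\sigma$ term.} Charge each clique to the non-edges witnessed with the other cliques. Fix a non-edge $xy$. Any two cliques meeting in $\{x\}$ with a common witness in $y$ can be charged, and I aim to show that every clique other than one exceptional clique is assigned injectively to a non-edge incident to one of its vertices. Concretely, if $C$ contains $x$ and $\{y\}\cup C\setminus\{x\}$ is ``almost'' a clique, $C$ is charged to $xy$. I would rather go through the complement: a maximal family of mutually non-mergeable cliques indexed by the lowest-ordered endpoint. This should give $\card{\C}\le \sigma$. Since each clique has size at most $\omega$, it follows that $\size{\C}\le\omega\sigma$. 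Together with Step 1 this gives (i).

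\textbf{Step 4: running time.} To stay within $\bigO{d^2\min\{m,\sigma\}}$, merges are only attempted between cliques that share a lowest vertex or are locally adjacent in the ordering, so candidate pairs stay within the $d$-bounded neighborhoods. Each merge test costs $\bigO{d^2}$ and removes a clique. The number of tests is bounded by the number of cliques, which is $\le\min\{m,\sigma\}$ by Steps 1 and 3. Space is $\bigO{m}$, since the cover size is $\bigO{m}$. A subtle point is that restricting merges locally must still guarantee the witness property globally. I would argue that two cliques whose union is a clique share the lowest vertex of that union, or that one contains it, so they are examined in the same local bucket.

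\textbf{Main obstacle.} The main obstacle is the counting $\card{\C}\le\sigma$ from the witness property alone. Pairwise witnessing by itself allows $\binom{\card{\C}}{2}$ pairs to reuse few non-edges. The argument therefore needs the structural ``lowest-vertex'' organization of the construction, so that each clique gets a private non-edge plus one extra, and I expect the careful injection there to be the hard step.
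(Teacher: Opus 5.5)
Your proposal has a genuine gap at exactly the point you flag, and it is the heart of the theorem. Both the $\omega\sigma$ term in (i) and the $\sigma$ term in (ii) rest on $\card{\C}=\bigO{\sigma}$, and Step~3 only states this as a goal. A degeneracy-based cover followed by merging gives no private non-edge per clique. The first-fit witnesses inside $N^+(v_i)$ are non-edges that can lie in $N^+(v_i)$ for many $i$ at once. Succinctness by itself (pairwise witnessing plus $\size{\C}=\bigO{m}$) is also not known to force $\card{\C}=\bigO{\sigma}$; the paper explicitly leaves this open even for its \las\ covers.

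The missing idea is to build the cover around a first-fit greedy coloring $F_1,\dots,F_p$ of $\overline{G}$. Each class is a clique of $G$, and every $v\in F_k$ has a non-neighbor in each $F_\ell$ with $\ell<k$. Start from the non-singleton classes. Then create at most one new clique per pair $(v,\ell)$ with $\ell<k$ and $F_\ell\cap N(v)\ne\emptyset$, seeded by the still-uncovered edges from $v$ into $F_\ell$. A non-edge $\brcu{u,v}$ with $u\in F_\ell$, $v\in F_k$, $\ell<k$ determines the pair $(v,\ell)$, so the new cliques are charged injectively to non-edges. This gives $\sum_v\card{T_v}\le\sum_k(k-1)\card{F_k}\le\sigma-1$, while the $q$ non-singleton classes satisfy $q(q-1)\le\sigma-1$. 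Composition-minimality then comes not from post-hoc merging but from extending every clique $C$ maximally by endpoints of uncovered edges $\brcu{x,y}$ with $C\subseteq N[x]\cap N[y]$. The bound $\size{\C}\le 2m$ follows by charging each vertex--clique membership to an incident edge that is first covered at that moment.

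Your running-time argument also fails. In the worst case Step~1 performs $\Theta(\sum_i\card{N^+(v_i)}^2)$ adjacency tests. This is at most $dm$, though your $\bigO{nd^2}=\bigO{dm}$ is not valid in general. On $K_n$ minus one edge it is $\Theta(n^3)$, whereas $\sigma=2$, $d=n-2$, and the target is $\bigO{d^2\sigma}=\bigO{n^2}$. So in the dense regime you cannot afford per-vertex quadratic work. The paper's construction creates only $\bigO{\sigma}$ cliques, spends $\bigO{n^2}$ on each plus $\bigO{m}$ preprocessing, and uses that $\sigma<m$ forces $d>(n-1)/4$. In the other regime it charges $\bigO{d\,\card{N[u_k]\cap N[v_k]}}$ to each clique via pairwise distinct pivot edges and sums with \lemref{lemma:edgesum} to get $\bigO{d^2m}$.

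Your merge phase is unjustified in either regime, for two reasons:
\begin{itemize}
\item The bucket rule misses pairs. If $C_a\cup C_b$ is a clique whose lowest vertex $w$ lies in $C_a\setminus C_b$, then $C_b\subseteq N^+(w)$ was created at a different, later vertex. The two are not in a common bucket, so local testing does not certify the witness property globally.
\item The cost bound is incomplete. Bounding the number of tests by the number of cliques counts only successful merges, not failed tests or the retests that every merged clique triggers. The bound also presupposes the unproven Step~3.
\end{itemize}
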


When \inline{\sigma = \smallO{n}}, \hyref{Theorem}{thm:sp}(i) yields DCC representations that are provably sublinear in the number of edges, that is, \inline{\size{\dcc{G}} = \smallO{m}}.
The compactness of succinct DCC representations comes from combining the non-edge-witness property with the size bound \inline{\size{\C} = \bigO{m}}. Naively implementing constructions that maintain these properties can incur quadratic or worse running time, but our efficient implementations improve this substantially. This improvement is reflected in \hyref{Theorem}{thm:sp}(ii), where the multiplicative factor \inline{m} in a naive implementation is reduced to \inline{d^2}.

We evaluate both DCC construction algorithms and the algorithmic applications built on top of them on a large collection of real-world and synthetic graphs. 
Our evaluation measures storage compression, execution memory savings, and total-time savings. 
For storage, we keep only the clique cover on disk and construct the incidence dual on demand. Accordingly, we define the \emph{storage compression ratio} as the ratio between the storage required by an adjacency-list representation and that required by the stored clique cover.
For algorithmic applications, we define the \emph{execution memory ratio} as the ratio between the representation-specific memory used by the adjacency-list representation and that used by the corresponding DCC representation. 
We measure \emph{total-time} as the sum of \emph{read-time} and \emph{compute-time}, where read-time is the time to load the input representation from disk and compute-time is the time spent executing the algorithm.

We evaluate six DCC applications: connected components, breadth-first search, depth-first search, maximal matching, first-fit coloring, and $k$-core decomposition. Across our datasets, the geometric mean and maximum storage compression ratios are 9.36 and 37.12, respectively. Connected components and maximal matching do not require the incidence dual, so their execution memory ratios remain close to the corresponding storage compression ratios. All applications show substantial execution memory savings and total-time speedups; for example, connected components and maximal matching achieve about $9\times$ execution memory savings on average and up to $35\times$, together with about $6.5\times$ total-time speedups on average and up to $35\times$. The first four applications also show consistent improvements in both compute-time and total-time. The last two incur higher compute-time, due to the \inline{\omega}-factor reflected in \hyref{Theorem}{thm:apps}, but still improve total-time because the reduction in read-time more than compensates for that overhead.

In addition, we compare the performance of DCC applications against WebGraph \cite{boldi2004webgraph}, a queryable graph compression framework. The results show substantial speedups with competitive compression ratios. In particular, for connected components and maximal matching on sparse graphs, DCC achieves about $11\times$ geometric-mean speedup, with a maximum of about $28\times$, while using about $2\times$ less memory on average.

We also evaluate algorithms for constructing succinct DCC representations and compare them against algorithms that do not enforce succinctness. This evaluation shows that although enforcing succinctness incurs higher compute-time (\hyref{Theorem}{thm:sp}), it achieves, on average, a compression ratio more than four times that of a linear-time construction. Hence the space and time savings achieved by our algorithmic applications are largely due to the higher compression ratios of succinct DCC representations.

\paragraph{Organization.} \hyref{Section}{sec:prelim} introduces the necessary background. \hyref{Section}{sec:design} presents the design and construction of succinct DCC representations stated in \hyref{Theorem}{thm:sp}. In \hyref{Section}{sec:dcc_app}, we describe algorithmic applications of DCC and prove the results stated in \hyref{Theorem}{thm:apps}. \hyref{Section}{sec:dcc_admis} presents additional DCC construction algorithms with stronger practical performance. \hyref{Section}{sec:evals} summarizes our evaluation results. We conclude in \hyref{Section}{sec:conc} with a discussion of future research directions.
\section{Preliminaries}
\label{sec:prelim}
 We assume that all graphs are simple, undirected, and have no isolated vertices. For a graph \inline{G=(V,E)}, let \inline{n:=|V|} denote the number of vertices and \inline{m:=|E|} denote the number of edges. For two distinct vertices \inline{u, v \in V}, we write \inline{\brcu{u,v}} for their unordered pair; when \inline{\brcu{u,v} \in E}, we call it an edge, and when \inline{\brcu{u,v} \notin E}, we call it a non-edge. 

For a vertex \inline{v\in V}, let \inline{N(v)} denote the (open) neighborhood of \inline{v} in \inline{G} and let \inline{N[v] := N(v) \cup \brcu{v}} denote its closed neighborhood. Let \inline{\Delta := \max_{v \in V} \card{N(v)}} and \inline{\delta := \min_{v \in V} \card{N(v)}} denote the maximum and minimum degree of \inline{G}, respectively.

We call \inline{H=(V_H,E_H)} a \emph{subgraph} of \inline{G} if \inline{V_H \subseteq V} and \inline{E_H \subseteq E} consists only of edges whose endpoints both lie in \inline{V_H}. We call \inline{H} an \emph{induced subgraph} of \inline{G} if \inline{E_H} consists of all edges in \inline{E} with both endpoints in \inline{V_H}. In this case, we say that \inline{V_H} induces the subgraph \inline{H} in \inline{G}, and we denote it by \inline{G\brsq{V_H}}.

\paragraph{Clique Number and Clique Distance.}
The clique number of a graph \inline{G} is \[\omega := \max\brcu{\card{S} \mid S \subseteq V, G[S] \text{ is a clique}}.\]
The \emph{clique distance} of $G$ is \inline{\sigma:=\binom{n}{2}- m + 1}, that is, one plus the number of non-edges of $G$. Since \inline{G} has exactly \inline{\sigma -1} non-edges, choosing one endpoint from each non-edge gives a set \inline{S} with \inline{\card{S} \leq \sigma -1} such that \inline{V\setminus S} is a clique. Hence \inline{\omega \geq n - \sigma + 1}. When \inline{\sigma = \smallO{n}}, the number of edges in \inline{G} satisfies \inline{m = \binom{n}{2} - \smallO{n} = \bigTheta{n^2}}.

\paragraph{Degeneracy.} Let $\delta_H$ denote the minimum degree of a subgraph $H \subseteq G$. Then the \emph{degeneracy} of \inline{G} is \inline{d:=\max_{H \subseteq G} \delta_H}.
A \emph{degeneracy ordering} of \inline{G} is an ordering of its vertices such that each vertex \inline{v} has at most \inline{d} neighbors that appear before \inline{v}.\footnote{Such an ordering can be computed in linear time by repeatedly removing a vertex of minimum degree and listing the vertices in reverse order of removal.} 
From these definitions, it follows that \inline{d \leq \Delta \leq n-1} and \inline{m \leq d\br{n-1}}. We also have \inline{d^2=\bigO{m}}, since \inline{G} has a subgraph \inline{H=\br{V_H, E_H}} with \inline{\delta_H = d}, so \inline{\card{V_H} \geq d+1} and \inline{\card{E_H} \geq \binom{d+1}{2}}. Furthermore, no clique in \inline{G} can contain more than \inline{d+1} vertices, so \inline{\omega \leq d+1}, since otherwise $G$ would have a subgraph $H$ with \inline{\delta_H > d}.

We use \inline{[n]} to denote the set \inline{\brcu{1, \dots, n}} for any positive integer \inline{n}. We assume that a set of \inline{n} elements supports membership tests, insertions, and deletions in \inline{\bigO{\log n}} worst-case time. For clarity, we suppress these polylogarithmic factors in the relevant time bounds.
\section{Representation Design and Construction}
\label{sec:design}

In this section, we formalize our graph representations and the combinatorial objects underlying them, namely, clique covers of graphs. Among several classes of these objects, we identify a few that are useful for compact graph representations.

\subsection{Minimality Landscape}
\label{subsec:landscape}
Recall that a \emph{clique cover} of a graph is a collection of cliques such that every edge is contained in at least one of them. In this subsection, we describe a design landscape for our graph representations, based on several minimality notions of clique covers.

\begin{figure}[H]
    \centering
    \begin{tikzpicture}[
  node distance=1.0cm and 1.0cm,  
  np/.style={
    draw=red!60!black, thin, fill=red!10,
    rounded corners, minimum width=2.8cm, align=center,
    label={[red!80!black]left:{\textcolor{Sepia}{\Large{\ding{74}}}}}
  },
  poly/.style={
    draw=blue!60!black, thin, fill=blue!10,
    rounded corners, minimum width=2.8cm, align=center,
    label={[blue!80!black]left:{\textcolor{Blue}{\Large{\ding{51}}}}}
  }
]

\node[np] (assignopt) {Assignment-optimal};
\node[poly] (assign) [below=of assignopt] {Assignment-minimal};
\node[np] (card) [right=2cm of assignopt] {Cardinality-optimal};
\node[poly] (support) [below=of assign] {Support-minimal};
\node[poly] (composition) [below=of card] {Composition-minimal};
\node[poly] (inclusion) [below=2.0cm of composition] {Inclusion-minimal};

\draw[implies] (assignopt) -- (assign);
\draw[implies] (assign) -- (support);
\draw[implies] (card) -- (support);
\draw[implies] (card) -- (composition);
\draw[implies] (support) -- (inclusion);
\draw[implies] (composition) -- (inclusion);

\end{tikzpicture}
    \caption{Partial order of minimality properties of clique covers of a graph under logical implication. Properties marked with \ding{51} are achievable in polynomial time, and those marked with \ding{74} are NP-hard to optimize.}
    \label{fig:partial_order}
\end{figure}
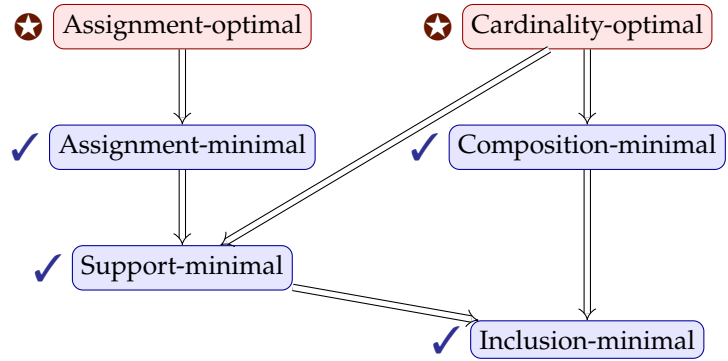

\begin{definition}[Inclusion-minimal]
\label{def:cc_inclusion}
A clique cover $\C$ is \emph{inclusion-minimal} if no clique in the cover is contained in another. That is, for each $C_i \in \C$ there does not exist $C_{j} \in \C$ with $j \ne i$ such that $C_i \subseteq C_j$.
\end{definition}

\begin{definition}[Support-minimal]
\label{def:cc_support}
A clique cover $\C$ is \emph{support-minimal} if every clique covers at least one edge that no other clique covers. That is, for each $C_i \in \C$, there exists an edge $\brcu{u,v} \subseteq C_i$ such that for all $C_{j} \in \C$ with $j \ne i$, the edge $\brcu{u,v}$ is not contained in $C_j$.
\end{definition}

\begin{definition}[Composition-minimal]
\label{def:cc_compose}
A clique cover $\C$ is \emph{composition-minimal} if it does not contain any subset of two or more cliques whose union induces a clique in the graph. Equivalently, for each pair of distinct cliques \inline{C_i, C_j \in \C}, there exists a non-edge \inline{\brcu{u,v}} with \inline{u \in C_i} and \inline{v \in C_j}.
\end{definition}

\begin{definition}[Cardinality-optimal]
\label{def:cc_card}
A clique cover \inline{\C} is \emph{cardinality-optimal} if it has the minimum possible number of cliques among all clique covers of the graph; that is, if \inline{\C^\prime} is any clique cover of the graph, then \inline{\card{\C} \leq \card{\C^\prime}}.
\end{definition}

\begin{definition}[Assignment-minimal]
\label{def:cc_amin}
A clique cover \inline{\C} is \emph{assignment-minimal} if no vertex can be removed from any of its cliques without leaving some edge uncovered. That is, for all $C_i \in \C$ and all $v\in C_i$, replacing $C_i$ with $C_i\setminus \{v\}$ in $\C$ yields a family that does not cover all edges.
\end{definition}

\begin{definition}[Assignment-optimal]
\label{def:cc_aopt}
A clique cover \inline{\C} is \emph{assignment-optimal} if the total number of vertex assignments across all cliques is minimum among all clique covers of the graph; equivalently, \inline{\size{\C}} is minimum over all clique covers.
\end{definition}

We organize these notions into a partial order under logical implication, summarized in \fref{fig:partial_order}. The proofs of the implications shown in the figure, as well as counterexamples to  their converses, are included in \hyref{Appendix}{subsec:landscape_dd}. In the next subsection, we use this partial order to identify classes of covers that will serve as the basis for our representations.

\subsection{Succinct Clique Covers}
\label{subsec:cc}

To use clique covers as the basis for compact graph representations, we need these combinatorial objects to have small size. The partial order in \fref{fig:partial_order} highlights two major ways to achieve this: (1) optimize the total number of cliques, or (2) optimize the total number of vertex-clique assignments. The strongest objectives under either approach are \textsf{NP}-hard \cite{kou1978covering, orlin1977contentment, ullah2022computing}, and the cardinality-optimal objective is inapproximable within a factor of \inline{n^\epsilon} for some \inline{\epsilon > 0} \cite{lund1994hardness}, while constant-factor approximability of the assignment-optimal objective remains unresolved (see \hyref{Appendix}{subsect:approx_aopt} for discussion). This leaves us with four minimality properties achievable in polynomial time.

Among these four, all except composition-minimality can be achieved by taking every edge as a distinct clique in the cover. Therefore, assignment-minimality (or any of its superclasses) alone is not sufficient to obtain compact clique covers. This leads us to the composition-minimal objective. Unfortunately, this class is too broad to be directly useful, as shown by the following result (\hyref{Appendix}{subsec:deferred_pfs} contains a proof).

\begin{restatable}{lemma}{LemmaCmExp}
\label{lemma_cm_exp}
There exists a family of graphs with $n$ vertices that admits two composition-minimal clique covers, one with \inline{2^{\bigTheta{n}}} cliques and another with \inline{\bigTheta{\log n}} cliques. 
\end{restatable}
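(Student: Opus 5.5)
We take the family of \emph{cocktail party graphs}: for $k \geq 2$, let $G_k$ have vertices $a_1,\dots,a_k,b_1,\dots,b_k$ (so $n=2k$), with every pair adjacent except the $k$ pairs $\brcu{a_i,b_i}$. The cliques of $G_k$ are exactly the subsets containing at most one vertex from each pair $\brcu{a_i,b_i}$. The maximal cliques are the $2^k$ \emph{transversals}, that is, the sets choosing exactly one vertex from each pair.

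The proof rests on one observation, which gives composition-minimality for any collection of distinct transversals. If $T \neq T'$ are transversals, they differ at some index $i$. Then one contains $a_i$ and the other contains $b_i$, and $\brcu{a_i,b_i}$ is a non-edge witnessing the condition of \hyref{Definition}{def:cc_compose}.

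\emph{Large cover.} Let $\C_{\mathrm{big}}$ be the set of all $2^k$ transversals. Every edge $\brcu{x_i,y_{i'}}$ with $i \neq i'$ and $x,y \in \brcu{a,b}$ is contained in some transversal, since we may fix the choices at $i$ and $i'$ and choose the rest arbitrarily. Hence $\C_{\mathrm{big}}$ is a clique cover. By the observation it is composition-minimal, and it has $2^{n/2} = 2^{\bigTheta{n}}$ cliques.

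\emph{Small cover.} Let $L = \lceil \log_2 k\rceil + 1$. Assign each index $i$ an $L$-bit code $c(i)$ whose first $L-1$ bits are the binary expansion of $i-1$ and whose last bit is $0$. For each position $j \in [L]$ and each bit value $\beta \in \brcu{0,1}$, define the transversal
\[
C_{j,\beta} := \brcu{a_i : c(i)_j = \beta} \cup \brcu{b_i : c(i)_j \neq \beta}.
\]
We take $\C_{\mathrm{small}}$ to be the set of these cliques after discarding duplicates, so $\card{\C_{\mathrm{small}}} \leq 2L = \bigO{\log n}$.

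To check coverage, take an edge $\brcu{x_i,y_{i'}}$ with $i\neq i'$ and set $t := [x=b] \oplus [y=b]$. The clique $C_{j,\beta}$ picks $x$ at index $i$ and $y$ at index $i'$ exactly when $\beta = c(i)_j \oplus [x=b]$ and $c(i)_j \oplus c(i')_j = t$. If $t=1$, we need a position where the two codes differ; one exists because $i \neq i'$ gives distinct binary expansions. If $t=0$, we need a position where the codes agree; the constant last bit supplies one. So $\C_{\mathrm{small}}$ is a clique cover. Since its members are distinct transversals, the observation shows it is composition-minimal.

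\emph{Lower bound on the small cover.} To get $\bigTheta{\log n}$ rather than only $\bigO{\log n}$, we show every clique cover $\C$ of $G_k$ has at least $\log_2 k$ cliques. For $i \neq i'$, the edge $\brcu{a_i,b_{i'}}$ lies in some clique of $\C$. That clique contains $a_i$ but cannot contain $a_{i'}$, because it already contains $b_{i'}$ and $\brcu{a_{i'},b_{i'}}$ is a non-edge. So the membership vectors of $a_1,\dots,a_k$ over the cliques of $\C$ are pairwise distinct, which forces $2^{\card{\C}} \geq k$ and hence $\card{\C} \geq \log_2 k = \bigOmega{\log n}$.

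The only step needing care is the coverage argument for the small cover, specifically the case of edges between two $a$'s or two $b$'s, which need a position where the codes agree. The appended constant bit is the device that handles it; the remaining steps are routine.
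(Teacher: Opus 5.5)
Your proof is correct and follows essentially the paper's route. It uses the same graph ($K_{2k}$ minus a perfect matching), the same large cover by all $2^k$ transversals, and the same composition-minimality witness, namely a non-edge $\{a_i,b_i\}$ at a differing index. Your small cover is also the paper's $\{U,V\}\cup\{A_p,B_p\}$ in disguise, with the constant last bit producing $U$ and $V$. The only differences are minor: you allow any $k\ge 2$ rather than $k=2^t$, and you obtain the $\Omega(\log n)$ side from a lower bound valid for every clique cover of $G_k$ instead of checking that the $2+2t$ cliques are pairwise distinct. That lower bound also shows, incidentally, that your small cover is within a constant factor of cardinality-optimal.
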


We now define subclasses of composition-minimal clique covers that admit polynomial upper bounds on the cover size.

\begin{definition}[Succinct Clique Cover]
\label{def:scc}
For a graph $G=(V,E)$ with \inline{m} edges, a clique cover \inline{\C} of \inline{G} is called a \emph{succinct clique cover} if (1) \inline{\C} is composition-minimal, and (2) \inline{\size{\C} = \bigO{m}}.
\end{definition}

\begin{definition}[\(\sigma\)-Succinct Clique Cover]
\label{def:sscc}
For a graph $G=(V,E)$ with clique distance \(\sigma\), a succinct clique cover \inline{\C} of \inline{G} is called \(\sigma\)-\emph{succinct} if \inline{\card{\C} = \bigO{\sigma}}.
\end{definition}

The requirement \inline{\size{\C} = \bigO{m}} is strictly stronger than requiring only \inline{\card{\C} = \bigO{m}}. The following lemma separates these two requirements (proof deferred to \hyref{Appendix}{subsec:deferred_pfs}).

\begin{restatable}{lemma}{LemmaScSize}
\label{lemma:sc_size}
There exists a family of graphs with \(n\) vertices and \(m\) edges that admits composition-minimal clique covers \inline{\C} with \inline{\card{\C} = \bigTheta{m}}, but \inline{\size{\C} = \bigTheta{nm}}.
\end{restatable}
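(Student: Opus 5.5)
We need a family of graphs with a composition-minimal cover having $\Theta(m)$ cliques but total size $\Theta(nm)$. This forces many cliques of size $\Theta(n)$, so the graph must be dense, with $m=\Theta(n^2)$, and the cover must contain $\Theta(n^2)$ cliques each of linear size. The key constraint is composition-minimality: every pair of distinct cliques must be separated by a non-edge. Each clique must also be a clique of $G$.

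The plan is to take $G$ to be a complete multipartite graph, or a disjoint-pairs-removed complete graph, with many maximal cliques. Concretely, let $G$ be $K_n$ minus a perfect matching on $2k$ vertices (the cocktail-party graph $K_{k\times 2}$) plus a universal clique $U$ of size $n-2k$. Choose $k=\Theta(\log n)$, or better choose parameters as follows. A maximal clique picks one vertex from each non-edge pair $\{a_i,b_i\}$, together with all of $U$. Any two distinct such cliques differ on some pair $i$, with one containing $a_i$ and the other $b_i$, so they witness the non-edge $\{a_i,b_i\}$. Hence any family of distinct maximal cliques is composition-minimal. To be a cover, the family must cover every edge $\{a_i,a_j\}$, $\{a_i,b_j\}$, etc. Since the pairs are few, this is achieved by choosing the selection vectors to realize all four patterns on every pair of coordinates; a small "pairwise covering" family of binary strings of size $O(\log k)$ suffices, and we can always add more distinct vectors.

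Set $|U|=\Theta(n)$ and $k$ with $2^k\ge n^2$, so $k=\Theta(\log n)$. Then $m=\binom{n}{2}-k=\Theta(n^2)$. Take $\Theta(m)=\Theta(n^2)$ distinct binary vectors, including a pairwise-covering subfamily, and form the corresponding maximal cliques. Each clique has size $|U|+k=\Theta(n)$. Coverage holds: edges inside $U$ and between $U$ and the pairs are covered by any clique containing the relevant vertex, and edges between pairs are handled by the covering subfamily. So $|\C|=\Theta(m)$ and $\size{\C}=\Theta(n)\cdot\Theta(m)=\Theta(nm)$.

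The main obstacle is only checking the pairwise-covering existence and the counting $2^k\ge cn^2$. The covering subfamily can simply be all vectors of weight at most 2 together with their complements, since $O(k^2)\ll n^2$ of them realize all patterns on each coordinate pair. The rest is routine.
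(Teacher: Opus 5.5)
Your proof is correct, but it uses a different family from the paper's.

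\textbf{The paper's family.} The paper takes two copies of $K_{n,n}$, on sides $A,B$ and $C,D$. It adds the matchings $\{a_i,c_i\}$ and $\{b_i,d_i\}$, and a clique $U$ of size $n$ complete to $A\cup B\cup C\cup D$. The cover consists of the $2n^2$ cliques $\{a_i,b_j\}\cup U$ and $\{c_i,d_j\}\cup U$, together with the $2n$ matching edges. The separating non-edges lie inside the independent sides $A,B,C,D$ and between the two bipartite blocks. Each large clique is indexed by a single edge $\{a_i,b_j\}$ or $\{c_i,d_j\}$, so coverage is immediate and no covering-family argument is needed.

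\textbf{Your family.} You use the cocktail-party construction from the proof of Lemma~\ref{lemma_cm_exp}, padded with a universal clique of linear size, with $k=\Theta(\log n)$ chosen so that $2^k\ge n^2$. Composition-minimality is then automatic for any set of distinct selection vectors. Coverage only needs $0^k$, $1^k$ and the unit vectors. Your weight-at-most-two family is more than enough, so the unproved remark about $O(\log k)$ covering families can be dropped.

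\textbf{What each route buys.} Yours gives a sharper contrast on a single graph. Your graph has clique distance $\sigma=k+1=\Theta(\log n)$. So it also admits $\sigma$-succinct composition-minimal covers of size $O(n\log n)$, for example the $k+2$ cliques above. Composition-minimality alone can therefore be off by a factor of about $n^2/\log n$ in size on the same graph. The paper's family, on the other hand, is reused elsewhere: in Lemma~\ref{lemma:admissible_space_lb}, in the remark after Definition~\ref{def:sscc_con}, and in the tightness claims for prior constructions. In those places the $\Theta(nm)$ cover arises naturally from extending cliques along arbitrary rather than uncovered edges, not from padding with arbitrary extra vectors.
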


Lov{\'a}sz gave a tight upper bound on the number of cliques in a cardinality-optimal clique cover in terms of the number of non-edges of the graph~\cite{lovasz1968covering}. In our notation, this quantity is \inline{\sigma - 1}. We first restate a construction used by Lov{\'a}sz and then adapt the construction to obtain \(\sigma\)-succinct clique covers.

\begin{definition}[Lov{\'a}sz's Construction]
\label{def:lov_con}
Given a graph \inline{G=(V,E)}, define a family \inline{\F=\brcu{F_1, \dots, F_p}} of pairwise disjoint cliques by the following peeling process. Let \inline{F_1} be a maximum clique in \inline{G}. Having chosen \inline{F_1,\dots, F_i}, let \inline{F_{i+1}} be a maximum clique in the induced subgraph \inline{G[V\setminus\br{F_1 \cup  \dots  \cup F_i}]}, and continue until \inline{\brcu{F_1 , \dots , F_p}} partitions \inline{V}.

For each \inline{v \in V}, let \inline{k} be the unique index with \inline{v \in F_k}, and define \inline{T_v:=\brcu{\ell <k \mid F_\ell \cap N(v) \ne \emptyset}}.

Now construct a clique cover \inline{\C} of \inline{G} as follows.
\begin{enumerate}
    \item Initialize \inline{\C :=\brcu{F_\ell \in \F \mid \card{F_\ell} \geq 2}}.
    \item For each \inline{v\in V} and \inline{\ell \in T_v}, define \inline{C_{v,\ell}:= \br{F_\ell \cap N(v)} \cup \brcu{v}} and update \inline{\C := \C \cup \brcu{C_{v,\ell}}}.
\end{enumerate}
\end{definition}

This construction does not enforce composition-minimality, and can therefore incur polynomial factor blowups in size compared to a composition-minimal cover. For example, consider its output on the family \inline{\brcu{G_n}_{n\geq2}}, where \inline{G_n} consists of three disjoint cliques \inline{A}, \inline{B}, and \inline{U} of order \inline{n}, with every vertex of \inline{U} adjacent to all of \inline{A \cup B}, and with no edges between \inline{A} and \inline{B}. If \inline{F_1=A \cup U}, then \inline{F_2=B}, and the construction also creates \inline{C_{b_i,1} := U \cup \brcu{b_i}} for each \inline{b_i \in F_2}, so \inline{F_2 \cup C_{b_i,1} = B \cup U} is a clique for each \inline{i}. Thus the output is not composition-minimal. The same argument applies symmetrically if \inline{F_1 = B \cup U}. Moreover, the construction produces covers of size \inline{\bigTheta{n^2}}, while the graph admits a composition-minimal cover of size \inline{\bigTheta{n}}, namely the two cliques \inline{A \cup U} and \inline{B \cup U}.

We now observe that in Lov{\'a}sz's construction, the peeling process can be substituted with a first-fit coloring of the complement \inline{\overline{G}}. With that substitution, we impose an additional invariant to ensure succinctness as follows.

\begin{definition}[\(\sigma\)-Succinct Construction]
\label{def:sscc_con}
For a graph \inline{G=(V,E)}, let \inline{\F:=\brcu{F_\ell}_{\ell \in [p]}} be the color classes produced by a first-fit greedy coloring of the complement \inline{\overline{G}} with respect to an arbitrary fixed ordering of \inline{V}, created in increasing order of \inline{\ell}. 

For each \inline{v \in V}, let \inline{k} be the unique index with \inline{v \in F_k}, and define \inline{T_v:=\brcu{\ell <k \mid F_\ell \cap N(v) \ne \emptyset}}.

We construct a clique cover \inline{\C} of \inline{G} while maintaining the invariant that an edge is \emph{uncovered} if it is not contained in any clique currently in \inline{\C}.
\begin{enumerate}
    \item  Initialize \inline{\C :=\brcu{F_\ell \in \F \mid \card{F_\ell} \geq 2}}. Fix an arbitrary ordering of the cliques in \inline{\C}. In that order, process each clique \inline{C_{\ell} \in \C} and extend \inline{C_{\ell}} to a clique that is maximal with respect to adding endpoints of uncovered edges \inline{\brcu{u,v}} such that \inline{C_{\ell} \subseteq N[u] \cap N[v]}.
    \item Fix an arbitrary ordering of all pairs \inline{\br{v, \ell}} with \inline{v \in V} and \inline{\ell \in T_v}. In that order, process each pair \inline{\br{v,\ell}} and let \inline{U: = \brcu{ u \in F_\ell \cap N(v) \mid \brcu{u,v} \text{ is an uncovered edge}}}. If \inline{U \ne \emptyset}, create a clique \inline{C_{v,\ell}: = U \cup \brcu{v}}, extend \inline{C_{v,\ell}} to a clique that is maximal with respect to adding endpoints of uncovered edges \inline{\brcu{x,y}} such that \inline{C_{v,\ell} \subseteq N[x] \cap N[y]}, and add it to the cover, updating \inline{\C := \C \cup \brcu{C_{v,\ell}}}.
\end{enumerate}
\end{definition}

On the family above, it is straightforward to verify that this construction produces covers of size \inline{\bigTheta{n}}, whereas Lov{\'a}sz's construction produces covers of size \inline{\bigTheta{n^2}}.
In \hyref{Definition}{def:sscc_con}, the uncovered-edge rule is the key condition that enforces the succinctness bound \inline{\size{\C} = \bigO{m}}. If we replace the constraints ``maximal with respect to adding endpoints of uncovered edges'' with the constraint ``maximal with respect to adding endpoints of any edges'', then for many families of graphs, including the family used in the proof of \lemref{lemma:sc_size}, the modified algorithm would produce composition-minimal covers \inline{\C} with \inline{\card{\C} = \bigTheta{m}} (also, \inline{\card{\C} = \bigTheta{\sigma}}) but \inline{\size{\C}=\bigTheta{nm}}.

We now verify that the construction in \hyref{Definition}{def:sscc_con} produces \inline{\sigma}-succinct clique covers.

\begin{lemma}
\label{lemma:sscc_p2}
The family \inline{\C} obtained by \hyref{Definition}{def:sscc_con} is a succinct clique cover of \inline{G}.
\end{lemma}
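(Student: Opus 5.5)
The statement has three parts: $\C$ is a clique cover, $\C$ is composition-minimal, and $\size{\C}=\bigO{m}$. I would prove them in that order, after two bookkeeping facts about \hyref{Definition}{def:sscc_con}.

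First, every set in $\C$ is a clique at all times. The initial sets are color classes of $\overline{G}$, hence cliques of $G$. Each extension adds endpoints $x,y$ of an edge with $C\subseteq N[x]\cap N[y]$, which keeps a clique a clique. The base $U\cup\brcu{v}$ of a created set is a clique because $U\subseteq F_\ell\cap N(v)$ and $F_\ell$ is a clique.

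Second, sets only grow, and each set is modified only during its own processing step and never afterwards. Hence the set of covered edges grows monotonically: an edge, once covered, stays covered.

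\textbf{Cover.} Take an edge $\brcu{u,v}$. If $u,v$ lie in the same class $F_\ell$, then $\card{F_\ell}\geq 2$, so $F_\ell$ (or its extension) covers the edge. Otherwise say $u\in F_\ell$ and $v\in F_k$ with $\ell<k$. Then $\ell\in T_v$. When the pair $(v,\ell)$ is processed, either the edge is already covered, or $u\in U$ and the edge lies in $C_{v,\ell}$.

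\textbf{Size.} I would use a charging argument.
\begin{itemize}
\item The initial classes partition $V$, so their total size is at most $n\leq 2m$, since there are no isolated vertices.
\item Every extension step adds at most two vertices and turns a previously uncovered edge into a covered one. By monotonicity each edge is charged at most once, giving at most $2m$ additions in total.
\item A created base $U\cup\brcu{v}$ has $\card{U}+1\leq 2\card{U}$ vertices, and the $\card{U}$ edges $\brcu{u,v}$ were uncovered before creation and are covered afterwards. Again each edge is charged once, for a total of at most $2m$.
\end{itemize}
Summing, $\size{\C}\leq 6m$.

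\textbf{Composition-minimality.} This is the main obstacle, because the argument has to split by the type of the two sets. Suppose for contradiction that $C_i\cup C_j$ induces a clique for two distinct members of $\C$.

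If both are initial classes $F_a$ and $F_b$ with $a<b$, first-fit coloring of $\overline{G}$ means any $w\in F_b$ was not placed in $F_a$. So $w$ has a $\overline{G}$-neighbour in $F_a$, which is a non-edge of $G$ between the two sets. Extensions only enlarge sets, so this non-edge survives to the final sets.

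Otherwise, let $C_j$ be the one processed later; it is a created set $C_{v,\ell}$. Created sets are processed after all initial sets, and among created sets the earlier-created one is the earlier-processed one, so such an ordering of the pair exists. Pick $u\in U$. The edge $\brcu{u,v}$ was uncovered when $C_{v,\ell}$ was created, so by monotonicity it was also uncovered when the extension of $C_i$ finished.

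Since $C_i\cup C_j$ is a clique, $C_i\subseteq N[u]\cap N[v]$ already held at that moment, because $C_i$ does not change afterwards. Also $u,v$ were not both in $C_i$, since otherwise the edge would have been covered. So maximality of the extension of $C_i$ would have added $u$ and $v$ to $C_i$, covering the edge, which is a contradiction.

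The delicate point is exactly why the initial-versus-initial case needs the separate first-fit argument. Edges inside the initial classes are covered from the start, so the uncovered-edge maximality argument gives nothing there.
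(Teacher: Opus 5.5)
Your proof is correct and follows the paper's argument. The cover property is checked as in the paper's appendix lemma. Pairs of Step-1 cliques are separated by the first-fit non-edge between their colour classes. Any pair involving a Step-2 clique $C_{v,\ell}$ is ruled out because an uncovered edge $\brcu{u,v}$ with $u \in U$ would contradict the maximality of the earlier-processed clique.

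The only variation is in the size bound. You charge vertex additions to the edges newly covered at each event, which gives $\size{\C} \leq 6m$. The paper instead maps the cliques containing each vertex $v$ injectively to edges incident on $v$ that are first covered when $v$ joins. That gives the sharper bound $\size{\C} \leq \sum_{v} \card{N(v)} = 2m$, which the paper reuses later to get the $(\omega-1)$ approximation factor; your constant would only give a weaker factor there.
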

\begin{proof}
\lemref{lemma_sscc_p1} shows that \inline{\C} is a clique cover of \inline{G}. Here we show that \inline{\C} is composition-minimal and \inline{\size{\C} = \bigO{m}}.

\paragraph{Composition-minimality.}
The first-fit greedy coloring of \inline{\overline{G}} has the following property: if a vertex \(v\) is placed in a color class \inline{F_k}, then for every \inline{\ell < k} there exists a vertex \inline{u \in F_\ell} such that \inline{\brcu{u,v}} is a non-edge of \inline{G}. Thus, for any pair of distinct color classes \inline{F_i, F_j\in \F}, there exists a non-edge \inline{\brcu{u,v}} of \inline{G} with \inline{u \in F_i} and \inline{v \in F_j}. 

Let \inline{\C^\prime} be the family of cliques in \inline{\C} after Step~1 of the construction. For each color class \inline{F_\ell} with \inline{\card{F_\ell} \geq 2}, let \inline{C_\ell^\prime} denote the clique in \inline{\C^\prime} that is initialized with \inline{F_\ell}. During Step~1, each clique \inline{C_\ell^\prime} is obtained from its initial class \inline{F_\ell} by repeatedly adding vertices, and we never remove vertices. Hence, \inline{F_\ell \subseteq C_\ell^\prime} for all \inline{\ell} with \inline{\card{F_\ell} \geq 2}. Any pair of distinct color classes \inline{F_i, F_j} with \inline{\card{F_i}, \card{F_j} \geq 2} has a non-edge \inline{\brcu{u,v}} with \inline{u \in F_i} and \inline{v \in F_j}. For such a pair, the vertices \inline{u} and \inline{v} remain in \inline{C_i^\prime} and \inline{C_j^\prime}, respectively, so \inline{\brcu{u,v}} continues to witness that \inline{C_i^\prime \cup C_j^\prime} is not a clique in \inline{G}. Therefore, no pair of distinct cliques in \inline{\C^\prime} has a union that induces a clique in \inline{G}.

Now let \inline{\C^*} be the set of cliques added in Step~2, that is, the cliques \inline{C_{v,\ell}}, created when some label \inline{\ell \in T_v} is processed for a vertex \inline{v}. Suppose, for contradiction, there exists a pair of distinct cliques \inline{C_i, C_j \in \C^*} such that \inline{C_i \cup C_j} induces a clique in \inline{G}. Without loss of generality, assume \inline{C_i} is created before \inline{C_j:=C_{v,k}}.
Let \inline{U\subseteq F_k \cap N(v)} be the set of vertices \inline{u} such that \inline{\brcu{u,v}} is an uncovered edge when \inline{C_j} is created; by construction \inline{C_j} is initialized as \inline{U \cup \brcu{v}}.

Since \inline{C_i \cup C_j} induces a clique in \inline{G}, for each \inline{u \in U} we have that \inline{\brcu{u,v}} is an edge and \inline{C_i \subseteq N[u] \cap N[v]}. The edge \inline{\brcu{u,v}} is uncovered just before \inline{C_j} is created. Therefore, when \inline{C_i} was extended, \inline{\brcu{u,v}} was a valid uncovered edge with \inline{C_i \subseteq N[u]\cap N[v]}, so the extension rule would have added \inline{u} and \inline{v} to \inline{C_i}. This contradicts the fact that \inline{\brcu{u,v}} is still uncovered when \inline{C_j} is created. Hence no pair of distinct cliques in \inline{\C^*} has a union that induces a clique in \inline{G}.

The same argument applies to a pair \inline{C_i, C_j} with \inline{C_i \in \C^\prime} and \inline{C_j \in \C^*}. If \inline{C_i \cup C_j} induces a clique, then at the moment \inline{C_j:=C_{v,k}} is created, we would again have an uncovered edge \inline{\brcu{u,v}} with \inline{u \in U \subseteq F_k \cap N(v)} and \inline{C_i \subseteq N[u]\cap N[v]}, contradicting the maximality of \inline{C_i}. Thus no pair of distinct cliques in \inline{\C} has a union that induces a clique in \inline{G}, so \inline{\C} is composition-minimal.

\paragraph{Bound on \inline{\size{\C}}.} Fix a vertex \inline{v}.  Let \inline{F_k} be the unique color class in \inline{\F} containing \inline{v}. If \inline{\card{F_k} \geq 2}, then by construction the clique \inline{C_k^\prime \in \C^\prime} originating from \inline{F_k} has \inline{\card{C_k^\prime} \geq 2} and therefore covers at least one edge incident on \inline{v}. During Step~1, whenever a clique \inline{C_\ell^\prime} with \inline{\ell \ne k} is extended using an edge \inline{\brcu{u,v}}, the number of uncovered edges incident on \inline{v} decreases by at least one. During Step~2, each clique \inline{C_{v,\ell}} covers at least one uncovered edge \inline{\brcu{u,v}} with \inline{u \in F_\ell \cap N(v)}; similarly, if \inline{v} is added to a clique \inline{C_{v^\prime, k}} with \inline{v \ne v^\prime}, that inclusion also covers at least one uncovered edge incident on \inline{v}.

For each clique \inline{C_\ell \in \C} containing \inline{v}, choose one incident edge \inline{e_\ell(v)=\brcu{u,v}} that is uncovered just before \inline{v} first belongs to \inline{C_\ell} and becomes covered when \inline{v} is included in \inline{C_\ell}. This defines a map from the set of cliques containing \inline{v} to the set of edges incident on \inline{v}. Each edge \inline{\brcu{u,v}} is covered for the first time by a unique step and by a unique clique, so two different cliques containing \inline{v} cannot have the same chosen edge \inline{e_\ell(v)}. Thus the map is injective, and \inline{\card{\brcu{C_\ell \in \C \mid v \in C_\ell}} \leq \card{N(v)}}. 
Therefore, \[\size{\C} = \sum_{C_\ell \in \C} \card{C_\ell} = \sum_{v \in V} \card{\brcu{C_\ell \in \C \mid v \in C_\ell}} \leq \sum_{v \in V} \card{N(v)} = \bigO{m}.\]
\end{proof}

The next lemma establishes the \inline{\sigma}-bound on the number of cliques. Its proof adapts the counting argument underlying Lov{\'a}sz's construction, replacing maximum-clique peeling with a first-fit greedy coloring of \inline{\overline{G}}.

\begin{lemma}
\label{lemma:sscc}
The family \inline{\C} obtained by \hyref{Definition}{def:sscc_con} is a \inline{\sigma}-succinct clique cover of \inline{G}.
\end{lemma}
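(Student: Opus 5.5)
The plan is to combine \lemref{lemma:sscc_p2} with a direct count of the cliques the construction creates. \lemref{lemma:sscc_p2} already shows that \inline{\C} is a succinct clique cover. By \hyref{Definition}{def:sscc}, it then remains only to show \inline{\card{\C} = \bigO{\sigma}}. The extension operations in both steps only add vertices to existing cliques and never create new ones. So \inline{\card{\C}} is at most the number of cliques initialized in Step~1 plus the number created in Step~2, and I would bound these two quantities separately.

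\textbf{Key tool: the first-fit property.} I would reuse the property already noted in the proof of \lemref{lemma:sscc_p2}. If \inline{v \in F_k}, then for every \inline{\ell < k} there is some \inline{u \in F_\ell} with \inline{\brcu{u,v}} a non-edge of \inline{G}. For each vertex \inline{v} with class index \inline{k_v}, I would fix one such witness non-edge \inline{\eta(v,\ell)} for each \inline{\ell < k_v}. I then claim that the map \inline{(v,\ell) \mapsto \eta(v,\ell)} is injective. Suppose \inline{\eta(v,\ell) = \eta(v',\ell') = \brcu{a,b}}. The endpoint lying in the later color class is the vertex playing the role of \inline{v}, because \inline{\ell < k_v}. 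Hence \inline{v = v'}. The other endpoint lies in \inline{F_\ell}, and the classes partition \inline{V}, so \inline{\ell = \ell'}. Consequently
\[\sum_{v \in V} \br{k_v - 1} \leq \sigma - 1 .\]

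\textbf{Step~2 count.} Step~2 creates at most one clique per pair \inline{(v,\ell)} with \inline{\ell \in T_v}. Since \inline{T_v \subseteq \brcu{1,\dots,k_v-1}}, the number of cliques created is at most \inline{\sum_v \card{T_v} \le \sum_v (k_v-1) \le \sigma-1}.

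\textbf{Step~1 count.} Step~1 creates at most \inline{p} cliques. Take any vertex in \inline{F_p}. By the first-fit property it has at least \inline{p-1} non-edges, so \inline{p - 1 \le \sigma - 1} and hence \inline{p \le \sigma}.

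Together these give \inline{\card{\C} \le 2\sigma - 1 = \bigO{\sigma}}.

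\textbf{Main obstacle.} The only delicate point is the injectivity of the witness map, which is what replaces Lov\'asz's maximum-clique argument. It works because the witnessing endpoint \inline{u} always lies in an earlier class than \inline{v}, so the pair \inline{(v,\ell)} can be read off the non-edge. Edge cases are harmless: singleton classes create no Step~1 clique, and the count never uses the uncovered-edge rule.
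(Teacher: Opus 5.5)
Your proof is correct and is essentially the paper's argument. It uses the same split into Step-1 and Step-2 cliques and the same distinct-witness count $\sum_{v}\card{T_v} \le \sum_{v}(k_v-1) \le \sigma-1$. The only deviation is in the Step-1 count: you bound it crudely by $p \le \sigma$, via a single vertex of $F_p$, whereas the paper shows that the number $q$ of non-singleton classes satisfies $q(q-1)\le\sigma-1$, i.e. $q=\bigO{\sqrt{\sigma}}$. Either suffices for $\card{\C}=\bigO{\sigma}$, and your resulting $\card{\C}\le 2\sigma-1$ matches the $2\sigma-1$ factor the paper later relies on in its approximation proposition.
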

\begin{proof}
Recall that \inline{F_1, \dots, F_p} are the color classes from \hyref{Definition}{def:sscc_con}. In Step~1 we initialize \inline{\C} with all classes \inline{F_k} with size at least 2, and in Step~2 we may add at most one clique \inline{C_{v, \ell}} for each pair \inline{\br{v,\ell}} with \inline{\ell \in T_v}. Let \inline{q:=\card{\brcu{\ell \in [p] \mid \card{F_\ell} \geq 2}}} be the number of non-singleton color classes. 
Then \[\card{\C} \leq q + \sum_{v \in V} \card{T_v}.\]

For a fixed vertex \inline{v \in F_k} with \inline{k \geq 2}, by definition, \inline{T_v = \brcu{\ell <k \mid F_\ell \cap N(v) \ne \emptyset}}, hence \inline{\card{T_v} \leq k-1}. For \inline{v \in F_1}, we have \inline{T_v = \emptyset}. Since the sets in \inline{\F} are disjoint, summing over all vertices and grouping by color class gives \[\sum_{v \in V}\card{T_v}  = \sum_{k=2}^p \sum_{v \in F_k} \card{T_v} \leq \sum_{k =2}^p \br{k -1} \card{F_k}.\]

The first-fit greedy coloring of \inline{\overline{G}} has the property that if \inline{v \in F_k} with \inline{k \geq 2}, then for each \inline{\ell <k}, there exists a vertex \inline{u \in F_\ell} such that  \inline{\brcu{u,v}} is a non-edge of \inline{G}. By the disjointness of the sets in \inline{\F}, these witnesses are all distinct, so the total number of non-edges in \inline{G} is at least \inline{\sum_{k=2}^p \br{k -1} \card{F_k}}. Since this number is \inline{\sigma -1}, we obtain \[\sum_{v \in V}\card{T_v} \leq \sum_{k=2}^p \br{k -1} \card{F_k} \leq \sigma -1.\]

We now bound \inline{q} in terms of \inline{\sigma}.
Let \inline{I:=\brcu{\ell \in [p] \mid \card{F_\ell} \geq 2}} and enumerate it as \[i_1 < i_2 < \cdots < i_q.\] 
Then \inline{i_t \geq t} for all \inline{t}.
From the above lower bound on the total number of non-edges in \inline{G}, \[\sigma -1 \geq \sum_{k = 2}^{p} \br{k - 1} \card{F_k} \geq \sum_{t = 1}^{q} \br{i_t - 1} \card{F_{i_t}} \geq 2\sum_{t = 1}^{q} \br{t - 1} = q(q-1).\]

Hence \inline{q = \bigO{\sqrt{\sigma}}}. Combining this with the preceding inequalities, we have \[\card{\C} \leq q + \sum_{v \in V} \card{T_v} \leq q + \sigma -1 = \bigO{\sigma}.\]

Together with \lemref{lemma:sscc_p2}, this shows that \inline{\C} is \inline{\sigma}-succinct.
\end{proof}

\subsection{The DCC Representation}
\label{subsec:dcc}

We now formalize our graph representations by unifying the combinatorial objects introduced above.

\begin{definition}[Dual Clique Cover Representation]
\label{def:dcc}
A \emph{dual clique cover (DCC)} representation of a graph \inline{G=(V,E)} is a pair \inline{\dcc{G} := \br{\C, \Lcal}}, where
\inline{\C := \brcu{C_{\ell}}_{\ell \in I}} is a clique cover of \inline{G} and \inline{\Lcal :=\brcu{L_v}_{v\in V}} is the incidence dual of \inline{\C}, meaning that \inline{\bigcup_{v \in V} L_v =I} and \inline{L_v:=\brcu{\ell \in I \mid v \in C_\ell}} for all \inline{v \in V}.
We define its size by \[\size{\dcc{G}} := \size{\C} + \size{\Lcal}.\]
\end{definition}

Since \inline{\Lcal} is the incidence-dual of \inline{\C}, we regard a DCC representation as inheriting the minimality and succinctness properties of its underlying clique cover. In particular, succinctness yields the following bound.

\begin{proposition}
\label{prop:dcc_size}
Let \inline{G=\br{V,E}} be a graph with clique number \inline{\omega} and clique distance \inline{\sigma}. Let \inline{\dcc{G}} be a \inline{\sigma}-succinct DCC representation of \inline{G}. Then \inline{\size{\dcc{G}} = \bigO{\min\brcu{m, \omega \sigma}}}.
\end{proposition}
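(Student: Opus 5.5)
The plan is to bound \inline{\size{\C}} by each of the two quantities separately and then use \inline{\size{\dcc{G}} = \size{\C} + \size{\Lcal} = 2\size{\C}}. This identity holds because \inline{\Lcal} is the incidence dual of \inline{\C}: both \inline{\size{\C}} and \inline{\size{\Lcal}} count the vertex--clique incidences \inline{\br{v,\ell}} with \inline{v \in C_\ell}. So it suffices to show \inline{\size{\C} = \bigO{m}} and \inline{\size{\C} = \bigO{\omega\sigma}}. The minimum of the two bounds then holds up to a constant.

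The first bound is immediate from the definitions. A \inline{\sigma}-succinct clique cover is in particular a succinct clique cover (\hyref{Definition}{def:sscc}). Condition (2) of \hyref{Definition}{def:scc} then gives \inline{\size{\C} = \bigO{m}} directly.

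For the second bound, I would use that every member of \inline{\C} is a clique of \inline{G}, so \inline{\card{C_\ell} \leq \omega} for every \inline{C_\ell \in \C}. Summing over the cover gives \inline{\size{\C} = \sum_{C_\ell \in \C} \card{C_\ell} \leq \omega \card{C}}, where \inline{\card{\C} = \bigO{\sigma}} by \hyref{Definition}{def:sscc}. Hence \inline{\size{\C} = \bigO{\omega\sigma}}.

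Combining the two, \inline{\size{\dcc{G}} = 2\size{\C} = \bigO{\min\brcu{m,\omega\sigma}}}, since a quantity bounded by \inline{c_1 m} and by \inline{c_2\omega\sigma} is bounded by \inline{\max\brcu{c_1,c_2}\min\brcu{m,\omega\sigma}}. No step is a genuine obstacle. The only points needing care are that the constants in the two \inline{\bigO{\cdot}} bounds are fixed by the family under consideration, and that the bound \inline{\card{C_\ell} \le \omega} uses only that the sets in \inline{\C} are cliques, not any maximality. If one also wants existence of such a representation, \hyref{Lemma}{lemma:sscc} supplies a \inline{\sigma}-succinct cover via \hyref{Definition}{def:sscc_con}, though the proposition itself presupposes one.
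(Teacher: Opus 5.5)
Your proof is correct and is essentially the argument the paper relies on. The paper states the proposition as an immediate consequence of $\sigma$-succinctness: $\size{\dcc{G}} = 2\size{\C}$, with $\size{\C} = \bigO{m}$ from succinctness and $\size{\C} \le \omega\card{\C} = \bigO{\omega\sigma}$, and it later uses exactly $\size{\C} \le \min\brcu{2m, \omega(2\sigma-1)}$. The only blemish is the typo $\card{C}$ for $\card{\C}$.
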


While the clique distance \inline{\sigma} provides a general upper bound on the number of cliques in \inline{\sigma}-succinct clique covers, its significance becomes pronounced in dense graphs. In particular, when \inline{\sigma = \smallO{n}}, the corresponding \inline{\sigma}-succinct DCC representations are provably sublinear in the number of edges \inline{m}. \fref{fig:match_cliques} illustrates that even moderately dense graphs can admit DCC representations of size \inline{\smallO{m}}. In the worst case, \inline{\sigma}-succinct DCC representations match standard representations that require \inline{\bigTheta{m}} space.

The \inline{\sigma}-succinct construction also gives the following approximation guarantee.

\begin{proposition}
Let \inline{G=(V,E)} be a graph with clique number \inline{\omega} and clique distance \inline{\sigma}. Let \inline{\C^*} be a clique cover such that \inline{\size{\C^*}} is minimum among all clique covers of \inline{G} and let \inline{\C} be a clique cover produced by the \inline{\sigma}-succinct construction (\hyref{Definition}{def:sscc_con}). Define \inline{\alpha := \min\brcu{\omega - 1, 2\sigma - 1}}. Then \inline{\size{\C} \leq \alpha \cdot \size{\C^*}}.
\end{proposition}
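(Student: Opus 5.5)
The plan is to prove the two bounds $\size{\C} \leq (\omega-1)\size{\C^*}$ and $\size{\C} \leq (2\sigma-1)\size{\C^*}$ separately, and then take the minimum. In each case I pair an upper bound on $\size{\C}$, taken from \lemref{lemma:sscc_p2} or \lemref{lemma:sscc}, with a simple lower bound on $\size{\C^*}$ that holds for every clique cover.

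\emph{The $\omega$-bound.} For the lower bound, every edge of $G$ lies in some clique of $\C^*$, and every clique has at most $\omega$ vertices. Hence
\[m \leq \sum_{C \in \C^*} \binom{\card{C}}{2} \leq \frac{\omega-1}{2}\sum_{C\in\C^*}\card{C} = \frac{\omega-1}{2}\size{\C^*}.\]
For the upper bound, the proof of \lemref{lemma:sscc_p2} gives more than $\bigO{m}$: each vertex $v$ lies in at most $\card{N(v)}$ cliques of $\C$, by the injective map from cliques containing $v$ to edges at $v$. So $\size{\C} \leq \sum_v \card{N(v)} = 2m$. Combining the two gives $\size{\C} \leq 2m \leq (\omega-1)\size{\C^*}$.

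\emph{The $\sigma$-bound.} For the lower bound, since $G$ has no isolated vertices, every vertex lies in at least one clique of $\C^*$, so $\size{\C^*} \geq n$. For the upper bound, each clique of $\C$ has at most $n$ vertices, so $\size{\C} \leq n\card{\C}$, and it suffices to show $\card{\C} \leq 2\sigma-1$. The proof of \lemref{lemma:sscc} gives $\card{\C} \leq q + \sigma - 1$ with $q(q-1) \leq \sigma-1$, so the step that needs care is sharpening $q = \bigO{\sqrt{\sigma}}$ to the exact inequality $q \leq \sigma$.
\begin{itemize}
\item If $q = 0$, this is trivial since $\sigma \geq 1$.
\item If $q \geq 1$, then $(q-1)^2 \geq 0$ gives $q \leq q(q-1)+1 \leq \sigma$.
\end{itemize}
Thus $\card{\C} \leq 2\sigma - 1$, and therefore $\size{\C} \leq (2\sigma-1)n \leq (2\sigma-1)\size{\C^*}$.

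Taking the smaller of the two bounds yields $\size{\C} \leq \alpha\cdot\size{\C^*}$. I expect no real obstacle. The only points to check are that the exact constants, not just the asymptotic statements, can be read off from the proofs of \lemref{lemma:sscc_p2} and \lemref{lemma:sscc}, and that the no-isolated-vertices assumption from \hyref{Section}{sec:prelim} is what justifies $\size{\C^*} \geq n$.
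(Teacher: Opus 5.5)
Your proof is correct and has the same overall structure as the paper's: an upper bound on $\size{\C}$ read off from the proofs of \lemref{lemma:sscc_p2} and \lemref{lemma:sscc}, paired with a lower bound on $\size{\C^*}$.

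The $\omega$-half matches the paper exactly. The $\sigma$-half uses a different pair of bounds.

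\begin{itemize}
\item The paper uses the single lower bound $\size{\C^*}\ge 2m/(\omega-1)$ for both halves. For the $\sigma$-half it takes the sharper upper bound $\size{\C}\le\omega\card{\C}\le\omega(2\sigma-1)$, since every clique has at most $\omega$ vertices. It then closes the gap with $m\ge\binom{\omega}{2}$, which gives $2m/(\omega-1)\ge\omega$.
\item You pair the cruder $\size{\C}\le n(2\sigma-1)$ with the vertex-coverage bound $\size{\C^*}\ge n$.
\end{itemize}

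Both routes give exactly $2\sigma-1$, but they rest on different assumptions. Yours depends on the standing no-isolated-vertices convention. The paper's only uses the existence of an $\omega$-clique in $G$. It also keeps the intermediate bound $\size{\C}\le\omega(2\sigma-1)$, which is the form behind \hyref{Proposition}{prop:dcc_size}.

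Your explicit check that $q(q-1)\le\sigma-1$ forces $q\le\sigma$, and hence $\card{\C}\le 2\sigma-1$, fills in a step the paper leaves implicit. The paper simply asserts the constant $2\sigma-1$ from the proof of \lemref{lemma:sscc}, where only $q=\bigO{\sqrt{\sigma}}$ is stated.
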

\begin{proof}
From the proofs of \lemref{lemma:sscc_p2} and \lemref{lemma:sscc}, we have \inline{\size{\C} \leq \min\brcu{2m, \omega \br{2\sigma - 1}}}.

Let \inline{\C^\prime} be any clique cover of \inline{G}. Since \inline{\C^\prime} covers every edge of \inline{G}, \[m\leq \sum_{C_\ell^\prime \in \C^\prime} \binom{\card{C_\ell^\prime}}{2} \leq \frac{\omega - 1}{2} \sum_{C_\ell^\prime \in \C^\prime} \card{C_\ell^\prime} =\frac{\br{\omega-1}\size{\C^\prime}}{2}.\]

Setting \inline{\C^\prime = \C^*}, we obtain \inline{\size{\C^*} \geq \frac{2m}{\omega - 1}}. Also, by definition of \inline{\omega}, we have \inline{m \geq \binom{\omega}{2}}. Therefore, 
\[\frac{\size{\C}}{\size{\C^*}} \leq \frac{\min\brcu{2m, \omega \br{2\sigma - 1}}}{2m/\br{\omega - 1}} \leq \min\brcu{\omega - 1, 2\sigma - 1} = \alpha.\]
\end{proof}

\begin{remark}
When \inline{\sigma = \smallO{n}}, we have \inline{\omega \geq n- \sigma+1=n-\smallO{n}}. Hence the approximation guarantee in the proposition is asymptotically stronger than the generic \inline{\br{\omega -1}}-approximation factor. In particular, \inline{\alpha = \min\brcu{\omega-1, 2\sigma-1} = 2\sigma -1} for all sufficiently large \inline{n}.
\end{remark}

\begin{figure}[H]
    \centering
    \begin{minipage}{0.35\textwidth}
        \begin{mdframed}[linewidth=0.5pt, roundcorner=7pt, backgroundcolor=gray!5, frametitle={\underline{\inline{\C \rightarrow \Lcal}}}, frametitlebelowskip=4pt]
         \begin{enumerate}
             \item Set \inline{V \gets \bigcup_{C_{\ell} \in \C} C_{\ell}}.
             \item Set \inline{L_v \gets \emptyset} for all \inline{v \in V}.
             \item For each \inline{C_{\ell} \in \C} do
             \begin{enumerate}
                 \item For each \inline{v \in C_{\ell}} do
                 \begin{enumerate}
                     \item Set \inline{L_v \gets L_v \cup \brcu{\ell}}.
                 \end{enumerate}
             \end{enumerate}
             \item Return \inline{\Lcal:=\brcu{L_v}_{v \in V}}.
         \end{enumerate}           
        \end{mdframed}        
    \end{minipage}    
    \hspace{0.5cm}
    \begin{minipage}{0.35\textwidth}  
        \begin{mdframed}[linewidth=0.5pt, roundcorner=7pt, backgroundcolor=gray!5, frametitle={\underline{\inline{\Lcal \rightarrow \C}}}, frametitlebelowskip=4pt]
        \begin{enumerate}
            \item Set \inline{I \gets \bigcup_{v \in V} L_v}.
            \item Set \inline{C_{\ell} \gets \emptyset} for all \inline{\ell \in I}.
            \item For each \inline{v \in V} do
            \begin{enumerate}
                \item For each \inline{\ell \in L_v} do
                \begin{enumerate}
                    \item Set \inline{C_{\ell} \gets C_{\ell} \cup \brcu{v}}.
                \end{enumerate}
            \end{enumerate}
            \item Return \inline{\C := \brcu{C_{\ell}}_{\ell \in I}}.
        \end{enumerate}                    
        \end{mdframed}                    
    \end{minipage}    
    \caption{Constructing \inline{\Lcal} from \inline{\C} and \inline{\C} from \inline{\Lcal}}
    \label{fig:dcc_half}
\end{figure}

For efficient black-box queries and for many representation-aware applications (\hyref{Section}{sec:dcc_app}), it is useful to maintain both \inline{\C} and \inline{\Lcal}. However, for storage or communication, it suffices to keep either \inline{\C} or \inline{\Lcal}, since the other can be constructed as needed in time and space linear in the size of the representation, as illustrated in \fref{fig:dcc_half}.

Two basic graph queries used by many algorithms are adjacency queries and neighborhood queries. An adjacency query asks whether a given pair of vertices forms an edge, and a neighborhood query asks for the neighborhood of a given vertex. A DCC representation supports these queries through the following interface. For vertices \inline{u,v \in V}, the pair \inline{\brcu{u,v}} is an edge if and only if \inline{L_u \cap L_v \ne \emptyset}. For a vertex \inline{v}, its neighborhood can be recovered as
\[N(v) = \bigcup_{\ell \in L_v} C_{\ell} \setminus \brcu{v}.\]

Algorithms typically invoke such a query interface many times across different parts of a graph. Accordingly, it is useful to understand the average cost over natural collections of queries. The following proposition gives two such average-case bounds.

\begin{proposition}
\label{prop:dcc_query}
Let \inline{G=\br{V,E}} be a graph with degeneracy \inline{d}, clique number \inline{\omega}, and clique distance \inline{\sigma}. Let \inline{\dcc{G}} be a \inline{\sigma}-succinct DCC representation of \inline{G}. Using \inline{\dcc{G}},
\begin{enumerate}[label=(\roman*)]    
    \item  the average time to answer adjacency queries, taken over all unordered vertex pairs of \inline{G}, is \inline{\bigO{\min \brcu{\sigma, d}}}, and
    \item  the average time to answer neighborhood queries, taken over all vertices of \inline{G}, is \inline{\bigO{\omega\cdot \min \brcu{\sigma, d} }}.
\end{enumerate}
\end{proposition}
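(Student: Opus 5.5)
The plan is to bound the \emph{total} cost of each family of queries by a multiple of \inline{\size{\C}}, and then use a single size estimate, \inline{\size{\C} = \bigO{n \cdot \min\brcu{\sigma, d}}}, to turn the totals into the claimed averages. Write \inline{\dcc{G}=\br{\C,\Lcal}}.

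\textbf{Size estimate.} I will establish this bound from the two defining properties of \inline{\sigma}-succinctness.
\begin{itemize}
\item From \inline{\size{\C} = \bigO{m}} (\hyref{Definition}{def:scc}) and \inline{m \leq d(n-1)} (Preliminaries), we get \inline{\size{\C} = \bigO{nd}}.
\item From \inline{\card{\C} = \bigO{\sigma}} (\hyref{Definition}{def:sscc}) and the fact that every clique has at most \inline{\omega \leq n} vertices, we get \inline{\size{\C} \leq \omega \card{\C} = \bigO{n\sigma}}.
\end{itemize}
Hence \inline{\size{\C} = \size{\Lcal} = \bigO{n\min\brcu{\sigma,d}}}. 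This is the same reasoning as in \hyref{Proposition}{prop:dcc_size}, and there we may also use \inline{\size{\C}\le \omega(2\sigma-1)} from the proof of \lemref{lemma:sscc}.

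\textbf{Adjacency queries.} For a pair \inline{\brcu{u,v}}, the query tests whether \inline{L_u \cap L_v \ne \emptyset}. With \inline{L_u,L_v} kept sorted, a merge does this in \inline{\bigO{\card{L_u}+\card{L_v}}} time. Alternatively, we probe the smaller list against a set representation of the larger one, at the cost of a suppressed logarithmic factor. Summing over all \inline{\binom{n}{2}} unordered pairs, each vertex \inline{v} lies in exactly \inline{n-1} pairs, so the total cost is
\[\sum_{\brcu{u,v}} \br{\card{L_u}+\card{L_v}} = (n-1)\sum_{v}\card{L_v} = (n-1)\size{\Lcal}.\]
Dividing by \inline{\binom{n}{2}} gives an average of \inline{2\size{\Lcal}/n}. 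Plugging in the size estimate yields \inline{\bigO{\min\brcu{\sigma,d}}}.

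\textbf{Neighborhood queries.} For a vertex \inline{v}, we compute \inline{N(v)=\bigcup_{\ell\in L_v} C_\ell\setminus\brcu{v}}. We scan each clique \inline{C_\ell} with \inline{\ell\in L_v} and deduplicate using a global marker array. We then reset only the entries that were touched, so the work is \inline{\bigO{\sum_{\ell\in L_v}\card{C_\ell}}}. Since every clique has at most \inline{\omega} vertices, this is at most \inline{\omega\card{L_v}}. Averaging over the \inline{n} vertices gives \inline{\omega\size{\Lcal}/n}, which the size estimate turns into \inline{\bigO{\omega\min\brcu{\sigma,d}}}.

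The main point needing care is the implementation detail that makes per-query cost genuinely linear in the list lengths. For adjacency, this means merging sorted lists, or hashing with logarithmic factors suppressed as the paper stipulates. For neighborhoods, it means marker resetting restricted to the touched entries, so that no \inline{\bigTheta{n}} initialization is charged to each query. Beyond this, the argument is a direct double-counting.
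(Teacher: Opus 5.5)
Your proposal is correct and follows essentially the same route as the paper. You bound $\size{\Lcal}=\bigO{\min\brcu{m,\omega\sigma}}=\bigO{n\min\brcu{d,\sigma}}$ using the two succinctness conditions, then double-count the per-pair intersection costs (each vertex lies in $n-1$ pairs) and the per-vertex clique scans (each clique has at most $\omega$ vertices). The only difference is cosmetic: you charge $O(|L_u|+|L_v|)$ per adjacency test via a sorted merge, whereas the paper charges $O(\min\{|L_u|,|L_v|\})$ but then bounds the sum of minima by $(n-1)\size{\Lcal}$ anyway, so the totals coincide.
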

\begin{proof}
Let \inline{\dcc{G} = \br{\C, \Lcal}}. For a \inline{\sigma}-succinct DCC, \hyref{Proposition}{prop:dcc_size} gives \inline{\size{\C} = \size{\Lcal} = \bigO{\min\brcu{m, \omega \sigma}}}.

\paragraph{Adjacency query.} For a pair of vertices \inline{\brcu{u,v}}, testing whether \inline{L_u \cap L_v} is nonempty can be done in \inline{\bigO{\min\brcu{ \card{L_v}, \card{L_u}}}} time. Hence the average adjacency-query time over all unordered vertex pairs is

\begin{equation*}
\begin{split}
\frac{1}{\binom{n}{2}} \sum_{\brcu{u,v}} \min\brcu{\card{L_u}, \card{L_v}} &\leq \frac{1}{\binom{n}{2}}\br{n-1}\sum_{v \in V} \card{L_v} = \bigO{\frac{\size{\Lcal}}{n}} \\&= \bigO{\frac{\min\brcu{m, \omega\sigma}}{n}} = \bigO{\min\brcu{d, \sigma}}.
\end{split}
\end{equation*}

\paragraph{Neighborhood query.} For a vertex \inline{v}, a neighborhood query can be answered in time \inline{\bigO{\sum_{\ell \in L_v} \card{C_\ell}}}. Therefore, the average neighborhood-query time is bounded by
\[\frac{1}{n}\sum_{v \in V}\sum_{\ell \in L_v} \card{C_{\ell}} \leq \frac{1}{n}\sum_{v \in V}\card{L_v} \cdot \omega = \bigO{\frac{\omega \size{\Lcal}}{n}} = \bigO{\omega\cdot\min\brcu{d,\sigma}}.\]
\end{proof}

For both \inline{d=\bigO{1}} and \inline{\sigma=\bigO{1}}, these bounds match the corresponding average costs of adjacency and neighborhood queries in standard representations. Although algorithms that access a DCC representation only through this query interface may incur time--space trade-offs in the worst case,  \hyref{Section}{sec:dcc_app} shows that representation-aware algorithms can often exploit the structure exposed by \inline{\C} and \inline{\Lcal} to improve time and space bounds simultaneously.

\subsection{Succinct-Peeling}
\label{subsec:sp}

We describe an efficient implementation of the \inline{\sigma}-succinct construction (\hyref{Definition}{def:sscc_con}) and analyze its time and space complexity. Algorithm~\aref{\textsc{Succinct-Peeling}}{fig:algo_dcc_ss} (in \fref{fig:algo_dcc_ss}), together with its subroutine \aref{\textsc{Extend}}{fig:sub_dcc_ss} (in \fref{fig:sub_dcc_ss}), implements this construction. After the greedy coloring of \inline{\overline{G}} in Step~2, the algorithm builds the clique cover in two phases. Phase-1 consists of Steps~5-7 of \aref{\textsc{Succinct-Peeling}}{fig:algo_dcc_ss} (corresponding to Step~1 of \hyref{Definition}{def:sscc_con}), and Phase-2 consists of Steps~8-9 of \aref{\textsc{Succinct-Peeling}}{fig:algo_dcc_ss} (corresponding to Step~2 of \hyref{Definition}{def:sscc_con}).

During the execution of \aref{\textsc{Succinct-Peeling}}{fig:algo_dcc_ss}, each set \inline{M_v}, initialized with \inline{N(v)}, stores all vertices \inline{x} such that the edge \inline{\brcu{x,v}} is not yet covered by any clique. Step~3 computes a degeneracy ordering \inline{\pi} of the vertex set \inline{V}, and Step~4 constructs the backward neighborhood \inline{N_\pi^<(v)} for each vertex \inline{v}. These sets \inline{\brcu{N_\pi^<(v)}} are used in \aref{\textsc{Extend}}{fig:sub_dcc_ss} to efficiently enumerate candidate edges for extending a clique.

\begin{figure}[h]
    \centering
    \begin{parbox}{5.2in}{    
        \begin{mdframed}[linewidth=0.5pt, roundcorner=7pt, backgroundcolor=gray!5, frametitle={\underline{\textsc{Succinct-Peeling}$\br{G=\br{V,E}}$}}]          
        \begin{enumerate}            
            \item Initialize \inline{M_v \gets N(v)}, for all \inline{v \in V}.
            \item Let \inline{\F := \brcu{\, F_{\ell}}_{\ell \in [p]}} be a first-fit greedy coloring of \inline{\overline{G}}, where \inline{F_\ell} denotes the \inline{\ell}th color class in greedy order.
            \item Let $\pi$ be a degeneracy ordering of $V$.
            \item Set \inline{N_{\pi}^{<}(v) \gets \{u \in N(v) \mid \pi(u) < \pi(v)\}}, for each vertex \inline{v \in V}.
            \item Set \inline{\C \gets \brcu{\, F_\ell \in \F \mid \card{F_\ell} \geq 2 \,}}.
            \item For each \inline{C_{\ell} \in \C} and for each \inline{v \in C_{\ell}}, set \inline{M_v \gets M_v \setminus C_{\ell}}.            
            \item For each \inline{C_{\ell} \in \C}, select an edge \inline{\brcu{u,v} \subseteq C_\ell}, and set \inline{C_\ell \gets}\aref{\textsc{Extend}}{fig:sub_dcc_ss}\inline{\br{G, C_\ell, u, v}}.            
            \item For each \inline{k \geq 2} and each \inline{v \in F_k}, set \inline{T_v \gets \brcu{\, \ell < k \mid F_\ell \cap N(v) \ne \emptyset\,}}.            
            \item For each \inline{v \in V} and each \inline{\ell \in T_v} do
                \begin{enumerate}
                    \item If \inline{F_\ell \cap M_v = \emptyset}, then continue to the next iteration.                
                \item Set \inline{C_{v, \ell} \gets \br{M_v \cap F_\ell} \cup \brcu{v}}, and \inline{M_u \gets M_u \setminus C_{v, \ell}} for all \inline{u \in C_{v, \ell}}.
                \item Select a vertex \inline{u \in C_{v, \ell} \setminus \brcu{v}}, and set \inline{C_{v,\ell} \gets}\aref{\textsc{Extend}}{fig:sub_dcc_ss}\inline{\br{G, C_{v,\ell}, u, v}}
                \item \inline{\C \gets \C \cup \brcu{C_{v,\ell}}}.
                \end{enumerate}                
            \item Define \inline{\Lcal := \brcu{L_v}_{v \in V}}, where \inline{L_v :=\brcu{\,\alpha \mid C_\alpha \in \C, v \in C_\alpha\,}}.
            \item Return \inline{\dcc{G}:=\br{\C, \Lcal}}.
        \end{enumerate}
        \end{mdframed}        
    }
    \end{parbox}    
    \caption{An algorithm to compute \(\sigma\)-succinct DCC representations of graphs.}
    \label{fig:algo_dcc_ss}
\end{figure}

\begin{figure}[h]
    \centering
    \begin{parbox}{4.2in}{    
        \begin{mdframed}[linewidth=0.5pt, roundcorner=7pt, backgroundcolor=gray!5, frametitle={\underline{\textsc{Extend}$\br{G=\br{V,E}, C_k, u_k, v_k}$}}]     
        \textcolor{teal}{//Uses \inline{\brcu{N_\pi^<(v)}} and \inline{\brcu{M_v}} from \textsc{Succinct-Peeling}.}
        \begin{enumerate}
            \item Set \inline{V_k \gets \brcu{\,w \in N[u_k] \cap N[v_k] \mid C_k \subseteq N[w]\,}}.
            \item Set \inline{E_k \gets \brcu{\,\brcu{x,y} \mid x \in N_\pi^<(y) \cap M_y, \brcu{x,y} \subseteq V_k\,}}.
            \item For each edge \inline{\brcu{x,y} \in E_k} with \inline{x \in M_y} and \inline{\brcu{x,y} \subseteq V_k} do
            \begin{enumerate}
             \item Set \inline{C_k \gets C_k \cup \brcu{x,y}}.
             \item Set \inline{M_z \gets M_z \setminus \brcu{x,y}} for all \inline{z \in C_k}.
             \item Set \inline{M_w \gets M_w \setminus C_k} for \inline{w \in \brcu{x,y}}.
             \item Set \inline{V_k \gets V_k \cap \br{N[x] \cap N[y]}}.
            \end{enumerate}
            \item Return \inline{C_k}.  
            \end{enumerate}                
                  
        \end{mdframed}        
    }
    \end{parbox}    
    \caption{A subroutine of Algorithm~\aref{\textsc{Succinct-Peeling}}{fig:algo_dcc_ss}.}
    \label{fig:sub_dcc_ss}
\end{figure}

In Phase-1, Step~5 initializes \inline{\C} with color classes \inline{F_\ell} of \inline{\overline{G}} of size at least two, and Step~6 updates the sets \inline{\brcu{M_v}} to mark edges internal to these cliques as covered.
In Phase-2, Step~8 defines the index set \inline{T_v} for each vertex \inline{v}, and Step~9 creates a clique \inline{C_{v,\ell}} for each vertex \inline{v} and each \inline{\ell \in T_v} with \inline{F_\ell \cap M_v \ne \emptyset}. In both phases, \aref{\textsc{Extend}}{fig:sub_dcc_ss} (invoked in Step~7 and Step~9(c)) grows each initial clique to a maximal clique with respect to uncovered edges incident on its vertices.

For every clique \inline{C_k}, \aref{\textsc{Extend}}{fig:sub_dcc_ss} is invoked with a pivot edge \inline{\brcu{u_k,v_k}} that is unique to that clique.
In Phase-1, each \inline{C_k} is an initial clique \inline{C_\ell} derived from a color class \inline{F_\ell} and the pivot edge satisfies \inline{\brcu{u_k,v_k} \subseteq F_\ell}. Since the color classes are disjoint, all pivot edges used in Phase~1 are distinct. In Phase-2, each \inline{C_k=C_{v,\ell}} is initialized from an uncovered edge \inline{\brcu{u,v}} with \inline{u \in F_\ell \cap M_v}, and this edge is never reused as a pivot. Thus, over both phases, the pivot edges \inline{\brcu{u_k, v_k}} are pairwise distinct.

For a fixed clique \inline{C_k}, the call \aref{\textsc{Extend}}{fig:sub_dcc_ss}\inline{\br{G,C_k,u_k,v_k}} first constructs a vertex set \inline{V_k} of common neighbors \inline{w} of \inline{u_k} and \inline{v_k} such that \inline{C_k \subseteq N[w]}. Using the backward neighborhoods \inline{\brcu{N_\pi^<(y)}}, \aref{\textsc{Extend}}{fig:sub_dcc_ss} then enumerates a candidate edge set \inline{E_k} on \inline{V_k}, and scans it once to extend \inline{C_k}. Each successful extension adds the endpoints of an edge \inline{\brcu{x,y} \in E_k} to \inline{C_k}, updates the relevant sets in \inline{\brcu{M_z}}, and restricts \inline{V_k} within \inline{V_k \cap N[x] \cap N[y]} for the remaining iterations.

\begin{lemma}
\label{lemma:dcc_ss_con}   
Let \inline{G=(V,E)} be a graph with \inline{m} edges, degeneracy \inline{d}, and clique distance \inline{\sigma}. Algorithm~\aref{\textsc{Succinct-Peeling}}{fig:algo_dcc_ss} returns a \inline{\sigma}-succinct DCC representation of \inline{G} in \inline{\bigO{d^2 \cdot \min\brcu{\sigma,m}}} time using \inline{\bigO{m}} space.
\end{lemma}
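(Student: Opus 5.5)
The proof has two parts: correctness and resources. For correctness, I will show that \aref{\textsc{Succinct-Peeling}}{fig:algo_dcc_ss} faithfully implements the $\sigma$-succinct construction of \hyref{Definition}{def:sscc_con}. This reduces the structural claims to \lemref{lemma:sscc_p2} and \lemref{lemma:sscc}.

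\textbf{Correctness.} The central invariant is that $x\in M_v$ if and only if $\brcu{x,v}$ is an uncovered edge. I will check this line by line.
\begin{itemize}
\item Steps~1 and~6 establish it.
\item Steps~3(b),(c) of \aref{\textsc{Extend}}{fig:sub_dcc_ss} maintain it: when $x,y$ join $C_k$, exactly the pairs between $\brcu{x,y}$ and $C_k$ become covered.
\item Step~9(b) maintains it in the same way.
\end{itemize}
Next I argue that \aref{\textsc{Extend}}{fig:sub_dcc_ss} produces a clique that is maximal with respect to adding endpoints of uncovered edges $\brcu{x,y}$ with $C_k\subseteq N[x]\cap N[y]$.
\begin{itemize}
\item Every such candidate pair lies in $N[u_k]\cap N[v_k]$, because $u_k,v_k\in C_k$. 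It also satisfies the condition $C_k\subseteq N[w]$, so it lies in $V_k$.
\item Each candidate is enumerated exactly once in $E_k$, through the backward neighbourhood of its later endpoint.
\item The refinement $V_k\gets V_k\cap N[x]\cap N[y]$ keeps $V_k$ equal to the set of vertices adjacent to all of the current $C_k$, since $C_k$ only grows.
\item A candidate rejected at scan time stays rejected, because $V_k$ only shrinks and uncovered edges only become covered.
\end{itemize}
So a single pass suffices for maximality. Step~9(a)(b) realises $U=F_\ell\cap M_v$ exactly as in the definition. Hence the output is the cover of \hyref{Definition}{def:sscc_con}, and \lemref{lemma:sscc} gives $\sigma$-succinctness. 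Step~10 builds $\Lcal$ as in \fref{fig:dcc_half}.

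\textbf{Time.} Steps 1--5 and 8 cost $\bigO{m}$ up to the suppressed logarithmic factors. The first-fit coloring of $\overline{G}$ needs care: I would process the vertices in order and, for each one, scan the classes in increasing index while counting its neighbours in each class. A class $F_\ell$ is rejected exactly when the count is below $\card{F_\ell}$. Each rejection is charged to a neighbour or to a non-edge witness, giving $\bigO{m+\min\brcu{m,\sigma}}$ in total.

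The dominant cost is the calls to \aref{\textsc{Extend}}{fig:sub_dcc_ss}, and I will bound each call by $\bigO{d^2}$ plus amortised $M$-update costs.
\begin{itemize}
\item Computing $V_k$ amounts to scanning the common neighbourhood of the pivot edge, checking $C_k\subseteq N[w]$ against the current clique.
\item The clique containing the pivot has size at most $\omega\le d+1$.
\item The edges $\brcu{x,y}$ inside $V_k$ are enumerated through $N_\pi^<(y)$, each of size at most $d$.
\item I will show $\card{V_k}=\bigO{d}$ in the degeneracy sense, by orienting the pivot edge and using $N_\pi^<$. This gives $\card{E_k}\le d\card{V_k}=\bigO{d^2}$.
\end{itemize}
The $M$-updates are charged to edges becoming covered, each exactly once, so they contribute $\bigO{\size{\C}}=\bigO{m}$ overall, or $\bigO{d\cdot\omega\sigma}$ via \hyref{Proposition}{prop:dcc_size}.

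The number of \aref{\textsc{Extend}}{fig:sub_dcc_ss} calls equals $\card{\C}$, since each call has a distinct pivot edge. This count is at most $m$, and it is $\bigO{\sigma}$ by \lemref{lemma:sscc}. Multiplying gives $\bigO{d^2\min\brcu{\sigma,m}}$. The additive $\bigO{m}$ terms must then be absorbed; I would use that $m\le d(n-1)$ and $n=\bigO{\sigma+\omega}$, where $\omega\le d+1$. I expect this absorption, together with the $\card{V_k}=\bigO{d}$ claim, to be the main obstacle. The common neighbourhood of a pivot edge can have size up to $\Delta$, not $d$, so a direct bound fails. My first attempt will be to restrict the initial scan of $V_k$ to the backward neighbourhoods of the pivot endpoints plus vertices reachable through $N_\pi^<$. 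If that does not suffice, I will charge the excess scanning to the non-edges that witness $C_k\not\subseteq N[w]$.

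\textbf{Space.} The sets $M_v$, $N_\pi^<(v)$, $T_v$ and the color classes occupy $\bigO{m}$ space. The cover $\C$ and its dual $\Lcal$ occupy $\bigO{\size{\C}}=\bigO{m}$ by \lemref{lemma:sscc_p2}. The temporary sets $V_k$ and $E_k$ are reused across calls and are bounded by $\bigO{m}$ because $d^2=\bigO{m}$.
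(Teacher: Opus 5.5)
Your correctness argument is fine, and more detailed than the paper's, which simply asserts that Steps~1--9 implement Definition~\ref{def:sscc_con}. The running-time analysis, however, has a genuine gap exactly where you suspect one. Bounding each \textsc{Extend} call by $O(d^2)$ and multiplying by the number of calls cannot work, because no per-call bound depending only on $d$ exists: $|V_k|$ itself can be $\Theta(m)$ with $d=2$. Take vertices $u,v,w_1,\dots,w_t,a_1,\dots,a_t$ with edges $uv$, $uw_i$, $vw_i$, $w_ia_i$, and color $\overline{G}$ in the order $w_1,a_1,\dots,w_t,a_t,u,v$. First-fit gives $F_i=\{w_i,a_i\}$ for $i\le t$ and $F_{t+1}=\{u,v\}$. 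The Step~7 call on $F_{t+1}$ then has $V_k=\{u,v,w_1,\dots,w_t\}$, since every $w_i$ is admissible. Neither of your repairs survives this example. Charging to non-edge witnesses fails because admissible vertices have none. Restricting the scan to backward neighbourhoods analyses a different algorithm, and one that is no longer correct. With the valid degeneracy order $v,u,w_1,\dots,w_t,a_1,\dots,a_t$, the restricted scan from pivot $\{u,v\}$ finds no $w_i$. So $\{u,v\}$ stays unextended even though $uw_1$ is an uncovered admissible edge, and a later Phase-2 clique $\{u,v,w_i\}$ merges with it. The extension is then not maximal, and \lemref{lemma:sscc_p2} no longer applies.

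The missing ingredient is the paper's two-regime argument. In the first regime, you do not bound calls individually. Each call costs $O(d\,|N[u_k]\cap N[v_k]|)$, since $|C_k|\le d+1$ and $|N_\pi^<(y)|\le d$. Because pivot edges are distinct, $\sum_k|N[u_k]\cap N[v_k]|\le\sum_{uv\in E}(2+\min\{|N(u)|,|N(v)|\})=O(dm)$ by \lemref{lemma:edgesum}, so all calls together cost $O(d^2m)$. Independently, each clique costs $O(n^2)$, giving $O(\sigma n^2)$ via \lemref{lemma:sscc}. If $\sigma\ge m$, the first bound is already $O(d^2\min\{\sigma,m\})$. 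If $\sigma<m$, then $\binom{n}{2}<2m\le 2d(n-1)$ forces $n<4d$, so $\sigma n^2=O(d^2\sigma)$. The fact you planned to use, $n=O(\sigma+\omega)$, does not give $n=O(d)$ here.

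There are two smaller points. First, your first-fit coloring of $\overline{G}$ scans every earlier class, so it costs $\Theta(m+\sigma)$, not $O(m+\min\{m,\sigma\})$. Rejected classes containing no neighbour of $v$ can only be charged to non-edges. On a perfect matching ordered $a_1,b_1,a_2,b_2,\dots$ this is $\Theta(n^2)$ against a budget of $\Theta(n)$. Examining only the classes that contain a neighbour of $v$, as \textsc{First-Fit-Complement-Coloring} does, gives $O(n+m)$. Second, the $M$-updates cost $O(|C_k|)$ per extension, not per newly covered edge. Their total is therefore $O(\sum_k|C_k|^2)=O(d\operatorname{size}(\mathcal{C}))$, which still fits in both regimes.
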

\begin{proof}
\textbf{\inline{\sigma}-Succinctness.} Steps~1-9 of \aref{\textsc{Succinct-Peeling}}{fig:algo_dcc_ss} implement \hyref{Definition}{def:sscc_con}. By \lemref{lemma:sscc_p2} and \lemref{lemma:sscc}, the resulting clique cover is \inline{\sigma}-succinct.    
Step~10 constructs the incidence dual \inline{\Lcal}. Thus the output \inline{\dcc{G}=\br{\C, \Lcal}} is a \inline{\sigma}-succinct DCC representation.

\paragraph{Time.}
Step~1 initializes the sets \inline{\brcu{M_v}} in \inline{\bigO{m}} time.
Step~2 computes a first-fit greedy coloring of \inline{\overline{G}} in \inline{\bigO{n+m}=\bigO{m}} time (see \hyref{Section}{subsec:ffcc}). 
Step~3 computes a degeneracy ordering \inline{\pi}, and Step~4 builds the sets \inline{\brcu{N_\pi^<(v)}_{v \in V}}; both can be done in \inline{\bigO{n+m} =\bigO{m}} time (see \hyref{Section}{subsec:kcores}).

Since the color classes in \inline{\F} are disjoint, the endpoints of an edge internal to some \inline{F_\ell} are contained in at most one \inline{F_\ell}, so Step~6 runs in \inline{\bigO{m}} time.
In Step~8 and Step~9(a), the total intersection size is bounded by \inline{\sum_{v \in V} \sum_{\ell \in T_v} \card{F_\ell \cap N(v)} \leq \sum_{v \in V} \card{N(v)} = \bigO{m}}. Each \inline{T_v} is constructed by enumerating \inline{N(v)} and storing the color indices \inline{\ell <k} of neighbors \inline{u \in N(v)}, where \inline{v \in F_k}.
Since \inline{\card{C_{v,\ell}} \leq d+1}, Step~9(b) takes at most \inline{\bigO{\sum_{v,\ell}\card{C_{v,\ell}}^2} = \bigO{d\sum_{v,\ell}\card{C_{v,\ell}}} = \bigO{dm}} time overall.

We now bound the cost of all calls to \aref{\textsc{Extend}}{fig:sub_dcc_ss}. For a single call with pivot edge \inline{\brcu{u_k,v_k}}, we have \inline{\card{V_k} \leq \card{N[u_k] \cap N[v_k]}}. 
Since all pivot edges used across all calls are distinct, by \lemref{lemma:edgesum} with \inline{f(x)=\card{N[x]}}, we have
\[\sum_{k} \card{V_k} \leq \sum_{\brcu{u,v} \in E} \card{N[u] \cap N[v]} \leq \sum_{\brcu{u,v} \in E} \br{2 + \min\brcu{\card{N(u)}, \card{N(v)}}} =\bigO{dm}.\] 

Fix a clique \inline{C_k}. Since \inline{\card{C_k} \leq d+1}, to construct the set \inline{V_k}, Step~1 of \aref{\textsc{Extend}}{fig:sub_dcc_ss} takes \inline{\bigO{d\card{N[u_k] \cap N[v_k]}}} time. In Step~2, for each \inline{y \in V_k}, we enumerate all edges \inline{\brcu{x,y}} with \inline{x \in N_\pi^<(y) \cap M_y}. Since \inline{\card{N_\pi^<(y)} \leq d}, this takes \inline{\bigO{d\card{V_k}}} time to generate the set \inline{E_k} with \inline{\card{E_k} \leq d \card{V_k}}. Step~3 scans the edges in \inline{E_k} once. Each extension of \inline{C_k} adds at least one new vertex from \inline{V_k} into \inline{C_k}. Since \inline{\card{C_k} \leq d+1}, the total number of these extensions for \inline{C_k} is \inline{\bigO{d}}. Hence the time needed for all extensions of \inline{C_k} is \inline{\bigO{d \card{V_k}}}.

Therefore, for each clique \inline{C_k}, all work inside \aref{\textsc{Extend}}{fig:sub_dcc_ss} is bounded by \inline{\bigO{d\card{N[u_k] \cap N[v_k]}}}. Summing over all cliques gives \inline{\bigO{d\sum_{k}\card{N[u_k] \cap N[v_k]}} = \bigO{d^2m}}. This bound covers all calls to \aref{\textsc{Extend}}{fig:sub_dcc_ss} in
Step~7 and Step~9(c) of \aref{\textsc{Succinct-Peeling}}{fig:algo_dcc_ss}.

By \lemref{lemma:sscc}, the final cover contains \inline{\card{\C}=\bigO{\sigma}} cliques. It is straightforward to verify that for each clique \inline{C_\alpha \in \C}, the total time spent in the steps of \aref{\textsc{Succinct-Peeling}}{fig:algo_dcc_ss} and \aref{\textsc{Extend}}{fig:sub_dcc_ss} is bounded by \inline{\bigO{n^2}}. Step~10 of \aref{\textsc{Succinct-Peeling}}{fig:algo_dcc_ss} can be implemented exactly as in the \inline{\C \rightarrow \Lcal} construction of \fref{fig:dcc_half}, which runs in \inline{\bigO{\min\brcu{\omega\sigma,m}}} time. Combining everything, the overall runtime is \inline{\bigO{\min\brcu{\sigma n^2, d^2m}}}.

For the claimed runtime bound, we split into two cases: \inline{\sigma \geq m} and \inline{\sigma < m}. If \inline{\sigma \geq m}, then \inline{d^2m \leq d^2 \sigma \leq n^2 \sigma}, hence \inline{\min\brcu{\sigma n^2, d^2m} =d^2m =  \bigO{d^2 \cdot \min\brcu{\sigma, m}}}. If \inline{\sigma < m}, then \inline{\binom{n}{2} -m + 1 < m}, so \inline{\frac{n-1}{4} \leq \frac{m}{n} < d}, hence \inline{\min\brcu{\sigma n^2, d^2m} = \sigma n^2 = \bigO{\sigma d^2} = \bigO{d^2 \cdot \min\brcu{\sigma, m}}}.

\paragraph{Space.}
Throughout the algorithm we maintain the sets \inline{\brcu{M_v}}, the backward neighborhoods \inline{\brcu{N_\pi^<(v)}}, the index sets \inline{\brcu{T_v}}, and the color classes \inline{\brcu{F_\ell}}. During each call to \aref{\textsc{Extend}}{fig:sub_dcc_ss}, we also maintain the sets \inline{V_k} and \inline{E_k}. Each of these structures requires either \inline{\bigO{m}} or \inline{\bigO{n}=\bigO{m}} space, so the total space use of \aref{\textsc{Succinct-Peeling}}{fig:algo_dcc_ss} is \inline{\bigO{m}}.
\end{proof}

\hyref{Theorem}{thm:sp} now follows from \hyref{Proposition}{prop:dcc_size} and \lemref{lemma:dcc_ss_con}.
\section{Algorithmic Applications of DCC Representations}
\label{sec:dcc_app}

Using the query interface described in \hyref{Section}{subsec:dcc} (\hyref{Proposition}{prop:dcc_query}), one can implement graph algorithms with a DCC representation in a black-box manner. DCC representations also expose additional structures through \inline{\C} and \inline{\Lcal}, which can be exploited for more efficient algorithm design. We now illustrate this for several fundamental problems. In particular, we show algorithms whose running time is proportional to the size of the DCC representation rather than the number of edges.

\subsection{Breadth-first Search}
\label{subsec:bfs}

We first describe an algorithm to compute single-source shortest paths in unweighted, undirected graphs using breadth-first search (BFS). We then adapt this algorithm to obtain BFS forest and distance-based applications stated in \hyref{Theorem}{thm:apps}.

\begin{definition}[Single-source shortest paths]
\label{def:sssp}    
Let \inline{G=\br{V,E}} be a graph and let \inline{s \in V}. In the \emph{single-source shortest paths} problem, the goal is to compute, for every \inline{v\in V}, the distance \inline{\td{v}:=\Dist_G\br{s,v}}, defined as
the minimum number of edges on an \inline{s}--\inline{v} path (and \inline{\td{v}=\infty} if \inline{v} is unreachable from \inline{s}), and a shortest-path tree rooted at \inline{s}. The tree is represented by storing a parent pointer \inline{\parent{v}} for each reachable vertex \inline{v \ne s}. Following parent pointers from \inline{v} back to \inline{s} yields a shortest \inline{s}--\inline{v} path.
\end{definition}

A textbook BFS starts from \inline{s}, initializes \inline{\dist{s}=0} and \inline{\dist{v}=\infty} for \inline{v \ne s}, and uses a FIFO queue initialized with \inline{s}. At each iteration, it dequeues a vertex from the queue and scans its neighbors. When a vertex \inline{u} is first discovered from a dequeued vertex \inline{v}, it sets \inline{\dist{u} \gets \dist{v}+1}, \inline{\parent{u} \gets v}, and enqueues \inline{u}. The resulting distances are shortest-path distances, and the parent pointers form a shortest-path tree rooted at \inline{s}.

\begin{figure}[h]
    \centering
    \begin{parbox}{4.4in}{    
        \begin{mdframed}[linewidth=0.5pt, roundcorner=7pt, backgroundcolor=gray!5,
                         frametitle={\underline{\textsc{BFS}$\br{\C, \Lcal, s}$}}]  
        \begin{enumerate}            
            \item Set \inline{V \gets \bigcup_{C_\ell \in \C} C_\ell}, and \inline{J_\ell \gets 0}, for each \inline{C_\ell \in \C}.
            \item Set \inline{\dist{v} \gets \infty}, \inline{\parent{v} \gets \bot}, and \inline{I_v \gets 0}, for all \inline{v \in V}.
            \item Initialize a FIFO queue \inline{Q}.
            \item Enqueue \inline{s}, and set \inline{\dist{s} \gets 0}, \inline{I_s \gets 1}.
            \item While \inline{Q} is nonempty do
            \begin{enumerate}
                \item Dequeue a vertex \inline{v}.                
                \item For each \inline{\ell \in L_v} with \inline{J_\ell = 0} do
                \begin{enumerate}                                        
                    \item Set \inline{J_\ell \gets 1}.
                    \item For each \inline{u \in C_\ell} with \inline{I_u =0}, set \inline{\dist{u} \gets \dist{v} + 1}, \inline{\parent{u} \gets v}, \inline{I_u \gets 1}, and enqueue \inline{u}.                    
                \end{enumerate}
            \end{enumerate}
            \item Return \inline{\brcu{\br{\dist{v}, \parent{v}}}_{v \in V}}.
        \end{enumerate}
        \end{mdframed}        
    }
    \end{parbox}    
    \caption{Breadth-first search using a DCC representation.}
    \label{fig:algo_bfs}
\end{figure}

\fref{fig:algo_bfs} implements this algorithm using a DCC representation. For each vertex \inline{v}, the algorithm maintains a discovered indicator \inline{I_v\in \brcu{0,1}}, the distance \inline{\dist{v}}, the parent pointer \inline{\parent{v}}. It also maintains a FIFO queue \inline{Q}, and a scanned indicator \inline{J_\ell \in \brcu{0,1}}, for each clique \inline{C_\ell}. Initially, all vertices are undiscovered with distance \inline{\infty}, and all parent pointers are \inline{\bot}. It sets \inline{I_s=1}, \inline{\dist{s}=0}, and enqueues \inline{s}. While \inline{Q} is nonempty, it dequeues a vertex \inline{v}. For each label \inline{\ell \in L_v} such that \inline{C_\ell} is unscanned, it marks \inline{C_\ell} as scanned and scans all vertices \inline{u \in C_\ell}. For each undiscovered vertex \inline{u} encountered, it marks \inline{u} as discovered, sets \inline{\dist{u} = \dist{v}+1}, \inline{\parent{u} \gets v}, and enqueues \inline{u}.

We now prove the correctness of this algorithm and analyze its time and space complexity. The proofs of the following two lemmas are included in \hyref{Appendix}{subsec:deferred_pfs_apps}; they are adaptations of similar proofs outlined in \cite{cormen2022introduction}.

\begin{restatable}{lemma}{LemmaBFSinv}
\label{lemma:bfs_inv}
In Algorithm~\aref{\textsc{BFS}}{fig:algo_bfs}, vertices are dequeued in nondecreasing values of \inline{\dist{\cdot}}.
\end{restatable}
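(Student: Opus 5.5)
We follow the CLRS argument for textbook BFS: first establish a queue invariant, then read off the monotonicity. The invariant is: at every moment, if the queue holds $\langle v_1, \dots, v_r\rangle$ from head to tail, then
\[
\dist{v_r} \le \dist{v_1} + 1 \quad\text{and}\quad \dist{v_i} \le \dist{v_{i+1}} \text{ for } 1 \le i < r .
\]
We prove this by induction on the number of queue operations. The base case is the queue $\langle s\rangle$, where the invariant holds trivially.

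There are two inductive cases. A dequeue only removes the head $v_1$. The new head $v_2$ satisfies $\dist{v_1} \le \dist{v_2}$, so $\dist{v_r} \le \dist{v_1}+1 \le \dist{v_2}+1$, and the chain of inequalities is inherited. An enqueue happens only in Step~5(b)(ii), while processing the most recently dequeued vertex $v$. The enqueued vertex $u$ receives $\dist{u} = \dist{v}+1$.

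To handle an enqueue we need three facts. First, every vertex still in the queue was enqueued after $v$, so by the invariant at the time $v$ was the head, every current $v_i$ satisfies $\dist{v} \le \dist{v_i} \le \dist{v}+1$. Second, the new tail satisfies $\dist{u} = \dist{v}+1 \le \dist{v_1}+1$, using $\dist{v} \le \dist{v_1}$ when the queue is nonempty. Third, the old tail satisfies $\dist{v_r} \le \dist{v}+1 = \dist{u}$. If the queue was empty before the enqueue, the invariant holds trivially.

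The only difference from textbook BFS is that the inner loops run over the labels $\ell \in L_v$ and the cliques $C_\ell$ instead of $N(v)$. Several vertices may be enqueued while processing one dequeued $v$, but all of them receive the same value $\dist{v}+1$, so the argument above applies verbatim to each successive enqueue. The scanned indicators $J_\ell$ only skip work and never change any $\dist{\cdot}$ value. Each $\dist{u}$ is assigned exactly once, at enqueue time, because $I_u$ is set to 1 then. So the values in the queue never change after insertion.

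The lemma then follows. When $v_1$ is dequeued, the next dequeued vertex is either $v_2$, which satisfies $\dist{v_2} \ge \dist{v_1}$ by the invariant, or a vertex enqueued later. Any later-enqueued vertex has distance at least $\dist{v_1}$, since it is $\dist{w}+1$ for some $w$ dequeued at or after $v_1$, and that $w$ already has distance at least $\dist{v_1}$ by the same invariant.

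The step that needs care is the enqueue case, specifically justifying $\dist{v} \le \dist{v_1}$ for the current head. This follows from the invariant holding at the moment just before $v$ was dequeued, because $v$ was then the head and $v_1$ lay behind it in the queue.
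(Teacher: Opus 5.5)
Your proof is correct and follows the paper's argument: the same two-part queue invariant (values of $\dist{\cdot}$ nondecreasing from head to tail, and tail at most head plus one), proved by induction, with the lemma read off from it. The only difference is granularity: you induct over individual enqueue and dequeue operations, as in CLRS, whereas the paper inducts over while-loop iterations and handles all vertices appended while processing one dequeued vertex $v_1$ (each with value $\dist{v_1}+1$) in a single step.
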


\begin{restatable}{lemma}{LemmaBFSaux}
\label{lemma:bfs_aux}
In Algorithm~\aref{\textsc{BFS}}{fig:algo_bfs}, every vertex \inline{v} with \inline{\td{v} < \infty} is eventually enqueued and satisfies \inline{\td{v} \leq \dist{v}}.
\end{restatable}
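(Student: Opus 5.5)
We prove \lemref{lemma:bfs_aux} by induction on the true distance \inline{\td{v}}, in the style of the textbook argument, using \lemref{lemma:bfs_inv} only where it helps. First we record a monotonicity fact. Each vertex \inline{u} is assigned \inline{\dist{u}} and \inline{\parent{u}} exactly once, namely when \inline{I_u} is set to \inline{1}. At that moment \inline{\dist{u} = \dist{v}+1}, where \inline{v} is a dequeued vertex with \inline{\dist{v}<\infty}, and \inline{u \in C_\ell} for some \inline{\ell \in L_v}. So \inline{\brcu{u,v}} is an edge of \inline{G}, because \inline{C_\ell} is a clique containing both endpoints.

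The first half of the claim is the lower bound \inline{\td{u} \leq \dist{u}} for every discovered vertex \inline{u}. We prove it by induction on the order in which vertices are discovered. The base case is \inline{s}, where \inline{\dist{s}=0=\td{s}}. For the inductive step, \inline{u} is discovered from \inline{v}, and \inline{\brcu{u,v}\in E}. The triangle inequality for graph distance gives \inline{\td{u} \leq \td{v}+1}, and the induction hypothesis gives \inline{\td{v}+1 \leq \dist{v}+1 = \dist{u}}. This also shows that only vertices reachable from \inline{s} are ever discovered.

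The second half is that every reachable \inline{v} is eventually enqueued. We induct on \inline{\td{v}=k}, and \inline{k=0} is immediate. For \inline{k\geq 1}, take a neighbor \inline{w} of \inline{v} on a shortest \inline{s}--\inline{v} path, so \inline{\td{w}=k-1}. By induction \inline{w} is enqueued, and since the loop runs until \inline{Q} is empty, \inline{w} is eventually dequeued. Because \inline{\brcu{w,v}\in E} and \inline{\C} is a clique cover, some \inline{\ell\in L_w} satisfies \inline{v\in C_\ell}. When \inline{w} is dequeued there are two cases. If \inline{J_\ell=0}, the algorithm scans \inline{C_\ell} and enqueues \inline{v} unless \inline{I_v} is already \inline{1}. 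If \inline{J_\ell=1}, then \inline{C_\ell} was scanned earlier, and every \inline{u\in C_\ell}, including \inline{v}, had \inline{I_u=1} after that scan. Either way, \inline{v} has been enqueued by the time \inline{w}'s iteration ends.

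The main obstacle is the case where a clique is skipped because it was already scanned. The textbook proof scans every neighbor of every dequeued vertex, but here a clique is scanned only once, so coverage has to be argued as above: a scan marks every vertex of the clique as discovered, so skipping an already-scanned clique loses nothing. Termination is easy: each vertex is enqueued at most once because of \inline{I}, so the loop ends, and any vertex enqueued is eventually dequeued.
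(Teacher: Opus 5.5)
Your proof is correct and follows essentially the same route as the paper. Reachability is shown by induction along a shortest path, using the clique cover and the fact that an already-scanned clique has all its vertices discovered. The bound $\td{u}\le\dist{u}$ comes from every discovery happening along an edge of $G$: you phrase this as induction on discovery order with the triangle inequality, while the paper follows parent pointers back to $s$.
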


\begin{lemma}
\label{lemma:bfs}
Let \inline{\dcc{G}=\br{\C, \Lcal}} be a DCC representation of a graph \inline{G=\br{V,E}}, and let \inline{s \in V}. Algorithm~\aref{\textsc{BFS}}{fig:algo_bfs} computes \inline{\dist{v} = \td{v}}, for each \inline{v \in V}, and a shortest-path tree rooted at \inline{s}, induced by the parent pointers \inline{\brcu{\parent{v}}_{v \in V}}. The algorithm uses \inline{\bigO{\size{\dcc{G}}}} time and space.
\end{lemma}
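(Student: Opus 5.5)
We follow the textbook BFS correctness argument, using Lemmas~\ref{lemma:bfs_inv} and~\ref{lemma:bfs_aux} and adding one ingredient specific to the representation: skipping an already scanned clique loses no discoveries. The proof has three parts, in this order. First we show $\dist{v} = \td{v}$. Second we show the parent pointers form a shortest-path tree. Third we bound time and space.

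\textbf{Distances.} The plan is a contradiction argument. Suppose the set of vertices with $\dist{v} \ne \td{v}$ is nonempty. By Lemma~\ref{lemma:bfs_aux}, every such $v$ has $\dist{v} > \td{v}$, and necessarily $\td{v} < \infty$. Choose such a $v$ with $\td{v}$ minimal. Then $v \ne s$, so $v$ has a predecessor $u$ on a shortest $s$--$v$ path, with $\td{u} = \td{v} - 1$ and hence $\dist{u} = \td{u}$. Since $\{u,v\} \in E$ and $\C$ covers $E$, there is a label $\ell \in L_u \cap L_v$. By Lemma~\ref{lemma:bfs_aux}, $u$ is enqueued and later dequeued. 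At that moment, one of two things holds.
\begin{enumerate}
\item If $J_\ell = 0$, then $C_\ell$ is scanned from $u$. Either $v$ is discovered now, giving $\dist{v} = \dist{u} + 1 = \td{v}$, or $v$ was discovered earlier, from some vertex dequeued no later than $u$.
\item If $J_\ell = 1$, then $C_\ell$ was scanned from some vertex $w$ dequeued before $u$. At the end of that scan every vertex of $C_\ell$, including $v$, is discovered. So $v$ was discovered from a vertex $w'$ dequeued no later than $u$.
\end{enumerate}
In both cases, Lemma~\ref{lemma:bfs_inv} gives $\dist{w'} \le \dist{u}$, so $\dist{v} = \dist{w'} + 1 \le \td{v}$. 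This contradicts $\dist{v} > \td{v}$.

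The step needing the most care is the claim that discovery always happens from the dequeued vertex, so that $\dist{v} = \dist{w'} + 1$ holds. This is exactly what line 5(b)(ii) of the algorithm does.

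\textbf{Tree.} Every reachable vertex $v \ne s$ gets $\parent{v}$ set exactly once, when it is discovered. The assignment is $\parent{v} = w$, where $w$ is the currently dequeued vertex and $w, v \in C_\ell$ for some $\ell$. Hence $\{w,v\}$ is an edge, and $\td{v} = \td{w} + 1$. Following parent pointers from $v$ therefore strictly decreases $\td{\cdot}$ and terminates at $s$, tracing a path of length $\td{v}$. So the pointers form a shortest-path tree rooted at $s$. Unreachable vertices are never discovered, because every discovery moves along an edge from an already discovered vertex. They therefore keep $\dist{v} = \infty$ and $\parent{v} = \bot$.

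\textbf{Complexity.} Each vertex is enqueued at most once, guarded by $I_u$. So each list $L_v$ is iterated at most once, costing $\size{\Lcal}$ in total. Each clique is scanned at most once, guarded by $J_\ell$, costing $\size{\C}$ in total. Initialization costs $\bigO{n + \card{\C}}$. Because $G$ has no isolated vertices, $n \le \size{\C}$, and clearly $\card{\C} \le \size{\C}$. The total time is therefore $\bigO{\size{\dcc{G}}}$. The space consists of the arrays indexed by $V$ and by $\C$, the queue, and the input, which is also $\bigO{\size{\dcc{G}}}$.
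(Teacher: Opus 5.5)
Your proposal is correct and follows essentially the paper's argument. Both proofs take a minimal-$\td{\cdot}$ counterexample and its predecessor on a shortest path, use a clique of $\C$ covering that edge whose flag $J_\ell$ forces it to be scanned no later than the predecessor's dequeue, apply Lemma~\ref{lemma:bfs_inv} to compare distances, and bound the cost by the same $\size{\C}+\size{\Lcal}$ accounting. The differences are cosmetic: you bound the bad vertex's discoverer directly from the predecessor's dequeue time instead of comparing dequeue orders, and you spell out the shortest-path-tree property that the paper only asserts.
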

\begin{proof}
By \lemref{lemma:bfs_aux} it follows that a vertex \inline{v} is unreachable from \inline{s} if and only if it is never enqueued, and hence it remains with \inline{\dist{v} =\infty}. Also, by \lemref{lemma:bfs_aux}, every vertex \inline{v} with \inline{\td{v} < \infty} satisfies \inline{\td{v} \leq \dist{v}}. It remains to show that for every discovered vertex \inline{v}, \inline{\dist{v} \leq \td{v}}.

Assume for contradiction that there exists a \emph{bad} vertex \inline{u} with \inline{\dist{u} > \td{u}}, and choose such a \inline{u} minimizing \inline{\td{\cdot}}. Suppose \inline{u} is discovered while dequeueing a vertex \inline{v}. By the algorithm this implies \[\dist{u} = \dist{v} +1.\]

Consider a shortest path from \inline{s} to \inline{u}, and let \inline{w} be the predecessor of \inline{u} on that path. Then \inline{\brcu{w,u} \in E}, and \inline{\td{w}= \td{u} -1 < \td{u}}. By the minimality of \inline{\td{u}}, the vertex \inline{w} cannot be bad, so \inline{\dist{w} = \td{w}}. Since \inline{\dist{u}> \td{u} = \td{w}+ 1 = \dist{w} + 1}, we get
\[\dist{v} = \dist{u} - 1 > \dist{w}.\]
By \lemref{lemma:bfs_inv}, vertices are dequeued in nondecreasing \inline{\dist{\cdot}}, so \inline{w} is dequeued before \inline{v}.

Because \inline{\C} covers all edges, the edge \inline{\brcu{w,u}} is contained in some clique \inline{C_{\ell^\star} \in \C}. Consider the unique moment when \inline{J_{\ell^\star}} flips from \inline{0} to \inline{1}; that is exactly when the algorithm scans \inline{C_{\ell^\star}}. At that moment, the scanning vertex, call it \inline{x \in C_{\ell^\star}}, satisfies \inline{\dist{x} \leq \dist{w}}: either \inline{x=w}, or \inline{x} was dequeued even earlier than \inline{w}, and \lemref{lemma:bfs_inv} again gives \inline{\dist{x} \leq \dist{w}}. Also, \inline{u} is still undiscovered at that time, because \inline{u} is first discovered later when \inline{v} is dequeued. Therefore, when \inline{C_{\ell^\star}} is scanned, the algorithm would discover \inline{u} and set
\[\dist{u} = \dist{x} + 1 \leq \dist{w} + 1 =\td{w} + 1 = \td{u},\]
contradicting that \inline{u} is bad. Hence no bad vertex exists, and thus \inline{\dist{v} = \td{v}} for every reachable vertex \inline{v}. The parent pointers \inline{\parent{\cdot}} form a shortest-path tree rooted at \inline{s}.

For the running time, each vertex is enqueued at most once, so total queue operations take \inline{\bigO{|V|}} time. In Step~5(b)(ii), each clique \inline{C_\ell} is scanned at most once, so the total time spent in scanning all cliques in Step~(5)(b)(ii) is \inline{\size{\C}}. The total time spent in iterating over all \inline{L_v} in Step~5(b) is \inline{\size{\Lcal}}.
Thus the total running time is \inline{\bigO{\size{\C}+\size{\Lcal}} = \bigO{\size{\dcc{G}}}}. 

The algorithm stores the queue \inline{Q}, the arrays \inline{\brcu{I_v}}, \inline{\dist{v}}, and \inline{\parent{v}}, for \inline{v \in V}, and \inline{J_\ell} for \inline{C_\ell \in \C}, using \inline{\bigO{\card{V}+\card{\C}} \subseteq\bigO{\size{\dcc{G}}}} space.
\end{proof}

It is straightforward to adapt Algorithm~\aref{BFS}{fig:algo_bfs} to compute a BFS forest. Execute Steps~1--2 of \aref{BFS}{fig:algo_bfs} once. Then iterate over all vertices \inline{s \in V}; whenever \inline{I_s=0}, run Steps~3--5 with this \inline{s} as the source. At the end,  the parent pointers \inline{\brcu{\parent{v}}_{v \in V}} induce a BFS forest of \inline{G}. Therefore, by \lemref{lemma:bfs}, we obtain the following.

\begin{corollary}
Let \inline{\dcc{G}=\br{\C, \Lcal}} be a DCC representation of a graph \inline{G=\br{V,E}}. A breadth-first search forest of \inline{G} can be computed in \inline{\bigO{\size{\dcc{G}}}} time and space.
\end{corollary}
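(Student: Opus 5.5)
The plan is to reduce the corollary to \lemref{lemma:bfs} by treating the forest procedure as a sequence of BFS runs that share one set of global arrays. First I would make the procedure precise. Steps~1--2 of Algorithm~\aref{BFS}{fig:algo_bfs} run once, initializing $\dist{\cdot}$, $\parent{\cdot}$, $I_v$ and $J_\ell$. The outer loop then scans $V$ in a fixed order, and whenever it meets a vertex $s$ with $I_s=0$ it runs Steps~3--5 from $s$. The one thing to note is that the arrays are \emph{not} reset between runs. So I must show that each run behaves exactly like a fresh run of the algorithm on the connected component of $s$.

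The key invariant is that at the start of each run, $I_u=1$ exactly for the vertices in components already explored, and $J_\ell=1$ exactly for cliques contained in those components. I would prove this by induction over the runs. Each clique is a clique of $G$, so it lies inside a single component. A run from $s$ dequeues only vertices reachable from $s$, and it scans only cliques through those vertices. By \lemref{lemma:bfs_aux}, it also discovers every vertex of the component of $s$. Because of this, the run from $s$ never sees a set flag that belongs to its own component. Its execution is therefore identical to Algorithm~\aref{BFS}{fig:algo_bfs} run on the induced sub-representation of that component. By \lemref{lemma:bfs}, it produces correct distances and a shortest-path tree spanning that component. Every vertex gets visited, since the outer loop starts a run at any vertex left undiscovered. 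The parent pointers therefore form a disjoint union of BFS trees, one per component, which is a BFS forest. This invariant is the only step with real content, and I expect it to be the main (mild) obstacle.

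For complexity, I would use the same charging argument as in \lemref{lemma:bfs}, but over the whole execution. Each vertex is enqueued and dequeued at most once in total, because $I_v$ is never reset. Each $L_v$ is therefore iterated once overall, for a total of $\size{\Lcal}$. Each clique is scanned once overall because $J_\ell$ is never reset, for a total of $\size{\C}$. The outer loop adds $\bigO{\card{V}}$. Since there are no isolated vertices, $\card{V}\le \size{\C}$. The total time is thus $\bigO{\size{\dcc{G}}}$. Space is the shared arrays plus the queue, which is $\bigO{\card{V}+\card{\C}}=\bigO{\size{\dcc{G}}}$ on top of the input.
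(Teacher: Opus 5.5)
Your proposal is correct and follows the paper's own argument. The paper likewise runs Steps~1--2 of Algorithm~\aref{BFS}{fig:algo_bfs} once, restarts Steps~3--5 from each still-undiscovered vertex, and appeals to \lemref{lemma:bfs}; you additionally make explicit what the paper calls straightforward, namely that each restart behaves like a fresh run on its component's sub-representation, and that because flags are never reset the cost can be charged globally to get $\bigO{\size{\dcc{G}}}$ overall rather than per run.
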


\begin{definition}[Eccentricity, diameter, radius, and center]
\label{def:diam}
For a connected graph \inline{G=\br{V,E}}, the eccentricity of a vertex \inline{v \in V} is 
\[\epsilon\br{v}:= \max_{u \in V} \Dist_G\br{u,v}.\]
The diameter and radius of \inline{G} are \inline{\diam{G}:=\max_{v\in V} \epsilon\br{v}} and \inline{\radius{G}:= \min_{v \in V} \epsilon\br{v}},
respectively. The \emph{center} of \inline{G} is the set \inline{\cent{G}:= \brcu{v \in V \mid \epsilon\br{v} = \radius{G}}}.
\end{definition}

\begin{corollary}
Let \inline{\dcc{G}=\br{\C, \Lcal}} be a DCC representation of a connected graph \inline{G=\br{V,E}} with \inline{n} vertices. Then \inline{\diam{G}}, \inline{\radius{G}}, and \inline{\cent{G}} can be computed in \inline{\bigO{n \cdot \size{\dcc{G}}}} time and \inline{\bigO{\size{\dcc{G}}}} space.
\end{corollary}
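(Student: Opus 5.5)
The plan is to run Algorithm~\aref{BFS}{fig:algo_bfs} once from every source $s \in V$ and fold the resulting distances into the eccentricities. By \lemref{lemma:bfs}, the call $\textsc{BFS}(\C,\Lcal,s)$ returns $\dist{v} = \Dist_G(s,v)$ for every $v$, in $\bigO{\size{\dcc{G}}}$ time and space. Since $G$ is connected, every distance is finite. Hence $\epsilon(s) = \max_{v \in V} \dist{v}$, which can be read off with one extra $\bigO{n}$ pass over the distance array. Note that $n \le \size{\C}$, because each vertex lies in at least one clique: $G$ has no isolated vertices, so each vertex is covered by a clique containing an incident edge. So this extra pass does not change the per-source bound.

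First I would fix a single array $\epsilon(\cdot)$ of length $n$, which is allocated once. Then I would loop over $s \in V$. In each iteration I reinitialize the BFS working arrays ($\dist{\cdot}$, $\parent{\cdot}$, $I_v$, $J_\ell$, and the queue), run Steps~3--5, and store $\epsilon(s)$. Reinitializing costs $\bigO{n + \card{\C}} = \bigO{\size{\dcc{G}}}$. The important point is that the working memory is reused across iterations rather than kept, so the space stays $\bigO{\size{\dcc{G}}} + \bigO{n} = \bigO{\size{\dcc{G}}}$. The time is $n$ runs of $\bigO{\size{\dcc{G}}}$ each, giving $\bigO{n \cdot \size{\dcc{G}}}$ in total.

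After the loop, I compute $\diam{G} = \max_v \epsilon(v)$ and $\radius{G} = \min_v \epsilon(v)$ in one $\bigO{n}$ scan. A second $\bigO{n}$ scan collects $\cent{G} = \{v : \epsilon(v) = \radius{G}\}$. These outputs use $\bigO{n}$ space, and correctness follows directly from Definition~\ref{def:diam} together with the exactness of the distances guaranteed by \lemref{lemma:bfs}.

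There is no real obstacle here. The only step that needs care is the space accounting: the per-source BFS structures must be overwritten rather than accumulated, and the $\bigO{n}$ terms must be absorbed into $\bigO{\size{\dcc{G}}}$ using $n \le \size{\C}$.
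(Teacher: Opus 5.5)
Your proposal is correct and follows the paper's proof: run Algorithm~\aref{BFS}{fig:algo_bfs} from every source, use \lemref{lemma:bfs} to read off each eccentricity as the maximum distance, then take the max, the min, and the argmin set, for $n$ runs of $\bigO{\size{\dcc{G}}}$ each. Your explicit space accounting (reusing the BFS working arrays across sources and absorbing the $\bigO{n}$ eccentricity array via $n \le \size{\C}$) only spells out what the paper's ``the space bound remains $\bigO{\size{\dcc{G}}}$'' leaves implicit.
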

\begin{proof}
For each source \inline{s \in V}, run Algorithm~\aref{BFS}{fig:algo_bfs}. By \lemref{lemma:bfs}, the output satisfies \inline{\dist{v}=\Dist_G\br{s,v}} for all \inline{v \in V}. Define 
\[D_s := \max_{v \in V} \dist{v} = \max_{v \in V} \Dist_G\br{s,v} = \epsilon\br{s}.\]
Thus, \inline{\diam{G} = \max_{s\in V} D_s}, \inline{\radius{G} = \min_{s\in V} D_s}, and \inline{\cent{G}= \brcu{s \in V \mid D_s = \radius{G}}}.

Each execution of \aref{BFS}{fig:algo_bfs} takes \inline{\bigO{\size{\dcc{G}}}} time and space by \lemref{lemma:bfs}. Running it for all \inline{n} sources gives total time \inline{\bigO{n \cdot \size{\dcc{G}}}}, and the space bound remains \inline{\bigO{\size{\dcc{G}}}}.
\end{proof}

\subsection{Depth-first Search}
\label{subsec:dfs}

We now describe an algorithm to compute a depth-first search (DFS) forest of an undirected graph using a DCC representation.

A textbook DFS maintains a discovery indicator \inline{I_v \in \brcu{0,1}} and a parent pointer \inline{\parent{v}} for each vertex \inline{v}. It iterates over the vertices in some fixed order. Whenever it encounters an undiscovered vertex \inline{v}, it starts a new DFS tree rooted at \inline{v}. A recursive subroutine \textsc{DFS-Visit}($v$) marks \inline{v} as discovered and then scans the neighbors of \inline{v} in some fixed order. Whenever an undiscovered neighbor \inline{u} of \inline{v} is first encountered, DFS sets \inline{\parent{u} \gets v}, and immediately visits \inline{u}, postponing the remaining scan of \inline{v} until the visit to \inline{u} completes. The parent pointers form a spanning forest of \inline{G}.

\begin{figure}[H]
    \centering
    \captionsetup[subfigure]{font=small, labelfont=bf}

    \begin{subfigure}[t]{0.48\textwidth}
        \centering
        \begin{mdframed}[
            linewidth=0.5pt,
            roundcorner=7pt,
            backgroundcolor=gray!5,
            frametitle={\underline{\textsc{DFS}$\br{\C, \Lcal}$}}
        ]
        \begin{enumerate}
            \item Set \inline{V \gets \bigcup_{C_\ell \in \C} C_\ell}.
            \item Set \inline{\parent{v} \gets \bot} and \inline{I_v \gets 0}, for all \inline{v \in V}.
            \item For each \inline{C_\ell \in \C}, set \inline{a_\ell \gets 1}, fix an ordering of each set \inline{C_\ell}, and let \inline{C_\ell[i]} denote its $i$-th vertex.
            \item For each \inline{v \in V} with \inline{I_v=0} do
            \begin{enumerate}
                \item \aref{\textsc{DFS-Visit}}{fig:sub_dfs}\inline{\br{\C, \Lcal, v}}.
            \end{enumerate}
            \item Return \inline{\brcu{\parent{v}}_{v \in V}}.
        \end{enumerate}
        \end{mdframed}
        \caption{Top-level routine.}
        \label{fig:algo_dfs}
    \end{subfigure}
    \hfill
    \begin{subfigure}[t]{0.48\textwidth}
        \centering
        \begin{mdframed}[
            linewidth=0.5pt,
            roundcorner=7pt,
            backgroundcolor=gray!5,
            frametitle={\underline{\textsc{DFS-Visit}$\br{\C, \Lcal, v}$}}
        ]
        \textcolor{teal}{/*Uses \inline{\brcu{I_v}}, \inline{\brcu{\parent{v}}}, and \inline{\brcu{a_\ell}} from \textsc{DFS}.*/}

        \begin{enumerate}
            \item Set \inline{I_v = 1}.
            \item For each \inline{\ell \in L_v} do
            \begin{enumerate}
                \item While \inline{a_\ell \leq \card{C_\ell}} do
                \begin{enumerate}
                    \item Set \inline{u \gets C_\ell[a_\ell]} and \inline{a_\ell \gets a_\ell + 1}.
                    \item If \inline{I_u = 1}, continue.
                    \item Set \inline{I_u = 1} and \inline{\parent{u} = v}.
                    \item \aref{\textsc{DFS-Visit}}{fig:sub_dfs}\inline{\br{\C, \Lcal, u}}.
                \end{enumerate}
            \end{enumerate}
            \item Return.
        \end{enumerate}
        \end{mdframed}
        \caption{Recursive subroutine.}
        \label{fig:sub_dfs}
    \end{subfigure}

    \caption{Depth-first search forest using a DCC representation. Left: the top-level DFS routine initializes the search state and starts a traversal from each undiscovered vertex. Right: \textsc{DFS-Visit} scans the relevant cliques of a discovered vertex and continues the traversal recursively.}
    \label{fig:dfs}
\end{figure}

\fref{fig:dfs} implements this algorithm using a DCC representation. The algorithm maintains \inline{I_v} and \inline{\parent{v}} for all \inline{v \in V}. For each \inline{C_\ell \in \C}, it fixes an ordering \inline{\langle C_\ell[1],\dots, C_\ell[\card{C_\ell}] \rangle}, and maintains a cursor \inline{a_\ell} that indicates the next element to inspect. When visiting a vertex \inline{v}, the algorithm scans the labels \inline{ \ell \in L_v} (in a fixed order), and for each such \inline{\ell} it continues scanning the clique \inline{C_\ell} from the current cursor \inline{a_\ell}. Each time it encounters an undiscovered vertex \inline{u \in C_\ell}, it marks \inline{u} as discovered, assigns \inline{\parent{u} \gets v}, and immediately visits \inline{u}.

\begin{lemma}
\label{lemma:dfs}
Let \inline{\dcc{G}=\br{\C, \Lcal}} be a DCC representation of a graph \inline{G=\br{V,E}}. Algorithm~\aref{\textsc{DFS}}{fig:algo_dfs} computes a depth-first search forest of \inline{G} induced by the parent pointers \inline{\brcu{\parent{v}}_{v \in V}}, using \inline{\bigO{\size{\dcc{G}}}} time and space.
\end{lemma}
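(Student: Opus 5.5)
The proof splits into two parts. First I show that the parent pointers form a spanning forest whose trees are exactly the connected components. Then I show it is a genuine depth-first search forest, meaning it is the output of some execution of the textbook DFS. The cleanest route to the second claim is to exhibit, for every vertex $v$, an ordering of $N(v)$ under which the textbook \textsc{DFS-Visit}$(v)$ makes exactly the same tree-edge choices as \aref{\textsc{DFS-Visit}}{fig:sub_dfs}. Equivalently, I will check the characterizing property of DFS trees in undirected graphs: every edge of $G$ joins an ancestor–descendant pair. For simplicity I aim for that property, together with the observation that each tree edge $\{\parent{u},u\}$ is a graph edge. The latter holds because $u$ and $\parent{u}=v$ share the clique $C_\ell$ with $\ell\in L_v$.

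The key invariant concerns the shared cursors $a_\ell$. A vertex $C_\ell[i]$ is skipped at index $i<a_\ell$ only in one of two situations. Either it was already discovered when the cursor passed it, or the cursor passed it by recursing into it. In the second case it became discovered at that moment. Hence I will prove: at any time, every vertex $C_\ell[i]$ with $i<a_\ell$ is discovered. It follows that when \aref{\textsc{DFS-Visit}}{fig:sub_dfs}$(v)$ returns, every clique $C_\ell$ with $\ell\in L_v$ has $a_\ell>\card{C_\ell}$. The inner while loop for $\ell$ terminates only with the cursor past the end, and cursors never decrease. So every member of every clique containing $v$, and by the cover property every neighbor of $v$, is discovered before $v$ finishes.

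Now take an edge $\{v,w\}$ and suppose $v$ is discovered first. At the time $v$ is discovered, $w$ is undiscovered. By the above, $w$ is discovered before \textsc{DFS-Visit}$(v)$ returns, so $w$ is discovered during the recursive call on $v$. A standard induction on the recursion stack (the parenthesis structure) shows that any vertex discovered during \textsc{DFS-Visit}$(v)$ is a descendant of $v$ in the parent-pointer forest. Therefore every edge joins an ancestor and a descendant, which is the defining DFS-tree property. Spanning-ness follows from the top-level loop in Step~4 of \aref{\textsc{DFS}}{fig:algo_dfs}: every vertex is eventually visited, and each tree lies within one component since tree edges are graph edges.

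Complexity. Each cursor $a_\ell$ increments at most $\card{C_\ell}$ times in total across all calls, so the inner loop bodies cost $\bigO{\size{\C}}$ overall. Each vertex is visited once, so iterating over $L_v$ costs $\bigO{\size{\Lcal}}$ overall, including the while-loop checks that fail immediately. The total is $\bigO{\size{\dcc{G}}}$. Space consists of $I_v$, $\parent{v}$, the cursors, and the recursion stack of depth at most $n$. I regard the only delicate step as the ancestor–descendant argument. The subtlety is that a vertex $w\in C_\ell$ may be passed over by an ancestor's cursor movement rather than by $v$'s own scan. The invariant "passed implies discovered" handles this, and making that invariant precise is the main obstacle.
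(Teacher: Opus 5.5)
Your proof is correct, but it establishes the depth-first property by a different route than the paper.

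The paper argues in three steps:
(a) parent pointers are assigned once and always join clique-mates, so they form a forest of graph edges;
(b) the set discovered by \textsc{DFS-Visit}$(r)$ is exactly the component of $r$, by induction along an $r$--$x$ path using the cover property and the monotone cursor;
(c) the forest is depth-first because the algorithm recurses immediately upon discovery. This step is only asserted, not argued.

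You instead isolate the invariant that every $C_\ell[i]$ with $i<a_\ell$ is discovered. From it you deduce that all neighbors of $v$ are discovered before \textsc{DFS-Visit}$(v)$ returns. The parenthesis structure then gives that every edge joins an ancestor--descendant pair, i.e., the forest is a normal (Tr\'emaux) forest.

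Your route has three advantages. The depth-first claim, which the paper leaves informal, is actually verified. Your invariant is exactly what the paper's induction step tacitly relies on when it says $v_i$ ``was discovered earlier'' because the cursor only increases. And you account for the recursion stack in the space bound.

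Two points should be tightened.
\begin{enumerate}
\item Your Step~4 argument only shows that each tree lies inside a component. Say explicitly that the reverse containment follows from your ancestor--descendant property: an edge between two distinct trees of the same component would violate it.
\item You use the converse direction of the characterization as a black box, namely that every normal spanning forest, with the roots chosen by Step~4, is the output of some textbook DFS. The simulation you mention first would make the argument self-contained with the same invariant: order $N(v)$ by first occurrence in the concatenation of the $C_\ell\setminus\{v\}$ for $\ell\in L_v$, and observe that cursor-skipped vertices are already discovered, so both executions make the same recursion choices.
\end{enumerate}

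The paper's route is shorter and handles component-spanning directly, at the cost of leaving the depth-first nature unproved.
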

\begin{proof}
Each vertex \inline{u} is assigned a parent only once, at its first discovery (Step~2(a)(iii) of \aref{\textsc{DFS-Visit}}{fig:sub_dfs}); thereafter \inline{I_u=1} prevents reassignment. When \inline{\parent{u} \gets v} is set, \inline{u} was scanned while processing some \inline{\ell \in L_v} with \inline{u \in C_\ell}, hence \inline{u,v \in C_\ell} and \inline{\brcu{u,v} \in E}. Thus the parent pointers induce a forest subgraph of \inline{G}.

Fix a root \inline{r} where \aref{\textsc{DFS}}{fig:algo_dfs} calls \aref{\textsc{DFS-Visit}$\br{\C, \Lcal, r}$}{fig:sub_dfs}, and let \inline{S} be the set of vertices discovered by that call (including recursion). Every vertex in \inline{S} is connected to \inline{r} via parent pointers, so \inline{S} lies in the connected component of \inline{r}. Conversely, for any vertex \inline{x} in the component of \inline{r}, take any simple path \inline{r=v_0, \dots, v_k=x}. We show by induction that each \inline{v_i} is discovered during \aref{\textsc{DFS-Visit}$\br{\C, \Lcal, r}$}{fig:sub_dfs}.

The base case \inline{v_0=r} holds. For \inline{i \geq 1},
since \inline{\C} covers all edges, for each edge \inline{\brcu{v_{i-1}, v_i}}, some clique \inline{C_{\ell^\star}} contains both endpoints, so \inline{\ell^\star \in L_{v_{i-1}} }. By the induction hypothesis, \inline{v_{i-1}} is discovered and therefore \aref{\textsc{DFS-Visit}$\br{\C, \Lcal, v_{i-1}}$}{fig:sub_dfs} is invoked. When \aref{\textsc{DFS-Visit}$\br{\C, \Lcal, v_{i-1}}$}{fig:sub_dfs} processes \inline{\ell^\star}, it scans \inline{C_{\ell^\star}} from cursor \inline{a_{\ell^\star}}. Since \inline{a_{\ell^\star}} only increases, \inline{v_i} is either discovered at that scan if still undiscovered, or it was discovered earlier. Hence \inline{x=v_k} is discovered, so \inline{S} is exactly the component of \inline{r}.

Therefore, the forest spans every component, and it is depth-first because the algorithm immediately recurses upon discovering an undiscovered vertex (Step~2(a)(iv) of \aref{\textsc{DFS-Visit}}{fig:sub_dfs}), postponing the remaining scan of the current vertex.

For running time, for vertex \inline{v}, the algorithm iterates once over the labels \inline{\ell \in L_v}. Thus the total number of iterations over all \inline{L_v} is \inline{\size{\Lcal}}. For each clique \inline{C_\ell}, the while-loop advances \inline{a_\ell} from \inline{1} to \inline{\card{C_\ell}+1} and never decreases, so its total number of iterations over the whole execution is at most \inline{\card{C_\ell}}. Summing over all cliques gives \inline{\size{\C}}. Hence total time is \inline{\bigO{\size{\C}+ \size{\Lcal}} = \bigO{\size{\dcc{G}}}}. The algorithm stores \inline{I_v}, \inline{\parent{v}} for all \inline{v}, and \inline{a_\ell} for all \inline{\ell}, using \inline{\bigO{\card{V}+ \card{\C}} \subseteq \bigO{\size{\dcc{G}}}} space.
\end{proof}

\subsection{Connected Components via Union--Find}
\label{subsec:ccs_uf}

Although the BFS and DFS algorithms described above are sufficient for computing connected components (and a spanning forest), we now describe a different approach using a union--find data structure. This approach only uses clique cover \inline{\C} (the incidence dual \inline{\Lcal} is not needed in this case), and is more amenable to parallelization than BFS/DFS.

A union--find data structure maintains a partition of the vertices into disjoint sets. It supports the following operations:
\begin{itemize}
    \item \inline{\ufmakeset{v}} initializes the singleton set \inline{\brcu{v}}.
    \item \inline{\uffind{v}} returns a representative of the set containing \inline{v}.
    \item \inline{\ufunion{u,v}} merges the set containing \inline{u} and \inline{v}.
\end{itemize}
    
With union by rank and path compression, a sequence of \inline{p} union--find operations runs in \inline{\bigO{\alpha\br{n} \cdot \br{n+p}}} total time, where \inline{\alpha\br{\cdot}} is the inverse Ackermann function \cite{cormen2022introduction}. The function \inline{\alpha\br{n}} grows extremely slowly with \inline{n}, and for all practical input sizes it is at most 4.

\fref{fig:algo_ccs_uf} outlines an algorithm to compute connected components of a graph \inline{G=(V,E)} using a clique cover \inline{\C} of \inline{G}. It initializes a singleton set for each vertex \inline{v} using \inline{\ufmakeset{v}}. The algorithm iterates over the cliques, and for each clique \inline{C_\ell} it chooses an arbitrary vertex \inline{v \in C_\ell}. For each \inline{u \in C_\ell \setminus \brcu{v}}, if \inline{\uffind{u} \ne \uffind{v}}, then it calls \inline{\ufunion{u,v}}. It returns component labels \inline{\comp{v}} for all \inline{v\in V}. Note that by storing each edge \inline{\brcu{u,v}} whenever \inline{\ufunion{u,v}} is invoked, we can obtain a spanning forest of \inline{G} simultaneously.

\begin{figure}[h]
    \centering
    \begin{parbox}{3.5in}{    
        \begin{mdframed}[linewidth=0.5pt, roundcorner=7pt, backgroundcolor=gray!5,
                         frametitle={\underline{\textsc{Connected-Components}$\br{\C}$}}]  
        \begin{enumerate}            
            \item Set \inline{V \gets \bigcup_{C_\ell \in \C} C_\ell}.    
            \item For each \inline{v \in V}, \inline{\ufmakeset{v}}.
            \item For each \inline{C_\ell \in \C} do
            \begin{enumerate}
                \item Choose any vertex \inline{v \in C_\ell}.
                \item For each \inline{u \in C_\ell  \setminus \brcu{v}} do
                \begin{enumerate}                    
                    \item If \inline{\uffind{u} \ne \uffind{v}}, then  \inline{\ufunion{u,v}}.
                \end{enumerate}
            \end{enumerate}
            \item For each \inline{v \in V}, set \inline{\comp{v} \gets \uffind{v}}.
            \item Return \inline{\brcu{\comp{v}}_{v \in V}}.
        \end{enumerate}
        \end{mdframed}    
    
    }
    \end{parbox}    
    \caption{Connected components using a clique cover.}
    \label{fig:algo_ccs_uf}
\end{figure}

\begin{lemma}
\label{lemma:ccs}
Let \inline{\C} be a clique cover of a graph \inline{G} with \inline{n} vertices. Using \inline{\C}, Algorithm~\aref{\textsc{Connected-Components}}{fig:algo_ccs_uf} computes the connected components of \inline{G} in \inline{\bigO{\alpha\br{n} \cdot \size{\C}}} time using \inline{\bigO{\size{\C}}} space.
\end{lemma}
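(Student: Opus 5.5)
The proof has two parts: correctness, which I would establish as the invariant that after Step~3 two vertices lie in the same union--find set if and only if they lie in the same connected component of \inline{G}; and a direct count of union--find operations for the complexity. Throughout I use that \inline{V} is recovered as \inline{\bigcup_{C_\ell \in \C} C_\ell}. This is valid because graphs have no isolated vertices and \inline{\C} covers every edge, so every vertex lies in some clique.

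\textbf{Soundness} (sets never mix components). Every call \inline{\ufunion{u,v}} is made with \inline{u,v \in C_\ell} for some clique \inline{C_\ell}. Hence \inline{\brcu{u,v} \in E}, so \inline{u} and \inline{v} already lie in the same component of \inline{G}. By induction on the number of unions, each union--find set is contained in a single connected component.

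\textbf{Completeness} (vertices in one component end up in one set). First, after \inline{C_\ell} is processed, all vertices of \inline{C_\ell} share a set. For each \inline{u \in C_\ell \setminus \brcu{v}}, either \inline{\uffind{u}=\uffind{v}} already holds, or the union makes it hold. Since sets only merge and never split, \inline{u} and \inline{v} stay together from then on. Next, take any edge \inline{\brcu{x,y} \in E}. The cover contains some \inline{C_\ell} with \inline{\brcu{x,y} \subseteq C_\ell}, so \inline{x} and \inline{y} end in the same set. Applying this along any path in \inline{G} shows that every connected component lies inside one set. Therefore the labels \inline{\comp{\cdot}} returned in Step~4 coincide exactly with the components. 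If each union edge is recorded, the recorded edges form a spanning forest: each union merges two distinct sets, so no cycle is created, and the number of recorded edges is \inline{n} minus the number of components.

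\textbf{Complexity.} The \inline{\MakeSet} calls number \inline{n \le \size{\C}}. Clique \inline{C_\ell} contributes at most \inline{2(\card{C_\ell}-1)} \inline{\Find} calls and \inline{\card{C_\ell}-1} \inline{\Union} calls, and Step~4 adds \inline{n} more \inline{\Find} calls. So the total number of operations is \inline{p = \bigO{\size{\C}}}. By the cited union-by-rank and path-compression bound, these run in \inline{\bigO{\alpha(n)(n+p)} = \bigO{\alpha(n)\size{\C}}} time. Computing \inline{V} in Step~1 takes \inline{\bigO{\size{\C}}} time with a marker array. For space, the union--find arrays and the labels use \inline{\bigO{n}} space, and storing \inline{\C} uses \inline{\bigO{\size{\C}}}, for a total of \inline{\bigO{\size{\C}}}.

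The only delicate point is the completeness invariant: each clique must become fully merged regardless of what happens later. That rests on the monotonicity of union--find partitions, so I do not expect a serious obstacle.
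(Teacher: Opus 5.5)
Your proof is correct and follows essentially the same route as the paper. Soundness holds because every union joins two vertices of a common clique, hence an edge. Completeness follows by covering each edge of a path with a clique that ends up fully merged, and the complexity comes from counting $\bigO{\size{\C}}$ union--find operations and applying the inverse-Ackermann bound. Your explicit appeal to monotonicity of the union--find partition makes precise a step the paper leaves implicit: the merge of a clique persists to the end.
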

\begin{proof}
We show that for any \inline{u,v \in V}, at the end \inline{\uffind{u} = \uffind{v}} if and only if \inline{u} and \inline{v} lie in the same connected component of \inline{G}.

If \inline{\uffind{u}= \uffind{v}}, then there is a sequence of successful unions that merged their sets. Each successful union is performed on a pair \inline{x,y} with \inline{x,y \in C_\ell} for some clique \inline{C_\ell}, hence \inline{\brcu{x,y} \in E}. Therefore, the successful unions correspond to edges in \inline{G} that connect vertices within each union--find set, so \inline{u} and \inline{v} are connected in \inline{G}.

Conversely, suppose \inline{u} and \inline{v} are connected in \inline{G}, and let \inline{u=v_0,\dots, v_k=v} be any simple path. Since \inline{\C} covers all edges, for each edge \inline{\brcu{v_{i-1}, v_i}}, there exists a clique \inline{C_{\ell^\star}} that contains both \inline{v_{i-1}} and \inline{v_i}. When the algorithm processes \inline{C_{\ell^\star}}, it selects a vertex \inline{x \in C_{\ell^\star}} and performs unions between \inline{x} and all other vertices \inline{y \in  C_{\ell^\star} \setminus \brcu{x}}. So after this processing, \inline{\uffind{v_{i-1}} = \uffind{x} = \uffind{v_i}}. Applying this along the path implies \inline{\uffind{u}=\uffind{v}}.

For the running time, the \inline{\ufmakeset{\cdot}} operations take \inline{\bigO{\card{V}}} time. The algorithm iterates over the cliques once in Step~3 and for each clique \inline{C_\ell} performs \inline{\bigO{\card{C_\ell}}} union--find operations, spending a total of \inline{\bigO{\alpha\br{n} \cdot \size{\C}}} time. The rest of the cost is dominated by this, so total time is \inline{\bigO{\card{V} + \alpha\br{n} \cdot \size{\C}} \subseteq \bigO{\alpha\br{n} \cdot \size{\C}}}. The algorithm uses \inline{\bigO{\card{V}} \subseteq \bigO{\size{\C}}} space for the union--find data structure.
\end{proof}

\subsection{Maximal Matching}
\label{subsec:matching}

\begin{definition}[Maximal Matching]
A \emph{matching} in a graph \inline{G=\br{V,E}} is a set \inline{\M \subseteq E} of pairwise vertex-disjoint edges. A matching \inline{\M} is \emph{maximal} if no edge from \inline{E\setminus \M} can be added to \inline{\M} without violating the matching property. Equivalently, every edge of \inline{G} is incident to at least one endpoint of an edge in \inline{\M}.    
\end{definition}

A standard algorithm for finding a maximal matching is as follows. Start with an empty matching \inline{\M=\emptyset} and mark all vertices unmatched. Iterate over the edges and if an edge \inline{\brcu{u,v}} has both endpoints unmatched, add \inline{\brcu{u,v}} to \inline{\M} and mark \inline{u,v} as matched. The resulting \inline{\M} is maximal.

\fref{fig:algo_matching} shows this algorithm implemented using only a clique cover \inline{\C} of \inline{G} (the incidence dual \inline{\Lcal} is not needed in this case). The algorithm maintains an indicator array: \inline{I_v \in \brcu{0,1}} marks whether a vertex \inline{v} is matched. It iterates over the cliques in \inline{\C}. For each clique \inline{C_\ell}, it forms the set \inline{U \subseteq C_\ell} of currently-unmatched vertices. It then pairs up the vertices arbitrarily and adds these edges to the matching. 

\begin{figure}[h]
    \centering
    \begin{parbox}{3.8in}{    
        \begin{mdframed}[linewidth=0.5pt, roundcorner=7pt, backgroundcolor=gray!5,
                         frametitle={\underline{\textsc{Maximal-Matching}$\br{\C}$}}]  
        \begin{enumerate}            
            \item Set \inline{V \gets \bigcup_{C_\ell \in \C} C_\ell}.
            \item Initialize \inline{\M \gets \emptyset} and \inline{I_v \gets 0}, for all \inline{v \in V}.
            \item For each \inline{C_\ell \in \C} do
            \begin{enumerate}
                \item Set \inline{U \gets \brcu{\, v\in C_\ell \mid I_v =0\,}}.
                \item If \inline{\card{U} \leq 1}, then continue to the next clique.
                \item Let \inline{\langle u_1,\dots,u_k\rangle} be an arbitrary ordering of \inline{U}.
                \item For \inline{i=1} to \inline{k-1} in steps of 2 do
                \begin{enumerate}                    
                    \item Set \inline{\M \gets \M \cup \brcu{\brcu{u_i,u_{i+1}}}}.
                    \item Set \inline{I_{u_i} \gets 1} and \inline{I_{u_{i+1}} \gets 1}.
                \end{enumerate}
            \end{enumerate}
            \item Return \inline{\M}.
        \end{enumerate}
        \end{mdframed}    
    
    }
    \end{parbox}    
    \caption{A maximal matching algorithm using a clique cover.}
    \label{fig:algo_matching}
\end{figure}

\begin{lemma}
\label{lemma:matching}
Let \inline{\C} be a clique cover of a graph \inline{G}. Using \inline{\C}, Algorithm~\aref{\textsc{Maximal-Matching}}{fig:algo_matching} computes a maximal matching in \inline{G} in \inline{\bigO{\size{\C}}} time and space.
\end{lemma}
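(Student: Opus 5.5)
The proof has three parts: validity of $\M$ as a matching, maximality, and the complexity bound. I will follow the structure of the proofs of \lemref{lemma:ccs} and \lemref{lemma:bfs}, using that every clique of $\C$ is a clique of $G$ and that $\C$ covers all edges.

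\emph{Matching property.} Every pair $\brcu{u_i,u_{i+1}}$ added in Step~3(d) lies inside some clique $C_\ell$, so it is an edge of $G$. Both endpoints belong to $U$, so $I_{u_i}=I_{u_{i+1}}=0$ at the moment $U$ is formed. Within one clique, the pairs $\brcu{u_1,u_2},\brcu{u_3,u_4},\dots$ are pairwise disjoint. Each of these vertices then gets $I=1$, and no vertex with $I=1$ ever enters a later $U$. Hence every vertex is an endpoint of at most one edge of $\M$, so $\M$ is a matching. This can be phrased as the invariant ``$I_v=1$ if and only if $v$ is covered by $\M$,'' maintained after each clique is processed.

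\emph{Maximality.} This is the only step needing care. Take any edge $\brcu{x,y}\in E$ and assume for contradiction that both $x$ and $y$ are unmatched at termination. Since $\C$ covers $E$, some $C_{\ell^\star}$ contains both. Because $I$ values only go from $0$ to $1$, both had $I=0$ when $C_{\ell^\star}$ was processed, so $x,y\in U$ and $\card{U}\ge 2$. The pairing loop matches every element of $U$ except possibly the last one, $u_k$, when $k$ is odd. So at most one vertex of $U$ stays unmatched, which means one of $x,y$ becomes matched. This is a contradiction, so every edge has a matched endpoint and $\M$ is maximal.

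\emph{Complexity.} Step~1 and the initialization take $\bigO{\size{\C}}$ time, since $\card{V}\le\size{\C}$. For each clique $C_\ell$, forming $U$, ordering it, and pairing it take $\bigO{\card{C_\ell}}$ time with array-indexed indicators. Summing over all cliques gives $\bigO{\size{\C}}$. For space, $\card{\M}\le \card{V}/2$, and the arrays $I_v$ and the temporary set $U$ take $\bigO{\card{V}+\max_\ell\card{C_\ell}}\subseteq\bigO{\size{\C}}$. I expect no real obstacle. The only subtle point is the odd-$\card{U}$ case in the maximality argument, which the ``at most one leftover per clique'' observation handles.
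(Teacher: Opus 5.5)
Your proposal is correct and follows the same three-part argument as the paper. The matching property comes from the indicator array, maximality is proved by contradiction using a covering clique and the fact that pairing leaves at most one vertex of $U$ unmatched, and each clique costs $\bigO{\card{C_\ell}}$, which sums to $\bigO{\size{\C}}$. Your explicit remark that the pairs formed within a single clique are disjoint fills in a point the paper's proof leaves implicit.
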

\begin{proof}
The algorithm adds an edge \inline{\brcu{u_i, u_{i+1}}} to \inline{\M} only when both endpoints satisfy \inline{I_{u_i} = I_{u_{i+1}}=0}, since \inline{\brcu{u_i, u_{i+1}}\subseteq U} from Step~3(a). Immediately after that, it sets  \inline{I_{u_i} =1} and \inline{I_{u_{i+1}}=1}. Therefore, no later added edge can be incident to \inline{u_i} or \inline{u_{i+1}}. Hence the set \inline{\M} returned is a matching.

Suppose for contradiction that the returned matching \inline{\M} is not maximal. Then there exists an edge \inline{\brcu{u,v}} of \inline{G} such that neither \inline{u} nor \inline{v} is matched by \inline{\M}; that is, \inline{I_u=I_v=0} at termination. Since \inline{\C} is a clique cover of \inline{G}, there exists a clique \inline{C_\ell \in \C} with \inline{\brcu{u,v} \subseteq C_\ell}. Consider the iteration when the algorithm processes \inline{C_\ell}. Since \inline{u} and \inline{v} are both unmatched, they belong to the set \inline{U} formed at Step~3(a), so \inline{\card{U} \geq 2}. In Step~3(d), the algorithm pairs up vertices of \inline{U} and marks every vertex that is paired as matched. Since the pairing step leaves at most one vertex of \inline{U} unmatched, at least one of \inline{u} and \inline{v} must be marked as matched by the end of this iteration, contradicting \inline{I_u=I_v=0} at termination. Thus \inline{\M} is maximal.

For clique \inline{C_\ell}, each iteration of Step~3 takes \inline{\bigO{\card{C_\ell}}} time. Since the algorithm scans the cliques in \inline{\C} once in Step~3, it runs in \inline{\bigO{\size{\C}}} time. The algorithm stores the indicator array \inline{\brcu{I_v}_{v \in V}} and the matching \inline{\M}, using \inline{\bigO{\card{V} +\card{\M}} \subseteq \bigO{\card{V}} \subseteq \bigO{\size{\C}}} space.
\end{proof}

\begin{remark}
A \emph{maximum matching} of a graph \inline{G=\br{V,E}} is a matching of maximum cardinality among all matchings of \inline{G}.
A \emph{vertex cover} of \inline{G} is a set \inline{V^\prime \subseteq V} such that every edge of \inline{G} has at least one endpoint in \inline{V^\prime}. A standard fact is that if \inline{\M} is any maximal matching in \inline{G}, then \inline{\M} is a 2-approximation to a maximum matching, and the set of endpoints of the edges in \inline{\M} forms a vertex cover of size at most \inline{2} times the minimum vertex cover size. Therefore, by \lemref{lemma:matching}, we can compute both a \inline{2}-approximate maximum matching and a \inline{2}-approximate minimum vertex cover of \inline{G} in  \inline{\bigO{\size{\C}}} time and space.    
\end{remark}

For deferred proofs and additional applications, see \hyref{Appendix}{sec:app2}. \hyref{Theorem}{thm:apps} now follows from the results established in this section and in \hyref{Appendix}{sec:app2}.
\section{Succinct DCC Constructions via Admissibility}
\label{sec:dcc_admis}

In this section, we describe a second construction direction for succinct DCC representations, motivated by the edge-clique graph viewpoint, that outperforms \aref{\textsc{Succinct-Peeling}}{fig:algo_dcc_ss} in practice.

\subsection{Clique Cover via the Edge-Clique Graph}

\begin{definition}[Edge-Clique Graph]
\label{def:edge_clique_graph}
Let \inline{G=(V,E)} be a graph.
The \emph{edge-clique graph} of \inline{G} is the graph
\inline{G^\kappa=(V_\kappa,E_\kappa)} defined as follows.
The vertex set is \inline{V_\kappa:=E}, that is, each vertex of \inline{G^\kappa} corresponds to an edge in \inline{G}. For two distinct edges \inline{e_i, e_j \in E}, we add an edge \inline{\brcu{e_i,e_j}} in \inline{G^\kappa} if and only if their endpoints induce a clique in \inline{G} (equivalently, the vertices incident to \inline{e_i} and \inline{e_j} induce a \inline{K_3} or a \inline{K_4} in \inline{G}).
\end{definition}

Coloring \inline{\overline{G^\kappa}} gives a clique cover of \inline{G} as follows.\footnote{\cite{kou1978covering} used this construction to show that approximating cardinality-optimal clique covers of graphs reduces to approximating chromatic numbers of the complements of their edge-clique graphs.}

\begin{proposition}
\label{prop:color_barGkappa}
Let \inline{G=(V,E)} be a graph and \inline{G^\kappa} be its edge-clique graph. Let \inline{\F:=\brcu{F_\ell}_{\ell \in [q]}} be a proper coloring of the complement graph \inline{\overline{G^\kappa}}. For each \inline{\ell}, define \inline{C_\ell := \bigcup_{e=\brcu{u,v} \in F_\ell} \brcu{u,v}}.
Then the family \inline{\C:=\brcu{C_\ell}_{\ell \in [q]}} is a clique cover of \inline{G}.
\end{proposition}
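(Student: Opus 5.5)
The plan is to check two things directly from the definitions: each \inline{C_\ell} induces a clique in \inline{G}, and every edge of \inline{G} lies in some \inline{C_\ell}.

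\emph{Coverage.} The coloring \inline{\F} partitions \inline{V_\kappa = E}. Hence every edge \inline{e=\brcu{u,v}} belongs to exactly one class \inline{F_\ell}. By the definition of \inline{C_\ell}, this gives \inline{\brcu{u,v} \subseteq C_\ell}, so the edge is covered.

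\emph{Each \inline{C_\ell} is a clique.} Since \inline{\F} is a proper coloring of \inline{\overline{G^\kappa}}, each class \inline{F_\ell} is an independent set in \inline{\overline{G^\kappa}}. Equivalently, \inline{F_\ell} is a clique in \inline{G^\kappa}, so any two distinct edges \inline{e_i, e_j \in F_\ell} have endpoints inducing a clique in \inline{G}, by \hyref{Definition}{def:edge_clique_graph}. Now take distinct vertices \inline{x,y \in C_\ell}; we must show \inline{\brcu{x,y}\in E}. Choose \inline{e_i \in F_\ell} containing \inline{x} and \inline{e_j \in F_\ell} containing \inline{y}. If we can take \inline{e_i = e_j}, then \inline{\brcu{x,y}} is that edge itself. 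Otherwise \inline{e_i \ne e_j}, so \inline{e_i \cup e_j} induces a clique in \inline{G}, and since \inline{x,y \in e_i \cup e_j}, the pair \inline{\brcu{x,y}} is an edge.

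The only point needing care is this clique step: the pairwise adjacency condition in \inline{G^\kappa} is stated for pairs of \emph{distinct} edges, and we need it to yield adjacency of arbitrary pairs of vertices of \inline{C_\ell}. The case split above handles this, because any two vertices of \inline{C_\ell} lie in the union of at most two edges of \inline{F_\ell}. With both properties in hand, \inline{\C} is a clique cover of \inline{G}. Empty classes, if the coloring allows them, give \inline{C_\ell = \emptyset} and can simply be discarded.
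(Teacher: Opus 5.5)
Your proof is correct and complete. Coverage holds because the color classes partition $E$, and the clique property holds because each class is a clique in $G^\kappa$; your case split correctly handles the fact that adjacency in $G^\kappa$ is defined only for distinct edges. The paper gives no written proof of this proposition (it only credits the construction to Kou, Stockmeyer, and Wong), so your direct verification from the definitions is exactly the argument it leaves implicit.
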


Explicitly realizing \inline{G^\kappa} requires \inline{\bigOmega{m^2}} space in the worst case. This can happen even when the input graph has only small cliques.

\begin{lemma}
\label{lemma:Gkappa_blowup}
There exists a family of graphs \inline{G=\br{V,E}} with \inline{m} edges and clique number \inline{\omega \leq 4} such that the edge-clique graph \inline{G^\kappa} has \inline{\bigOmega{m^2}} edges.
\end{lemma}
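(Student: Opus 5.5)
We will use the complete balanced $4$-partite graph (the Turán graph $T(4t,4)$). Fix $t \geq 1$ and let $G_t$ have vertex set $V = A_1 \cup A_2 \cup A_3 \cup A_4$, where the parts are pairwise disjoint and $\card{A_i} = t$. Two vertices are adjacent if and only if they lie in different parts. The argument has three steps: bound $\omega$, count $m$, and lower-bound the number of edges of $G^\kappa$ by counting copies of $K_4$.

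\textbf{Clique number and edge count.} Each part $A_i$ is an independent set, so a clique contains at most one vertex from each part. Hence $\omega \leq 4$; in fact $\omega = 4$. The edges are exactly the pairs of vertices from distinct parts, so
\[
m = \binom{4}{2} t^2 = 6t^2 = \bigTheta{t^2}, \qquad m^2 = \bigTheta{t^4}.
\]

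\textbf{Counting edges of $G^\kappa$.} Choose one vertex from each part, $a_i \in A_i$ for $i \in [4]$. The $t^4$ choices give $t^4$ distinct sets $\brcu{a_1,a_2,a_3,a_4}$, and each induces a $K_4$ in $G$. Inside each such $K_4$, the two disjoint edges $\brcu{a_1,a_2}$ and $\brcu{a_3,a_4}$ have endpoints that together induce a clique. By \hyref{Definition}{def:edge_clique_graph}, they are therefore adjacent in $G^\kappa$.

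We must check that distinct $K_4$'s give distinct edges of $G^\kappa$. The two edges $\brcu{a_1,a_2}$ and $\brcu{a_3,a_4}$ are disjoint, so their union is exactly $\brcu{a_1,a_2,a_3,a_4}$. The map from quadruples to pairs of edges is therefore injective. Consequently
\[
\card{E_\kappa} \geq t^4 = \bigOmega{m^2}.
\]
We could also count all three perfect matchings of each $K_4$, together with the triangle-based adjacencies, but this only improves the constant.

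\textbf{Expected difficulty.} No step is a real obstacle. The only point that needs care is the injectivity check just made, which ensures distinct quadruples are not double-counted. The conceptual choice is to use $K_4$'s rather than a single edge of large $G^\kappa$-degree. A single high-degree edge, as in a join of $K_2$ with $K_{t,t}$, gives only $\bigTheta{m}$ edges of $G^\kappa$. The $4$-partite construction instead has $\bigTheta{n^4} = \bigTheta{m^2}$ copies of $K_4$ while keeping $\omega \leq 4$.
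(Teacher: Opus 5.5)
Your proposal is correct and is essentially the paper's proof. The paper also uses $K_{t,t,t,t}$, and it counts all $t^2\cdot t^2 = t^4$ pairs in $E_{AB}\times E_{CD}$, which is exactly your set of matching pairs $\brcu{a_1,a_2},\brcu{a_3,a_4}$ in the $t^4$ transversal $K_4$'s. Your injectivity check corresponds to the paper's implicit use of the fact that $E_{AB}$ and $E_{CD}$ are disjoint.
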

\begin{proof}
Let \inline{G} be a complete 4-partite graph \inline{K_{t,t,t,t}} with parts \inline{A,B,C,D}, each of size \inline{t}. Then \inline{G} has \inline{m=6t^2} edges and \inline{\omega = 4}. Let \inline{E_{AB}} be the set of edges between \inline{A} and \inline{B}, and let \inline{E_{CD}} be the set of edges between \inline{C} and \inline{D}. We have \inline{\card{E_{AB}} = \card{E_{CD}} = t^2}. 

For any edge \inline{e_1= \brcu{a,b} \in E_{AB}} and any edge \inline{e_2= \brcu{c,d} \in E_{CD}}, the four endpoints \inline{a,b,c,d} include one vertex from each part, so they induce a \inline{K_4} in \inline{G}. Hence \inline{e_1} and \inline{e_2} are adjacent in \inline{G^\kappa}. Therefore, \inline{G^\kappa} contains all possible edges between the vertex sets \inline{E_{AB}} and \inline{E_{CD}}, and hence \inline{G^\kappa} has at least \inline{\card{E_{AB}}\cdot \card{E_{CD}} = t^4 =\bigOmega{m^2}} edges.
\end{proof}

This motivates the following construction, in which we emulate greedy colorings of \inline{\overline{G^\kappa}} directly on \inline{G}.

\begin{definition}[Locally-Augmentable-Succinct Construction]
\label{def:las_con}
Let $H=(V_H, E_H)$ be a subgraph of a graph $G=(V,E)$, and let $\C$ be a clique cover of $H$. An edge \inline{\brcu{x,y}} is \emph{uncovered} with respect to \inline{\C} if it is not contained in any clique \inline{C_k \in \C}.
Consider an uncovered edge $\brcu{u,v} \in E\setminus E_H$, and the induced subgraph \inline{H^{+} := G[V_H \cup \brcu{u,v}]}.
We define a new cover $\C^{*}$ of a subgraph of $H^+$ as follows:
\begin{enumerate}
    \item If there exists a clique $C_{\ell} \in \C$ such that $C_{\ell} \cup \brcu{u,v}$ is a clique in $H^+$, then select exactly one such clique, define $C_{\ell}^{*} := C_{\ell} \cup \brcu{u,v}$, and set $\C^{*}:=\br{\C \setminus \{C_{\ell}\}} \cup \{C_{\ell}^{*}\}$.
    \item Otherwise, introduce a new label $\ell$, define a new clique $C^{*}_{\ell} := \brcu{u,v}$, and set $\C^{*}:=\C \cup \{C^{*}_{\ell}\}$.
\end{enumerate}

A clique cover $\C$ is \emph{locally-augmentable-succinct} (\las) if it can be obtained from the empty cover by repeated applications of this update rule to uncovered edges, always applying Step~(1) whenever applicable and Step~(2) otherwise.
\end{definition}

The following lemma verifies the succinctness of this construction (proof deferred to \hyref{Appendix}{subsec:deferred_pfs}).

\begin{restatable}{lemma}{LemmaLasSc}
\label{lemma:las_sc}
Let \inline{\C} be a family of cliques obtained by \las\  construction, starting with an empty cover and processing all edges of \inline{G}. Then \inline{\C} is a succinct clique cover of \inline{G}.
\end{restatable}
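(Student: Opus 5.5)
The plan is to verify the three properties required by \hyref{Definition}{def:scc} directly from the update rule of \hyref{Definition}{def:las_con}: that \inline{\C} is a clique cover, that it is composition-minimal, and that \inline{\size{\C} = \bigO{m}}. All three rest on one monotonicity fact, which I establish first. Each update either adds the two endpoints of an uncovered edge \inline{\brcu{u,v}} to an existing clique \inline{C_\ell}, and does so only when \inline{C_\ell \cup \brcu{u,v}} is a clique of the induced subgraph \inline{H^+}, hence of \inline{G}, or it creates the new clique \inline{\brcu{u,v}}. So every member of the family is always a clique of \inline{G}. Cliques only grow, and no clique is ever deleted. Therefore a clique's version at any moment is a subset of its final version, and an edge that is covered stays covered.

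\textbf{Cover.} Every edge of \inline{G} is processed. If it is already covered at that point, it stays covered. Otherwise the update places both of its endpoints in a single clique. Hence the final family covers \inline{E}.

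\textbf{Composition-minimality.} Suppose, for contradiction, that distinct final cliques \inline{C_i, C_j} have a union that induces a clique in \inline{G}. Each clique is born in Step~(2) of some update, so I order them by birth and assume \inline{C_j} is born later, from an uncovered edge \inline{\brcu{u,v}}. At that moment \inline{C_i} already exists, and its current version \inline{C_i^{\mathrm{cur}}} is contained in its final version. All vertices involved lie in \inline{V_H \cup \brcu{u,v}}. Since \inline{\brcu{u,v} \subseteq C_j}, the set \inline{C_i^{\mathrm{cur}} \cup \brcu{u,v}} is contained in \inline{C_i \cup C_j}, so it is a clique in \inline{H^+}. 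Then Step~(1) was applicable, yet the rule always applies Step~(1) when it can, so Step~(2) could not have been used. This contradiction shows that \inline{\C} is composition-minimal, which is the non-edge-witness form of \hyref{Definition}{def:cc_compose}.

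\textbf{Size bound.} Fix a vertex \inline{x}. The vertex \inline{x} becomes a member of a clique \inline{C_\ell} only during an update triggered by an uncovered edge \inline{\brcu{u,v}} with \inline{x \in \brcu{u,v}}. This holds in both Step~(1) and Step~(2), because only \inline{u} and \inline{v} are ever added. I map each clique containing \inline{x} to the edge whose update first brought \inline{x} into it; this edge is incident to \inline{x}. The map is injective, because an edge triggers at most one update: the update covers it, and covered edges are never processed again. So \inline{x} lies in at most \inline{\card{N(x)}} cliques. Summing over all vertices gives \inline{\size{\C} \leq \sum_{x} \card{N(x)} = 2m}.

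The step I expect to be most delicate is composition-minimality. It needs the argument that the earlier clique's \emph{intermediate} version already witnesses the applicability of Step~(1), and this is where growth-only monotonicity and the induced-subgraph condition on \inline{H^+} are used.
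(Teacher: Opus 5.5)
Your proof is correct and follows the paper's argument. Composition-minimality comes from the birth-order contradiction with Step~(1), and the size bound comes from the injective map sending each clique that contains a vertex to the incident uncovered edge that first brought the vertex into it. Two points you make explicit are left implicit in the paper's proof: that the earlier clique's intermediate version lies inside its final version, so Step~(1) was already applicable at the moment $C_j$ was created, and that the final family covers every edge.
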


We do not know whether the covers produced by the \las\  construction satisfy the bound \inline{\card{\C}=\bigO{\sigma}}. However, by \lemref{lemma:las_sc}, they do satisfy the succinctness constraint \inline{\size{\C} = \bigO{m}}, which is strictly stronger than requiring only \inline{\card{\C} = \bigO{m}}, as shown in \lemref{lemma:sc_size}. It is straightforward to obtain a succinct cover that is also assignment-minimal in polynomial time, yielding a class of covers that satisfy all four polynomial-time minimality notions in \fref{fig:partial_order}.

\subsection{Incidence Duals and Admissibility}
\label{subsec:ida}

We now examine a dual perspective on clique covers via intersection graph theory \cite{mckee1999topics}. This viewpoint underlies how we support efficient queries on our graph representations. We also unify this viewpoint with the clique-cover viewpoint using a common construction and the same data structures.

Given a family of sets $\F=\{F_i\}_{i \in \brsq{n}}$, the \emph{intersection graph} of $\F$ is the graph whose vertex set is $\F$, with an edge for each pair of distinct sets $F_i$ and $F_j$ if and only if $F_i \cap F_j \ne \emptyset$. Conversely, every graph is an intersection graph of some family of sets, which motivates the following definition.

\begin{definition}[Intersection Representation]
\label{def:ir}
For a graph $G=(V,E)$, a family of sets $\Lcal=\brcu{L_v}_{v\in V}$ is called an \emph{intersection representation} of $G$ if, for every pair of vertices $u,v\in V$, $\brcu{u,v} \in E$ if and only if $L_u \cap L_v \ne \emptyset$.
\end{definition}

Note that the intersection graph of a given family of sets is uniquely determined, but a graph may admit many distinct intersection representations.

For an intersection representation $\Lcal:=\brcu{L_v}_{v \in V}$ of a graph $G=(V,E)$, let \inline{I=\bigcup_{v \in V} L_v}. Each label $\ell \in I$ defines a vertex set $C_\ell :=\brcu{v \in V \mid \ell \in L_v}$, which induces a clique in $G$ by \hyref{Definition}{def:ir}. Hence the collection $\C := \brcu{C_\ell}_{\ell \in I}$ forms a clique cover of $G$.

Conversely, given a clique cover $\C= \brcu{C_{\ell}}_{\ell \in I}$ of $G$, define a family of sets $\Lcal=\brcu{L_v}_{v \in V}$ by $L_v := \brcu{\ell \in I \mid v \in C_{\ell}}$. Then $\Lcal$ is an intersection representation of $G$.

This incidence duality allows us to translate between intersection representations and clique covers of a graph once either one is constructed. We also observe that the two views can share the same constructions and data structures. In \hyref{Definition}{def:las_con}, when we apply the update rule to a clique cover, we can maintain its incidence-dual sets \inline{L_v} incrementally.

\begin{remark}[Dual Maintenance]
\label{remark:las_con}
For each vertex \inline{v}, store the set \inline{L_v} of labels of cliques that contain \inline{v}. Then an edge is uncovered exactly when \inline{L_u \cap L_v} is empty. Whenever the clique-cover update rule assigns a label \inline{\ell} to an uncovered edge \inline{\brcu{u,v}}, we add \inline{\ell} to both \inline{L_u} and \inline{L_v}. This ensures the invariant \inline{L_v= \brcu{\ell \mid v \in C_\ell}} throughout.
\end{remark}

All minimality and succinctness properties translate between clique covers and intersection representations via incidence duality. In particular, if the constructed clique cover is succinct, then the maintained incidence dual inherits the corresponding dual property, and conversely.

The \las\  construction, together with the dual maintenance above, operates on a single vertex-label incidence structure. For a clique cover \inline{\C=\brcu{C_\ell}_{\ell \in I}} of a graph \inline{G=(V,E)} and its dual intersection representation \inline{\Lcal=\brcu{L_v}_{v \in V}}, we have \inline{L_v =\brcu{\ell \in I \mid v \in C_\ell}}, \inline{C_\ell =\brcu{v \in V \mid \ell \in L_v}}.

The setup also introduces a second incidence relation between vertices and labels, namely \emph{admissibility}. For the clique-cover view, a vertex \(v\) is admissible for a clique \inline{C_\ell} exactly when \inline{C_\ell \subseteq N[v]}. A new edge \inline{\brcu{u,v}} can reuse a label \inline{\ell} precisely when both \(u\) and \(v\) are admissible for \inline{C_\ell}. The intersection-representation view uses the same admissibility condition under the dual interpretation.

We encode the admissibility relation by a second dual pair of families, in direct analogy with \inline{\brcu{L_v}} and \inline{\brcu{C_\ell}}. Since clique covers and intersection representations are incidence-dual, we henceforth present these admissibility data structures in the clique-cover view.

\begin{definition}[Admissibility Sets and Admissibility Duals]
\label{def:admissible}
Let $\C=\{C_{\ell}\}_{\ell \in I}$ be a clique cover of a graph $G=(V,E)$. The \emph{admissibility sets} of \inline{\C} are the family $\A= \{A_v\}_{v \in V}$, where \inline{A_v} contains exactly the clique labels \inline{\ell \in I} such that \inline{v} is admissible to join \inline{C_\ell}; that is, \[A_v := \{\ell \in I \mid C_{\ell} \subseteq N[v]\}.\]
The \emph{admissibility duals} of \inline{\C} are the family $\D= \{D_{\ell}\}_{\ell \in I}$, where $D_\ell$ contains exactly the vertices $v \in V$ that are admissible to join $C_{\ell}$; that is, \[D_{\ell} := \{v \in V \mid C_{\ell} \subseteq N[v]\}.\]
\end{definition}

By construction, we have \inline{L_v \subseteq A_v} and \inline{C_\ell \subseteq D_\ell}, so \inline{\A=\brcu{A_v}_{v \in V}} and \inline{\D=\brcu{D_\ell}_{\ell \in I}} form an admissibility-dual pair for the membership families \inline{\Lcal} and \inline{\C}. In the \las\  construction, Step~1 on a new edge \inline{\brcu{u,v}} amounts to finding a label in \inline{A_u \cap A_v}, while each \inline{D_\ell \in \D} identifies vertices whose admissibility may need to be updated when \inline{C_\ell} grows.

The following lemma, which generalizes Lemma~2 of \cite{chiba1985arboricity}, will be used in subsequent exposition (proof deferred to \hyref{Appendix}{subsec:deferred_pfs}).

\begin{restatable}{lemma}{LemmaEdgeSum}
\label{lemma:edgesum}
Let $G=(V,E)$ be a graph with degeneracy $d$. If $f:V \to \mathbb{R}_{\geq 0}$ is any non-negative function on the vertex set \inline{V}, then \[\sum_{\brcu{u,v} \in E} \min\brcu{f(u), f(v)} \leq d \cdot \sum_{u \in V} f(u).\]
\end{restatable}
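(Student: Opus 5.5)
We would prove \lemref{lemma:edgesum} by orienting every edge along a degeneracy ordering and then charging each edge to a single endpoint. Fix a degeneracy ordering \inline{\pi} of \inline{V}. By definition, each vertex \inline{u} has at most \inline{d} neighbors \inline{v} with \inline{\pi(v) < \pi(u)}. We orient each edge \inline{\brcu{u,v}} toward its earlier endpoint in \inline{\pi}. Every edge then has a unique \emph{later} endpoint, and each vertex is the later endpoint of at most \inline{d} edges; these are exactly the edges to \inline{N_\pi^<(u)}.

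The key step is a crude, pointwise bound on each summand. For an edge \inline{\brcu{u,v}} whose later endpoint is \inline{u}, we use
\[\min\brcu{f(u), f(v)} \leq f(u).\]
This holds because \inline{f} is nonnegative and a minimum never exceeds either of its arguments. We then regroup the sum over edges by their later endpoint:
\[\sum_{\brcu{u,v} \in E} \min\brcu{f(u), f(v)} \leq \sum_{u \in V} \sum_{v \in N_\pi^<(u)} f(u) = \sum_{u \in V} \card{N_\pi^<(u)}\, f(u) \leq d \sum_{u \in V} f(u).\]
The regrouping is valid because each edge is counted exactly once, under its later endpoint. The final inequality uses \inline{\card{N_\pi^<(u)} \leq d} together with \inline{f(u) \geq 0}. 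Nonnegativity matters here: if some \inline{f(u)} were negative, multiplying it by a count smaller than \inline{d} could make the left side larger than \inline{d \cdot f(u)}.

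We do not expect a genuine obstacle. The only point needing care is the existence of an ordering in which every vertex has at most \inline{d} earlier neighbors, and this follows from the degeneracy definition in \hyref{Section}{sec:prelim}. To see it, repeatedly remove a minimum-degree vertex. Each removed vertex has degree at most \inline{d} in the remaining subgraph, since that subgraph's minimum degree is at most \inline{d}. Listing the vertices in reverse order of removal then gives the required ordering. As a sanity check, taking \inline{f(x)=\card{N(x)}} recovers the Chiba--Nishizeki bound \inline{\sum_{\brcu{u,v}\in E}\min\brcu{\card{N(u)},\card{N(v)}} \leq 2dm}, which is the form used in \lemref{lemma:dcc_ss_con} with \inline{f(x)=\card{N[x]}}.
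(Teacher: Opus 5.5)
Your proposal is correct and follows the paper's own proof essentially verbatim: fix a degeneracy ordering $\pi$, bound each summand by $f$ of the later endpoint, regroup by later endpoint to get $\sum_{v \in V}\card{N_\pi^<(v)}\, f(v)$, and finish with $\card{N_\pi^<(v)} \leq d$ and $f \geq 0$. One small point: the pointwise step $\min\brcu{f(u), f(v)} \leq f(u)$ holds for any real-valued $f$, so nonnegativity is needed only in the final inequality, not in that step.
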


The following lemma provides a tight upper bound (up to constant factors) on the total size of admissibility sets and admissibility duals (\lemref{lemma:admissible_space_lb} in \hyref{Appendix}{subsec:deferred_pfs} shows a matching lower bound).

\begin{lemma}
\label{lemma:admissible_space}
Let \inline{G=(V,E)} be a graph  with degeneracy \inline{d}. Let $\C$ be a \las\  clique cover, and let $\A$ and $\D$ denote the corresponding admissibility sets and admissibility duals, respectively. Then \[\size{\A} = \size{\D} =\bigO{ \min\brcu{\card{\C} n, d m}}.\]
\end{lemma}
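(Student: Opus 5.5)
First, $\size{\A} = \size{\D}$ holds because $\A$ and $\D$ are incidence duals: the pair $(v,\ell)$ satisfies $\ell \in A_v$ exactly when $v \in D_\ell$, so both sizes count the same set of admissible pairs
\[P := \brcu{(v,\ell) \mid v \in V,\ \ell \in I,\ C_\ell \subseteq N[v]}.\]
It therefore suffices to bound $\card{P}$ in two ways: by $\card{\C}\, n$ and by $\bigO{dm}$.

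The bound $\card{P} \le \card{\C}\, n$ is immediate. Each label $\ell$ has $D_\ell \subseteq V$, so $\card{D_\ell} \le n$, and summing over the $\card{\C}$ labels gives the claim.

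For the $\bigO{dm}$ bound I would charge each clique to an edge it contains, in the spirit of the proof of \lemref{lemma:dcc_ss_con}.

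\textbf{Step 1 (injective seed edges).} In the \las\ construction, each label $\ell$ is created by Step~(2) of \hyref{Definition}{def:las_con} from an uncovered edge $e_\ell = \brcu{u_\ell, v_\ell}$, and $C_\ell = \brcu{u_\ell, v_\ell}$ at creation. This edge is covered from then on, so it is never used to create another label. Hence $\ell \mapsto e_\ell$ is injective into $E$.

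\textbf{Step 2 (localizing admissibility).} Cliques only grow during the construction, so $e_\ell \subseteq C_\ell$ in the final cover. If $C_\ell \subseteq N[v]$, then $v \in N[u_\ell] \cap N[v_\ell]$. Hence $\card{D_\ell} \le \card{N[u_\ell]\cap N[v_\ell]} \le 2 + \min\brcu{\card{N(u_\ell)},\card{N(v_\ell)}}$.

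\textbf{Step 3 (summing over edges).} By injectivity,
\[\card{P} \le \sum_{\brcu{u,v}\in E} \br{2 + \min\brcu{\card{N(u)},\card{N(v)}}}.\]
Applying \lemref{lemma:edgesum} with $f(x) = \card{N(x)}$, this is at most $2m + d\sum_u \card{N(u)} = 2m + 2dm = \bigO{dm}$.

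The main point needing care is Step~1, namely that every label of a \las\ cover arises from a distinct seed edge. Step~(1) of the update rule only enlarges an existing clique and never creates a label, and a seed edge is uncovered just before it is used, so it cannot be reused as a seed. Step~2 also requires the final clique, not merely the initial one, to contain the seed edge; this follows because vertices are never removed from a clique.

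Combining the two bounds gives $\size{\A} = \size{\D} = \bigO{\min\brcu{\card{\C}\, n, dm}}$. The same bound holds at every intermediate stage of the construction, because the argument uses only monotone growth of the cliques.
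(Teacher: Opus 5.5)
Your proof is correct and takes essentially the same route as the paper: it double-counts admissible vertex--label pairs (the paper phrases this as the edge set of a bipartite graph $G_B$), uses the trivial bound $\card{D_\ell}\le n$, and charges each label to its creating edge via $\card{D_\ell}\le \card{N[u_\ell]\cap N[v_\ell]}$, summed with \lemref{lemma:edgesum} for $f(x)=\card{N(x)}$. You also argue explicitly that distinct labels have distinct seed edges and that the final clique still contains its seed edge; the paper's proof uses both facts without stating them.
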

\begin{proof}
Let \inline{\C=\brcu{C_\ell}_{\ell \in I}}. Consider the bipartite graph $G_B=(V, I, E_B)$, where an edge $\br{v,\ell} \in E_B$ exists if and only if $\ell \in A_v$, or equivalently, if and only if $v \in D_{\ell}$.
Thus
\[E_B=\brcu{\br{v,\ell} \mid v\in V, \ell \in A_v}=\brcu{ \br{v,\ell} \mid \ell \in I, v \in D_{\ell}}.\]
It follows that the total number of edges in $G_B$ equals both the total number of vertex-label assignments in the admissibility sets and the total number of label-vertex assignments in the admissibility duals; that is,
\[\size{\A} := \sum_{v \in V} \card{A_v} = \card{E_B} =\sum_{\ell \in I} \card{D_{\ell}} := \size{\D}.\] 

Admissibility sets and admissibility duals grow only during Step~2 of the construction; in Step~1 these sets may only shrink. Thus each label $\ell \in I$ is created in Step~2 and has a unique edge $\brcu{u,v}$ that triggered its creation. At that moment, $C_{\ell} =\brcu{u,v}$, and the corresponding admissibility dual is $D_{\ell}=N[u] \cap N[v]$, so $|D_{\ell}| \leq n$. Summing over all such labels, we obtain \[\size{\D} = \sum_{\ell \in I} \card{D_\ell} = \bigO{\card{\C} n}.\] 

For the second bound, note that $|D_{\ell}| \leq \card{N[v] \cap N[u]} \leq 2 + \min\brcu{\card{N(u)}, \card{N(v)}}$. Applying \lemref{lemma:edgesum} with $f(x) = \card{N(x)}$, we obtain 
\begin{equation*}
\begin{split}
\size{\D} = \sum_{\ell \in I} \card{D_{\ell}} &\leq 2|I| + \sum_{\brcu{u,v} \in E} \min\brcu{\card{N(u)}, \card{N(v)}}\\ &\leq 2m + d \cdot \sum_{v \in V} \card{N(v)} = \bigO{d m}.
\end{split}
\end{equation*}
The claim now follows from the equality $\size{\A}=\size{\D}$ together with the two upper bounds above.
\end{proof}

\subsection{Succinct DCC Construction via Global Admissibility}
\label{subsec:dcc_as}

We apply the admissibility data structures globally to implement the \las\ construction. \fref{fig:algo_dcc_as} implements the \las\ update rule from \hyref{Definition}{def:las_con} while maintaining its incidence dual. For an uncovered edge \inline{\brcu{u,v}} (that is, \inline{L_u \cap L_v= \emptyset}), the algorithm uses the admissible sets to decide whether a new label is needed by intersecting \inline{A_u} and \inline{A_v}.

If \inline{A_u \cap A_v =\emptyset}, the edge \inline{\brcu{u,v}} cannot be covered by any existing clique, so it introduces a new clique \inline{C_\ell := \brcu{u,v}}, and its corresponding admissible dual \inline{D_\ell}. It then updates the relevant admissible sets \inline{A_w}.

If \inline{A_u \cap A_v \ne \emptyset}, the edge \inline{\brcu{u,v}} can be covered by an existing clique, and the algorithm chooses such a clique label \inline{\ell \in A_u \cap A_v}. 
It then covers the edge \inline{\brcu{u,v}} by adding \inline{u} and \inline{v} to \inline{C_\ell}. 
Inclusion of \inline{\brcu{u,v}} in \inline{C_\ell} could make some vertices in \inline{D_\ell \setminus C_\ell} inadmissible to \inline{C_\ell}. 
Hence Step~2(a)(ii)--(iv) identify those vertices and exclude them from \inline{D_\ell} and the corresponding sets \inline{A_w} for subsequent iterations.

\begin{figure}[h]
    \centering
    \begin{parbox}{4.6in}{    
        \begin{mdframed}[linewidth=0.5pt, roundcorner=7pt, backgroundcolor=gray!5, frametitle={\underline{\textsc{Global-Admissibility}$\br{G=\br{V,E}}$}}]          
        \begin{enumerate}            
            \item Initialize \inline{\C \gets \emptyset} and \inline{A_v, L_v \gets \emptyset}, for all \inline{v \in V}.
            \item For each edge \inline{\brcu{u,v} \in E} with \inline{L_u \cap L_v = \emptyset} do
            \begin{enumerate}
                \item If \inline{A_u \cap A_v \ne \emptyset} then,
                \begin{enumerate}
                    \item Select a label \inline{\ell \in A_u \cap A_v}.
                    \item Set \inline{L_u \gets L_u \cup \brcu{\ell}}, \inline{L_v \gets L_v \cup \brcu{\ell}}, and \inline{C_\ell \gets C_\ell \cup \brcu{u,v}}.
                    \item Let \inline{X := \brcu{w \in D_{\ell} \setminus C_{\ell} \mid \brcu{w, u} \not\in E \text{ or } \brcu{w, v} \not\in E} }.
                    \item Set \inline{D_\ell \gets D_\ell \setminus X}, and \inline{A_w \gets A_w \setminus \brcu{\ell}}, for all \inline{w \in X}.
                \end{enumerate}
                \item Otherwise,
                \begin{enumerate}
                    \item Introduce a new label \inline{\ell} and define \inline{C_\ell := \brcu{u,v}}.
                    \item Set \inline{\C\gets \C \cup \brcu{C_\ell}}, \inline{L_u \gets L_u \cup \brcu{\ell}} and \inline{L_v \gets L_v \cup \brcu{\ell}}.
                    \item Define \inline{D_\ell := N[u] \cap N[v]}.
                    \item Set \inline{A_w \gets A_w \cup \brcu{\ell}}, for all \inline{w \in D_\ell}.
                \end{enumerate}
            \end{enumerate}
            \item Return \inline{\dcc{G}:=\br{\C, \Lcal}}, where \inline{\Lcal:=\brcu{L_v}_{v \in V}}.
        \end{enumerate}
        \end{mdframed}        
    }
    \end{parbox}    
    \caption{Succinct DCC construction via global admissibility.}
    \label{fig:algo_dcc_as}
\end{figure}

\begin{lemma}
\label{lemma:dcc_as}
Let \inline{G=\br{V,E}} be a graph with \inline{n} vertices, \inline{m} edges and \emph{degeneracy} \inline{d}.
Algorithm~\aref{Global-Admissibility}{fig:algo_dcc_as} returns a succinct DCC representation \inline{\dcc{G}:=\br{\C, \Lcal}} of \inline{G} using \inline{\bigO{\min\brcu{\card{\C} n^2, d^2m}}} time and \inline{\bigO{\min\brcu{\card{\C} n, dm}}} space.
\end{lemma}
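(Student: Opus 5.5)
We prove the three parts of \lemref{lemma:dcc_as} in turn: correctness, then space, then time.

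\textbf{Correctness.} We argue that Algorithm~\aref{Global-Admissibility}{fig:algo_dcc_as} is a faithful implementation of the \las\ update rule of \hyref{Definition}{def:las_con}. It suffices to maintain the invariant that, before each iteration of Step~2, $D_\ell = \brcu{w \in V \mid C_\ell \subseteq N[w]}$ and $A_w = \brcu{\ell \mid w \in D_\ell}$ for all current labels.
\begin{enumerate}
\item For a newly created clique $C_\ell=\brcu{u,v}$, Step~2(b)(iii) sets $D_\ell=N[u]\cap N[v]$, which is exactly the set of vertices admissible to $\brcu{u,v}$. Step~2(b)(iv) then mirrors this into the sets $A_w$.
\item When $u,v$ are added to $C_\ell$, a vertex $w\notin C_\ell$ remains admissible iff it was admissible before and is adjacent to both $u$ and $v$; this is exactly what removing $X$ enforces.
\item Vertices already in $C_\ell$ stay admissible, since they are adjacent to $u,v$ by the choice $\ell\in A_u\cap A_v$.
\end{enumerate}
With the invariant in hand, $A_u\cap A_v\neq\emptyset$ holds iff some $C_\ell\cup\brcu{u,v}$ is a clique, so Step~1 of the rule is applied whenever applicable. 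Uncoveredness is tested via $L_u\cap L_v=\emptyset$, as in \hyref{Remark}{remark:las_con}. Since every edge is processed, \lemref{lemma:las_sc} gives that $\C$ is a succinct clique cover, and $\Lcal$ is its incidence dual by dual maintenance.

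\textbf{Space.} The stored structures are $\C$, $\Lcal$, $\A$ and $\D$. By \lemref{lemma:admissible_space}, $\size{\A}=\size{\D}=\bigO{\min\brcu{\card{\C}n, dm}}$. Succinctness gives $\size{\C}=\size{\Lcal}=\bigO{m}$, which is also $\bigO{\card{\C}n}$ because $\size{\C}\le \card{\C}n$. Finally, $m=\bigO{dm}$. Hence the total space is $\bigO{\min\brcu{\card{\C}n, dm}}$.

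\textbf{Time.} The $m$ tests $L_u\cap L_v=\emptyset$ cost $\bigO{\size{\Lcal}}$-type bounds per edge and are dominated by the later terms. The main obstacle is the cost of Steps~2(a)(iii)--(iv) and 2(b)(iii)--(iv). We charge everything to labels. For a label $\ell$ created by the edge $\brcu{u_\ell,v_\ell}$, computing $D_\ell$ and inserting it into the sets $A_w$ costs $\bigO{\card{N[u_\ell]\cap N[v_\ell]}}$, up to the suppressed log factors.

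Each time $C_\ell$ grows in Step~2(a), we scan the current $D_\ell\setminus C_\ell$ with two adjacency checks per vertex, so one growth step costs $\bigO{\card{D_\ell}}\le \bigO{\card{N[u_\ell]\cap N[v_\ell]}}$. A growth step adds at least one new vertex to $C_\ell$ (otherwise $\brcu{u,v}$ would already be covered by $\ell$). Since $\card{C_\ell}\le\omega\le d+1\le n$, label $\ell$ grows $\bigO{\min\brcu{d,n}}$ times.

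Step~2(a)(i) selects $\ell\in A_u\cap A_v$ in time $\bigO{\min\brcu{\card{A_u},\card{A_v}}}$. Summed over the edges, this is bounded via \lemref{lemma:edgesum} by $\bigO{d\,\size{\A}}$, which is $\bigO{d^2m}$ and also at most $\bigO{m\card{\C}}\le\bigO{\card{\C}n^2}$.

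Adding up, the total work per label is $\bigO{d\card{N[u_\ell]\cap N[v_\ell]}}$ and also $\bigO{n\cdot n}$. The creating edges of distinct labels are distinct, so \lemref{lemma:edgesum} with $f(x)=\card{N[x]}$ gives
\[\sum_\ell \card{N[u_\ell]\cap N[v_\ell]}=\bigO{dm},\]
exactly as in the proof of \lemref{lemma:dcc_ss_con}. Hence the total time is $\bigO{\min\brcu{\card{\C}n^2, d^2m}}$.

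The delicate point is making sure the scan of $D_\ell$ at each growth is charged only $\bigO{\card{D_\ell}}$ and that the number of growths per label is $\bigO{d}$. Both follow from the fact that $D_\ell$ only shrinks after creation and that $C_\ell$ grows strictly.
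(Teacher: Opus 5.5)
Your proposal is correct and takes essentially the same route as the paper. Succinctness comes from \lemref{lemma:las_sc}, and you make explicit the invariant on $D_\ell$ and $A_w$ that the paper only asserts. Space comes from \lemref{lemma:admissible_space}, and time from bounding the intersection tests with \lemref{lemma:edgesum} and charging the $\bigO{d}$ growth scans of each $D_\ell$ to its distinct creating edge.

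Two small points to tighten. First, bound the $L_u\cap L_v$ tests explicitly by \lemref{lemma:edgesum} with $f(x)=|L_x|$, as the paper does; this term is then dominated because $L_x\subseteq A_x$. Second, since $A_x$ both grows and shrinks, the edgesum step for $A_u\cap A_v$ should use a time-independent $f$, e.g.\ the number of labels ever inserted into $A_x$. Its total is still $\sum_\ell |N[u_\ell]\cap N[v_\ell]|$, so your bound is unchanged.
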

\begin{proof}
\textbf{Succinctness.} The algorithm implements the \las\ construction from \hyref{Definition}{def:las_con} while maintaining the incidence dual. The correctness of the applications of the admissible sets \inline{\A} and admissible duals \inline{\D} follows from their definitions (\hyref{Section}{subsec:ida}). By \lemref{lemma:las_sc}, \inline{\C} is a succinct clique cover, hence the incidence dual \inline{\Lcal} is succinct; that is, \inline{\dcc{G}=\br{\C, \Lcal}} is a succinct DCC.

\paragraph{Space.} The claimed space bound follows from \lemref{lemma:admissible_space}, since we have \[\sum_{v \in V}\card{A_v} =\sum_{\ell : C_\ell \in \C} \card{D_{\ell}} = \bigO{\min\brcu{\card{\C} n, dm }}.\]

\paragraph{Time.}
Time spent in Step~2(b) is bounded by the total work of creating the admissible duals \inline{D_\ell} and updating admissible sets \inline{A_w} for \inline{w \in D_\ell}. By the above, this work is \inline{\bigO{\min\brcu{\card{\C} n, dm}}}.

By applying \lemref{lemma:edgesum} with $f(x) = \card{L_x}$, we obtain that the total time required for the intersections \inline{L_u \cap L_v} in Step~2 is bounded by \[\sum_{\brcu{u,v} \in E} \min\brcu{\card{L_u}, \card{L_v}} \leq d \sum_{v \in V}\card{L_v} = \bigO{\min\brcu{\card{\C}d^2, dm}}.\]

We apply \lemref{lemma:edgesum} with $f(x) = \card{A_x}$, and obtain that the total time spent for the intersections \inline{A_u \cap A_v} in Step~2(a) is bounded by \[\sum_{\brcu{u,v} \in E} \min\brcu{\card{A_u}, \card{A_v}} \leq d \sum_{v \in V}\card{A_v} = \bigO{\min\brcu{\card{\C} n^2, d^2m }}.\]

A clique \inline{C_\ell} starts from an edge and can grow at most \inline{\omega -2 \leq d-1} times in Step~2(a). Each growth triggers a scan of \inline{D_\ell}. Hence Steps~2(a)(ii)--(iv) take total time
\[\sum_{\ell: C_\ell \in \C} d \card{D_{\ell}}   = \bigO{\min\brcu{\card{\C} n^2, d^2m }}.\]

The claimed time bound now follows from summing these stepwise time bounds.
\end{proof}

\begin{remark}
\label{remark:las_oblivious}
Note that we can avoid using the admissible sets and duals by computing, on the fly, the set \[B_{uv}:=\brcu{\ell : C_\ell \in \C \mid C_{\ell}\subseteq N[u] \cap N[v]},\]
whenever we encounter an uncovered edge $\{u,v\}$ (that is, $L_u \cap L_v= \emptyset$). 
Then, if $B_{uv}$ is nonempty, we only need to execute Step~2(a)(ii) of \aref{\textsc{Global-Admissibility}}{fig:algo_dcc_as} for a label $\ell \in B_{uv}$; otherwise, execute only Steps~2(b)(i)--(ii) of \aref{\textsc{Global-Admissibility}}{fig:algo_dcc_as}. 
This requires \inline{\bigO{\min\brcu{\omega \card{\C}, m }}} space and the total running time is \[\sum_{\brcu{u,v}\in E} \sum_{\ell: C_\ell \in \C} \card{C_{\ell}} = m \cdot \size{\C} =  \bigO{\min\brcu{\card{\C} n^3, m^2 }}.\]    
\end{remark}

\subsection{Succinct DCC Construction via Local Admissibility}
\label{subsec:dcc_ls}

The use of admissible sets and duals in Algorithm~\aref{\textsc{Global-Admissibility}}{fig:algo_dcc_as} gives a better runtime than an implementation that is oblivious to the current state of the cover. We now show how to achieve a linear space bound within the runtime bound stated in \lemref{lemma:dcc_as}. The key idea is to rebuild admissibility information only on one vertex-local subgraph at a time, use it to extend the cliques created there, and then discard it. 

Algorithm~\aref{\textsc{Local-Admissibility}}{fig:algo_dcc_ls}, together with its subroutine \aref{\textsc{Augment-LS}}{fig:sub_dcc_ls}, implements this strategy. Throughout the execution, each set \inline{M_v \subseteq N(v)} stores the neighbors \inline{u} for which edge \inline{\brcu{u,v}} is still uncovered by the cliques constructed so far. The algorithm processes vertices in reverse order of a fixed degeneracy order \inline{\pi= \langle v_1, \dots, v_n \rangle}. When the loop reaches \inline{v_i}, every neighbor of \inline{v_i} that appears later in \inline{\pi} has already been processed, and all edges from that later neighbor have already been covered and removed from the relevant \inline{M}-sets. Consequently, \inline{M_{v_i} \subseteq N_\pi^<(v_i)}, so \inline{\card{M_{v_i}} \leq d}.

The iteration for \inline{v_i} operates on a local induced subgraph \inline{G[S_i]}, where
\[S_i :=  \brcu{v_i} \cup \brcu{u \in N_{\pi}^{<}(v_i) \mid M_u \cap \br{N_{\pi}^{<}(v_i) \cup \brcu{v_i}} \ne \emptyset}.\]
By construction, \inline{S_i \subseteq N_{\pi}^{<}(v_i) \cup \brcu{v_i}}, hence \inline{\card{S_i} \leq d+1}. Intuitively, \inline{S_i} contains \inline{v_i}, its currently uncovered neighbors \inline{M_{v_i}}, and any other earlier neighbors \inline{u} that still participate in some uncovered edge whose other endpoint lies in \inline{N_\pi^<(v_i)}.

If \inline{M_{v_i}} is nonempty, \aref{\textsc{Local-Admissibility}}{fig:algo_dcc_ls} partitions \inline{M_{v_i}} into cliques by a first-fit greedy coloring of \inline{\overline{G}[M_{v_i}]}. The family of cliques \inline{\C^{\br{i}}} obtained from the color classes \inline{Q_j} covers all edges \inline{\brcu{u,v_i}} in \inline{G[S_i]}. The algorithm then calls the subroutine \aref{\textsc{Augment-LS}}{fig:sub_dcc_ls} to extend the cliques in \inline{\C^{\br{i}}} with the remaining uncovered edges \inline{\brcu{u,w}} in \inline{G[S_i]}.

\begin{figure}[h]
    \centering
    \begin{parbox}{4.4in}{    
        \begin{mdframed}[linewidth=0.5pt, roundcorner=7pt, backgroundcolor=gray!5, frametitle={\underline{\textsc{Local-Admissibility}$\br{G=\br{V,E}}$}}]          
        \begin{enumerate}            
            \item Initialize \inline{\C \gets \emptyset}, \inline{M_v \gets N(v)}, for all \inline{v \in V}.
            \item Let $\pi=\langle v_1, \cdots, v_n \rangle$ be a degeneracy ordering of $V$.   
            \item Set \inline{N_{\pi}^{<}(v) \gets \{u \in N(v) \mid \pi(u) < \pi(v)\}}, for each vertex \inline{v \in V}.
            \item For \inline{i=\card{V}} down to \inline{2} do
                \begin{enumerate}
                    \item If \inline{M_{v_i} = \emptyset}, then continue to the next iteration.                
                \item Set \inline{S_i \gets  \brcu{v_i} \cup \brcu{u \in N_{\pi}^{<}(v_i) \mid M_u \cap \br{N_{\pi}^{<}(v_i) \cup \brcu{v_i}} \ne \emptyset}}.   
                \item Compute a first-fit greedy coloring of \inline{\overline{G}[M_{v_i}]}, and let \inline{\brcu{Q_1,\dots, Q_r}} be the color classes in greedy order.   
                \item Set \inline{p \gets \card{\C}} and  \inline{\C^{\br{i}} \gets \emptyset}.     
                \item For each color class \inline{Q_j} do
                \begin{enumerate}
                    \item Set \inline{C_{p+j} \gets Q_j \cup \brcu{v_i}} and \inline{\C^{\br{i}} \gets \C^{\br{i}} \cup \brcu{C_{p+j}}}.
                    \item Set \inline{M_u \gets M_u \setminus C_{p+j}} for all \inline{u \in C_{p+j}}.
                \end{enumerate}
                \item Set \inline{\C^\prime \gets}\aref{\textsc{Augment-LS}}{fig:sub_dcc_ls}\inline{\br{G, \C^{\br{i}}, S_i}}.
                \item Set \inline{\C \gets \C \cup \C^\prime}.
                \end{enumerate}                
            \item Define \inline{\Lcal := \brcu{L_v}_{v \in V}}, where \inline{L_v :=\brcu{\,\alpha \mid C_\alpha \in \C, v \in C_\alpha\,}}.
            \item Return \inline{\dcc{G}:=\br{\C, \Lcal}}.
        \end{enumerate}
        \end{mdframed}        
    }
    \end{parbox}    
    \caption{Succinct DCC construction via local coloring and admissibility.}
    \label{fig:algo_dcc_ls}
\end{figure}

\begin{figure}[h]
    \centering
    \begin{parbox}{5.4in}{    
        \begin{mdframed}[linewidth=0.5pt, roundcorner=7pt, backgroundcolor=gray!5, frametitle={\underline{\textsc{Augment-LS}$\br{G=\br{V,E}, \C^\prime, S}$}}]  
        \textcolor{teal}{//Uses \inline{\brcu{M_u}} from \textsc{Local-Admissibility}.}
        \begin{enumerate}            
            \item Initialize \inline{D_\ell \gets C_\ell}, for all \inline{C_\ell \in \C^\prime}, and \inline{A_u \gets \emptyset} for all \inline{u \in S}.
            \item For each \inline{C_\ell \in \C^\prime} do
            \begin{enumerate}
                \item Set \inline{D_\ell \gets D_\ell \cup \brcu{\, u \in S \setminus C_\ell \mid \brcu{u,w} \in E, \text{ for all } w \in C_\ell \,}}.
                \item Set \inline{A_u \gets A_u \cup \brcu{\ell}}, for all \inline{u \in D_\ell}.                
            \end{enumerate}
            
            \item For each edge \inline{\brcu{u,w}} of \inline{G[S]} with \inline{w \in M_u} and \inline{A_u \cap A_w \ne \emptyset} do
            \begin{enumerate}                
                \item Select a label \inline{\ell \in A_u \cap A_w}.
                \item Set \inline{C_\ell \gets C_\ell \cup \brcu{u,w}}.
                \item Set \inline{T \gets \brcu{\, t \in D_\ell \setminus C_\ell \mid \brcu{u,t} \not\in E \text{ or } \brcu{w, t} \not\in E\,}}.
                \item  Set \inline{D_\ell \gets D_\ell \setminus T}, and \inline{A_t \gets A_t \setminus \brcu{\ell}}, for all \inline{t \in T}.
                \item Set \inline{M_x \gets M_x \setminus C_\ell}, for \inline{x \in \brcu{u,w}}, and \inline{M_y \gets M_y\setminus \brcu{u,w}}, for all \inline{y \in C_\ell}.
            \end{enumerate}
            \item Return \inline{\C^\prime:=\brcu{C_\ell}}.
        \end{enumerate}
        \end{mdframed}    
    
    }
    \end{parbox}    
    \caption{A Subroutine of Algorithm~\aref{Local-Admissibility}{fig:algo_dcc_ls}.}
    \label{fig:sub_dcc_ls}
\end{figure}

The call \aref{\textsc{Augment-LS}}{fig:sub_dcc_ls}\inline{\br{G, \C^{\br{i}}, S_i}} constructs admissible duals \inline{D_\ell \subseteq S_i} only for the labels created in the current iteration and the corresponding admissible sets \inline{A_u}. It then iterates over uncovered edges \inline{\brcu{u,w}} within \inline{G[S_i]}. If \inline{A_u \cap A_w \ne \emptyset}, it reuses some label \inline{\ell \in A_u \cap A_w}, adds \inline{\brcu{u,w}} to \inline{C_\ell}, and updates the local admissibility sets and the relevant global \inline{M}-sets.  If \inline{A_u \cap A_w =\emptyset}, then it leaves the edge uncovered. This deviation from Step~2 of Algorithm~\aref{\textsc{Global-Admissibility}}{fig:algo_dcc_as} is safe because both \inline{u} and \inline{w} are in the earlier-neighborhood of \inline{v_i}, hence \inline{\brcu{u,w}} would be processed and ultimately covered by the time the algorithm processes some \inline{v_j} with \inline{j<i} such that \inline{v_j} is the later endpoint of \inline{\brcu{u,w}} in \inline{\pi}.

\begin{lemma}
\label{lemma:dcc_ls}
Let \inline{G=\br{V,E}} be a graph with \inline{m} edges and \emph{degeneracy} \inline{d}.
Algorithm~\aref{\textsc{Local-Admissibility}}{fig:algo_dcc_ls} returns a succinct DCC representation \inline{\dcc{G}:=\br{\C, \Lcal}} of \inline{G} using \inline{\bigO{d^2 \cdot \min\brcu{\card{\C}, m}}} time and \inline{\bigO{m}} space.
\end{lemma}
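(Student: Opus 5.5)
We prove the lemma in three parts, following the template of \lemref{lemma:dcc_ss_con} and \lemref{lemma:dcc_as}: correctness (cover and succinctness), space, and time.

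\textbf{Correctness and succinctness.} We maintain the invariant that $M_v$ is exactly the set of uncovered neighbors of $v$. Steps~4(e)(ii) and~3(e) of \aref{\textsc{Augment-LS}}{fig:sub_dcc_ls} preserve it, because every new clique membership removes precisely the newly covered pairs.

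\emph{Cover.} Processing $v_i$ covers every edge $\brcu{u,v_i}$ with $u \in M_{v_i}$. Each color class $Q_j$ of $\overline{G}[M_{v_i}]$ is a clique of $G$, so $Q_j \cup \brcu{v_i}$ is a clique. When the loop reaches $v_j$, every uncovered edge incident to $v_j$ goes to a neighbor earlier in $\pi$. Hence after the iteration for $v_j$ no uncovered edge has $v_j$ as its later endpoint. Edges whose later endpoint is $v_1$ do not exist, so every edge is covered.

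\emph{Succinctness.} We argue that the output is \las. We view the run as processing uncovered edges in a particular order: first the edges $\brcu{u,v_i}$ that seed the classes, then those added in \aref{\textsc{Augment-LS}}{fig:sub_dcc_ls}. Within an iteration, a label is reused only when it is admissible, so every clique stays a clique and the update rule is respected.

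The delicate point is that the \las\ rule requires Step~(1) to be applied whenever \emph{any} existing clique in $\C$ is admissible. The local algorithm, however, only consults the cliques in $\C^{(i)}$. Rather than force an exact \las\ simulation, I would prove the two succinctness properties directly:
\begin{enumerate}[label=(\alph*)]
    \item $\size{\C} = \bigO{m}$, by the injective charging argument of \lemref{lemma:sscc_p2}: every insertion of a vertex $v$ into a clique covers a fresh uncovered edge incident to $v$.
    \item Composition-minimality, in the spirit of \lemref{lemma:sscc_p2}. Suppose two cliques have a clique union. If both come from the same iteration, the first-fit coloring of $\overline{G}[M_{v_i}]$ supplies a non-edge between the color classes, and later growth only adds vertices, so that non-edge persists. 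If they come from different iterations, I would argue that the later clique's seed edge $\brcu{u,v_j}$ was uncovered when the earlier clique was augmented, and would have been admissible to it.
\end{enumerate}
This cross-iteration case is the main obstacle. The edge $\brcu{u,v_j}$ lies in $S_i$ only if both endpoints are in $N_\pi^<(v_i)\cup\brcu{v_i}$, and that containment does hold when the union with the earlier clique $C \ni v_i$ is a clique, since then $u,v_j \in N(v_i)$. I still need to check that $\pi$-ordering places them before $v_i$; the later iteration index $j<i$ gives this for $v_j$, and $u$ precedes $v_j$. If composition-minimality fails in this case, I would fall back to invoking \lemref{lemma:las_sc} after showing the processing order is a valid \las\ order.

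\textbf{Space.} The global structures are $\brcu{M_v}$, $\brcu{N_\pi^<(v)}$ and $\C$, of total size $\bigO{m}$ by the size bound in (a). The local structures $A_u$ and $D_\ell$ live on $S_i$, and we have $\card{S_i}\le d+1$, $\card{\C^{(i)}}\le d$ and $\size{\C^{(i)}} = \bigO{d^2}$. Since $d^2=\bigO{m}$, these are also $\bigO{m}$, and they are discarded after each iteration.

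\textbf{Time.} Consider iteration $i$:
\begin{itemize}
    \item Building $S_i$ and coloring $\overline{G}[M_{v_i}]$ costs $\bigO{d^2}$ using adjacency tests.
    \item Building the $D_\ell$ costs $\bigO{d\cdot \size{\C^{(i)}}} = \bigO{d^2\card{\C^{(i)}}}$.
    \item Step~3 of \aref{\textsc{Augment-LS}}{fig:sub_dcc_ls} scans the $\bigO{d^2}$ edges of $G[S_i]$. Each label intersection costs $\bigO{d}$ and each growth rescans $D_\ell$ of size $\le d+1$, with at most $d$ growths per label. This gives $\bigO{d^2\card{\C^{(i)}} + d^3}$.
\end{itemize}
An iteration with $M_{v_i}\neq\emptyset$ creates at least one clique, and empty iterations cost $\bigO{1}$. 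Charging the $\bigO{d^3}$ term to the $\ge 1$ cliques created is too crude, so I would refine it: the edges of $G[S_i]$ touching an uncovered edge number at most $d\,\card{M_{v_i}}$ plus covered-edge skips. Using \lemref{lemma:edgesum} over all iterations, the total edge scanning is $\sum_i \bigO{d\,\card{N^<_\pi(v_i)}} = \bigO{dm}$, and the per-clique work sums to $\bigO{d^2\card{\C}}$. Together with $\card{\C}\le m$, this yields $\bigO{d^2\min\brcu{\card{\C},m}}$. Step~5 adds $\bigO{\size{\C}}=\bigO{m}$.
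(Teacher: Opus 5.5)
Your correctness and space arguments are essentially the paper's (\lemref{lemma:dcc_ls_succinct}), and the cross-iteration case closes as you expect. Take $u$ from the color class that seeds $C'$. Then both $u$ and $v_j$ lie in $N_\pi^<(v_i)$. Because $\{u,v_j\}$ is still uncovered, each lies in the other's $M$-set when $S_i$ is formed, so both enter $S_i$ and remain admissible to the label of $C$. Step~3 of \textsc{Augment-LS} must therefore cover $\{u,v_j\}$ in iteration $i$. The fallback to \lemref{lemma:las_sc} is unnecessary.

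The genuine gap is in the time analysis. Your $d^3$ term comes from pricing each test $A_u\cap A_w$ at $O(d)$. What you are missing is that \textsc{Augment-LS} builds the sets $A_x$ only over the labels created in the current iteration. Hence $|A_x|\le r_i:=|\mathcal{C}^{(i)}|$, and each test costs $O(r_i)$.
\begin{itemize}
\item Step~3 then costs $O(r_i b_i^2)$, where $b_i:=|N_\pi^<(v_i)|+1\le d+1$.
\item Because $r_i\ge 1$ whenever $M_{v_i}\neq\emptyset$, the $O(b_i^2)$ edge scan and all work outside \textsc{Augment-LS} are also $O(r_i d^2)$.
\item Summing over iterations gives $O(d^2\sum_i r_i)=O(d^2|\mathcal{C}|)$.
\item This equals the claimed bound, since $|\mathcal{C}|\le \size{\mathcal{C}}/2\le m$.
\item The $O(m)$ cost of Steps~1--3 and~5 is absorbed because $m\le |\mathcal{C}|\binom{d+1}{2}$.
\end{itemize}
In other words, the per-clique charging you rejected as too crude is exactly right once intersections cost $O(r_i)$ instead of $O(d)$.

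Your proposed refinement does not close the gap, for three reasons.
\begin{itemize}
\item The bound of $d\,|M_{v_i}|$ on the relevant edges of $G[S_i]$ is unsupported. $S_i$ also contains earlier neighbors of $v_i$ outside $M_{v_i}$, and the uncovered edges among them are scanned too.
\item The $O(dm)$ total counts scanned edges but drops the $O(d)$ cost you attached to each intersection.
\item Even granting a total of $O(dm)+O(d^2|\mathcal{C}|)$, this does not yield $O(d^2\min\{|\mathcal{C}|,m\})$. Since $|\mathcal{C}|\le m$, the target is $O(d^2|\mathcal{C}|)$, and $dm$ is not $O(d^2|\mathcal{C}|)$ in general. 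On $G=K_n$ the algorithm outputs the single clique $V$, so the target is $O(n^2)$ while $dm=\Theta(n^3)$.
\end{itemize}
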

\begin{proof}
Succinctness follows from \lemref{lemma:dcc_ls_succinct} (\hyref{Appendix}{sec:app1}). It remains to bound space and time.

\paragraph{Space.}
The algorithm maintains the global sets \inline{M_v} and \inline{N_\pi^<(v)}, and their total size is \inline{\bigO{m}}.
Fix an iteration \inline{i}. We have \inline{\card{S_i} \leq d+1} and \inline{\card{M_{v_i}} \leq d}. The local admissibility structures \inline{\brcu{A_u}_{u \in S_i}} and \inline{\brcu{D_\ell}_{C_\ell \in \C^{\br{i}}}} are supported on \inline{\bigO{d}} vertices and \inline{\bigO{d}} labels, so they occupy \inline{\bigO{d^2}\subseteq \bigO{m}} space. These structures are discarded at the end of iteration, so the working space is \inline{\bigO{m}}.

\paragraph{Time.}
Steps~1--3 (initializing \inline{\brcu{M_v}}, computing \inline{\pi}, and constructing \inline{\brcu{N_\pi^<(v)}}) takes \inline{\bigO{m}} time. Let \inline{I:=\brcu{i \in [n] \mid M_{v_i} \ne \emptyset}}, \inline{r_i := \card{\C^{\br{i}}}}, \inline{b_i:=\card{N_\pi^<(v_i) \cup\brcu{v_i}}} for \inline{i \in I}. Then \inline{\sum_{i \in I} r_i = \card{\C}}, \inline{\card{M_{v_i}} < \card{S_i} \leq b_i \leq d+1}, and \inline{\sum_{i\in I} b_i \leq \sum_{i=1}^n \br{\card{N_\pi^<(v_i)} + 1} = \bigO{m}}.

Now fix an \inline{i \in I}. The set \inline{S_i} can be built by scanning all \inline{u \in N_\pi^<(v_i)} and checking whether \inline{M_u} intersects the fixed set \inline{N_\pi^<(v_i) \cup\brcu{v_i}}. This costs at most \inline{\bigO{b_i}^2} time. 
The first-fit greedy coloring of \inline{\overline{G}[M_{v_i}]} runs on at most \inline{b_i} vertices and can be implemented in \inline{\bigO{b_i^2}} time.
The color classes \inline{Q_j} partition the set \inline{M_{v_i}}, and the resulting cliques \inline{C_{p+j}=Q_j \cup \brcu{v_i}} all have size at most \inline{b_i}. Hence \inline{\sum_{C_{p+j} \in \C^{\br{i}}} \card{C_{p+j}} = \card{M_{v_i}} + r_i = \bigO{b_i}}, and updating the \inline{M}-sets for these initial cliques costs
\inline{\bigO{\sum_{C_{p+j} \in \C^{\br{i}}} \card{C_{p+j}}^2} \subseteq \bigO{b_i \sum_{C_{p+j} \in \C^{\br{i}}} \card{C_{p+j}}} \subseteq \bigO{b_i^2}}.

Therefore, aside from the call to \aref{\textsc{Augment-LS}}{fig:sub_dcc_ls}, iteration \inline{i} costs \inline{\bigO{b_i^2}}. 
Summing over all \inline{i \in I} gives \inline{\bigO{\sum_{i\in I} b_i^2} \subseteq \bigO{d \sum_{i\in I} b_i} = \bigO{dm}}.
We also have \inline{\bigO{\sum_{i\in I} b_i^2} \subseteq \bigO{d^2 \card{I}} \subseteq \bigO{d^2 \card{\C}}}.
Because \inline{\card{\C}} cliques can cover at most \inline{\sum_{C_\alpha \in \C}\binom{\card{C_\alpha}}{2} \leq \card{\C} \binom{\br{d+1}}{2}} edges, we have \inline{\bigO{m} \subseteq \bigO{\card{\C}d^2}}.  
Hence the total work outside \aref{\textsc{Augment-LS}}{fig:sub_dcc_ls} is \inline{\bigO{\min\brcu{\card{\C}d^2,dm}} \subseteq \bigO{d^2 \cdot \min\brcu{\card{\C},m}}}.

In the call on \inline{S_i}, we have \inline{\card{S_i} \leq b_i} and \inline{r_i \leq \card{M_{v_i}} < b_i}. Inside \aref{\textsc{Augment-LS}}{fig:sub_dcc_ls}, local admissibility is built only on \inline{r_i} new labels and only over the vertex set \inline{S_i}, hence Steps~1--2 cost \inline{\bigO{r_i \cdot \card{S_i}^2} \subseteq \bigO{r_ib_i^2}}. In Step~3 the subgraph \inline{G[S_i]} has \inline{\bigO{\card{S_i}^2} \subseteq \bigO{b_i^2}} edges. For any \inline{x \in S_i}, the local admissible set \inline{A_x} contains only the labels from \inline{\C^{\br{i}}}, so \inline{\card{A_x} \leq r_i}, and each intersection test \inline{A_u \cap A_w} costs \inline{\bigO{\min\brcu{\card{A_u}, \card{A_w}}} \subseteq \bigO{r_i}}. Thus the total intersection work in Step~3 is \inline{\bigO{r_i b_i^2}}. Each extension of a clique \inline{C_\ell \in \C^{\br{i}}} costs \inline{\bigO{b_i}}, and the number of possible extensions in this call is at most \inline{r_i b_i}, costing a total of \inline{\bigO{r_i b_i^2}} extension work. Thus over all iterations, total time is at most  \inline{\bigO{\sum_{i\in I} r_ib_i^2} \subseteq \bigO{d^2 \card{\C} } \subseteq \bigO{d^2 \cdot \min\brcu{\card{\C},m} }}.
\end{proof}
\section{Empirical Evaluation}
\label{sec:evals}

This section presents our empirical evaluations in aggregate. Additional details are available in \hyref{Appendix}{sec:app_evals}.  
Throughout, GM and Max denote geometric mean and maximum, respectively. Our code will be made available at \url{https://github.com/ahammed-ullah/algodyssey}.

\subsection{Datasets and Experimental Setup}

\begin{table}[h]
    \centering
    \caption{Summary of the datasets. Each collection contains six graphs (details are included in \hyref{Appendix}{subsec:apndix_datasets}).}
    \begin{tabular}{ll|ll}
        \toprule        
        \multirow{2}{*}{Graph Collection} & \# of Edges & \multirow{2}{*}{Graph Collection} & \# of Edges  \\
        & (in millions) &  & (in millions)\\
        \midrule
        The \emph{SS graphs} & $12.3-20$ &
        The \emph{ER graphs} & $54.5-517.3$ \\
        The \emph{BN graphs} & $15.7-77.8$&   
        The \emph{BA graphs} & $64-323.99$  \\
        The \emph{SS-L graphs} & $31.6-162.7$&
        The \emph{UA graphs} & $64-323.99$\\
        The \emph{BN-L graphs} & $79.1-267.8$&
        The \emph{ER-D graphs} & $229.9-4108$\\
        \bottomrule
    \end{tabular}
    \label{tab:datasets_summary}
\end{table}

Table~\ref{tab:datasets_summary} summarizes our datasets. Each collection consists of six graphs, with edge counts ranging from twelve million to four billion.
The SS and SS-L collections consist of real-world graphs from the SuiteSparse Matrix Collection~\cite{davis2011university}. The BN and BN-L collections consist of brain networks from the Network Data Repository~\cite{nr}. The remaining collections contain synthetic graphs
generated from the Barab{\'a}si--Albert (BA), Uniform Attachment (UA), and \Erdos--\Renyi \ (ER) models~\cite{albert2002statistical, erdos1960evolution, pekoz2013total}. We refer to the SS, BN, SS-L, and BN-L collections as the \emph{sparse graphs}, and the ER, BA, UA, and ER-D collections as the \emph{dense graphs}.

We ran all experiments on a compute cluster \cite{McCartney2014}, where each node has an AMD Milan processor with 128 cores running at 2.2 GHz, 256--1024~GB of memory, and the Rocky Linux 8.8 operating system. We implemented the algorithms in \texttt{C++} and compiled the code using the \textit{gcc} compiler (version 12.2.0) with the \texttt{-O3} optimization flag.

Reported times are the averages of five runs. Since all evaluated algorithms are deterministic, and the observed runtime variances are negligible, we omit standard deviations for readability.

\subsection{Construction Quality and Time}
\label{subsec:evals_con}

\begin{table}[h]
    \centering    
    \caption{Succinctness guarantees and running times of DCC construction algorithms.}
        \begin{tabular}{ccl}
    \toprule
    Construction & Succinct? &
    Time \\
    \midrule
    \aref{\textsc{Lov{\'a}sz-Peeling (LP)}}{para:x_con} & NO &
    \(\bigO{m}\) \\
    
    \aref{\textsc{Succinct-Peeling (SP)}}{fig:algo_dcc_ss} & YES &
    \(\bigO{d^{2}\cdot \min\{\sigma,m\}}\) \\
    
    \aref{\textsc{Global-Admissibility (GA)}}{fig:algo_dcc_as} & YES &
    \(\bigO{\min\{|\mathcal{C}|\,n^{2},\, d^{2}m\}}\) \\
    
    \aref{\textsc{Local-Admissibility (LA)}}{fig:algo_dcc_ls} & YES &
    \(\bigO{d^{2}\cdot \min\{|\mathcal{C}|,m\}}\) \\
    
    \aref{\textsc{Local-Peeling (PL)}}{para:x_con} & NO &
    \(\bigO{dm}\) \\
    
    \bottomrule
    \end{tabular}
    \label{tab:summary_cons}
\end{table}

\phantomsection\label{para:x_con} We evaluated five DCC construction algorithms, summarized in \hyref{Table}{tab:summary_cons}. \aref{\textsc{Lov{\'a}sz-Peeling}}{para:x_con} denotes the variant of Lov{\'a}sz's construction (\hyref{Definition}{def:lov_con}) in which  the maximum-clique peeling step is replaced by a first-fit greedy coloring of \inline{\overline{G}}. \aref{\textsc{Succinct-Peeling}}{fig:algo_dcc_ss} is described in \hyref{Section}{subsec:sp}, and \aref{\textsc{Global-Admissibility}}{fig:algo_dcc_as} and \aref{\textsc{Local-Admissibility}}{fig:algo_dcc_ls} are described in \hyref{Section}{sec:dcc_admis}. \aref{\textsc{Local-Peeling}}{para:x_con} denotes the variant of \aref{\textsc{Local-Admissibility}}{fig:algo_dcc_ls} that does not use local admissibility, that is, it does not invoke the subroutine \aref{\textsc{Augment-LS}}{fig:sub_dcc_ls}. For application evaluation, we construct the incidence dual on demand and therefore retain only the output clique cover of each construction algorithm.

\pgfplotstableread{
Graph	LP	SP	GA	LA	PL
human\_gene1   	1.902265266	4.546455829	6.285853606	4.301293842	4.514944969
nd24k          	2.006462315	5.203969922	6.492777362	4.890332166	4.311547777
mouse\_gene    	1.658195147	3.587408554	4.767774068	3.739173604	3.968384578
coPapersCiteseer 	4.114399598	38.20945878	38.05021568	37.36190566	12.25014036
RM07R            	2.214614674	19.23897321	19.77767944	19.35508565	3.610096276
Emilia\_923      	2.075460209	6.565678704	7.10177498	7.057352288	2.878004543
}\srss

\pgfplotstableread{
Graph	LP	SP	GA	LA	PL
bn\_178k\_16m    	1.68430181	2.21821844	2.260910212	2.188682295	1.877227863
bn\_629k\_41m    	1.802109136	6.156824276	7.239782447	6.379243679	4.247933117
bn\_695k\_45m    	1.801650481	6.088052075	7.1277374	6.261035013	4.218703524
bn\_277k\_64m    	1.812363299	3.74019974	4.108508322	3.789885594	2.833584733
bn\_317k\_66m    	1.824601825	3.872012345	4.255844221	3.938331096	2.886297099
bn\_330k\_78m    	1.827264448	3.930229891	4.367575883	4.001876858	2.930394204
}\srbn

\pgfplotstableread{
Graph	LP	SP	GA	LA	PL
Serena           	2.035308965	6.863544282	7.305371987	7.193332317	2.881228563
audikw\_1        	2.097287446	13.18766756	13.70082507	11.82322717	3.810799742
dielFilterV3real 	2.576759708	33.46209171	33.46251064	33.45672272	26.72557675
hollywood-2009 	1.283051029	16.89842399	18.04244746	17.0243402	13.91704386
HV15R          	1.941137148	12.44283109	13.69061364	11.75169922	4.003777481
Queen\_4147    	2.387423736	10.09495192	10.09494566	10.09494252	2.815822175
}\srssl

\pgfplotstableread{	
Graph	LP	SP	GA	LA	PL
bn\_429k\_79m  	1.807199769	3.620281294	3.917628157	3.638734627	2.8115092
bn\_701k\_103m 	1.868816699	6.937651387	8.754572356	7.804725672	4.626359754
bn\_743k\_132m 	1.859776027	7.199641815	9.356735674	8.437681691	4.58661605
bn\_729k\_171m 	1.86354961	7.463391914	9.799539319	8.682683439	4.621577071
bn\_754k\_210m 	1.879483339	7.644307054	10.29138743	9.159396992	4.683693189
bn\_784k\_268m 	1.868834993	6.41832673	8.05342472	7.168516784	4.121010398
}\srbnl
\pgfplotstableread{
Graph	LP	SP	GA	LA	PL
human\_gene1   	1.38255	668.723	164.677	216.215	87.2532
nd24k          	0.858576	58.5151	19.6414	72.8443	14.2573
mouse\_gene    	2.5741	662.105	97.8921	173.515	69.571
coPapersCiteseer 	0.513308	4.52095	0.96095	49.085	16.1354
RM07R            	0.492766	7.69292	1.80807	23.5639	18.0785
Emilia\_923      	1.12461	8.89557	2.96269	874.861	124.759
}\contimess

\pgfplotstableread{
Graph	LP	SP	GA	LA	PL
bn\_178k\_16m    	1.97586	183.43	56.3457	131.979	42.5237
bn\_629k\_41m    	5.22727	141.197	39.0092	596.886	113.569
bn\_695k\_45m    	5.58832	150.107	41.6751	749.955	128.412
bn\_277k\_64m    	10.4198	1131.48	305.93	697.807	268.875
bn\_317k\_66m    	9.35021	904.618	233.392	632.674	218.206
bn\_330k\_78m    	12.5445	1370.29	364.867	970.298	354.587
}\contimebl

\pgfplotstableread{
Graph	LP	SP	GA	LA	PL                   
Serena           	2.1795	16.6887	5.60133	1432.14	268.957
audikw\_1        	1.85735	16.8904	5.76656	285.799	115.107
dielFilterV3real 	2.02092	14.0538	3.26209	117.951	24.3683
hollywood-2009 	16.8847	89.9449	20.5346	1611.35	151.757
HV15R          	9.28644	119.566	27.6487	5686.05	519.556
Queen\_4147    	7.91298	50.5589	10.8075	23029.6	4048.19
}\contimessl

\pgfplotstableread{
Graph	LP	SP	GA	LA	PL                   
bn\_429k\_79m  	11.3707	989.057	272.134	795.674	269.639
bn\_701k\_103m 	14.8954	660.482	171.486	1184.03	326.236
bn\_743k\_132m 	20.8987	1047.79	245.733	1687.96	495.008
bn\_729k\_171m 	26.8323	1521.11	357.859	1991.46	640.267
bn\_754k\_210m 	34.076	2396.49	527.303	2493.29	924.127
bn\_784k\_268m 	45.1382	3979.12	884.667	3310.88	1245.89
}\contimebnl

\begin{figure}[h]
\centering
\begin{tikzpicture}
\begin{groupplot}[
    group style={group size=2 by 2, horizontal sep=1.0cm, vertical sep=0.4cm},
    width=0.47\textwidth,
    height=0.28\textwidth,
    xtick=data,
    grid=both,
    major grid style={gray!25},
    minor grid style={gray!15},
    tick label style={font=\scriptsize},
    label style={font=\small},
    title style={font=\small},
    xticklabel style={rotate=45, anchor=east, font=\scriptsize},
    LP/.style={only marks, mark=square*,   mark size=3.0pt, draw=BrickRed,   fill=BrickRed},
    SP/.style={only marks, mark=triangle*, mark size=3.0pt, draw=blue,       fill=blue},
    GA/.style={only marks, mark=*,         mark size=3.0pt, draw=OliveGreen, fill=OliveGreen},
    LA/.style={only marks, mark=diamond*,  mark size=3.0pt, draw=Sepia,      fill=Sepia},
    PL/.style={only marks, mark=pentagon*, mark size=3.0pt, draw=brown,      fill=brown}
]

\nextgroupplot[
    title={SS graphs},
    ymode=log,
    log basis y={2},
    ymin=1, ymax=64,
    ytick={1,2,4,8,16,32,64},
    ylabel={Compression Ratio},
    symbolic x coords={
        human\_gene1,
        nd24k,
        mouse\_gene,
        coPapersCiteseer,
        RM07R,
        Emilia\_923
    },
    xticklabels={,,,,,},
    legend columns=5,
    legend style={
        font=\small,
        draw=none,
        at={(1.08,1.24)},
        anchor=south,
        /tikz/every even column/.append style={column sep=0.5em}
    }
]
\addplot[LP] table[x=Graph,y=LP] {\srss};
\addlegendentry{LP}
\addplot[SP] table[x=Graph,y=SP] {\srss};
\addlegendentry{SP}
\addplot[GA] table[x=Graph,y=GA] {\srss};
\addlegendentry{GA}
\addplot[LA] table[x=Graph,y=LA] {\srss};
\addlegendentry{LA}
\addplot[PL] table[x=Graph,y=PL] {\srss};
\addlegendentry{PL}

\nextgroupplot[
    title={SS-L graphs},
    ymode=log,
    log basis y={2},
    ymin=1, ymax=64,
    ytick={1,2,4,8,16,32,64},
    symbolic x coords={
        Serena,
        audikw\_1,
        dielFilterV3real,
        hollywood-2009,
        HV15R,
        Queen\_4147
    },
    xticklabels={,,,,,}
]
\addplot[LP] table[x=Graph,y=LP] {\srssl};
\addplot[SP] table[x=Graph,y=SP] {\srssl};
\addplot[GA] table[x=Graph,y=GA] {\srssl};
\addplot[LA] table[x=Graph,y=LA] {\srssl};
\addplot[PL] table[x=Graph,y=PL] {\srssl};

\nextgroupplot[
    ymode=log,
    log basis y={10},
    ymin=0.3, ymax=5000,
    ytick={1,10,100,1000},
    ylabel={Time (seconds)},
    symbolic x coords={
        human\_gene1,
        nd24k,
        mouse\_gene,
        coPapersCiteseer,
        RM07R,
        Emilia\_923
    },
    xticklabels={
        human\_gene1,
        nd24k,
        mouse\_gene,
        coPapersCiteseer,
        RM07R,
        Emilia\_923
    }
]
\addplot[LP] table[x=Graph,y=LP] {\contimess};
\addplot[SP] table[x=Graph,y=SP] {\contimess};
\addplot[GA] table[x=Graph,y=GA] {\contimess};
\addplot[LA] table[x=Graph,y=LA] {\contimess};
\addplot[PL] table[x=Graph,y=PL] {\contimess};

\nextgroupplot[
    ymode=log,
    log basis y={10},
    ymin=1, ymax=50000,
    ytick={1,10,100,1000,10000},
    symbolic x coords={
        Serena,
        audikw\_1,
        dielFilterV3real,
        hollywood-2009,
        HV15R,
        Queen\_4147
    },
    xticklabels={
        Serena,
        audikw\_1,
        dielFilterV3real,
        hollywood-2009,
        HV15R,
        Queen\_4147
    }
]
\addplot[LP] table[x=Graph,y=LP] {\contimessl};
\addplot[SP] table[x=Graph,y=SP] {\contimessl};
\addplot[GA] table[x=Graph,y=GA] {\contimessl};
\addplot[LA] table[x=Graph,y=LA] {\contimessl};
\addplot[PL] table[x=Graph,y=PL] {\contimessl};

\end{groupplot}
\end{tikzpicture}
\caption{Clique-cover compression ratio \inline{2m/\size{\C}} and construction time for the algorithms in \hyref{Table}{tab:summary_cons} on the SS and SS-L graph collections. Both y-axes are logarithmic.}
\label{fig:cons_evals_ss}
\end{figure}

\begin{table}[h]
    \centering    
    \caption{Clique-cover compression ratio \inline{2m/\size{\C}} and construction time of the algorithms listed in \hyref{Table}{tab:summary_cons} on the sparse graphs (the SS, SS-L, BN, and BN-L graph collections).}
    \begin{tabular}{c|rr|rr}
    \toprule
    \multirow{2}{*}{Construction} & \multicolumn{2}{c|}{Compression Ratio} & \multicolumn{2}{c}{Time (seconds)} \\
     & GM & Max & GM & Max \\
    \midrule
    \aref{\textsc{Lov{\'a}sz-Peeling}}{para:x_con} & 1.96 & 4.11 & 5.1 & 45.1 \\
    \aref{\textsc{Succinct-Peeling}}{fig:algo_dcc_ss} & 7.44 & 38.21 & 177.3 & 3979.1 \\
    \aref{\textsc{Global-Admissibility}}{fig:algo_dcc_as} & \textbf{8.50} & \textbf{38.05} & 45.0 & 884.7 \\
    \aref{\textsc{Local-Admissibility}}{fig:algo_dcc_ls} & 7.66 & 37.36 & 635.2 & 23029.6 \\
    \aref{\textsc{Local-Peeling}}{para:x_con} & 4.34 & 26.73 & 169.5 & 4048.2 \\
    \bottomrule
    \end{tabular}
    \label{tab:summary_evals_cons}
\end{table}

We define \emph{compression ratio} of a clique cover \inline{\C} as \inline{2m/\size{\C}}. \hyref{Table}{tab:summary_evals_cons} summarizes compression ratio and construction time of the algorithms listed in \hyref{Table}{tab:summary_cons} on the sparse graphs (the SS, SS-L, BN, and BN-L graph collections). \fref{fig:cons_evals_ss} shows instance-wise comparisons for the SS and SS-L graph collections. \hyref{Appendix}{subsec:apndix_evals_con} shows evaluation results for the remaining graph collections.

These results show that \aref{\textsc{Global-Admissibility}}{fig:algo_dcc_as} achieves the best compression ratios among all evaluated constructions, while remaining faster than all constructions except \aref{\textsc{Lov{\'a}sz-Peeling}}{para:x_con}. \aref{\textsc{Lov{\'a}sz-Peeling}}{para:x_con} is the fastest construction, but its compression ratios are substantially smaller. \hyref{Table}{tab:summary_evals_cons} shows that while \aref{\textsc{Global-Admissibility}}{fig:algo_dcc_as} achieves  \inline{8.5\times} compression ratio on average and up to \inline{38\times}, \aref{\textsc{Lov{\'a}sz-Peeling}}{para:x_con} achieves only \inline{2\times} compression ratio on average and up to \inline{4\times}. 

Based on this evaluation, we choose \aref{\textsc{Global-Admissibility}}{fig:algo_dcc_as} as the algorithm for generating base representations for application evaluation. We also make the output of \aref{\textsc{Global-Admissibility}}{fig:algo_dcc_as} assignment-minimal and use the resulting representations in applications. \hyref{Appendix}{subsec:apndix_evals_con} discusses additional details of these evaluations, including comparison of achieved compression ratios with upper bounds on optimal compression ratios.

\subsection{Application Performance}
\label{subsec:evals_apps}

For storage, we keep only the clique cover on disk and construct the incidence dual on demand. Accordingly, we define the \emph{storage compression ratio} as the ratio between the storage required by an adjacency-list representation and that required by the stored clique cover. We use a text format similar to Matrix Market, so the measured storage compression ratio is very close to \inline{2m/\size{\C}}.
For algorithmic applications, we define the \emph{execution memory ratio} as the ratio between the representation-specific memory used by the adjacency-list representation and that used by the corresponding DCC representation.
We measure \emph{total-time} as the sum of \emph{read-time} and \emph{compute-time}, where read-time is the time to load the input representation from disk and compute-time is the time spent executing the algorithm.

\begin{table}[h]
\centering
\caption{Collection-wise storage compression ratio (CR). Over all instances GM and Max are 9.36 and 37.12, respectively. Instance-wise ratios are included in \hyref{Table}{tab:space_details} (\hyref{Appendix}{sec:app_evals}).} 
\begin{tabular}{lrr|lrr}
\toprule
\multirow{2}{*}{Graph Collection} & \multicolumn{2}{c|}{Storage CR} & \multirow{2}{*}{Graph Collection} & \multicolumn{2}{c}{Storage CR} \\
                                    & GM             & Max            &                                   & GM             & Max           \\
\midrule

The SS Graphs                                 & 11.0          & 37.1          & The ER Graphs                                & 8.0           & 30.0          \\
The BN Graphs                                  & 4.9           & 8.3           & The BA Graphs                                & 5.9           & 9.8           \\
The SS-L Graphs                                & 15.0          & 33.6          & The UA Graphs                                & 5.6           & 9.5           \\
The BN-L Graphs                               & 9.2           & 12.3          & The ER-D Graphs                              & 30.3          & 31.2          \\
\midrule
 Sparse graphs                  & 9.3           & 37.1          &     Dense graphs                              & 9.4           & 31.2     \\
\bottomrule                   
\end{tabular}
\label{tab:scr_summary}
\end{table}

\hyref{Table}{tab:scr_summary} shows storage compression ratios of our graph collections. Across our datasets, the geometric mean and maximum storage compression ratios are 9.36 and 37.12, respectively. For applications that use only a clique cover, the execution memory ratio is close to the storage compression ratio. For applications that use both the clique cover and the incidence dual, the execution memory ratio is roughly half of the storage compression ratio. \hyref{Appendix}{subsec:apndix_evals_app} describes details of the execution memory ratios of the evaluated applications.

\begin{table}[h]
\centering
\caption{Summary of compute-time and total-time speedups (SP) of DCC applications.}
\begin{tabular}{l|rr|rr}
\toprule
\multirow{2}{*}{Algorithm} & \multicolumn{2}{c|}{Compute-time SP} & \multicolumn{2}{c}{Total-time SP} \\
                           & GM                 & Max                & GM           & Max          \\
\midrule
\aref{Connected Components}{fig:algo_ccs_uf}       & 4.0             & 19.4            & 6.5            & 35.1           \\
\aref{Breadth-first Search}{fig:algo_bfs}       & 1.6             & 10.2            & 6.2            & 34.1           \\
\aref{Depth-first Search}{fig:algo_dfs}         & 1.1             & 10.7            & 5.7            & 31.6           \\
\aref{Maximal Matching}{fig:algo_matching}           & 2.3             & 11.3            & 6.6            & 35.6           \\
\aref{First-Fit Coloring}{fig:algo_coloring}         & 0.1             & 1.2             & 2.8            & 16.0           \\
\aref{$k$-Core Decomposition}{fig:algo_kcores}                    & 0.4             & 1.6             & 1.1            & 3.5 \\
\bottomrule                   
\end{tabular}
\label{tab:speedup_summary}
\end{table}

\hyref{Table}{tab:speedup_summary} summarizes compute-time and total-time speedups of the evaluated applications on all graph instances. \fref{fig:time_ss} shows the read-time and compute-time breakdown of \aref{Connected Components}{fig:algo_ccs_uf} on the SS and SS-L graph collections. In these instances, compute-time is approximately an order of magnitude smaller than read-time. This pattern persists across all graph collections, resulting in substantial total-time speedups across all applications, as shown in \hyref{Table}{tab:speedup_summary}.

\pgfplotstableread{
Graph	rAdj	rDCC	cAdj	cDCC
human\_gene1   	1.42252	0.259218	0.0299644	0.0135879
nd24k          	1.62117	0.28629	0.0343232	0.0101139
mouse\_gene    	1.56214	0.472357	0.0367268	0.0155714
coPapersCiteseer 	2.23397	0.0830152	0.0488839	0.00791177
RM07R            	2.28822	0.131371	0.0492136	0.00747279
Emilia\_923      	2.68508	0.497143	0.059782	0.0212578
}\timesss

\pgfplotstableread{
Graph	rAdj	rDCC	cAdj	cDCC
Serena           	4.31921	0.749062	0.0894976	0.0321732
audikw\_1        	4.70728	0.389632	0.103204	0.0225144
dielFilterV3real 	5.43124	0.175751	0.122763	0.0157184
hollywood-2009 	6.62946	0.576817	0.171478	0.0349669
HV15R          	18.586	1.64894	0.412561	0.0663604
Queen\_4147    	19.8077	2.35991	0.437227	0.143019
}\timessl

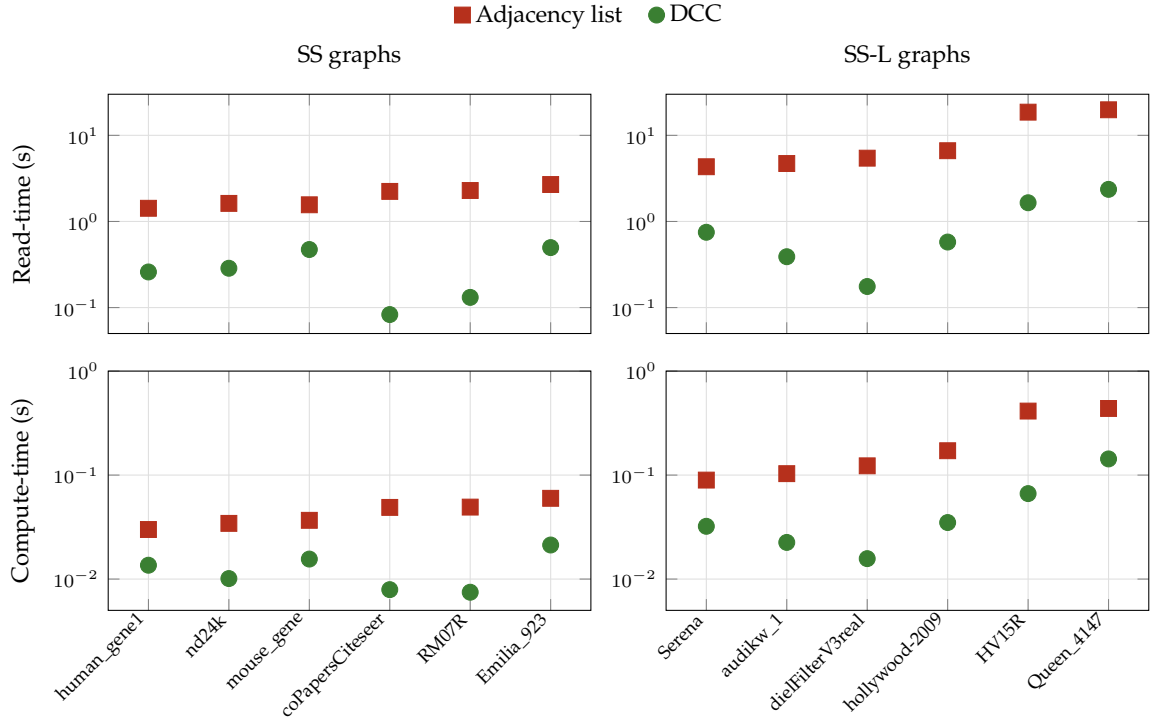
\begin{figure}[H]
    \centering
    \begin{tikzpicture}
\begin{groupplot}[
    group style={group size=2 by 2, horizontal sep=1.0cm, vertical sep=0.5cm},
    width=0.47\textwidth,
    height=0.28\textwidth,
    xtick=data,
    grid=both,
    major grid style={gray!25},
    minor grid style={gray!15},
    tick label style={font=\scriptsize},
    label style={font=\small},
    title style={font=\small},
    xticklabel style={rotate=45, anchor=east, font=\scriptsize},
    Adj/.style={only marks, mark=square*, mark size=3.0pt, draw=BrickRed, fill=BrickRed},
    DCC/.style={only marks, mark=*,       mark size=3.0pt, draw=OliveGreen, fill=OliveGreen}
]

\nextgroupplot[
    title={SS graphs},
    ymode=log,
    log basis y={10},
    ymin=0.05, ymax=30,
    ytick={0.1,1,10},
    ylabel={Read-time (s)},
    symbolic x coords={
        human\_gene1,
        nd24k,
        mouse\_gene,
        coPapersCiteseer,
        RM07R,
        Emilia\_923
    },
    xticklabels={,,,,,},
    legend columns=2,
    legend style={
        font=\small,
        draw=none,
        at={(1.0,1.24)},
        anchor=south,
        /tikz/every even column/.append style={column sep=0.8em}
    }
]
\addplot[Adj] table[x=Graph,y=rAdj] {\timesss};
\addlegendentry{Adjacency list}
\addplot[DCC] table[x=Graph,y=rDCC] {\timesss};
\addlegendentry{DCC}

\nextgroupplot[
    title={SS-L graphs},
    ymode=log,
    log basis y={10},
    ymin=0.05, ymax=30,
    ytick={0.1,1,10},
    symbolic x coords={
        Serena,
        audikw\_1,
        dielFilterV3real,
        hollywood-2009,
        HV15R,
        Queen\_4147
    },
    xticklabels={,,,,,}
]
\addplot[Adj] table[x=Graph,y=rAdj] {\timessl};
\addplot[DCC] table[x=Graph,y=rDCC] {\timessl};

\nextgroupplot[
    ymode=log,
    log basis y={10},
    ymin=0.005, ymax=1,
    ytick={0.01,0.1,1},
    ylabel={Compute-time (s)},
    symbolic x coords={
        human\_gene1,
        nd24k,
        mouse\_gene,
        coPapersCiteseer,
        RM07R,
        Emilia\_923
    },
    xticklabels={
        human\_gene1,
        nd24k,
        mouse\_gene,
        coPapersCiteseer,
        RM07R,
        Emilia\_923
    }
]
\addplot[Adj] table[x=Graph,y=cAdj] {\timesss};
\addplot[DCC] table[x=Graph,y=cDCC] {\timesss};

\nextgroupplot[
    ymode=log,
    log basis y={10},
    ymin=0.005, ymax=1,
    ytick={0.01,0.1,1},
    symbolic x coords={
        Serena,
        audikw\_1,
        dielFilterV3real,
        hollywood-2009,
        HV15R,
        Queen\_4147
    },
    xticklabels={
        Serena,
        audikw\_1,
        dielFilterV3real,
        hollywood-2009,
        HV15R,
        Queen\_4147
    }
]
\addplot[Adj] table[x=Graph,y=cAdj] {\timessl};
\addplot[DCC] table[x=Graph,y=cDCC] {\timessl};

\end{groupplot}
\end{tikzpicture}
    
    \caption{Read-time and compute-time for connected components (via union--find) using adjacency list and DCC representations on the SS and SS-L graph collections. Both $y$-axes are logarithmic.}
    \label{fig:time_ss}
\end{figure}

\subsection{Comparison with WebGraph}
\label{subsec:evals_wg}

We compare the performance of DCC applications against the corresponding WebGraph-based implementations \cite{boldi2004webgraph}. WebGraph is an effective framework for graph compression in practice that compresses adjacency lists using a layered encoding scheme; see \hyref{Appendix}{subsec:related_gc} for a brief description. This scheme can be applied more generally to any family of sets over a common universe, including DCC representations. Since implementing the full WebGraph-style encoding scheme for DCC representations would require more time than we can afford in this work, we do not pursue that direction here. Instead, we compare the WebGraph format with DCC representations in plain form and with a simple encoding for clique covers only. For compatibility with the Java WebGraph API, we implemented all applications in this comparison in Java and compiled the code using OpenJDK version 17.0.5.

\begin{table}[h]
\centering
\caption{Total-time speedups of DCC applications relative to WebGraph-based implementations.}
\begin{tabular}{l|rr|rr}
\toprule
\multirow{2}{*}{Algorithm} & \multicolumn{2}{c|}{Sparse Graphs} & \multicolumn{2}{c}{Dense Graphs} \\
                           & GM              & Max             & GM             & Max             \\
\midrule                           
\aref{Connected Components}{fig:algo_ccs_uf}       & 10.4            & 22.2            & 7.8            & 19.6            \\
\aref{Breadth-first Search}{fig:algo_bfs}       & 5.8             & 18.6            & 2.0            & 5.1             \\
\aref{Depth-first Search}{fig:algo_dfs}         & 7.6             & 19.0            & 1.8            & 5.4             \\
\aref{Maximal Matching}{fig:algo_matching}           & 10.7            & 27.9            & 6.4            & 14.0            \\
\aref{First-Fit Coloring}{fig:algo_coloring}         & 2.9             & 6.7             & 0.5            & 1.6             \\
\aref{$k$-Core Decomposition}{fig:algo_kcores}                    & 1.2             & 2.2             & 0.6            & 1.5   \\
\bottomrule
\end{tabular}
\label{tab:speedup_wg}
\end{table}

\pgfplotstableread{
Graph	tWG	tDCC
human\_gene1   	0.613989	0.080246
nd24k          	0.52254	0.0668315
mouse\_gene    	0.779356	0.114109
coPapersCiteseer 	0.645141	0.0442752
RM07R            	0.504162	0.0451475
Emilia\_923      	0.765768	0.0918116
}\wgssmatch

\pgfplotstableread{
Graph	tWG	tDCC
Serena           	1.27422	0.121597
audikw\_1        	1.28467	0.0779122
dielFilterV3real 	1.52441	0.0545448
hollywood-2009 	1.93225	0.132015
HV15R          	3.17659	0.187762
Queen\_4147    	2.69086	0.237256
}\wgsslmatch

\pgfplotstableread{		
Graph	tWG	tDCC
bn\_429k\_79m  	3.01342	0.474322
bn\_701k\_103m 	3.98962	0.289451
bn\_743k\_132m 	5.53912	0.332146
bn\_729k\_171m 	5.84268	0.404237
bn\_754k\_210m 	7.55571	0.481386
bn\_784k\_268m 	9.24719	0.738679
}\wgbnlmatch

\pgfplotstableread{
Graph	tWG	tDCC
er\_22k\_95    	1.86997	0.167014
er\_33k\_95    	3.86705	0.331467
er\_40k\_95    	5.57399	0.464411
er\_47k\_95    	7.62397	0.609267
er\_66k\_95    	14.702	1.11463
er\_93k\_95    	29.3403	2.0994
}\wgerdmatch
\pgfplotstableread{
Graph	tWG	tDCC
human\_gene1   	0.701595	0.0910521
nd24k          	0.594239	0.094748
mouse\_gene    	0.881999	0.116404
coPapersCiteseer 	0.732881	0.0564349
RM07R            	0.587811	0.0657273
Emilia\_923      	0.861618	0.106015
}\wgssccs

\pgfplotstableread{
Graph	tWG	tDCC
Serena           	1.47258	0.14864
audikw\_1        	1.49384	0.101067
dielFilterV3real 	1.81908	0.0819008
hollywood-2009 	2.54929	0.161824
HV15R          	3.98991	0.264368
Queen\_4147    	3.66677	0.318789
}\wgsslccs

\pgfplotstableread{
Graph	tWG	tDCC
bn\_429k\_79m  	3.4639	0.532231
bn\_701k\_103m 	4.88875	0.331974
bn\_743k\_132m 	6.38334	0.396942
bn\_729k\_171m 	7.34835	0.477647
bn\_754k\_210m 	9.25694	0.567925
bn\_784k\_268m 	11.041	0.866688
}\wgbnlccs

\pgfplotstableread{
Graph	tWG	tDCC        
er\_22k\_95    	3.16977	0.200004
er\_33k\_95    	6.87096	0.406202
er\_40k\_95    	10.0115	0.563509
er\_47k\_95    	13.7877	0.754256
er\_66k\_95    	26.5045	1.40953
er\_93k\_95    	53.5484	2.73204
}\wgerdccs

\begin{figure}[h]
    \centering
    \begin{tikzpicture}
\begin{groupplot}[
    group style={group size=2 by 2, horizontal sep=1.0cm, vertical sep=0.5cm},
    width=0.47\textwidth,
    height=0.28\textwidth,
    xtick=data,
    grid=both,
    major grid style={gray!25},
    minor grid style={gray!15},
    tick label style={font=\scriptsize},
    label style={font=\small},
    title style={font=\small},
    xticklabel style={rotate=45, anchor=east, font=\scriptsize},
    WG/.style={only marks, mark=square*, mark size=3.0pt, draw=BrickRed, fill=BrickRed},
    DCC/.style={only marks, mark=*,       mark size=3.0pt, draw=OliveGreen, fill=OliveGreen}
]

\nextgroupplot[
    title={SS graphs},
    ymode=log,
    log basis y={10},
    ymin=0.02, ymax=10,
    ytick={0.1,1,10},
    ylabel={Matching time (s)},
    symbolic x coords={
        human\_gene1,
        nd24k,
        mouse\_gene,
        coPapersCiteseer,
        RM07R,
        Emilia\_923
    },
    xticklabels={,,,,,},
    legend columns=2,
    legend style={
        font=\small,
        draw=none,
        at={(1.0,1.24)},
        anchor=south,
        /tikz/every even column/.append style={column sep=0.8em}
    }
]
\addplot[WG] table[x=Graph,y=tWG] {\wgssmatch};
\addlegendentry{WebGraph}
\addplot[DCC] table[x=Graph,y=tDCC] {\wgssmatch};
\addlegendentry{DCC}

\nextgroupplot[
    title={SS-L graphs},
    ymode=log,
    log basis y={10},
    ymin=0.02, ymax=10,
    ytick={0.1,1,10},
    symbolic x coords={
        Serena,
        audikw\_1,
        dielFilterV3real,
        hollywood-2009,
        HV15R,
        Queen\_4147
    },
    xticklabels={,,,,,}
]
\addplot[WG] table[x=Graph,y=tWG] {\wgsslmatch};
\addplot[DCC] table[x=Graph,y=tDCC] {\wgsslmatch};

\nextgroupplot[
    ymode=log,
    log basis y={10},
    ymin=0.02, ymax=10,
    ytick={0.1,1,10},
    ylabel={Components time (s)},
    symbolic x coords={
        human\_gene1,
        nd24k,
        mouse\_gene,
        coPapersCiteseer,
        RM07R,
        Emilia\_923
    },
    xticklabels={
        human\_gene1,
        nd24k,
        mouse\_gene,
        coPapersCiteseer,
        RM07R,
        Emilia\_923
    }
]
\addplot[WG] table[x=Graph,y=tWG] {\wgssccs};
\addplot[DCC] table[x=Graph,y=tDCC] {\wgssccs};

\nextgroupplot[
    ymode=log,
    log basis y={10},
    ymin=0.02, ymax=10,
    ytick={0.1,1,10},
    symbolic x coords={
        Serena,
        audikw\_1,
        dielFilterV3real,
        hollywood-2009,
        HV15R,
        Queen\_4147
    },
    xticklabels={
        Serena,
        audikw\_1,
        dielFilterV3real,
        hollywood-2009,
        HV15R,
        Queen\_4147
    }
]
\addplot[WG] table[x=Graph,y=tWG] {\wgsslccs};
\addplot[DCC] table[x=Graph,y=tDCC] {\wgsslccs};

\end{groupplot}
\end{tikzpicture}
    
    \caption{Total-time for maximal matching (top row) and connected components (bottom row) using WebGraph and DCC representations on the SS and SS-L graph collections. All $y$-axes are logarithmic.}
    \label{fig:time_wg_ss}
\end{figure}

\begin{figure}[H]
    \centering
    \begin{tikzpicture}
\begin{groupplot}[
    group style={group size=2 by 2, horizontal sep=1.0cm, vertical sep=0.5cm},
    width=0.47\textwidth,
    height=0.28\textwidth,
    xtick=data,
    grid=both,
    major grid style={gray!25},
    minor grid style={gray!15},
    tick label style={font=\scriptsize},
    label style={font=\small},
    title style={font=\small},
    xticklabel style={rotate=45, anchor=east, font=\scriptsize},
    WG/.style={only marks, mark=square*, mark size=3.0pt, draw=BrickRed, fill=BrickRed},
    DCC/.style={only marks, mark=*,       mark size=3.0pt, draw=OliveGreen, fill=OliveGreen}
]

\nextgroupplot[
    title={BN-L graphs},
    ymode=log,
    log basis y={10},
    ymin=0.1, ymax=40,
    ytick={0.1,1,10},
    ylabel={Matching time (s)},
    symbolic x coords={
        bn\_429k\_79m,
        bn\_701k\_103m,
        bn\_743k\_132m,
        bn\_729k\_171m,
        bn\_754k\_210m,
        bn\_784k\_268m
    },
    xticklabels={,,,,,},
    legend columns=2,
    legend style={
        font=\small,
        draw=none,
        at={(1.0,1.24)},
        anchor=south,
        /tikz/every even column/.append style={column sep=0.8em}
    }
]
\addplot[WG] table[x=Graph,y=tWG] {\wgbnlmatch};
\addlegendentry{WebGraph}
\addplot[DCC] table[x=Graph,y=tDCC] {\wgbnlmatch};
\addlegendentry{DCC}

\nextgroupplot[
    title={ER-D graphs},
    ymode=log,
    log basis y={10},
    ymin=0.1, ymax=100,
    ytick={0.1,1,10,100},
    symbolic x coords={
        er\_22k\_95,
        er\_33k\_95,
        er\_40k\_95,
        er\_47k\_95,
        er\_66k\_95,
        er\_93k\_95
    },
    xticklabels={,,,,,}
]
\addplot[WG] table[x=Graph,y=tWG] {\wgerdmatch};
\addplot[DCC] table[x=Graph,y=tDCC] {\wgerdmatch};

\nextgroupplot[
    ymode=log,
    log basis y={10},
    ymin=0.1, ymax=40,
    ytick={0.1,1,10},
    ylabel={Components time (s)},
    symbolic x coords={
        bn\_429k\_79m,
        bn\_701k\_103m,
        bn\_743k\_132m,
        bn\_729k\_171m,
        bn\_754k\_210m,
        bn\_784k\_268m
    },
    xticklabels={
        bn\_79m,
        bn\_103m,
        bn\_132m,
        bn\_171m,
        bn\_210m,
        bn\_268m
    }
]
\addplot[WG] table[x=Graph,y=tWG] {\wgbnlccs};
\addplot[DCC] table[x=Graph,y=tDCC] {\wgbnlccs};

\nextgroupplot[
    ymode=log,
    log basis y={10},
    ymin=0.1, ymax=100,
    ytick={0.1,1,10,100},
    symbolic x coords={
        er\_22k\_95,
        er\_33k\_95,
        er\_40k\_95,
        er\_47k\_95,
        er\_66k\_95,
        er\_93k\_95
    },
    xticklabels={
        er\_22k,
        er\_33k,
        er\_40k,
        er\_47k,
        er\_66k,
        er\_93k
    }
]
\addplot[WG] table[x=Graph,y=tWG] {\wgerdccs};
\addplot[DCC] table[x=Graph,y=tDCC] {\wgerdccs};

\end{groupplot}
\end{tikzpicture}
    
    \caption{Total-time for maximal matching (top row) and connected components (bottom row) using WebGraph and DCC representations on the BN-L and ER-D graph collections. All $y$-axes are logarithmic.}
    \label{fig:time_wg_bn}
\end{figure}
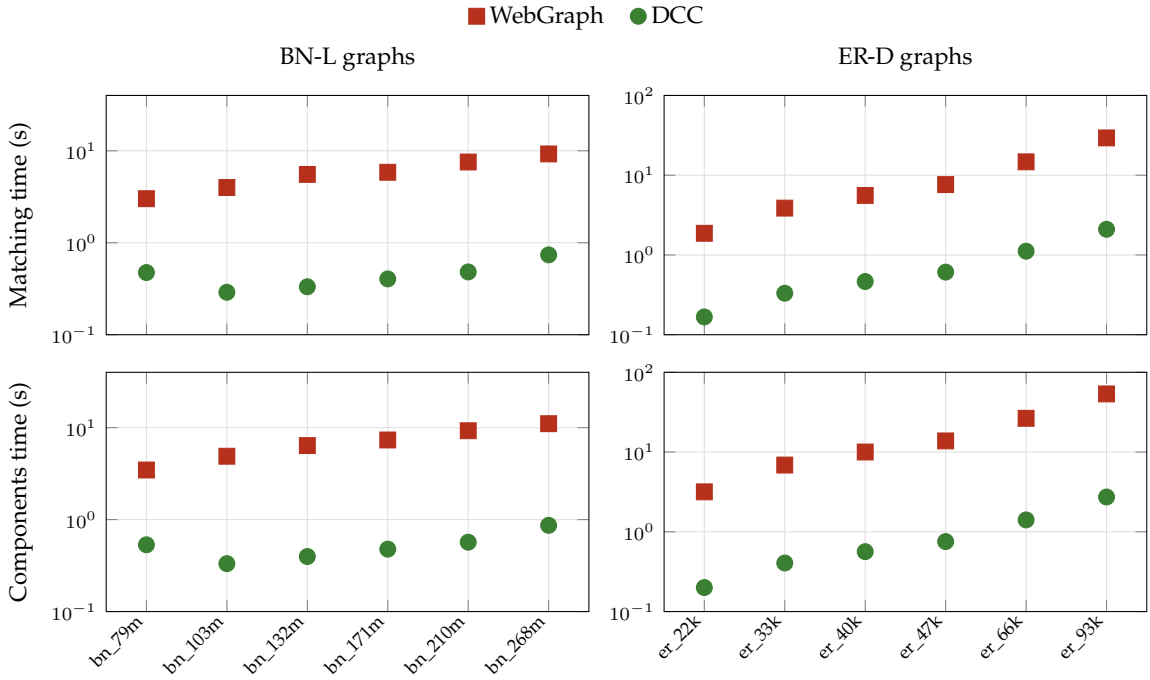

Our encoding for clique covers is as follows. We store each clique as a sorted list of vertex IDs and compress the clique cover using gap encoding followed by variable-byte encoding. For a clique \inline{\langle v_1, v_2, \dots, v_k \rangle} with \inline{v_1 < v_2 < \dots < v_k}, we encode the sequence \inline{\langle v_1, v_2-v_1, v_3-v_2, \dots, v_{k}-v_{k-1} \rangle}. Each resulting integer is then encoded using standard variable-byte encoding, where an integer is split into 7-bit chunks and each output byte uses its highest bit as a continuation flag. The full encoded cover is stored as a single byte array together with an offset and encoded length for each clique. This allows any clique to be decoded independently without scanning earlier cliques. During execution, the compressed representation remains in memory, and a clique is decoded on demand into a reusable buffer.

For WebGraph, we use its current default parameter settings. These parameters are intended to balance compression ratio and access speed, and based on the evaluation results in \cite{boldi2004webgraph}, the current defaults appear to favor compression ratio slightly over access speed. With the simple encoding above, the geometric mean of the clique-cover sizes on our datasets is about \inline{1.4 \times} smaller than the corresponding WebGraph format. Without encoding, the clique covers are about \inline{2 \times} larger on average than the WebGraph format. On the sparse graphs, the encoded clique covers are about \inline{2 \times} smaller on average than the WebGraph format, with a maximum improvement of \inline{9 \times}. 

We evaluated six DCC applications against WebGraph. For connected components and maximal matching, we use encoded clique covers; for the remaining applications, we use plain DCC representations, since they use both \inline{\C} and \inline{\Lcal}. \hyref{Table}{tab:speedup_wg} summarizes the speedup results. On the sparse graphs, all six applications show substantial speedups. Connected components and maximal matching also show strong speedups on the dense graphs. \fref{fig:time_wg_ss} and \fref{fig:time_wg_bn} compare total-time on the SS, SS-L, BN-L, and ER-D graph collections for connected components and maximal matching. These results show that, for connected components and maximal matching on the sparse graphs, DCC applications achieve a geometric-mean speedup of about $11\times$, with a maximum of about $28\times$, relative to the corresponding WebGraph-based implementations, while using about $2\times$ less memory on average.
\section{Conclusion}
\label{sec:conc}

For graphs with dense local structure, standard graph representations, together with their corresponding algorithms, can be suboptimal in both time and space. This work shows that succinct clique-cover-based representations can be used to improve these two resources jointly. In particular, our representation-aware design of several fundamental graph algorithms shows that these improvements are realizable in practice, even on graphs that are extremely sparse.

On the representation side, we focus on succinct clique covers, a subclass within the minimality landscape of \fref{fig:partial_order}. This leaves a broad direction within this landscape unexplored. In particular, resolving the approximability of the assignment-optimal objective and identifying new polynomial-time constructible minimality notions for this objective may lead to representations with even stronger practical impact.

Besides designing better representations, future work may pursue faster construction algorithms in both static and dynamic settings, and broaden the scope of DCC-based algorithmic applications. These directions are discussed in \hyref{Appendix}{subsec:future_cons} and \hyref{Appendix}{subsec:future_apps}.

\bibliographystyle{plainurl}
\typeout{}
\bibliography{12_ref}

@article{kou1978covering,
  title={Covering edges by cliques with regard to keyword conflicts and intersection graphs},
  author={Kou, Lawrence T. and Stockmeyer, Larry J. and Wong, Chak-Kuen},
  journal={Communications of the ACM},
  volume={21},
  number={2},
  pages={135--139},
  year={1978},
  publisher={ACM}
}

@inproceedings{orlin1977contentment,
  title={Contentment in graph theory: covering graphs with cliques},
  author={Orlin, James},
  booktitle={Indagationes Mathematicae (Proceedings)},
  volume={80},
  pages={406--424},
  year={1977},
  organization={Elsevier}
}

@article{ullah2022computing,
  title={Computing clique cover with structural parameterization},
  author={Ullah, Ahammed},
  journal={arXiv preprint arXiv:2208.12438},
  year={2022}
}

@article{lund1994hardness,
  title={On the hardness of approximating minimization problems},
  author={Lund, Carsten and Yannakakis, Mihalis},
  journal={Journal of the ACM (JACM)},
  volume={41},
  number={5},
  pages={960--981},
  year={1994},
  publisher={ACM New York, NY, USA}
}

@article{ennis2012assignment,
  title={Assignment-minimum clique coverings},
  author={Ennis, John M and Fayle, Charles M and Ennis, Daniel M},
  journal={Journal of Experimental Algorithmics (JEA)},
  volume={17},
  pages={1--1},
  year={2012},
  publisher={ACM New York, NY, USA}
}

@article{erdos1966representation,
  title={The representation of a graph by set intersections},
  author={Erd{\H{o}}s, Paul and Goodman, Adolph W and P{\'o}sa, Louis},
  journal={Canadian Journal of Mathematics},
  volume={18},
  pages={106--112},
  year={1966},
  publisher={Cambridge University Press}
}

@article{chiba1985arboricity,
  title={Arboricity and subgraph listing algorithms},
  author={Chiba, Norishige and Nishizeki, Takao},
  journal={SIAM Journal on computing},
  volume={14},
  number={1},
  pages={210--223},
  year={1985},
  publisher={SIAM}
}

@book{mckee1999topics,
  title={Topics in intersection graph theory},
  author={McKee, Terry A and McMorris, Fred R},
  year={1999},
  publisher={SIAM}
}

@inproceedings{lovasz1968covering,
  title={On covering of graphs},
  author={Lov{\'a}sz, L{\'a}szl{\'o}},
  booktitle={Theory of Graphs (Proc. Colloq., Tihany, 1966)},
  pages={231--236},
  year={1968},
  organization={Academic Press New York}
}

@article{DBLP:journals/dam/Roberts85,
  author       = {Fred S. Roberts},
  title        = {Applications of edge coverings by cliques},
  journal      = {Discret. Appl. Math.},
  volume       = {10},
  number       = {1},
  pages        = {93--109},
  year         = {1985},
  url          = {https://doi.org/10.1016/0166-218X(85)90061-7},
  doi          = {10.1016/0166-218X(85)90061-7},
  bibsource    = {dblp computer science bibliography, https://dblp.org}
}

@inproceedings{pullman2006clique,
  title={Clique coverings of graphs—a survey},
  author={Pullman, Norman J},
  booktitle={Combinatorial Mathematics X: Proceedings of the Conference held in Adelaide, Australia, August 23--27, 1982},
  pages={72--85},
  year={2006},
  organization={Springer}
}

@book{cormen2022introduction,
  title={Introduction to Algorithms},
  author={Cormen, Thomas H and Leiserson, Charles E and Rivest, Ronald L and Stein, Clifford},
  year={2022},
  publisher={MIT press}
}

@article{matula1983smallest,
  title={Smallest-last ordering and clustering and graph coloring algorithms},
  author={Matula, David W and Beck, Leland L},
  journal={Journal of the ACM (JACM)},
  volume={30},
  number={3},
  pages={417--427},
  year={1983},
  publisher={ACM New York, NY, USA}
}

@article{davis2011university,
  title={The {U}niversity of {F}lorida sparse matrix collection},
  author={Davis, Timothy A and Hu, Yifan},
  journal={ACM Transactions on Mathematical Software (TOMS)},
  volume={38},
  number={1},
  pages={1--25},
  year={2011},
  publisher={ACM New York, NY, USA}
}

@inproceedings{nr,
      title = {The Network Data Repository with Interactive Graph Analytics and Visualization},
      author={Ryan A. Rossi and Nesreen K. Ahmed},
      booktitle = {AAAI},
      url={https://networkrepository.com},
      year={2015}
  }

@article{erdos1960evolution,
  title={On the evolution of random graphs},
  author={Erd{\H{o}}s, Paul and R{\'e}nyi, Alfr{\'e}d},
  journal={Publ. Math. Inst. Hung. Acad. Sci.},
  volume={5},
  pages={17--60},
  year={1960}
}

@article{pekoz2013total,
  title={Total variation error bounds for geometric approximation},
  author={Pek{\"o}z, Erol A and R{\"o}llinn, Adrian and Ross, Nathan},
  journal={Bernoulli},
  volume={19},
  number={2},
  pages={610--632},
  year={2013}
}

@article{albert2002statistical,
  title={Statistical mechanics of complex networks},
  author={Albert, R{\'e}ka and Barab{\'a}si, Albert-L{\'a}szl{\'o}},
  journal={Reviews of Modern Physics},
  volume={74},
  number={1},
  pages={47},
  year={2002},
  publisher={APS}
}

@inproceedings{kapralov2021space,
  title={Space lower bounds for approximating maximum matching in the edge arrival model},
  author={Kapralov, Michael},
  booktitle={Proceedings of the 2021 ACM-SIAM Symposium on Discrete Algorithms (SODA)},
  pages={1874--1893},
  year={2021},
  organization={SIAM}
}

@inproceedings{ullah2025weighted,
  title={Weighted Matching in a Poly-Streaming Model},
  author={Ullah, Ahammed and Ferdous, SM and Pothen, Alex},
  booktitle={33rd Annual European Symposium on Algorithms (ESA 2025)},
  pages={17:1--17:18},
  year={2025},
  organization={Schloss Dagstuhl--Leibniz-Zentrum f{\"u}r Informatik}
}

@article{feigenbaum2009graph,
  title={Graph distances in the data-stream model},
  author={Feigenbaum, Joan and Kannan, Sampath and McGregor, Andrew and Suri, Siddharth and Zhang, Jian},
  journal={SIAM Journal on Computing},
  volume={38},
  number={5},
  pages={1709--1727},
  year={2009},
  publisher={SIAM}
}

@article{bollobas1976cliques,
  title={Cliques in random graphs},
  author={Bollob{\'a}s, B{\'e}la and Erd{\H{o}}s, Paul},
  journal={Mathematical Proceedings of the Cambridge Philosophical Society},
  volume={80},
  number={3},
  pages={419--427},
  year={1976},
  publisher={Cambridge University Press}
}

@inproceedings{chang2001tree,
  title={On the tree-degree of graphs},
  author={Chang, Maw-Shang and M{\"u}ller, Haiko},
  booktitle={International Workshop on Graph-Theoretic Concepts in Computer Science},
  pages={44--54},
  year={2001},
  organization={Springer}
}

@article{hoover1992complexity,
  title={Complexity of graph covering problems for graphs of low degree},
  author={Hoover, Douglas N},
  journal={Journal of Combinatorial Mathematics and Combinatorial Computing},
  volume={11},
  pages={187--208},  
  year={1992}
}

@article{ma1989clique,
  title={Clique covering of chordal graphs},
  author={Ma, Shaohan and Wallis, Walter D and Wu, Julin},
  journal={Utilitas Mathematica},
  volume={36},
  pages={151--152},
  year={1989},
  publisher={UTIL MATH PUBL INC UNIV MANITOBA PO BOX 7 UNIV CENT, WINNIPEG MB R3T 2N2, CANADA}
}

@article{prisner1995clique,
  title={Clique covering and clique partition in generalizations of line graphs},
  author={Prisner, Erich},
  journal={Discrete applied mathematics},
  volume={56},
  number={1},
  pages={93--98},
  year={1995},
  publisher={Elsevier}
}

@inproceedings{hevia2023solving,
  title={Solving Edge Clique Cover Exactly via Synergistic Data Reduction},
  author={Hevia, Anthony and Kallus, Benjamin and McClintic, Summer and Reisner, Samantha and Strash, Darren and Wilson, Johnathan},
  booktitle={31st Annual European Symposium on Algorithms (ESA 2023)},
  pages={61:1--61:19},
  year={2023},
  organization={Schloss Dagstuhl--Leibniz-Zentrum f{\"u}r Informatik}
}

@inproceedings{fomin2025edge,
  title={Edge Clique Partition and Cover Beyond Independence},
  author={Fomin, Fedor V and Golovach, Petr A and Sagunov, Danil and Simonov, Kirill},
  booktitle={33rd Annual European Symposium on Algorithms (ESA 2025)},
  pages={43:1--43:16},
  year={2025},
  organization={Schloss Dagstuhl--Leibniz-Zentrum f{\"u}r Informatik}
}

@article{ullah2021clique,
  title={Clique cover of graphs with bounded degeneracy},
  author={Ullah, Ahammed},
  journal={CoRR, abs/2108.09851},
  year={2021}
}

@article{markham2023neuro,
  title={Neuro-causal factor analysis},
  author={Markham, Alex and Liu, Mingyu and Aragam, Bryon and Solus, Liam},
  journal={arXiv preprint arXiv:2305.19802},
  year={2023}
}

@inproceedings{wen2025hyperplr,
  title={{HyperPLR}: Hypergraph Generation through Projection, Learning, and Reconstruction},
  author={Wen, Weihuang and Yu, Tianshu},
  booktitle={The Thirteenth International Conference on Learning Representations},
  year={2025}
}

@article{wen2023w2sat,
  title={{W2SAT}: Learning to generate {SAT} instances from weighted literal incidence graphs},
  author={Wen, Weihuang and Yu, Tianshu},
  journal={arXiv preprint arXiv:2302.00272},
  year={2023}
}

@article{zhang2023fixed,
  title={A fixed-parameter tractable algorithm for combinatorial filter reduction},
  author={Zhang, Yulin and Shell, Dylan A},
  journal={arXiv preprint arXiv:2309.06664},
  year={2023}
}

@article{kellerman1973determination,
  title={Determination of keyword conflict},
  author={Kellerman, Eduardo},
  journal={IBM Technical Disclosure Bulletin},
  volume={16},
  number={2},
  pages={544--546},
  year={1973}
}

@inproceedings{gramm2006data,
  title={Data reduction, exact, and heuristic algorithms for clique cover},
  author={Gramm, Jens and Guo, Jiong and H{\"u}ffner, Falk and Niedermeier, Rolf},
  booktitle={Proceedings of the Meeting on Algorithm Engineering \& Expermiments},
  pages={86--94},
  year={2006},
  organization={Society for Industrial and Applied Mathematics}
}

@article{helling2018constructing,
  title={Constructing an indeterminate string from its associated graph},
  author={Helling, Joel and Ryan, PJ and Smyth, WF and Soltys, Michael},
  journal={Theoretical Computer Science},
  volume={710},
  pages={88--96},
  year={2018},
  publisher={Elsevier}
}

@article{conte2020large,
  title={Large-scale clique cover of real-world networks},
  author={Conte, Alessio and Grossi, Roberto and Marino, Andrea},
  journal={Information and Computation},
  volume={270},
  pages={104464},
  year={2020},
  publisher={Elsevier}
}

@article{piepho2004algorithm,
  title={An algorithm for a letter-based representation of all-pairwise comparisons},
  author={Piepho, Hans-Peter},
  journal={Journal of Computational and Graphical Statistics},
  volume={13},
  number={2},
  pages={456--466},
  year={2004},
  publisher={Taylor \& Francis}
}

@article{McCartney2014,
  author  = {McCartney, Gerry and Hacker, Thomas and Yang, Baijian},
  title   = {Empowering Faculty: A Campus Cyberinfrastructure Strategy for Research Communities},
  journal = {Educause Review},
  year    = {2014}
}

@article{besta2018survey,
  title={Survey and taxonomy of lossless graph compression and space-efficient graph representations},
  author={Besta, Maciej and Hoefler, Torsten},
  journal={arXiv preprint arXiv:1806.01799},
  year={2018}
}

@article{FederMotwani1995,
  author  = {Tom{\'a}s Feder and Rajeev Motwani},
  title   = {Clique Partitions, Graph Compression and Speeding-Up Algorithms},
  journal = {Journal of Computer and System Sciences},
  volume  = {51},
  number  = {2},
  pages   = {261--272},
  year    = {1995},
  doi     = {10.1006/jcss.1995.1065}
}

@article{chavan2025clique,
  title={A Clique Partitioning-Based Algorithm for Graph Compression},
  author={Chavan, Akshar and Rabinia, Sanaz and Grosu, Daniel and Brocanelli, Marco},
  journal={arXiv preprint arXiv:2502.02477},
  year={2025}
}

@inproceedings{boldi2004webgraph,
  title={The {WebGraph} framework {I}: compression techniques},
  author={Boldi, Paolo and Vigna, Sebastiano},
  booktitle={Proceedings of the 13th International Conference on World Wide Web},
  pages={595--602},
  year={2004}
}

@inproceedings{buehrer2008scalable,
  title={A scalable pattern mining approach to web graph compression with communities},
  author={Buehrer, Gregory and Chellapilla, Kumar},
  booktitle={Proceedings of the 2008 international conference on web search and data mining},
  pages={95--106},
  year={2008}
}

@inproceedings{karande2009speeding,
  title={Speeding up algorithms on compressed web graphs},
  author={Karande, Chinmay and Chellapilla, Kumar and Andersen, Reid},
  booktitle={Proceedings of the Second ACM International Conference on Web Search and Data Mining},
  pages={272--281},
  year={2009}
}

\appendix
\section{Deferred Details: Representation Design and Construction}
\label{sec:app1}

\subsection{Minimality Landscape: Partial Order Details}
\label{subsec:landscape_dd}

\paragraph{Composition-minimal $\implies$ Inclusion-minimal.} If no pair of cliques $C_i, C_j \in \C$ with $i \ne j$ can be merged into a single clique, then neither clique is contained in the other. The converse does not hold: for example, the complete graph $K_n$ admits an inclusion-minimal clique cover where each clique covers a distinct edge, yet the entire collection can be composed into a single clique.

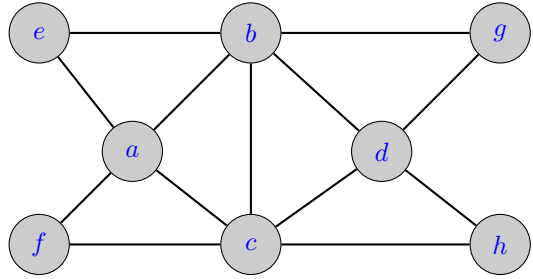
\begin{figure}[h]
    \centering
     \tikzset{main node/.style={text=blue, circle,fill=black!20,draw,minimum size=0.8cm,inner sep=0pt},
            }
 \begin{tikzpicture}
    \node[main node] (1) {$a$};
    \node[main node] (2) [above right = 1.0cm and 1.0cm of 1] {$b$};
    \node[main node] (3) [below = 2.0cm of 2] {$c$};
    \node[main node] (4) [right = 2.5cm of 1] {$d$};
    
    \node[main node] (5) [left = 2.0cm of 3] {$f$};
    
    \node[main node] (6) [left = 2.0cm of 2] {$e$};
    
    \node[main node] (7) [right = 2.5cm of 2] {$g$};
    
    \node[main node] (8) [right = 2.5cm of 3] {$h$};

    \path[draw,thick]
    (1) edge node {} (2)
    (1) edge node {} (3)
    (2) edge node {} (3)
    (2) edge node {} (4)
    (3) edge node {} (4)
    
    (5) edge node {} (1)
    (5) edge node {} (3)
    
    (6) edge node {} (1)
    (6) edge node {} (2)
    
    (7) edge node {} (4)
    (7) edge node {} (2)
    
    (8) edge node {} (4)
    (8) edge node {} (3)
    
    ;
\end{tikzpicture}
    \caption{An example graph used to illustrate multiple concepts.}
    \label{fig:example_graph}
\end{figure}

\paragraph{Support-minimal $\implies$ Inclusion-minimal.} If each clique in a cover contains at least one edge not covered by any other, then it cannot be fully contained in another clique. The converse does not hold: for example, the induced subgraph on vertices $\{a,b,c,d,e,f\}$ in \fref{fig:example_graph} admits an inclusion-minimal clique cover consisting of $C_1=\{a,b,c\}$, $C_2=\{a,b,e\}$, $C_3=\{a,c,f\}$, and $C_4=\{b,c,d\}$, but the cliques $\{C_2, C_3, C_4\}$ alone are sufficient to cover all edges.

\paragraph{Cardinality-optimal $\implies$ Support-minimal and Composition-minimal.} If a clique cover is not support-minimal then it contains at least one clique that does not cover any edge uniquely; such a clique can be removed to yield a clique cover of smaller cardinality. If it is not composition-minimal, then it contains a pair of distinct cliques that can be merged into a single clique, again yielding a cover of smaller cardinality.

\paragraph{Assignment-minimal $\implies$ Support-minimal.} If a clique cover is assignment-minimal, then no vertex can be removed from any clique without leaving some edge uncovered. This implies that each vertex-clique assignment supports at least one unique edge not covered by any other clique, making the cover support-minimal. The converse does not hold: the graph in \fref{fig:example_graph} admits a support-minimal clique cover consisting of $C_1=\{a,b,c\}$, $C_2=\{a,b,e\}$, $C_3=\{a,c,f\}$, $C_4=\{b,d,g\}$, and $C_5=\{c,d,h\}$, but the cover is not assignment-minimal, since vertex $a$ can be removed from $C_1$ without leaving any edge uncovered.

\paragraph{}The inclusion-minimal clique cover of $K_n$ described above is also support-minimal, showing that support-minimal $\nRightarrow$ composition-minimal. The other inclusion-minimal clique cover shown above is composition-minimal, which establishes that composition-minimal $\nRightarrow$ support-minimal.

\paragraph{}Assignment-optimal and cardinality-optimal are fundamentally different notions; \cite{ennis2012assignment} showed an example graph where these two objectives require clique covers of different cardinality. In the following, we show a family of graphs with polynomial separation in \inline{\size{\cdot}} between these two classes.

\begin{lemma}
\label{lemma:opt_sep}
There exists a family of graphs with $n$ vertices that admits a cardinality-optimal clique cover \inline{\C} with \inline{\size{\C} =\bigTheta{n^2}}, and an assignment-optimal clique cover \inline{\C^*} with \inline{\size{\C^*} = \bigTheta{n}}.
\end{lemma}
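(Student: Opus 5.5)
We take the graph $G_t$ on $n=3t$ vertices whose vertex set is $W\cup U\cup V'$, with $W=\brcu{w_1,\dots,w_t}$, $U=\brcu{u_1,\dots,u_t}$, and $V'=\brcu{v_1,\dots,v_t}$. Its edges are these: $K_1:=W\cup U$ is a clique, $K_2:=W\cup V'$ is a clique, and between $U$ and $V'$ there is only the perfect matching $\brcu{u_i,v_i}$, $i\in[t]$. So $u_i$ and $v_j$ are adjacent exactly when $i=j$. The point of the construction is that the matching edge $\brcu{u_i,v_i}$ extends to the clique $W\cup\brcu{u_i,v_i}$, since every $w\in W$ is adjacent to both endpoints. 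This lets a minimum-cardinality cover legally attach $W$ to each of its $t$ forced cliques. The plan has three steps: a lower bound on $\card{\C}$ for every cover, an explicit cover of that cardinality and size $\bigTheta{n^2}$, and an explicit cover of size $\bigTheta{n}$ together with the trivial lower bound $\size{\cdot}\geq n$.

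\textbf{Cardinality lower bound.} First we check that every clique cover of $G_t$ has at least $t+2$ cliques. Two distinct matching edges $\brcu{u_i,v_i}$ and $\brcu{u_j,v_j}$ cannot lie in a common clique, because $u_i v_j$ is a non-edge. Hence at least $t$ distinct cliques each contain exactly one matching edge. None of these contains the edge $\brcu{u_1,u_2}$: a clique containing $u_i,v_i$ cannot contain any other $u_j$, since $u_jv_i\notin E$. So a further clique is needed for $\brcu{u_1,u_2}$. That clique contains no vertex of $V'$, because $v_k$ is non-adjacent to at least one of $u_1,u_2$. Therefore it also cannot cover $\brcu{v_1,v_2}$, and yet another clique is required. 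This gives $\card{\C}\geq t+2$ for $t\geq 3$.

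\textbf{The two covers.} Let $\C:=\brcu{W\cup\brcu{u_i,v_i}}_{i\in[t]}\cup\brcu{K_1,K_2}$. Each member is a clique, and together they cover all edges: the edges of $K_1$ and $K_2$ are covered directly, and the matching edges by the first family. So $\C$ has $t+2$ cliques and is cardinality-optimal, while $\size{\C}=t(t+2)+4t=\bigTheta{n^2}$. For the other bound, $\C':=\brcu{K_1,K_2}\cup\brcu{\brcu{u_i,v_i}}_{i\in[t]}$ is a cover with $\size{\C'}=4t+2t=6t=\bigTheta{n}$. Hence any assignment-optimal cover $\C^*$ satisfies $\size{\C^*}\leq 6t$. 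Since the graph has no isolated vertices, every vertex lies in some clique, so $\size{\C^*}\geq n=3t$, and therefore $\size{\C^*}=\bigTheta{n}$.

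\textbf{Main obstacle.} The only delicate step is the cardinality lower bound. Concretely, we must rule out that some clique of an optimal cover simultaneously handles a matching edge together with an internal edge of $U$ or of $V'$. The non-adjacency $u_jv_i\notin E$ for $j\neq i$ settles this. A counting bound alone would not suffice here, so we argue through these explicit witness non-edges. Everything else is routine verification.
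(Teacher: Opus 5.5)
Your proof is correct for the lemma as literally worded, and it uses the same skeleton as the paper: your $W$ is the paper's universal clique $U$, and your $U$ and $V'$ are its matched cliques $A$ and $B$. What you drop are the paper's two pendant cliques ($V$ attached to $A$, and $W$ attached to $B$), and that changes what the result says.

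In your family, the linear-size cover $\mathcal{C}'$ has only $t+2$ cliques, so by your own lower bound it is already cardinality-optimal. Moreover, every vertex has two non-adjacent neighbours: $u_j$ and $v_i$ for $u_i$, $v_j$ and $u_i$ for $v_i$, and $u_1$ and $v_2$ for any $w$. So every cover puts each vertex in at least two cliques, and $\size{\mathcal{C}'}=6t$ is exactly assignment-optimal too. Thus a single cover is optimal for both objectives. Your quadratic cover exists only because cardinality cannot see the padding of each matching clique by $W$. The statement still holds, but your family does not exhibit the separation between the two classes that the paper announces right before the lemma.

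In the paper's graph, the pendant cliques remove this freedom. Take any cover with $n+2$ cliques. Besides the $n$ matching cliques, the only other cliques are the ones needed for the edges inside $V$ and inside $W$. These lie in $A\cup V$ and $B\cup W$ respectively, so they contain no vertex of $U$. Also, the only matching clique containing $a_i$ is the one for $\{a_i,b_i\}$. Hence each edge $\{u_j,a_i\}$ forces $u_j$ into that clique. So every cardinality-optimal cover contains all the cliques $\{a_i,b_i\}\cup U$ and has size $\Theta(n^2)$, while the linear-size Family-2 needs $n+4$ cliques.

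Your witness-non-edge argument is spelled out more carefully than the paper's one-line lower bound, and it carries over essentially verbatim to the paper's graph. Adding the two pendant cliques would turn your proof into a genuine separation.
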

\begin{proof}
Consider a family of graphs on $5n$ vertices constructed as follows.
\begin{itemize}
    \item Let $A=\{a_1, \cdots, a_n\}$ and $B=\{b_1, \cdots, b_n\}$ be two disjoint cliques. Add matching edges $\{a_i, b_i\}$ for all $i \in [n]$.    
    \item Let $U=\{u_1, \cdots, u_n\}$ be a clique. Connect each $u_i$ to every vertex in $A \cup B$.
    \item Let $V=\{v_1, \cdots, v_n\}$ be a clique. Connect each $v_i$ to every vertex in $A$.
    \item Let $W=\{w_1, \cdots, w_n\}$ be a clique. Connect each $w_i$ to every vertex in $B$.
\end{itemize}

We now define two families of clique covers for these graphs.
\begin{multicols}{2}
\textbf{Family-1:}
\begin{itemize}
    \item $C_i = \{a_i, b_i\} \cup U$ for $i \in [n]$
    \item $C_{n + 1} = A \cup V$
    \item $C_{n + 2} = B \cup W$
\end{itemize}

\columnbreak

\textbf{Family-2:}
\begin{itemize}
    \item $C_i^\prime = \{a_i, b_i\}$ for $i \in [n]$
    \item $C_{n + 1}^\prime = A \cup V$, \quad $C_{n + 2}^\prime = B \cup W$
    \item $C_{n + 3}^\prime = A \cup U$, \quad $C_{n + 4}^\prime = B \cup U$
\end{itemize}
\end{multicols}

Since each clique can contain at most one matching edge \inline{\brcu{a_i, b_i}}, any clique cover uses at least \inline{n} cliques for these edges, and since \inline{V} and \inline{W} are disjoint cliques, any clique cover requires at least \inline{n+2} cliques in total. Hence Family-1 clique covers are cardinality-optimal with \inline{\card{\C}=n+2} and \inline{\size{\C} = \bigTheta{n^2}}. On the other hand, Family-2 has \inline{\card{\C^\prime} = n+4}  and \inline{\size{\C^\prime} = \bigTheta{n}}. Since \inline{\size{\C^*} \geq 5n}, this proves the claim.    
\end{proof}

\subsection{On the Approximability of the Assignment-optimal Objective}
\label{subsect:approx_aopt}

The separation result in \lemref{lemma:opt_sep} suggests that the assignment-optimal and cardinality-optimal objectives may behave very differently with respect to approximability. 
\cite{lund1994hardness} showed that the problem of coloring a graph with the minimum number of colors is inapproximable within a factor of \inline{n^\epsilon} for some \inline{\epsilon > 0}. The same hardness transfers to the cardinality-optimal objective through an approximation-preserving gadget reduction due to \cite{kou1978covering}. In contrast, \cite{ullah2022computing} used a related gadget to establish the \textsf{NP}-hardness of the assignment-optimal objective. In the following, we restate this reduction in optimization language and show that it is not approximation-preserving.

\begin{lemma}
Given a polynomial-time algorithm for computing an assignment-optimal clique cover, one can compute a minimum coloring of a graph in polynomial time.
\end{lemma}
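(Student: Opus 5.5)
Given a graph $H=(V_H,E_H)$ whose chromatic number we want, I would build a gadget graph $G$ in polynomial time. The gadget should force any assignment-optimal clique cover of $G$ to encode a proper coloring of $H$ that uses the minimum number of colors. This follows the Kou--Stockmeyer--Wong reduction of \cite{kou1978covering}, in the variant of \cite{ullah2022computing} cited above. Concretely, take the complement $\overline{H}$, so that color classes of $H$ become cliques of $\overline{H}$. Add a set of pendant-like gadget vertices so that the edges of $G$ that are hard to cover are exactly those certifying membership of each vertex in a color class. For instance, add a new vertex $x_v$ for each $v\in V_H$ adjacent only to $v$, and a universal-type vertex set whose cost grows with the number of cliques used.

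The key steps are as follows. First, fix the gadget so that in any clique cover of $G$, the cliques meeting the \emph{core} vertex set (the copy of $V_H$ with $\overline{H}$ edges) can be replaced, without increasing $\size{\cdot}$, by cliques that partition $V_H$ into cliques of $\overline{H}$, i.e.\ independent sets of $H$. The replacement is an exchange argument: take a cover, and for each vertex $v$ keep only one clique containing $v$ together with its gadget vertex, reassigning the other edges. Second, compute the size of such a normalized cover as an affine function $a+b\cdot k$ of the number $k$ of classes. Here $a,b>0$ are polynomially computable constants, and the term $b\cdot k$ comes from a gadget clique (say a set $Z$ of $t$ vertices adjacent to all of $V_H$) that must be duplicated into each class clique. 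Third, conclude that minimizing $\size{\C}$ over normalized covers is the same as minimizing $k$, which is $\chi(H)$. The coloring is read off from the class cliques in polynomial time.

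The main obstacle is the normalization/exchange step. Assignment cost counts vertex--clique incidences, so a cover might profitably split a class or let a core vertex lie in several cliques, breaking the correspondence with colorings. I would first try choosing $t=|Z|$ large, for instance $t>n^2$, so that every extra clique containing $Z$ costs more than any possible savings among core incidences. Then an optimal cover uses as few $Z$-containing cliques as possible, and each core vertex sits in exactly one of them. Verifying this inequality carefully, including that edges of $\overline{H}$ outside the classes can be covered by cliques containing $Z$ for free, is where I expect the real work to lie. If it fails, I would instead attach $Z$ only through gadget vertices $x_v$ and redo the counting.
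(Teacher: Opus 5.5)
Your gadget is essentially sound: a clique $Z$ of $t>n^2$ vertices, joined to all of $V_H$ on top of $\overline{H}$, would work. The step you rely on to extract the coloring fails, however. You want to normalize an optimal cover so that its gadget cliques partition $V_H$, with each core vertex in exactly one of them and size affine in the number $k$ of classes. This is false, not merely unverified. The edges of $\overline{H}$ between classes are not covered for free, and their cost depends on the partition, not only on $k$. An optimal cover may therefore put a core vertex into several gadget cliques, because that is the cheapest way to cover those edges.

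For instance, take $\overline{H}$ to be two triangles $\brcu{v,a_1,a_2}$ and $\brcu{v,b_1,b_2}$ sharing $v$, so $\chi(H)=2$. Each $z\in Z$ lies in at least two cliques, since the traces on $V_H$ of the cliques through $z$ cover $V_H$ by cliques of $\overline{H}$. Each core vertex lies in at least one clique, so $\size{\C}\ge 2t+5$. Equality would put each core vertex in exactly one clique, which must then contain all of $Z$. These cliques would partition $V_H$ and still have to cover every edge of $\overline{H}$, which is impossible. So the optimum is $2t+6$, attained (in fact uniquely) by $Z\cup\brcu{v,a_1,a_2}$ and $Z\cup\brcu{v,b_1,b_2}$, with $v$ in both. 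Your exchange step (keep one clique per core vertex, reassign the rest) strictly increases the cost here. Without it you have no rule for which cliques to read the coloring from, since different $z$ may lie in different families.

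The missing idea is to skip normalization entirely and use counting plus pigeonhole on a single gadget vertex, as the paper does. Let $S_1,\dots,S_{\chi(H)}$ be the classes of a minimum coloring of $H$, and let $m$ be the number of edges of $\overline{H}$. The explicit cover consisting of the cliques $Z\cup S_j$ plus one $2$-vertex clique per edge of $\overline{H}$ gives $\size{\C}\le t\chi(H)+n+2m$ for an assignment-optimal $\C$. Every core vertex lies in some clique, so the numbers $\card{L_z}$ of cliques containing $z\in Z$ satisfy $\sum_{z\in Z}\card{L_z}\le \size{\C}-n\le t\chi(H)+2m$. Hence for $t>2m$ (your $t>n^2$ suffices) some $z$ has $\card{L_z}\le\chi(H)$. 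The traces on $V_H$ of the cliques through that $z$ cover $V_H$ by at most $\chi(H)$ independent sets of $H$. Assigning each vertex to one trace containing it gives a minimum coloring, with no structural claim about the optimal cover needed.

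The paper instead uses an independent set of $q=2m+1$ vertices, each joined to $V_H$. The clique families through distinct gadget vertices are then disjoint, so one can average their total sizes directly. With your clique $Z$ you average the incidences $\card{L_z}$ instead, but the argument is the same. The pendant vertices $x_v$ play no role.
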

\begin{proof}
Let \inline{H=\br{V_H, E_H}} be an input graph, and let \inline{G^*=\br{V_H,E^*}} be the complement graph \inline{\overline{H}}. Set \inline{n:=\card{V_H}}, \inline{m:=\card{E^*}}, and \inline{q:=2m+1}. Construct a graph \inline{G} from \inline{G^*} by adding a set \inline{X=\brcu{x_1,\dots, x_q}} of pairwise nonadjacent vertices, each adjacent to every vertex of \inline{V_H}. Let \inline{\chi\br{H}} denote the chromatic number of \inline{H}. 

We first derive an upper bound on the minimum assignment cost of a clique cover of \inline{G}. Let \inline{S_1,\dots, S_{\chi\br{H}}} be the color classes of a minimum coloring of \inline{H}. Since \inline{G^*=\overline{H}}, each \inline{S_j} is a clique of \inline{G^*}, and \inline{\sum_{j=1}^{\chi\br{H}} \card{S_j} = n}. For each \inline{i \in [q]} and each \inline{j \in \brsq{\chi\br{H}}}, the set \inline{S_j \cup \brcu{x_i}} is a clique of \inline{G}. These \inline{q\cdot \chi\br{H}} cliques cover all edges incident to vertices of \inline{X}. To cover the remaining edges of \inline{G^*}, add one 2-vertex clique for each edge of \inline{G^*}. This yields a clique cover of \inline{G} of total size \inline{q\sum_{j=1}^{\chi\br{H}} \br{\card{S_j}+1} + 2m =q\br{n+\chi\br{H}} + 2m}. Therefore, if \inline{\C} is an assignment-optimal clique cover of \inline{G}, then \inline{\size{\C} \leq q\br{n+\chi\br{H}} + 2m}.

For each \inline{i \in [q]}, let \inline{\C^{\br{i}}:=\brcu{C_\ell \in \C \mid x_i \in C_\ell}} and \inline{s_i:=\sum_{C_\ell \in \C^{\br{i}}} \card{C_\ell}}. Since the vertices in \inline{X} are pairwise nonadjacent, no clique in \inline{G} contains two distinct vertices of \inline{X}. Hence the families \inline{\C^{\br{1}}, \dots, \C^{\br{q}}} are pairwise disjoint, and so \inline{\sum_{i=1}^q s_i \leq \size{\C}}. By averaging, there exists some \inline{x_r \in X} such that 
\[s_r \leq \frac{\size{\C}}{q} \leq n + \chi\br{H} + \frac{2m}{q}< n+\chi\br{H} + 1.\]
Since \inline{s_r} is an integer, it follows that \inline{s_r \leq n+ \chi\br{H}}. Now define \inline{\K:=\brcu{C_\ell \setminus \brcu{x_r} \mid C_\ell \in \C^{\br{r}}}}. Each member of \inline{\K} is a clique of \inline{G^*}, and \inline{\K} covers all vertices of \inline{G^*}: for any \inline{v \in V_H}, the edge \inline{\brcu{x_r,v}} belongs to \inline{G}, so some clique \inline{C_\ell} must contain both \inline{x_r} and \inline{v}, and thus \inline{v} belongs to some clique of \inline{\K}. Since \inline{\K} covers all vertices of \inline{G^*}, we have \inline{\sum_{K_\ell \in \K} \card{K_\ell} \geq n}. On the other hand, \inline{s_r = \sum_{C_\ell \in \C^{\br{r}}} \card{C_\ell} = \sum_{K_\ell \in \K} \br{\card{K_\ell}+1} \geq n+ \card{\K}}. Combining this with \inline{s_r \leq n+ \chi\br{H}}, we obtain \inline{\card{\K} \leq \chi\br{H}}.

Assign each vertex of \inline{V_H} to one clique of \inline{\K} containing it. Since every subset of a clique is also a clique, this yields a partition of \inline{V_H} into at most \inline{\card{\K}} cliques of \inline{G^*}, and hence a proper coloring of \inline{H} with at most \inline{\card{\K}} colors. Since \inline{\card{\K} \leq \chi\br{H}}, the resulting coloring is a minimum coloring of \inline{H}. All steps other than the call to the assignment-optimal clique cover routine are polynomial time, hence the claim follows.
\end{proof}

From this reduction, a \inline{\rho}-approximation for the assignment-optimal objective yields a coloring of a graph \inline{H} using at most \inline{\br{\rho-1}n +  \rho \chi\br{H} + \bigO{\rho}} colors. The additive term \inline{\br{\rho-1}n} prevents the reduction from being approximation-preserving. Hence the coloring hardness of \cite{lund1994hardness} does not transfer to the assignment-optimal objective through this reduction.

\subsection{Deferred Proofs}
\label{subsec:deferred_pfs}

\LemmaCmExp*
\begin{proof}
Let \inline{k=2^t} for some integer \inline{t \geq 1} and set \inline{n:=2k}. Consider the graph \inline{G_k} on vertex set \inline{\brcu{u_1,\dots,u_k, v_1,\dots, v_k}}, whose only non-edges are the pairs \inline{\brcu{u_i,v_i}, i \in [k]}. Equivalently, \inline{G_k} is \inline{K_{2k}} minus a perfect matching.

\paragraph{A cover with $2^{\bigTheta{n}}$ cliques.} 
For each bit string \inline{b \in \brcu{0,1}^k}, define \inline{C_b := \brcu{u_i \mid b_i = 0} \cup \brcu{v_i \mid b_i =1}}.

Each \inline{C_b} is a clique, since for each \(i\) it contains at most one of \inline{\brcu{u_i,v_i}} (the only non-edges).

Let \inline{\C^{\exp} := \brcu{\,C_b \mid b \in \brcu{0,1}^k\,}}. We have \inline{\card{\C^{\exp}} = 2^k = 2^{\bigTheta{n}}}.

An edge \inline{\brcu{u_i, u_j}} (resp. \inline{\brcu{v_i, v_j}}) with \inline{i \ne j} is covered by a clique \(C_b\) with \inline{b_i=b_j=0} (resp. \inline{b_i=b_j=1}). An edge \inline{\brcu{u_i, v_j}} with \inline{i \ne j} is covered by a clique \(C_b\) with \inline{b_i=0} and \inline{b_j=1}. Hence \inline{\C^{\exp}} is a clique cover of \inline{G_k}. 

To see that \inline{\C^{\exp}} is composition-minimal, choose two bit strings \inline{b \ne b^\prime} and an index \(i\) with \inline{b_i \ne b_i^\prime}. WLOG, assume \inline{b_i=0} and \inline{b_i^\prime =1}, then \inline{u_i \in C_b} and \inline{v_i \in C_{b^\prime}}. The pair \inline{\brcu{u_i, v_i}} is a non-edge between \inline{C_b} and \inline{C_{b^\prime}}. Hence no pair of distinct cliques in \inline{\C^{\exp}} has union inducing a clique in \inline{G_k}, so \inline{\C^{\exp}} is composition-minimal.

\paragraph{A cover with \inline{\bigTheta{\log n}} cliques.} Encode each index \inline{i \in [k]} as a \inline{t}-bit string \inline{\br{i_1,\dots, i_t} \in \brcu{0,1}^t}. Since \inline{k=2^t}, every \inline{t}-bit pattern appears exactly once. 

Define cliques \inline{U=\brcu{u_1, \dots , u_k}, \quad V=\brcu{v_1, \dots, v_k}}, and for each bit position \inline{p \in [t]},
\[A_p := \brcu{u_i \mid i_p = 0} \cup \brcu{v_i \mid i_p=1},  \quad B_p := \brcu{u_i \mid i_p = 1} \cup \brcu{v_i \mid i_p=0}.\]

Each of \inline{U,V, A_p, B_p} is a clique, since each of these contains at most one of \inline{\brcu{u_i, v_i}}.

Let \inline{\C^{\log}:= \brcu{U,V} \cup \brcu{A_p, B_p \mid p \in [t]} }. We have \inline{\card{\C^{\log}} = 2+2t = \bigTheta{\log k} = \bigTheta{\log n}}.

All edges \inline{\brcu{u_i, u_j}} (resp. \inline{\brcu{v_i, v_j}}) with \inline{i \ne j} lie in \(U\) (resp. \(V\)). For an edge \inline{\brcu{u_i, v_j}} with \inline{i \ne j}, there exists a bit position \(p\) with \inline{i_p \ne j_p} such that if \inline{\br{i_p,j_p}=\br{0,1}} then \inline{u_i, v_j \in A_p}, and if \inline{\br{i_p,j_p}=\br{1,0}} then \inline{u_i, v_j \in B_p}. Hence \inline{\C^{\log}} is a clique cover of \inline{G_k}. 

To see that \inline{\C^{\log}} is composition-minimal, define a choice vector \inline{I^C \in \brcu{0,1}^k} for each \inline{C \in \C^{\log}}, where \inline{I_i^C = 0} if \inline{u_i \in C} and \inline{I_i^C = 1} if \inline{v_i \in C}. These choice vectors \inline{\brcu{I^C}} are well-defined, since every clique \inline{C \in \C^{\log}} contains exactly one of \inline{\brcu{u_i,v_i}}. 

We have \inline{I^U = 0^k}, \inline{I^V=1^k}, \inline{I_i^{A_p}=i_p}, and \inline{I_i^{B_p}=1-i_p}. Since \inline{k=2^t}, each index \inline{i \in [k]} corresponds to a distinct \inline{t}-bit string. Hence the choice vectors \inline{\brcu{I^C}_{ C \in \C^{\log}} } are pairwise distinct vectors in \inline{\brcu{0,1}^k}. Therefore, for any two distinct cliques \inline{C, C^\prime \in \C^{\log}}, we have \inline{I^C \ne I^{C^\prime}}, so there exists some index \inline{i} with \inline{I_i^C \ne I_i^{C^\prime}}. For such an index \(i\), one of \inline{C, C^\prime} contains \inline{u_i} and the other contains \inline{v_i}; thus \inline{\brcu{u_i,v_i}} is a non-edge crossing  \inline{C, C^\prime}. Hence no pair of distinct cliques in \inline{\C^{\log}} has union inducing a clique in \inline{G_k}, so \inline{\C^{\log}} is composition-minimal.
\end{proof}

\begin{lemma}
\label{lemma_sscc_p1}
The family \inline{\C} obtained by \hyref{Definition}{def:sscc_con} is a clique cover of \inline{G}.
\end{lemma}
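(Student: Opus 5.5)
We need to show two things: every member of $\C$ is a clique of $G$, and every edge of $G$ lies in some member of $\C$. We verify each clique invariant by induction over the steps of \hyref{Definition}{def:sscc_con}, and then argue coverage edge by edge, using the structure of the first-fit coloring of $\overline{G}$.

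\textbf{Cliques.} Each initial set $F_\ell$ is a color class of a proper coloring of $\overline{G}$, so it is independent in $\overline{G}$ and hence a clique in $G$. Each set $C_{v,\ell}$ is initialized as $U \cup \brcu{v}$ with $U \subseteq F_\ell \cap N(v)$. Since $U \subseteq F_\ell$ it is a clique, and every vertex of $U$ is adjacent to $v$, so $U\cup\brcu{v}$ is a clique. The extension rule only adds the endpoints of an edge $\brcu{x,y}$ with $C \subseteq N[x]\cap N[y]$. Then $x,y$ are adjacent to each other and to all of $C$, so $C\cup\brcu{x,y}$ remains a clique. By induction on the number of extensions, every member of $\C$ is a clique throughout.

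\textbf{Coverage.} Fix an edge $\brcu{u,v}$, with $u\in F_j$ and $v \in F_k$. We split into two cases.
\begin{enumerate}
\item If $j=k$, then $\card{F_k}\ge 2$, so the clique initialized with $F_k$ lies in $\C$ after Step~1. Vertices are never removed, so this clique still contains $\brcu{u,v}$ at the end.
\item If $j\neq k$, assume without loss of generality that $j<k$. Since $u\in F_j\cap N(v)$, we have $j\in T_v$, so the pair $(v,j)$ is processed in Step~2. If $\brcu{u,v}$ is already covered when $(v,j)$ is processed, it stays covered, because cliques only grow and are never deleted. Otherwise $u\in U$, so $U\neq\emptyset$ and the clique $C_{v,j}\supseteq U\cup\brcu{v}$ is created and added to $\C$; it covers $\brcu{u,v}$.
\end{enumerate}
In both cases $\brcu{u,v}$ is contained in some clique of the final $\C$.

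The only point that needs care is that ``uncovered'' is a monotone property: once an edge is covered it remains covered. This holds because no step of the construction deletes a clique or removes a vertex from one. With that observation, the two cases above exhaust all edges, so $\C$ is a clique cover of $G$.
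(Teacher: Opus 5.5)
Your proof is correct and follows essentially the same route as the paper: the initial color classes and the sets $U \cup \{v\}$ are cliques, cliqueness is preserved under the extension rule via $C \subseteq N[x] \cap N[y]$, and coverage uses the same case split on whether the endpoints share a color class, with the $j<k$ case handled through the pair $(v,j)$ in Step~2. Your explicit observation that coveredness is monotone (no clique is ever deleted or shrunk) makes precise what the paper uses implicitly when it writes ``if the edge is already covered at that time, we are done.''
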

\begin{proof}
Each color class \inline{F_\ell \in \F} is an independent set in \inline{\overline{G}}, and hence a clique in \inline{G}. In Step~1 we initialize \inline{\C} with all \inline{F_\ell} of size at least 2, so every initial set in \inline{\C} is a clique. In Step~1, whenever a clique \inline{C_\ell} is extended using an edge \inline{\brcu{u,v}}, we have \inline{C_\ell \subseteq N[u] \cap N[v]}, so \inline{C_{\ell} \cup \brcu{u,v}} induces a clique in \inline{G}. In Step~2, each set \inline{C_{v,\ell}} is initialized with a clique \inline{U \cup \brcu{v}}, where \inline{U \subseteq F_\ell \cap N(v)}. In Step~2, whenever \inline{C_{v,\ell}} is extended using an edge \inline{\brcu{x,y}}, we have \inline{C_{v,\ell} \subseteq N[x] \cap N[y]}, so \inline{C_{v,\ell} \cup \brcu{x,y}} is again a clique in \inline{G}. Hence, none of the sets in \inline{\C} contains a non-edge of \inline{G}.

We now show that \inline{\C} covers all edges of \inline{G}. Let \inline{\brcu{w,z}} be an edge of \inline{G}, and let \inline{w \in F_i} and \inline{z \in F_j}. WLOG, assume \inline{i \leq j}.
\begin{itemize}
    \item If \inline{i=j}, then \inline{F_i} contains both endpoints \inline{w,z}. Since \inline{\card{F_i} \geq 2}, it is added to \inline{\C} in Step~1, and the corresponding clique \inline{C_i} satisfies \inline{F_i \subseteq C_i}. Thus \inline{\brcu{w,z} \subseteq C_i}.
    \item If \inline{i<j}, then \inline{w \in F_i \cap N(z)}, so \inline{i \in T_z}. When the pair \inline{\br{z,i}} is processed in Step~2, we form \inline{U:=\brcu{u \in F_i \cap N(z) \mid \brcu{u,z} \text{ is uncovered}}}. If \inline{\brcu{w,z}} is already covered at that time, we are done. Otherwise, \inline{w \in U}, so \inline{\brcu{w,z} \subseteq C_{z,i}= U \cup \brcu{z}}.
\end{itemize}
In both cases, at least one clique in \inline{\C} contains both endpoints of the edge \inline{\brcu{w,z}}. Thus \inline{\C} is a clique cover of \inline{G}. 
\end{proof}

\LemmaLasSc*
\begin{proof}
We need to show that \inline{\C} is composition-minimal and \inline{\size{\C} = \bigO{m}}.

\paragraph{Composition-minimality.}
We show that for every pair of distinct cliques $C_i, C_j \in \C$, there exist vertices $u \in C_i$ and $v \in C_j$, with $u\ne v$, such that $\brcu{u,v} \not\in E$. Suppose, for contradiction, $\C$ contains a pair of distinct cliques $C_i, C_j$ such that for all pairs of vertices $u \in C_i$, $v \in C_j$, with $u\ne v$, we have $\brcu{u,v} \in E$. Without loss of generality, assume that in the \las\  construction $C_i$ was created before $C_j$. Let $\brcu{u,v}$ be the edge on which \inline{C_j} is initialized as \inline{\brcu{u,v}} by Step~2 of the construction. Since $C_i \cup \brcu{u,v}$ induces a clique in $G$, step (1) of the construction was applicable at that point. The fact that it was not applied contradicts the assumption that $\C$ is a \las\  cover.

\paragraph{Bound on \inline{\size{\C}}.} Fix a vertex $v \in V$. For each clique \inline{C_\ell \in \C} with \inline{v \in C_\ell}, look at the time step when \(v\) first becomes a member of \inline{C_\ell}. At that time step, an edge \inline{\brcu{u,v}} is processed by the construction: either it creates \inline{C_\ell = \brcu{u,v}} (Step~2), or extends an existing clique \inline{C_\ell} by adding \(v\) (Step~1). In both cases, we can assign the edge \inline{\brcu{u,v}} as the witness for the pair \inline{\br{v,C_\ell}}.  Since each edge either extends or creates at most one clique, this gives an injective map of the cliques containing \(v\) to the edges incident on \(v\); that is, \inline{\card{\brcu{C_\ell \in \C \mid v \in C_\ell}} \leq \card{N(v)}}. Summing over all vertices, we have
\[\size{\C} = \sum_{C_\ell \in \C} \card{C_\ell} = \sum_{v \in V} \card{\brcu{C_\ell \in \C \mid v \in C_\ell}} \leq \sum_{v \in V} \card{N(v)} = \bigO{m}.\]
\end{proof}

\LemmaScSize*
\begin{proof}
Consider the graph family used in \lemref{lemma:admissible_space_lb} and make the vertex set \inline{U} a clique instead of an independent set (keeping all other adjacencies the same). Then the following family of cliques \inline{\C} is a composition-minimal cover of this graph:
\begin{itemize}
    \item $C_{i+(j-1)n} = \brcu{a_i, b_j} \cup U$ for $i \in \brsq{n}$, $j \in [n] $
    \item $C_{n^2+i+(j-1)n} = \brcu{c_i, d_j} \cup U$ for $i \in \brsq{n}$, $j \in [n] $
    \item $C_{2n^2+i} = \brcu{a_i, c_i}$ for $i \in [n]$
    \item $C_{2n^2+n+i} = \brcu{b_i, d_i}$ for $i \in [n]$
\end{itemize}
We have \inline{\card{\C} = \bigTheta{n^2}=\bigTheta{m}} and \inline{\size{\C}= \bigTheta{nm}}. It is straightforward to verify that \inline{\C} is composition-minimal.
\end{proof}

\LemmaEdgeSum*
\begin{proof}
Let $\pi$ be a degeneracy ordering of $V$, and for each vertex \inline{v \in V}, define 
\[N_{\pi}^{<}(v) :=\{u \in N(v) \mid \pi(u) < \pi(v)\}.\]

By definition of degeneracy, $\card{N_{\pi}^{<}(v)} \leq d$ for all $v \in V$. Rewriting the sum over edges and using $\min\brcu{f(u), f(v)} \leq f(v)$, we obtain 
\[\sum_{\brcu{u,v} \in E} \min \brcu{ f(u), f(v)} = \sum_{v \in V } \sum_{u \in N_{\pi}^{<}(v)} \min \brcu{ f(u), f(v)}  \leq \sum_{v \in V }\sum_{u \in N_{\pi}^{<}(v)} f(v) = \sum_{v \in V } \card{N_{\pi}^{<}(v)} \cdot f(v).\]
Since $f$ is non-negative, for all \inline{v \in V}, we have \inline{\card{N_\pi^<(v)} \cdot f(v) \leq d \cdot f(v)}. Thus
\[\sum_{\brcu{u,v} \in E} \min \brcu{ f(u), f(v)} \leq \sum_{v \in V } \card{N_{\pi}^{<}(v)} \cdot f(v) \leq d \sum_{v \in V} f(v).\]
\end{proof}

\begin{lemma}
\label{lemma:admissible_space_lb}
There exists a family of graphs \inline{G=(V,E)} with $n$ vertices and $m$ edges that admits a \las\  clique cover \inline{\C} for which the admissible sets satisfy \inline{\size{\A} = \bigTheta{nm}}.
\end{lemma}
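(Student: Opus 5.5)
I would prove this with an explicit construction. The idea is to make many Step~2 labels whose admissibility duals all contain a common independent set of size $\Theta(n)$. I would reuse the gadget hinted at in the proof of \lemref{lemma:sc_size}. Take four vertex sets $A=\brcu{a_1,\dots,a_n}$, $B=\brcu{b_1,\dots,b_n}$, $C=\brcu{c_1,\dots,c_n}$, $D=\brcu{d_1,\dots,d_n}$, each an independent set. Add all edges between $A$ and $B$ and all edges between $C$ and $D$, and the matchings $\brcu{a_i,c_i}$ and $\brcu{b_i,d_i}$. Finally add an independent set $U=\brcu{u_1,\dots,u_n}$ with every $u\in U$ adjacent to every vertex of $A\cup B\cup C\cup D$. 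The graph has $5n$ vertices and $m=2n^2+2n+4n^2=\Theta(n^2)$ edges, so $nm=\Theta(n^3)$ is the target.

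Next I would fix the processing order of the \las\ construction (\hyref{Definition}{def:las_con}): first all edges $\brcu{a_i,b_j}$ and $\brcu{c_i,d_j}$, then the matching edges, then the edges incident to $U$. During the first phase every edge creates a new label via Step~(2). The reason is that any earlier clique is some $\brcu{a_{i'},b_{j'}}$ or $\brcu{c_{i'},d_{j'}}$. Adding $\brcu{a_i,b_j}$ to it would create a non-edge: inside $A$ or inside $B$ when $(i,j)\ne(i',j')$, and between the two sides (e.g.\ $\brcu{a_i,c_{i'}}$ with $i\ne i'$) in the $C$--$D$ case. This yields $2n^2$ labels. For the label $\ell$ created from $\brcu{a_i,b_j}$ we have $D_\ell=N[a_i]\cap N[b_j]\supseteq U$, and symmetrically for $C$--$D$ labels. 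Hence after the first phase $\size{\D}=\sum_\ell\card{D_\ell}\ge 2n^2\cdot n=\Theta(n^3)$. Since $\size{\A}=\size{\D}$ (the bipartite double counting in the proof of \lemref{lemma:admissible_space}), this gives $\size{\A}=\bigOmega{nm}$. The matching upper bound $\size{\A}=\bigO{\min\brcu{\card{\C}n,dm}}\subseteq\bigO{nm}$ follows from that lemma, giving $\bigTheta{nm}$.

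The main obstacle is what "the admissible sets of $\C$" means. Admissibility only shrinks in Step~(1). In the last phase each $u\in U$ may be absorbed into some $\brcu{a_i,b_j}$ clique, which kills the admissibility of the other $U$-vertices for that label, since $U$ is independent. With $n^2$ edges $u$--$a_i$, possibly every label gets extended. So I would read the statement, consistent with the role of \lemref{lemma:admissible_space} as a bound on the space maintained by \textsc{Global-Admissibility}, as a lower bound on $\size{\A}$ at some point of the construction, namely at the end of the first phase. If the intended reading is the final cover, I would first try to make $U$ too small to saturate: shrink the number of processed $U$-edges that extend $A$--$B$ labels, e.g.\ by covering them earlier through cliques built on the matching edges. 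I would then check that $\Theta(n^2)$ labels keep $\card{D_\ell}=\Theta(n)$ at the end.
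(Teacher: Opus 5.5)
Your graph is the right gadget (it is the paper's), and your computation is correct as far as it goes: after the $A$--$B$ and $C$--$D$ edges are processed, each of the $2n^2$ labels has $D_\ell=\brcu{a_i,b_j}\cup U$ (resp.\ $\brcu{c_i,d_j}\cup U$). There is nonetheless a genuine gap. That intermediate family is not a clique cover of $G$. The lemma asks for a \las\ clique cover of $G$ (all edges processed) whose admissibility sets, as in Definition~\ref{def:admissible}, have total size $\Theta(nm)$. Reading it as ``at some point of the construction'' proves a different statement.

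With your edge order, the final cover is not controlled. Step~(1) is mandatory whenever it applies, so the $U$-edges processed last get absorbed into the existing 2-cliques $\brcu{a_i,b_j}$ and $\brcu{c_i,d_j}$. One absorption of $u_k$ into label $\ell$ removes all of $U\setminus\brcu{u_k}$ from $D_\ell$, because $U$ is independent; you noticed this yourself. Your fallback is only sketched and is never checked. Shrinking $U$ would also shrink the quantity $\card{D_\ell}=\card{U}+2$ that you are counting.

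The missing idea is to reverse your order, so that the $U$-edges are all covered before any bipartite 2-clique exists.
\begin{itemize}
\item First process every edge $\brcu{a_i,u_j}$. Each one creates a new 2-clique by Step~(2), since any two of them contain a non-edge inside $A$ or inside $U$.
\item Next process every $\brcu{c_i,u_j}$. Its only extendable clique is $\brcu{a_i,u_j}$, so this yields the $n^2$ triangles $\brcu{a_i,c_i,u_j}$, which also cover the matching edges.
\item Do the same on the $B$/$D$ side to get the triangles $\brcu{b_i,d_i,u_j}$. Step~(2) applies to each $\brcu{b_i,u_j}$ because there are no $B$--$C$ edges.
\item Only now process the $A$--$B$ and $C$--$D$ edges. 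None of them can extend a triangle, since there are no $B$--$C$ or $A$--$D$ edges. None can extend an earlier bipartite 2-clique either. So each becomes a new label and remains a 2-clique to the end.
\end{itemize}
In this final cover, which is exactly the paper's cover, every $u_k$ is admissible to all $2n^2$ bipartite labels, so $\size{\A}\ge 2n^3=\Theta(nm)$. The triangles have $D_\ell$ equal to themselves and contribute only $O(m)$. Your upper-bound argument then gives the matching $O(nm)$.
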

\begin{proof}
Let $G_1$ and $G_2$ be complete bipartite graphs $K_{n,n}$, where $G_1$ has vertex partition $A=\brcu{a_i}_{i \in [n]}$ and $B=\brcu{b_i}_{i \in [n]}$, and $G_2$ has vertex partition $C=\brcu{c_i}_{i \in [n]}$ and $D=\brcu{d_i}_{i \in [n]}$. 

Add a perfect matching between $A$ and $C$ and between $B$ and $D$, by inserting the edges $\brcu{a_i, c_i}$ and $\brcu{b_i, d_i}$ for all $i \in \brsq{n}$.

Next, introduce an independent set $U=\brcu{u_i}_{i \in \brsq{n}}$, and connect each $u_i$ to all vertices in $A \cup B \cup C \cup D$. The resulting graph \inline{G=(V,E)} has $\bigTheta{n}$ vertices and $m=\bigTheta{n^2}$ edges.

Now consider the following \las\  clique cover \inline{\C} of \inline{G}:
\begin{itemize}
    \item $C_{i+(j-1)n} = \brcu{a_i, c_i, u_j}$ for $i \in \brsq{n}$, $j \in [n] $
    \item $C_{n^2+i+(j-1)n} = \brcu{b_i, d_i, u_j}$ for $i \in \brsq{n}$, $j \in [n] $
    \item $C_{2n^2+i+(j-1)n} = \brcu{a_i, b_j}$ for $i \in \brsq{n}$, $j \in [n] $
    \item $C_{3n^2+i+(j-1)n} = \brcu{c_i, d_j}$ for $i \in \brsq{n}$, $j \in [n] $
\end{itemize}
Thus \inline{\card{\C} = \bigTheta{n^2} = \bigTheta{m}}. A suitable edge order suffices to realize \inline{\C} via the \las\  construction.

The first two groups of cliques have no admissible vertices outside their own members: the set $U$ is independent, and the graph does not contain any edge between $a_i$ and $d_k$, nor between $b_i$ and $c_k$.

In contrast, each clique in the last two groups, $\brcu{C_{2n^2+1}, \dots, C_{4n^2}}$, can admit any $u_k \in U$ individually.
Hence for each $u_k \in U$, all cliques in $\brcu{C_{2n^2+1}, \dots, C_{4n^2}}$ belong to its \emph{admissible set} $A_{u_k}$, so $\card{A_{u_k}} = \bigTheta{n^2} = \bigTheta{m}$.
Considering all $n$ vertices in $U$, we have \[\size{\A}  = \sum_{v\in V\setminus U} \card{A_v}+ \sum_{k=1}^n\card{A_{u_k}} = \bigTheta{m} + \bigTheta{nm} =\bigTheta{nm}.\]
\end{proof}

\begin{lemma}
\label{lemma:dcc_ls_succinct}
Algorithm~\aref{\textsc{Local-Admissibility}}{fig:algo_dcc_ls} returns a succinct DCC representation \inline{\dcc{G}:=\br{\C, \Lcal}} of \inline{G=(V,E)}.
\end{lemma}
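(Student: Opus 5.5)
We must show three things: the output $\C$ is a clique cover, it is composition-minimal, and $\size{\C}=\bigO{m}$. Once these hold, the incidence dual $\Lcal$ built in Step~5 inherits the properties, exactly as in the proof of \lemref{lemma:dcc_as}.

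\textbf{Clique cover.} Throughout, $M_v$ is exactly the set of $u$ with $\brcu{u,v}$ not yet covered. The initial cliques $Q_j\cup\brcu{v_i}$ are cliques, since $Q_j$ is a color class of $\overline{G}[M_{v_i}]$ and $Q_j\subseteq N(v_i)$. In \aref{\textsc{Augment-LS}}{fig:sub_dcc_ls}, an edge $\brcu{u,w}$ is added to $C_\ell$ only if $\ell\in A_u\cap A_w$. The pruning in Steps~3(c)--(d) maintains the invariant $D_\ell=\brcu{t\in S \mid C_\ell\subseteq N[t]}$, so $C_\ell\cup\brcu{u,w}$ is a clique. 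For coverage, I would argue by backward induction over $i=n,\dots,2$ that after iteration $i$ every edge $\brcu{v_i,u}$ with $u\in N_\pi^<(v_i)$ is covered. If $M_{v_i}=\emptyset$ this holds trivially; otherwise the classes $Q_j$ partition $M_{v_i}$. Every edge $\brcu{x,y}$ has a later endpoint $v_j$ in $\pi$, so it is covered by the end of iteration $j$. Edges left uncovered by Augment-LS are therefore harmless.

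\textbf{Size bound.} I would reuse the charging argument from \lemref{lemma:sscc_p2} and \lemref{lemma:las_sc}. Fix $v$ and a clique $C_\ell\ni v$. When $v$ first enters $C_\ell$, it does so together with an edge $\brcu{u,v}$ that was uncovered just before and becomes covered by that inclusion. In the initial creation this is $\brcu{v_i,u}$ with $u\in M_{v_i}$, or $u=v_i$ when $v\in Q_j$. In Augment-LS it is the processed edge $\brcu{u,w}$ with $w\in M_u$. Each edge becomes covered for the first time exactly once, through a single clique, so the map $C_\ell\mapsto e_\ell(v)$ is injective into the edges at $v$. This gives $\size{\C}\le\sum_v\card{N(v)}=2m$.

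\textbf{Composition-minimality.} This is the main obstacle, because admissibility is only local: Augment-LS sees only the labels of the current iteration and the vertices of $S_i$. Suppose $C_a\cup C_b$ is a clique with $C_a$ created no later than $C_b$. I would split into two cases.

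\emph{Same iteration $i$.} Both cliques contain $v_i$ and originate from distinct color classes $Q_a,Q_b$ of the first-fit coloring of $\overline{G}[M_{v_i}]$. If $a<b$, first-fit gives a vertex $y\in Q_b$ non-adjacent to some $x\in Q_a$. Both vertices persist, so $\brcu{x,y}$ is a crossing non-edge, as in \lemref{lemma:sscc_p2}.

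\emph{Different iterations $i>j$.} $C_b$ contains $v_j$ and a vertex $u\in M_{v_j}$ at time $j$, so $\brcu{u,v_j}$ was uncovered after iteration $i$. Since $C_a\cup C_b$ is a clique, $v_j$ and $u$ both belong to $N(v_i)\cap V$ and are adjacent to all of $C_a$. As $\pi(v_j)<\pi(v_i)$ and $u\in N_\pi^<(v_j)$, both lie in $N_\pi^<(v_i)$. Moreover $v_j\in S_i$ because $M_{v_j}\ni u$ intersects $N_\pi^<(v_i)$, and similarly $u\in S_i$. Final $C_a$ contains its state at the moment $\brcu{u,v_j}$ was scanned in Step~3, so $\ell_a\in A_u\cap A_{v_j}$ then. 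Augment-LS would therefore have covered $\brcu{u,v_j}$ by some label, which is a contradiction.

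The delicate point is this monotonicity: $C_a$ only grows, so admissibility for the final $C_a$ implies admissibility at scan time. I also need that Step~3 scans every uncovered edge of $G[S_i]$ in a single pass. That pass must see the edge $\brcu{u,v_j}$ even though $C_a$ may still grow afterward, and monotonicity handles exactly this. I would state both facts explicitly as invariants of \aref{\textsc{Augment-LS}}{fig:sub_dcc_ls}.
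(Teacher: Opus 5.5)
Your proposal is correct and follows essentially the same route as the paper's proof: coverage via the later endpoint and the color classes of $\overline{G}[M_{v_i}]$; composition-minimality split into a same-iteration first-fit non-edge witness and a cross-iteration contradiction, where the uncovered edge would have been reused inside $G[S_i]$ by \textsc{Augment-LS}; and $\size{\C}=\bigO{m}$ by the injective first-covering charge. You also make explicit two points the paper leaves implicit: you choose $u \in C_b \cap M_{v_j}$, whereas the paper's wording allows an arbitrary vertex of $M_{v_j}$, which need not be adjacent to $v_i$; and your monotonicity remark guarantees admissibility at the moment the edge is scanned.
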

\begin{proof}
Since \inline{\Lcal} is the incidence dual of \inline{\C}, it suffices to show that \inline{\C} is a succinct clique cover of \inline{G}.

\paragraph{Clique Cover.}
Fix an edge \inline{\brcu{x,y} \in E} and assume \inline{\pi(x) < \pi(y)}, so \inline{y=v_i} for some \inline{i}. If \inline{\brcu{x,y}} is already covered when iteration \inline{i} begins, it remains covered. Otherwise, \inline{x \in M_{v_i}}, so \inline{x} lies in exactly one color class \inline{Q_j} of the coloring of \inline{\overline{G}[M_{v_i}]}. The clique \inline{C_{p+j} = Q_j \cup\brcu{v_i}} therefore contains both endpoints and covers \inline{\brcu{x,y}} in iteration \inline{i}. Hence every edge is covered by the returned family \inline{\C}.

\paragraph{Composition-minimality.}
We now show that no two distinct cliques in \inline{\C} have a union that induces a clique in \inline{G}. First consider two distinct cliques created in the same iteration \inline{i}. They originate from two distinct color classes \inline{Q_a} and \inline{Q_b} of \inline{\overline{G}[M_{v_i}]}. Since these are distinct color classes obtained from first-fit greedy coloring, there exist \inline{x \in Q_a} and \inline{y \in Q_b} such that \inline{\brcu{x,y}} is an edge of \inline{\overline{G}}, equivalently, a non-edge of \inline{G}. As vertices are only added to cliques and never removed, \inline{x} and \inline{y} remain in the two cliques throughout the iteration and afterward. Thus this non-edge witnesses that the union of the two cliques cannot induce a clique in \inline{G}.

Now consider two cliques \inline{C} and \inline{C^\prime} created in different iterations, and suppose for contradiction that \inline{C \cup C^\prime} induces a clique in \inline{G}. Let \inline{C} and \inline{C^\prime} be created in iteration \inline{i} and \inline{j}, respectively, with \inline{i>j}. Every clique created in iteration \inline{t} contains \inline{v_t}, so \inline{v_i \in C} and \inline{v_j \in C^\prime}. Since \inline{C\cup C^\prime} is a clique, \inline{\brcu{v_i, v_j} \in E}, hence \inline{v_j \in N_\pi^<(v_i)}.

Because \inline{C^\prime} is created in iteration \inline{j}, we have \inline{M_{v_j} \ne \emptyset} at the start of iteration \inline{j}. Fix any vertex \inline{z \in M_{v_j}}. Then the edge \inline{\brcu{v_j, z}} is uncovered at iteration \inline{j}, so it is also uncovered at the earlier iteration \inline{i}. Moreover, since \inline{z \in M_{v_j} \subseteq N_\pi^<(v_j)}, we have \inline{\pi(z) < \pi(v_j) < \pi(v_i)}, and since \inline{C\cup C^\prime} is a clique, \inline{\brcu{v_i,z} \in E}. Thus \inline{z \in N_\pi^<(v_i)}.

At iteration \inline{i}, the definition of \inline{S_i} includes both \inline{v_j} and \inline{z}: because \inline{\brcu{v_j, z}} is uncovered, we have \inline{v_j \in N_\pi^<(v_i)}, and \inline{M_{v_j} \cap \br{ N_\pi^<(v_i) \cup \brcu{v_i}} } contains \inline{z}, so \inline{v_j \in S_i}. Similarly, \inline{z \in N_\pi^<(v_i)}, and \inline{M_z \cap \br{ N_\pi^<(v_i) \cup \brcu{v_i} }} contains \inline{v_j}, so \inline{z \in S_i}. Hence \inline{\brcu{v_j,z}} is an uncovered edge inside \inline{G[S_i]}.

Since \inline{C\cup C^\prime} is a clique, every vertex of \inline{C} is adjacent to both \inline{v_j} and \inline{z}. Therefore, within \inline{G[S_i]}, both \inline{v_j} and \inline{z} are admissible to the label of \inline{C}, so when \aref{\textsc{Augment-LS}}{fig:sub_dcc_ls} builds local admissibility structure there exists a label \inline{\ell} of \inline{C} with \inline{A_{v_j} \cap A_z \ne \emptyset}. In Step~3 of \aref{\textsc{Augment-LS}}{fig:sub_dcc_ls}, the uncovered edge \inline{\brcu{v_j, z}} is encountered, and because \inline{A_{v_j} \cap A_z \ne \emptyset}, the reuse rule applies and \inline{\brcu{v_j, z}} must be covered in iteration \inline{i}. This contradicts \inline{z \in M_{v_j}} at the start of iteration \inline{j}. Thus no such pair \inline{C, C^\prime} exists, so \inline{\C} is composition-minimal.

\paragraph{Bound on \inline{\size{\C}}.}
For each vertex \inline{v} and each clique \inline{C_\ell} containing \inline{v}, choose an incident edge \inline{e_\ell\br{v} = \brcu{u,v}} that is uncovered just before \inline{v} becomes a member of \inline{C_\ell}, and becomes covered at that moment. Such an edge exists because vertices enter cliques in response to covering some uncovered edge, either at the creation of cliques \inline{Q_j \cup \brcu{v_j}} or during an extension step that inserts the endpoints of an uncovered edge. Each edge \inline{\brcu{u,v}} is covered for the first time by a unique step and a unique clique, so two distinct cliques containing \inline{v} cannot choose the same edge \inline{e_\ell\br{v}}. Thus the map \inline{e_\ell\br{v}} is injective, and \inline{\card{\brcu{C_\ell \in \C \mid v \in C_\ell }} \leq \card{N(v)}}. Summing over all vertices gives
\[\size{\C} = \sum_{C_\ell \in \C} \card{C_\ell} = \sum_{v \in V} \card{\brcu{C_\ell \in \C \mid v \in C_\ell }} \leq \sum_{v \in V} \card{N(v)} = \bigO{m}.\]
\end{proof}
\section{Deferred Details: Algorithmic Applications of DCC}
\label{sec:app2}

\subsection{Maximal Independent Set}
\label{subsec:mis}

\begin{definition}[Maximal Independent Set]
Let \inline{G=\br{V,E}} be a graph. A set \inline{S \subseteq V} is an \emph{independent set} if no edge has both endpoints in \inline{S}. It is a \emph{maximal independent set} if it is independent and no strict superset of \inline{S} is independent. Equivalently, \inline{S} is maximal if every vertex \inline{v \in V\setminus S} has at least one neighbor in \inline{S}.    
\end{definition}
 
A textbook algorithm for finding a maximal independent set (MIS)  is as follows. Start with an empty set \inline{S} and mark all vertices as eligible. Iterate over the vertices and if a vertex \inline{v} is marked eligible, add \inline{v} to \inline{S} and mark all of \inline{v}'s neighbors as ineligible. At the end, return \inline{S} as an MIS. 

\fref{fig:algo_mis} shows this algorithm implemented using a DCC representation. The algorithm maintains an indicator array: \inline{I_v \in \brcu{0,1}} marks whether a vertex \inline{v} is still eligible. It iterates over the vertices, and checks whether a vertex \inline{v} is still eligible (\inline{I_v=0}). If so, it adds \inline{v} to the solution set \inline{S}, and marks all vertices in every clique containing \inline{v} as ineligible.

\begin{figure}[h]
    \centering
    \begin{parbox}{3.0in}{    
        \begin{mdframed}[linewidth=0.5pt, roundcorner=7pt, backgroundcolor=gray!5,
                         frametitle={\underline{\textsc{MIS}$\br{\C, \Lcal}$}}]  
        \begin{enumerate}            
            \item Initialize \inline{S \gets \emptyset}, and \inline{V \gets \bigcup_{C_\ell \in \C} C_\ell}.
            \item Set \inline{I_v \gets 0}, for all \inline{v \in V}.
            \item For each \inline{v \in V} with \inline{I_v = 0} do
            \begin{enumerate}
                \item Set \inline{S \gets S \cup \brcu{v}}.
                \item For each \inline{k \in L_v} do
                \begin{enumerate}
                    \item Set \inline{I_u \gets 1}, for all \inline{u \in C_k}.
                \end{enumerate}
            \end{enumerate}
            \item Return \inline{S}.
        \end{enumerate}
        \end{mdframed}    
    
    }
    \end{parbox}    
    \caption{Maximal independent set (MIS) using a DCC representation.}
    \label{fig:algo_mis}
\end{figure}

\begin{lemma}
\label{lemma:mis}
For a DCC representation \inline{\dcc{G}=\br{\C, \Lcal}} of a graph \inline{G}, Algorithm~\aref{\textsc{MIS}}{fig:algo_mis} computes a maximal independent set of \inline{G} in \inline{\bigO{\size{\dcc{G}}}} time and space.
\end{lemma}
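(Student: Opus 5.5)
The proof splits into three parts: independence of $S$, maximality of $S$, and the resource bounds. The common tool is the same fact used for BFS and matching: since $\C$ covers every edge, $\brcu{u,v} \in E$ if and only if $L_u \cap L_v \ne \emptyset$. So for any vertex $v$, the union $\bigcup_{k \in L_v} C_k$ is exactly $N[v]$. This holds because every clique containing $v$ lies inside $N[v]$, and every neighbor of $v$ shares some clique with $v$. Step~3(b) of \aref{\textsc{MIS}}{fig:algo_mis} therefore marks exactly $N[v]$ as ineligible when $v$ is added to $S$.

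\textbf{Independence.} Suppose $u,v \in S$ with $\brcu{u,v} \in E$, and say $v$ was added first. When $v$ was added, every vertex of $N[v]$ received $I=1$, including $u$. Indicators never revert to $0$, so when the loop later reaches $u$ it sees $I_u=1$ and skips it. This contradicts $u \in S$.

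\textbf{Maximality.} Let $w \notin S$. Because the loop visits every vertex of $V$, at its turn $w$ had $I_w=1$; otherwise it would have been added to $S$. The only step that sets $I_w \gets 1$ is Step~3(b)(i) executed for some $v \in S$ and some $k \in L_v$ with $w \in C_k$. Then $v$ and $w$ lie in a common clique. Since $w \ne v$ (we have $w \notin S$), $w$ is a neighbor of $v$, so $w$ has a neighbor in $S$.

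\textbf{Complexity.} This is the step needing care, because a vertex may be marked ineligible many times. The accounting goes by vertices added to $S$. For each $v \in S$, the algorithm scans $L_v$ and each $C_k$ with $k \in L_v$. Each clique $C_k$ can be scanned for at most one vertex of $S$, since $S$ is independent and $C_k$ is a clique, so $\card{C_k \cap S} \le 1$. The total scanning cost is therefore at most $\sum_{v} \card{L_v} + \sum_k \card{C_k} = \size{\dcc{G}}$. Initialization and the outer loop cost $\bigO{\card{V}}$, and $\card{V} \le \size{\C}$ because there are no isolated vertices. The space used is the arrays $I$ and $S$, of size $\bigO{\card{V}} \subseteq \bigO{\size{\dcc{G}}}$, plus the input itself. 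This gives the claimed $\bigO{\size{\dcc{G}}}$ time and space.
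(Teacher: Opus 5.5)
Your proof is correct and follows the paper's argument essentially step for step. Independence holds because selecting $v$ marks every clique in $L_v$, and hence all of $N[v]$, ineligible. Maximality holds because $I_x=1$ can only be set by scanning a clique that $x$ shares with some selected vertex. The time bound charges $\size{\Lcal}$ to the label iterations and $\size{\C}$ to the clique scans. The only cosmetic difference is in showing each $C_k$ is scanned at most once: you use $|C_k \cap S| \le 1$ from independence, whereas the paper notes that scanning $C_k$ marks all its vertices ineligible, so no later selected vertex can trigger a rescan. These two arguments are equivalent.
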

\begin{proof}
After including a vertex \inline{v} in \inline{S}, the algorithm sets \inline{I_u=1} for all neighbors \inline{u}, by iterating over all cliques \inline{C_k} with \inline{k \in L_v}. Hence, no neighbors of \inline{v} can be selected at any later step. It follows that no pair of selected vertices in \inline{S} are adjacent, so \inline{S} is an independent set. 

Fix a vertex \inline{x \not\in S}. When the outer loop reaches \inline{x}, we have \inline{I_x=1}, otherwise Step~3(a) would have included \inline{x} in \inline{S}. The variable \inline{I_x} can become \inline{1} only in Step~3(b)(i), so there exists a previously selected vertex \inline{v \in S} and \inline{k \in L_v} such that \inline{x \in C_k} and Step~3(b)(i) set \inline{I_x \gets 1} while processing \inline{C_k}. Since \inline{\brcu{v,x} \subseteq C_k} and \inline{v \in S}, \inline{x} has a neighbor in \inline{S}. Therefore, every \inline{x \not\in S} has a neighbor in \inline{S}, so \inline{S} is maximal.
 
Any clique \inline{C_k} will be scanned at Step~3(b)(i) at most once, because once \inline{C_k} is scanned, all vertices in \inline{C_k} are marked ineligible (\inline{I_v=1}).
Hence, in all iterations, the total time spent for Step~3(b)(i) is at most \inline{\bigO{\sum_{\ell} \card{C_\ell}}=\bigO{\size{\C}}} time. 
Step~3(b) iterates over \inline{L_v} once for each selected vertex \inline{v \in S}, so the total time spent for this is at most \inline{\bigO{\sum_{v \in V} \card{L_v}} = \bigO{\size{\Lcal}}}. The cost of the rest of the steps is dominated by these costs, so the algorithm runs in \inline{\bigO{\size{\C}+\size{\Lcal}} = \bigO{\size{\dcc{G}}}} time. The algorithm stores the arrays \inline{\brcu{I_v}} and the set \inline{S}, using \inline{\bigO{\card{V}}\subseteq\bigO{\size{\dcc{G}}}} space.
\end{proof}

\begin{definition}[Proper Vertex Coloring]
\label{def:coloring}    
For a graph \inline{G=(V,E)}, a \emph{proper vertex coloring} is a function \inline{\Color: V \rightarrow \brcu{1,2,\dots}} such that \inline{\col{u} \ne \col{v}} for every edge \inline{\brcu{u,v} \in E}.
\end{definition}

Since a proper vertex coloring of \inline{G} is a partition of \inline{V} into independent sets, we obtain the following from \lemref{lemma:mis}.

\begin{corollary}
\label{corollary:color_mis}
Let \inline{\dcc{G}=\br{\C, \Lcal}} be a DCC representation of a graph \inline{G} with \inline{n} vertices. Using \inline{\dcc{G}}, a proper vertex coloring of \inline{G} can be computed in \inline{\bigO{n \cdot \size{\dcc{G}}}} time using \inline{\bigO{\size{\dcc{G}}}} space.    
\end{corollary}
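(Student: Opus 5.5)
The proof peels off maximal independent sets one at a time, using the routine of \lemref{lemma:mis} restricted to the vertices that are still uncolored. I would maintain a global array $\col{\cdot}$, initially undefined, and a color counter $c=1$. In round $c$ I run a variant of Algorithm~\aref{\textsc{MIS}}{fig:algo_mis} on the induced subgraph $G[R]$, where $R$ is the set of uncolored vertices. The variant works as follows:
\begin{itemize}
\item Every colored vertex starts the round with $I_v=1$, and every uncolored vertex starts with $I_v=0$.
\item Scanning the vertices in a fixed order, each uncolored vertex $v$ with $I_v=0$ gets $\col{v}\gets c$.
\item For such a $v$, every vertex of every clique $C_k$ with $k\in L_v$ is then marked ineligible.
\end{itemize}
Afterwards I increment $c$ and repeat until $R=\emptyset$. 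The representation is never modified; only the $O(n)$ arrays $I$ and $\col{\cdot}$ are rewritten, so the space stays $O(\size{\dcc{G}})$.

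For correctness I reuse the argument of \lemref{lemma:mis}. The vertices colored $c$ form an independent set, because after selecting $v$ every neighbor of $v$ lies in some $C_k$ with $k\in L_v$ and is marked ineligible. This uses that $\C$ covers every edge $\{u,v\}$. Hence each color class is independent and the coloring is proper.

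For termination, note that in each round the first uncolored vertex met in the scan is eligible, so it receives color $c$. Thus every round colors at least one vertex, and there are at most $n$ rounds.

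Each round costs $O(\size{\C}+\size{\Lcal}+n)=O(\size{\dcc{G}})$, by the same accounting as in \lemref{lemma:mis}. Within a round, a clique is scanned at most once, since after a scan all of its members are ineligible. Each label list $L_v$ is traversed only when $v$ is selected. Resetting the array $I$ costs $O(n)\subseteq O(\size{\dcc{G}})$, using that $G$ has no isolated vertices and so $n\le\size{\C}$. Multiplying by at most $n$ rounds gives $O(n\cdot\size{\dcc{G}})$ time.

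The only delicate point is the per-round scan bound once colored vertices are pre-marked. A clique may contain both colored and uncolored vertices, and a naive implementation might rescan it. It does not: a clique is scanned only when some selected $v$ iterates over $L_v$, and it is then fully marked, so no later selection in the same round can reach it again. Colored vertices are never selected, so their label lists are never traversed.
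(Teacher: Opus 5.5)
Your proposal is correct and is essentially the paper's argument. The paper obtains this corollary directly from \lemref{lemma:mis} by viewing a proper coloring as a partition of $V$ into independent sets: peel off one independent set per round, for at most $n$ rounds, each costing $\bigO{\size{\dcc{G}}}$. Your pre-marking of already-colored vertices makes explicit how to run the MIS routine on the remaining induced subgraph without rebuilding the representation, and your per-round accounting (each clique scanned at most once per round, $n \le \size{\C}$) is sound.
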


\subsection{Greedy Coloring}

We now describe an algorithm to compute a proper vertex coloring using the first-fit (greedy) rule. Unlike the MIS-based coloring in \hyref{Corollary}{corollary:color_mis}, for graphs with bounded clique number (that is, \inline{\omega = \bigO{1}}), this algorithm runs in \inline{\bigO{\size{\dcc{G}}}} time.

Given a graph \inline{G=(V,E)}, fix an ordering \inline{\pi = \langle v_1, \dots, v_n \rangle} of \inline{V}. A textbook first-fit algorithm processes vertices in the order \inline{\pi}. When processing \inline{v_i}, it scans the already-colored neighbors of \inline{v_i}, marks their colors as forbidden, and then assigns to \inline{v_i} the smallest color that is not forbidden.

\fref{fig:algo_coloring} implements this algorithm using a DCC representation. The algorithm maintains a color array \inline{\col{v}} for each \inline{v \in V}, initialized to \inline{0} (uncolored). It also maintains a timestamp counter \inline{t} together with a color-indexed array \inline{\flag{\cdot}}, which is used to mark forbidden colors for the current vertex. The number of colors used so far is stored in a variable \inline{q}.

When processing a vertex \inline{v} at timestamp \inline{t}, the algorithm scans all cliques \inline{C_\ell} with \inline{\ell \in L_v}.
For every already-colored vertex \inline{u \in C_\ell}, it marks \inline{\col{u}} as forbidden by setting \inline{\flag{\col{u}} \gets t}. It then finds the smallest color \inline{p \in \brcu{1,\dots, q}} with \inline{\flag{p} \ne t}. If no such color exists, it introduces a new color by incrementing \inline{q}. It then assigns \inline{\col{v} \gets p} and increments the timestamp counter \inline{t}.

\begin{figure}[h]
    \centering
    \begin{parbox}{4.8in}{    
        \begin{mdframed}[linewidth=0.5pt, roundcorner=7pt, backgroundcolor=gray!5,
                         frametitle={\underline{\textsc{First-Fit-Coloring}$\br{\C, \Lcal, \pi}$}}]  
        \begin{enumerate}            
            \item Initialize \inline{V \gets \bigcup_{C_\ell \in \C} C_\ell}. \textcolor{teal}{/* \inline{\pi:=\langle v_1, \dots, v_{\card{V}} \rangle} is an ordering of \inline{V}.*/}
            \item Set \inline{\col{v} \gets 0}, for all \inline{v \in V}.
            \item Set \inline{t\gets 1}, and \inline{\flag{z} \gets 0}, for all \inline{z \in [1, \card{V}]}.
            \item Set \inline{q \gets 0}. \textcolor{teal}{/* number of colors used */}
            \item For each \inline{v \in V} in the order \inline{\pi} do
            \begin{enumerate}
                \item For each \inline{\ell \in L_v} do
                \begin{enumerate}
                    \item For each \inline{u \in C_\ell} with \inline{\col{u} > 0} do \inline{\flag{\col{u}} \gets t}.
                \end{enumerate}
                \item Set \inline{p \gets 1} \textcolor{teal}{/* smallest color */}
                \item While \inline{p \leq q} and \inline{\flag{p} = t} do \inline{p \gets p + 1}.
                \item If \inline{p > q}, then set \inline{q \gets q+1}.
                \item Set \inline{\col{v} \gets p} and \inline{t \gets t + 1}.
            \end{enumerate}            
            \item Return \inline{\brcu{\col{v}}_{v \in V}}.
        \end{enumerate}
        \end{mdframed}    
    }
    \end{parbox}    
    \caption{First-fit greedy coloring using a DCC representation.}
    \label{fig:algo_coloring}
\end{figure}

\begin{lemma}
\label{lemma:coloring_ff}
Let \inline{\dcc{G}=\br{\C, \Lcal}} be a DCC representation of a graph \inline{G} with clique number \inline{\omega}, and let \inline{\pi} be an ordering of \inline{V}. Algorithm~\aref{\textsc{First-Fit-Coloring}}{fig:algo_coloring} returns the first-fit coloring of \inline{G} with respect to \inline{\pi}, using \inline{\bigO{\omega \cdot \size{\dcc{G}}}} time and \inline{\bigO{\size{\dcc{G}}}} space.
\end{lemma}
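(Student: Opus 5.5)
The proof splits into correctness and complexity. For correctness I will argue by induction along $\pi$: when $v$ is processed, the colors marked with the current timestamp $t$ are exactly the colors of the already-colored neighbors of $v$. Since $\C$ covers every edge, every neighbor $u$ of $v$ lies in some $C_\ell$ with $\ell \in L_v$. Conversely, every $u \in C_\ell \setminus \brcu{v}$ with $\ell \in L_v$ is a neighbor of $v$, because $C_\ell$ is a clique. The vertex $v$ itself is still uncolored when scanned, so it marks nothing. Timestamps are strictly increasing and start above the initial value $0$ of $\flag{\cdot}$, so marks left by earlier vertices are never mistaken for current ones. Hence the forbidden set is precisely $\brcu{\col{u} \mid u \in N(v), \col{u}>0}$. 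Steps 5(b)--(d) then pick the smallest positive color outside this set: either some $p \le q$ not marked, or the fresh color $q+1$, which no vertex has used yet. This is exactly the first-fit choice, so the output equals the textbook first-fit coloring with respect to $\pi$, and in particular it is proper.

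For time, Step 5(a) over all vertices costs $\sum_{v}\sum_{\ell\in L_v}\card{C_\ell}$, and each $\card{C_\ell}\le\omega$ because $C_\ell$ is a clique. This gives $\bigO{\omega\cdot\size{\Lcal}}$. The main point is the while-loop in 5(c): I must show it is not $\bigO{q}$ per vertex, which could exceed the budget. If the loop stops at color $p$, then colors $1,\dots,p-1$ are all marked at timestamp $t$. Each marked color was marked by at least one scan step in 5(a), so the loop length is at most the number of scan steps for $v$ plus one, namely $\bigO{1+\sum_{\ell\in L_v}\card{C_\ell}}$. Summing over vertices, Step 5(c) is therefore dominated by Step 5(a) plus $\bigO{n}$. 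Initialization costs $\bigO{\card{V}+\size{\C}}$, and $\card{V}\le\size{\C}$ since there are no isolated vertices. The total time is therefore $\bigO{\omega\cdot\size{\dcc{G}}}$.

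For space, the algorithm stores $\dcc{G}$, the arrays $\col{\cdot}$ and $\flag{\cdot}$ of length $\card{V}$, and the counters $t$ and $q$. Colors never exceed $\card{V}$, so the $\flag{\cdot}$ array indexed by $[1,\card{V}]$ suffices. The total is $\bigO{\card{V}+\size{\dcc{G}}}=\bigO{\size{\dcc{G}}}$. The step requiring the most care is the charging argument for the while-loop; everything else is direct bookkeeping.
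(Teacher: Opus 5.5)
Your proposal is correct and follows essentially the same route as the paper. For time, you bound the scans by $\sum_{v}\sum_{\ell\in L_v}\card{C_\ell}\le\omega\cdot\size{\Lcal}$ and charge each while-loop iteration to a marking step at the same timestamp. Your correctness argument is somewhat more complete than the paper's: the paper proves properness edge by edge and calls the first-fit property ``straightforward,'' whereas you show that the marked colors are exactly the colors of already-colored neighbors, which gives the first-fit choice directly.
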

\begin{proof}
Consider any edge \inline{\brcu{x,y} \in E}. WLOG, assume \inline{x} appears earlier than \inline{y} in the order \inline{\pi}. When the algorithm processes \inline{y}, the vertex \inline{x} is already colored. Since \inline{\C} covers all edges, there exists a clique \inline{C_{\ell^\star} \in \C} with \inline{\brcu{x,y} \subseteq C_{\ell^\star}}; equivalently, \inline{\ell^\star \in L_y} and \inline{x \in C_{\ell^\star}}. Therefore, while scanning the cliques of \inline{y}, the algorithm scans \inline{C_{\ell^\star}}, encounters \inline{x} with \inline{\col{x} > 0}, and marks \inline{\flag{\col{x}} \gets t}. The algorithm then assigns to \inline{y} the smallest color \inline{p} with \inline{\flag{p} \ne t}, so in particular \inline{p \ne \col{x}}. Since this holds for every edge, the algorithm returns a proper coloring. It is straightforward to verify that the algorithm implements the first-fit rule with respect to \inline{\pi}.

For the running time, the total cost of iterating over all incident lists \inline{L_v} is \inline{\bigO{\size{\Lcal}}}. The total cost of marking colors as forbidden (Step~5(a)(i)) is
\[\bigO{\sum_{v \in V} \sum_{\ell \in L_v} \card{C_\ell}} \subseteq \bigO{\sum_{v \in V} \card{L_v}\cdot \omega} = \bigO{\omega \cdot \size{\Lcal}}.\]
For a fixed vertex \inline{v}, the while-loop in Step~5(c) increments \inline{p} only while \inline{\flag{p} = t}, so it iterates at most the number of colors marked at timestamp \inline{t}, which is at most the number of assignments \inline{\flag{\col{u}} \gets t} performed while processing \inline{v}. Hence, the total cost of Step~5(c) is also dominated by the marking cost \inline{\bigO{\omega \cdot \size{\Lcal}}}. The remaining steps are dominated by these costs, so the total time is \inline{\bigO{\omega \cdot \size{\Lcal}} = \bigO{\omega \cdot \size{\dcc{G}}}}. The algorithm stores the arrays \inline{\col{\cdot}} on vertices and \inline{\flag{\cdot}} on colors using \inline{\bigO{\card{V}} \subseteq \bigO{\size{\dcc{G}}}} space.
\end{proof}

\subsection[k-Cores and Approximate Densest Subgraph]{$k$-Cores and Approximate Densest Subgraph}
\label{subsec:kcores}

\begin{definition}[$k$-Cores]
\label{def:kcores}    
Let \inline{G=(V,E)} be a graph and \inline{k\geq 0} be an integer. A \emph{\inline{k}-core} of \inline{G} is a maximal subgraph of \inline{G} in which every vertex has degree at least \inline{k}. The \emph{coreness} of a vertex \inline{v}, denoted \inline{\core{v}}, is the largest \inline{k} such that \inline{v} belongs to a \inline{k}-core of \inline{G}.
\end{definition}

A standard algorithm computes the coreness of each vertex by repeatedly removing a vertex of minimum degree, while maintaining the current degrees of the remaining vertices \cite{matula1983smallest}. \fref{fig:algo_kcores} implements this peeling procedure using a DCC representation. The algorithm maintains an indicator array \inline{I_v \in \brcu{0,1}}, where \inline{I_v=1} denotes that \inline{v} is \emph{active}, that is, \inline{v} has not been peeled yet. For each active vertex \inline{v}, \inline{\degc{v}} denotes the current degree of \inline{v}. The array \inline{\seen{\cdot}} is used to ensure that each neighbor of a currently processed vertex is considered at most once per timestamp \inline{t}.

The algorithm first computes the initial degrees of all vertices in Step~4. 
It then maintains the active vertices in buckets keyed by \inline{\degc{\cdot}}. 
At each iteration of Step~6, it removes an active vertex \inline{v} of minimum current degree \inline{\degc{v}}, records \inline{\core{v} \gets \degc{v}}, and scans the distinct active neighbors of \inline{v}.
For each such neighbor \inline{u}, if \inline{\degc{u} > \degc{v}}, then the algorithm decrements \inline{\degc{u}} by one and moves \inline{u} to the bucket keyed by the updated value of \inline{\degc{u}}.

\begin{figure}[h]
    \centering
    \begin{parbox}{5.0in}{    
        \begin{mdframed}[linewidth=0.5pt, roundcorner=7pt, backgroundcolor=gray!5,
                         frametitle={\underline{\textsc{$k$-Core-Decomposition}$\br{\C, \Lcal}$}}]  
        \begin{enumerate}            
            \item Initialize \inline{V \gets \bigcup_{C_\ell \in \C} C_\ell}.
            \item Set \inline{\core{v} \gets 0}, \inline{I_v \gets 1}, \inline{\degc{v} \gets 0}, and \inline{\seen{v} \gets 0}, for all \inline{v \in V}.
            \item Set \inline{t \gets 0}.
            \item For each \inline{v \in V} do
            \begin{enumerate}
                \item Set \inline{t \gets t+1}.
                \item For each \inline{\ell \in L_v} do
                \begin{enumerate}
                    \item For each \inline{u \in C_\ell \setminus \brcu{v}} with \inline{\seen{u} \ne t} do set \inline{\seen{u} \gets t} and increment \inline{\degc{v}} by \inline{1}.
                \end{enumerate}
            \end{enumerate}  
            \item Insert each vertex \inline{v \in V} into a degree bucket keyed by \inline{\degc{v}}.
            \item For \inline{i=1} to \inline{\card{V}} do
            \begin{enumerate}
                \item Extract a vertex \inline{v} with \inline{I_v=1} and minimum current value \inline{\degc{v}}.
                \item Set \inline{\core{v} \gets \degc{v}}, \inline{I_v \gets 0}, and \inline{t \gets t+1}.
                \item For each \inline{\ell \in L_v} do
                \begin{enumerate}
                    \item For each \inline{u \in C_\ell \setminus \brcu{v}} with \inline{I_u =1} do 
                    \begin{enumerate}                        
                        \item If \inline{\seen{u} = t}, continue to the next \inline{u}.
                        \item Set \inline{\seen{u} \gets t}.
                        \item If \inline{\degc{u} > \degc{v}}, then decrement \inline{\degc{u}} by \inline{1} and move \inline{u} to the degree bucket keyed by the updated \inline{\degc{u}}.
                    \end{enumerate}
                \end{enumerate}
            \end{enumerate}
            \item Return \inline{\brcu{\core{v}}_{v \in V}}.
        \end{enumerate}
        \end{mdframed}    
    }
    \end{parbox}    
    \caption{$k$-Core decomposition using a DCC representation.}
    \label{fig:algo_kcores}
\end{figure}

\begin{lemma}
\label{lemma:kcores}
Let \inline{\dcc{G}=\br{\C, \Lcal}} be a DCC representation of a graph \inline{G} with clique number \inline{\omega}. Algorithm~\aref{\textsc{$k$-Core-Decomposition}}{fig:algo_kcores} returns the coreness of every vertex of \inline{G} using \inline{\bigO{\omega \cdot \size{\dcc{G}}}} time and \inline{\bigO{\size{\dcc{G}}}} space.
\end{lemma}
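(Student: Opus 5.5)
Correctness reduces to the standard Matula--Beck peeling argument once two facts are established. First, Step~4 computes $\degc{v}=\card{N(v)}$ exactly. Second, each iteration of Step~6(c) touches every active neighbor of the extracted vertex exactly once. For Step~4, note that $N(v)=\bigcup_{\ell\in L_v} C_\ell\setminus\brcu{v}$, because $\C$ covers every edge and each $C_\ell$ is a clique. The timestamp array $\seen{\cdot}$ is reset logically by incrementing $t$, so a vertex lying in several cliques of $v$ is counted only once. The same argument applies in Step~6(c): with a fresh timestamp $t$, the set of vertices $u$ reaching Step~6(c)(i)(C) is exactly the set of active neighbors of $v$, each visited once.

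With these facts, I will prove the invariant that, at the start of each iteration of Step~6, $\degc{u}=\max\brcu{\card{N(u)\cap A},\, k^\ast}$ for every active $u$. Here $A$ is the current active set and $k^\ast$ is the largest value $\core{\cdot}$ assigned so far. This is the capped-degree invariant of the textbook algorithm: a decrement happens only when $\degc{u}>\degc{v}$, and extracted minima are nondecreasing. From the invariant, the standard argument applies. The assigned values $\core{v}$ are nondecreasing. The set of vertices still active when a value $\ge k$ is first assigned induces a subgraph of minimum degree at least $k$, so $\core{v}\le\core{\cdot}$ of the true coreness. Conversely, no vertex of a $k$-core can be peeled with a value below $k$: consider the first such vertex, which still has at least $k$ active neighbors. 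Together these give $\core{v}$ equal to the coreness. I would state this as a reduction to \cite{matula1983smallest} rather than reprove it in full.

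For the running time, I will bound Steps~4 and~6(c) together by $\sum_{v\in V}\sum_{\ell\in L_v}\card{C_\ell}\le\omega\sum_v\card{L_v}=\omega\cdot\size{\Lcal}$, exactly as in \lemref{lemma:coloring_ff}. Each vertex is processed once in Step~4 and extracted once in Step~6, and $\card{C_\ell}\le\omega$. The main obstacle is the bucket structure. Degrees lie in $[0,n-1]$, and each bucket move is $O(1)$ using doubly linked lists with position pointers. The number of moves is at most the number of scanned pairs, which is already charged. Extracting the minimum requires a pointer to the current minimum bucket. This pointer decreases by at most one per decrement, and since a decrement only occurs when $\degc{u}>\degc{v}$, it never drops below the current minimum. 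Its total forward movement is therefore at most $n$ plus the number of decrements, which is $O(n+\omega\size{\Lcal})$. Space consists of the arrays $\core{\cdot}$, $I$, $\degc{\cdot}$, $\seen{\cdot}$ and $n$ buckets, which is $O(n)\subseteq O(\size{\dcc{G}})$.
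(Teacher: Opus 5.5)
Your proposal is correct. It differs from the paper mainly in how you show that the assigned value is at most the true coreness.

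The paper uses only the one-sided fact that $\degc{u}$ is always at least the number of active neighbors of $u$. For the upper bound it fixes $k$ and lets $V_k$ be the vertices whose assigned value is at least $k$. It then counts decrements: every neighbor of $u\in V_k$ lying outside $V_k$ is peeled earlier and forces one decrement of $\degc{u}$. Exactly $\card{N(u)}$ minus the assigned value decrements occur, so $u$ retains at least $k$ neighbors inside $V_k$.

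You instead use the exact capped-degree invariant $\degc{u}=\max\brcu{\card{N(u)\cap A},k^\ast}$. From it you read off that the active set, at the moment the first value $\ge k$ is assigned, induces minimum degree at least $k$. You only sketch the verification of this invariant, but it goes through using $k^\ast\le\degc{v}\le\degc{u}$:
\begin{itemize}
\item Non-neighbors of $v$ are unchanged.
\item A neighbor with $\degc{u}>\degc{v}$ must satisfy $\degc{u}=\card{N(u)\cap A}>\degc{v}$, so the decremented value equals the new capped degree.
\item A neighbor with $\degc{u}=\degc{v}$ has $\card{N(u)\cap A}\le\degc{v}$, so leaving it unchanged is correct.
\end{itemize}

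Your stronger invariant shows directly that each $k$-core is exactly a suffix of the peeling order, which is what the paper's densest-subgraph corollary implicitly relies on. The paper's counting argument avoids tracking exact values altogether.

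The other direction (look at the first vertex of a $k$-core to be peeled) is the same in both proofs. So is the $\omega\cdot\size{\Lcal}$ bound on Steps~4 and~6(c).

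Your handling of the bucket queue is more careful than the paper's, which simply asserts that all bucket operations take constant time. Your amortized pointer argument is valid. It can even be tightened: decrements never push an active value below the last extracted value, so the minimum pointer never moves backward, and extraction costs $\bigO{n}$ in total.
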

\begin{proof}
For each \inline{v \in V}, let \inline{\func{v}} denote the value assigned to \inline{\core{v}} when \inline{v} is peeled by Algorithm~\aref{\textsc{$k$-Core-Decomposition}}{fig:algo_kcores}. We show that \inline{\func{v} = \core{v}} for each \inline{v \in V}, where \inline{\core{v}} on the right denotes the true coreness of \inline{v}. Step~4 correctly computes the initial degree of each vertex, so after Step~4 we have \inline{\degc{v}=\card{N\br{v}}} for each \inline{v \in V}. Also, throughout the execution, \inline{\degc{v}} is at least the number of active neighbors of \inline{v}, since \inline{\degc{v}} starts at \inline{\card{N\br{v}}}, and each decrement of \inline{\degc{v}} corresponds to peeling an active neighbor of \inline{v}.

Let \inline{v_1, \dots, v_n} be the vertices in the order they are peeled. When \inline{v_i} is peeled, it is chosen to have minimum current value \inline{\degc{\cdot}} among the active vertices, so every active vertex \inline{u} satisfies \inline{\degc{u} \geq \func{v_i}}. The decrement step in Step~6(c)(i) decreases \inline{\degc{u}} only when \inline{\degc{u}> \func{v_i}}, so after the update every remaining active vertex \inline{u} still satisfies \inline{\degc{u} \geq \func{v_i}}. Hence \inline{\func{v_1} \leq \func{v_2} \leq \dots \leq \func{v_n}}. 

We first show that \inline{\func{v} \geq \core{v}}. Let \inline{H} be a \inline{k}-core of \inline{G}, and let \inline{x} be the first vertex of \inline{H} peeled by the algorithm. When \inline{x} is peeled, all other vertices of \inline{H} are still active. Since every vertex of \inline{H} has degree at least \inline{k} in \inline{H}, the vertex \inline{x} has at least \inline{k} active neighbors, so \inline{\func{x} =\degc{x} \geq k}. Since the peeling values \inline{\func{\cdot}} are nondecreasing, every vertex \inline{v} of \inline{H} peeled after \inline{x} also satisfies \inline{\func{v} \geq \func{x} \geq k}. Therefore, if a vertex \inline{v} belongs to a \inline{k}-core, then \inline{\func{v} \geq k}. Taking the largest such \inline{k}, we obtain \inline{\func{v} \geq \core{v}}.

To show \inline{\func{v} \leq \core{v}}, fix an integer \inline{k \geq 0}, and define \inline{V_k:=\brcu{v\in V \mid \func{v} \geq k}}. Fix \inline{u \in V_k}, and let \inline{x \not\in V_k} be a neighbor of \inline{u}. Since \inline{\func{u} \geq k > \func{x}}, the vertex \inline{x} is peeled before \inline{u}. At that moment, the current value of \inline{u} satisfies \inline{\degc{u} \geq \func{u} \geq k > \func{x} =\degc{x}}, so the decrement step in Step~6(c)(i) decrements \inline{\degc{u}} when \inline{x} is peeled. Thus every neighbor of \inline{u} outside \inline{V_k} contributes one decrement to \inline{\degc{u}}. The value \inline{\degc{u}} starts at \inline{\card{N\br{u}}} and ends at \inline{\func{u}}. Hence the total number of decrements applied to \inline{u} is exactly \inline{\card{N\br{u}} - \func{u}}. Therefore the number of neighbors of \inline{u} outside \inline{V_k} is at most \inline{\card{N\br{u}} - \func{u}}, and so the number of neighbors inside \inline{V_k} is at least \inline{\card{N\br{u}} - \br{\card{N\br{u}}-\func{u}} = \func{u} \geq k}. Since this holds for every \inline{u \in V_k}, the induced subgraph \inline{G[V_k]} has minimum degree at least \inline{k}. Thus every vertex of \inline{V_k} belongs to a \inline{k}-core of \inline{G}. Therefore, if \inline{\func{v}=t}, then \inline{v \in V_t}, so \inline{v} belongs to a \inline{t}-core. Hence \inline{\core{v} \geq t = \func{v}}.

For the running time, the total work in Step~4 is
\[\bigO{\sum_{v\in V}\sum_{\ell \in L_v}\card{C_\ell}} = \bigO{\sum_{C_\ell \in \C} \card{C_\ell}^2} \subseteq \bigO{\omega\sum_{C_\ell \in \C} \card{C_\ell}} = \bigO{\omega \size{\C}}.\]
The same bound applies to Step~6(c) over the whole execution. Since all bucket operations take constant time each, the remaining operations are dominated by this cost. Therefore, the total running time is \inline{\bigO{\omega \size{\C}} = \bigO{\omega \size{\dcc{G}}}}. The algorithm stores \inline{\core{v}}, \inline{I_v}, \inline{\degc{v}}, and \inline{\seen{v}} for each \inline{v}, together with the degree buckets. This takes \inline{\bigO{\card{V}} \subseteq \bigO{\size{\dcc{G}}}} space.
\end{proof}

Note that if \inline{G} has a \inline{k}-core, then the degeneracy \inline{d} of \inline{G} is at least \inline{k}. Conversely, if \inline{H\subseteq G} satisfies \inline{\delta_H \ge k}, then \inline{H} is contained in a \inline{k}-core of \inline{G}. Therefore, \inline{d = \max_{v \in V} \core{v}}. Moreover, the reverse order in which vertices are peeled in Algorithm~\aref{\textsc{$k$-Core-Decomposition}}{fig:algo_kcores} gives a degeneracy ordering \inline{\pi}. Invoking Algorithm~\aref{\textsc{First-Fit-Coloring}}{fig:algo_coloring} with this ordering \inline{\pi} gives a proper vertex coloring of \inline{G} using at most \inline{d+1} colors. Hence a consequence of \lemref{lemma:kcores} is the following.

\begin{corollary}
Let \inline{\dcc{G}=\br{\C, \Lcal}} be a DCC representation of a graph \inline{G} with clique number \inline{\omega} and degeneracy \inline{d}. Using \inline{\dcc{G}}, a proper vertex coloring of \inline{G} using at most \inline{d+1} colors can be computed in \inline{\bigO{\omega \cdot \size{\dcc{G}}}} time and \inline{\bigO{\size{\dcc{G}}}} space.    
\end{corollary}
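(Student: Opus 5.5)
The statement is a corollary: a proper coloring with at most \inline{d+1} colors in \inline{\bigO{\omega \cdot \size{\dcc{G}}}} time and \inline{\bigO{\size{\dcc{G}}}} space. I would prove it by composing two earlier results. First run Algorithm~\aref{\textsc{$k$-Core-Decomposition}}{fig:algo_kcores} on \inline{\dcc{G}}, recording the peeling sequence \inline{v_1,\dots,v_n}. Then set \inline{\pi} to be the reverse of that sequence, and call Algorithm~\aref{\textsc{First-Fit-Coloring}}{fig:algo_coloring} with \inline{\pi}. By \lemref{lemma:kcores} the first phase costs \inline{\bigO{\omega\cdot\size{\dcc{G}}}} time and \inline{\bigO{\size{\dcc{G}}}} space. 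Recording the order and reversing it adds only \inline{\bigO{n}}. By \lemref{lemma:coloring_ff} the second phase has the same bounds and returns the first-fit coloring with respect to \inline{\pi}, which is proper. Adding the two phases gives the stated resources.

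The substantive step is showing that the reversed peeling order is a degeneracy ordering, so that each vertex has at most \inline{d} neighbors earlier in \inline{\pi}. In \inline{\pi}, the vertices that precede \inline{v_i} are exactly those still active when \inline{v_i} was peeled. So I need the true number of active neighbors of \inline{v_i} at that moment to be at most \inline{d}.

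The proof of \lemref{lemma:kcores} shows that \inline{\degc{v_i}} is always at least the number of active neighbors of \inline{v_i}. At peeling time \inline{\degc{v_i}=\func{v_i}=\core{v_i}}. The remark before the corollary gives \inline{\core{v}\le d} for every \inline{v}, since \inline{d=\max_v \core{v}}. Chaining these, the number of neighbors of \inline{v_i} active at its peeling time is at most \inline{d}.

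Given this, the standard first-fit argument finishes. When \inline{v} is colored, its already-colored neighbors are exactly its predecessors in \inline{\pi}, of which there are at most \inline{d}. They forbid at most \inline{d} colors, so the smallest allowed color is at most \inline{d+1}. By induction no color above \inline{d+1} is ever created.

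I expect the main obstacle to be exactly this degeneracy-ordering claim. The algorithm's \inline{\degc{\cdot}} is not the true active degree, because decrements are skipped when \inline{\degc{u}\le\degc{v}}. The argument above sidesteps this by using only the one-sided inequality that \inline{\degc{\cdot}} upper-bounds the active degree. I would check carefully in the lemma's proof that this inequality indeed holds at every step.
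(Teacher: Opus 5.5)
Your proposal is correct and follows the paper's argument. You run \textsc{$k$-Core-Decomposition}, take the reversed peeling order as a degeneracy ordering, and then run \textsc{First-Fit-Coloring} on it; both phases cost $\bigO{\omega \cdot \size{\dcc{G}}}$ time and $\bigO{\size{\dcc{G}}}$ space. The paper only asserts that the reversed peeling order is a degeneracy ordering, and your justification of that step is sound: $\degc{v}$ always upper-bounds the number of active neighbors of $v$, and it equals $\core{v} \le d$ when $v$ is peeled.
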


\begin{definition}[Densest Subgraph]
\label{def:densest_subgraph}
Let \inline{G=\br{V,E}} be a graph and \inline{H=\br{V_H, E_H}} be an induced subgraph of \inline{G}. The \emph{density} of \inline{H} is \inline{\rho\br{H}:=\frac{\card{E_H}}{\card{V_H}}}. If \inline{\rho\br{H}} is the maximum over all induced subgraphs of \inline{G}, then \inline{H} is a \emph{densest subgraph} of \inline{G}.
\end{definition}

Let \inline{H^*=\br{V_H^*, E_H^*}} be a densest subgraph of \inline{G}. A standard fact is that \inline{\delta_{H^*} \ge \rho\br{H^*}}, since otherwise removing a vertex of smaller degree would strictly increase the density. Since \inline{d=\max_{H\subseteq G} \delta_H}, it follows that \inline{d \geq \delta_{H^*} \ge \rho\br{H^*}}. 

Now let \inline{H^\prime=\br{V_H^\prime,E_H^\prime}} be a \inline{d}-core of \inline{G}. Since \inline{\delta_{H^\prime} = d}, we have \inline{d\card{V_H^\prime} \leq 2 \card{E_H^\prime}}, so \inline{\rho\br{H^\prime} = \frac{\card{E_H^\prime}}{\card{V_H^\prime}} \geq \frac{d}{2} \geq \frac{\rho\br{H^*}}{2}}. Thus the \inline{d}-core gives a 2-approximation to the densest subgraph problem. 

Algorithm~\aref{\textsc{$k$-Core-Decomposition}}{fig:algo_kcores} returns the coreness of every vertex, so by setting \inline{d:=\max_{v \in V}\core{v}} and \inline{S:=\brcu{v\in V \mid \core{v}=d}}, the induced subgraph \inline{H^\prime = G[S]} is the \inline{d}-core of \inline{G}. Hence, by \lemref{lemma:kcores}, we obtain the following.

\begin{corollary}
Let \inline{\dcc{G}=\br{\C, \Lcal}} be a DCC representation of a graph \inline{G} with clique number \inline{\omega}. Using \inline{\dcc{G}}, a 2-approximation to the densest subgraph problem on \inline{G} can be computed in \inline{\bigO{\omega \cdot \size{\dcc{G}}}} time and \inline{\bigO{\size{\dcc{G}}}} space.    
\end{corollary}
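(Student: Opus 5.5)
The statement is a direct consequence of \hyref{Lemma}{lemma:kcores} together with the standard fact that the $d$-core of \inline{G} is a $2$-approximate densest subgraph. The plan is to run Algorithm~\aref{\textsc{$k$-Core-Decomposition}}{fig:algo_kcores} once, read off the maximum coreness, and output the corresponding vertex set.

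\textbf{Step 1: computation.} By \lemref{lemma:kcores}, running the algorithm on \inline{\dcc{G}} yields \inline{\core{v}} for every \inline{v} in \inline{\bigO{\omega \cdot \size{\dcc{G}}}} time and \inline{\bigO{\size{\dcc{G}}}} space. One additional linear scan computes \inline{d^\prime := \max_v \core{v}} and \inline{S := \brcu{v \mid \core{v} = d^\prime}}, costing only \inline{\bigO{n} \subseteq \bigO{\size{\dcc{G}}}}.

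\textbf{Step 2: identifying the output as the $d$-core.} First, \inline{d^\prime = d}. A \inline{k}-core exists if and only if some subgraph has minimum degree at least \inline{k}, so the largest \inline{k} admitting a \inline{k}-core equals the degeneracy \inline{d}. Second, \inline{G[S]} is the \inline{d}-core, since the \inline{d}-core consists exactly of the vertices of coreness at least \inline{d}, and no vertex has coreness exceeding \inline{d}. One could alternatively cite the set \inline{V_k} from the proof of \lemref{lemma:kcores} with \inline{k=d}, which is exactly \inline{S} and induces a subgraph of minimum degree at least \inline{d}.

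\textbf{Step 3: approximation ratio.} Let \inline{H^*} be a densest subgraph. If \inline{H^*} had a vertex of degree below \inline{\rho\br{H^*}}, deleting it would not decrease the density, so we may assume \inline{\delta_{H^*} \geq \rho\br{H^*}}, and hence \inline{d \geq \rho\br{H^*}}. On the other side, \inline{G[S]} has minimum degree at least \inline{d}, so \inline{2\card{E(G[S])} \geq d\card{S}}, giving \inline{\rho\br{G[S]} \geq d/2 \geq \rho\br{H^*}/2}.

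\textbf{Main obstacle.} The only delicate point is the claim \inline{\delta_{H^*} \geq \rho\br{H^*}}. Removing a vertex of degree \inline{\delta < \rho} from a graph with \inline{m_H} edges and \inline{n_H \geq 2} vertices gives density \inline{(m_H-\delta)/(n_H-1) > \rho}, since \inline{m_H - \delta > m_H - \rho = \rho(n_H-1)}. This contradicts maximality, so some densest subgraph satisfies the claim. The case \inline{n_H=1} is trivial because its density is \inline{0}.
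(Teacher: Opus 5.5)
Your proposal is correct and follows the paper's argument essentially step for step. Like the paper, you run the $k$-core decomposition, take $S$ to be the vertices of maximum coreness $d$, and combine $d \ge \delta_{H^*} \ge \rho(H^*)$ with $\rho(G[S]) \ge d/2$. Your check that deleting a vertex of degree below $\rho(H^*)$ strictly increases the density simply proves the ``standard fact'' that the paper invokes without proof.
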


\subsection{Maximal Clique}

\begin{definition}[Maximal Clique]
Let \inline{G=\br{V,E}} be a graph. A set \inline{S \subseteq V} is a \emph{clique} if every pair of distinct vertices in \inline{S} is adjacent. A clique \inline{S} is \emph{maximal} if it is not a strict subset of any other clique. Equivalently, \inline{S} is maximal if every \inline{v \in V\setminus S} has at least one non-neighbor in \inline{S}.
\end{definition}

A standard greedy algorithm maintains a current clique \inline{S} and a candidate set \inline{A} of vertices outside \inline{S} that are adjacent to every vertex in \inline{S}. Starting from some seed clique \inline{S} (for instance, a single vertex), it repeatedly selects some vertex \inline{v \in A}, adds \inline{v} to \inline{S}, and updates \inline{A} to those vertices that remain adjacent to all the updated \inline{S}. When no candidates remain, the resulting \inline{S} is maximal.

\fref{fig:algo_mc} implements this greedy strategy using a DCC representation. The algorithm initializes \inline{S} to a largest clique appearing in the cover \inline{\C}. It then picks any \inline{x \in S} and initializes the candidate set by \inline{A \gets \br{\bigcup_{ \ell \in L_x} C_\ell} \setminus S}. It then filters \inline{A} by enforcing adjacency to every \inline{v \in S}, using the fact that \inline{ \brcu{u,v} \in E} if and only if \inline{L_u \cap L_v \ne \emptyset}. After this, while \inline{A} is nonempty, it repeatedly adds a vertex \inline{v \in A} to \inline{S}, and filters \inline{A} again by enforcing adjacency to \inline{v}.

\begin{figure}[h]
    \centering
    \begin{parbox}{4.2in}{    
        \begin{mdframed}[linewidth=0.5pt, roundcorner=7pt, backgroundcolor=gray!5,
                         frametitle={\underline{\textsc{Maximal-Clique}$\br{\C, \Lcal}$}}]  
        \begin{enumerate}            
            \item Initialize \inline{S \gets \argmax_{C_\ell \in \C} \brcu{\card{C_\ell}}}.
            \item Select a vertex \inline{x \in S}, and set \inline{A \gets \br{\bigcup_{\ell \in L_x} C_\ell} \setminus S}.
            \item For each \inline{v \in S} do
            \begin{enumerate}
                \item Set \inline{A \gets \brcu{u \in A \mid L_u \cap L_v \ne \emptyset}}.
            \end{enumerate}
            \item While \inline{A} is nonempty do
            \begin{enumerate}
                \item Select a vertex \inline{v \in A}, and set \inline{A \gets A \setminus \brcu{v}}, \inline{S \gets S \cup \brcu{v}}.
                \item Set \inline{A \gets \brcu{u \in A \mid L_u \cap L_v \ne \emptyset}}.                
            \end{enumerate}
            \item Return \inline{S}.
        \end{enumerate}
        \end{mdframed}    
    }
    \end{parbox}    
    \caption{Maximal clique using a DCC representation.}
    \label{fig:algo_mc}
\end{figure}

\begin{lemma}
\label{lemma:mc}
Let \inline{\dcc{G}=\br{\C, \Lcal}} be a DCC representation of a graph \inline{G} with clique number \inline{\omega}. Algorithm~\aref{\textsc{Maximal-Clique}}{fig:algo_mc} returns a maximal clique of \inline{G} in \inline{\bigO{\omega \cdot \size{\dcc{G}}}} time using \inline{\bigO{\size{\dcc{G}}}} space.
\end{lemma}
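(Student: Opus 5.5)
The proof has two parts: correctness, via the invariant that $S$ is a clique and $A$ is exactly the set of common neighbours of $S$ outside $S$; and cost, by charging every filtering step to the cliques containing a single fixed vertex $x$.

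\textbf{Correctness.} First, the initial $S$ is a member of $\C$, hence a clique of $G$. Since $\C$ is a clique cover of a graph without isolated vertices, $\size{\C} \geq 2$ and $S$ is nonempty.

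After Step~2, $A=N(x)\setminus S$. This uses the identity $N(x)=\bigcup_{\ell\in L_x}C_\ell\setminus\brcu{x}$ from \hyref{Section}{subsec:dcc}. After Step~3, $A$ consists of exactly the vertices outside $S$ that are adjacent to every vertex of $S$, by the adjacency test $L_u\cap L_v\ne\emptyset$.

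I will check that this invariant, $A=\brcu{u\notin S \mid S\subseteq N(u)}$, is preserved by each iteration of Step~4. Adding $v\in A$ keeps $S$ a clique. Filtering by adjacency to $v$ then restores the invariant for the enlarged $S$. At termination $A=\emptyset$, so every vertex outside $S$ has a non-neighbor in $S$, which is the stated characterization of a maximal clique.

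\textbf{Running time.} Build $A$ in Step~2 by scanning $C_\ell$ for $\ell\in L_x$, with a marker array to avoid duplicates. This costs $\bigO{\sum_{\ell\in L_x}\card{C_\ell}}\subseteq\bigO{\omega\card{L_x}}$. Choosing $S$ in Step~1 costs $\bigO{\card{\C}}$.

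Every subsequent filter is applied only to a subset of this initial $A$. The number of filter rounds is at most $\card{S_{\mathrm{final}}}\le\omega$: Step~3 runs one round per element of the initial $S$, and Step~4 runs one round per added vertex. So the number of adjacency tests is at most $\omega\cdot\card{A_0}\le\omega\cdot\bigO{\omega\card{L_x}}$, where $A_0$ is the initial $A$.

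This naive count is the main obstacle: pairwise tests $L_u\cap L_v$ could cost up to $\card{L_u}+\card{L_v}$ each, giving something like $\omega^2$ times label lengths. To avoid this, I would implement each filter round for a fixed $v$ by marking the labels in $L_v$ in a boolean array indexed by clique labels, and then, for each $u\in A$, scanning $L_u$ for a marked label.

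One round then costs $\bigO{\card{L_v}+\sum_{u\in A}\card{L_u}}$. Since $A$ only shrinks and its elements are distinct vertices, $\sum_{u\in A}\card{L_u}\le\size{\Lcal}$. With at most $\omega$ rounds, the rounds together cost $\bigO{\omega\cdot\size{\Lcal}}$. The marking and unmarking cost $\sum_{v\in S}\card{L_v}\le\size{\Lcal}$. Adding the Step~2 cost of $\bigO{\size{\C}}$ gives total time $\bigO{\omega\cdot\size{\dcc{G}}}$.

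\textbf{Space.} The algorithm stores $S$, $A$, a vertex marker array and a label marker array. This is $\bigO{\card{V}+\card{\C}}\subseteq\bigO{\size{\dcc{G}}}$.
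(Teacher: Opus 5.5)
Your proof is correct and matches the paper's: the same invariant that $S$ is a clique and $A$ is the set of its common neighbours outside $S$, and the same charging of $O(\card{L_u})$ per surviving candidate $u$ for each of the at most $\omega$ vertices entering $S$, giving $O(\omega\cdot\size{\Lcal})$. The only difference is how each test is carried out: you mark $L_v$ in a label array once per filter round, whereas the paper uses an $O(\min\{\card{L_u},\card{L_v}\})$ intersection based on its standing set-membership assumption (with logarithmic factors suppressed), so your version avoids that assumption at the cost of an extra $O(\card{\C})$-size array, which still fits the space bound.
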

\begin{proof}
The algorithm starts with a clique \inline{S \in \C}. By construction, \inline{A= N(x) \setminus S} for some vertex \inline{x \in S}. After Step~3, the set \inline{A \subseteq \br{\bigcap_{v \in S} N(v)}\setminus S}. Each iteration of Step~4 adds some vertex \inline{v \in A} to \inline{S} and refines \inline{A} to those candidates adjacent to \inline{v}, so \inline{S} remains a clique throughout.

At termination \inline{A = \emptyset}. Fix a vertex \inline{y \not\in S}. If \inline{y} is not included in \inline{A} initially, then \inline{\brcu{x,y} \not\in E}, so \inline{y} has a non-neighbor in \inline{S}. Otherwise, \inline{y} is removed from \inline{A} by some filter against a vertex \inline{v \in S}, which happens only if \inline{L_y \cap L_v =\emptyset}, equivalently, \inline{\brcu{y,v} \not\in E}. Hence \inline{y} has a non-neighbor in \inline{S}. This holds for any \inline{y \not\in S}, so \inline{S} is maximal.

For the running time, Step~1--2 takes at most \inline{\bigO{\size{\C}}} time. For any fixed \inline{u \in A}, the intersection \inline{L_u \cap L_v} at Step~3(a) or Step~4(b) would cost at most \inline{\bigO{ \min\brcu{\card{L_u}, \card{L_v} }} = \bigO{ \card{L_u}}} time. Each time a vertex \inline{u} remains in \inline{A}, it is tested for adjacency against each vertex that enters \inline{S} at most once. Since \inline{\card{S} \leq \omega} throughout, the total intersection work is at most
\inline{\bigO{\omega \sum_{u \in V}\card{L_u}} = \bigO{\omega \cdot \size{\Lcal}}}. All remaining operations are dominated by this cost, so the total runtime is \inline{\bigO{\omega \cdot \size{\Lcal}} = \bigO{\omega \cdot \size{\dcc{G}}}}. The algorithm stores the set \inline{A \subseteq V} using \inline{\bigO{\card{V}} \subseteq \bigO{\size{\dcc{G}}}} space.
\end{proof}

\subsection{First-Fit Complement Coloring}
\label{subsec:ffcc}

As in DCC construction algorithms, many tasks require a first-fit coloring of \inline{\overline{G}}. We now adapt the first-fit coloring approach described above (\fref{fig:algo_coloring}) to obtain a first-fit coloring of \inline{\overline{G}} without explicitly realizing the complement.

Fix an ordering \inline{\pi = \langle v_1, \dots, v_n \rangle} of \inline{V}. A textbook first-fit algorithm for \inline{\overline{G}} processes vertices in the order \inline{\pi}, and assigns each vertex the smallest color not used by any already-colored neighbor in \inline{\overline{G}}. Equivalently, when processing a vertex \inline{v}, a color \inline{z} is feasible if and only if \inline{v} has no already-colored neighbor of color \inline{z} in \inline{\overline{G}}.

This feasibility test can be expressed using adjacency in \inline{G}. Let \inline{K_z:= \brcu{u \in V \mid \col{u} = z}} denote the current color class \inline{z}. The condition that \inline{v} has no neighbor of color \inline{z} in \inline{\overline{G}} is equivalent to saying that \inline{v} is adjacent  in \inline{G} to every already-colored vertex in \inline{K_z}. Thus, to implement first-fit coloring of \inline{\overline{G}}, it suffices to find, for each processed vertex \inline{v}, the smallest color \inline{z} such that \inline{v} is adjacent in \inline{G} to all already-colored vertices of color \inline{z}.

\fref{fig:algo_coloring_barG} implements this algorithm using a DCC representation of \inline{G}. As in the first-fit coloring of \inline{G}, it maintains a color array \inline{\col{v}} initialized to \inline{0}, a timestamp counter \inline{t}, and a variable \inline{q} storing the number of colors used so far. In contrast to the earlier algorithm, colors are not marked as forbidden. Instead, the algorithm maintains \inline{\tsize{z}}, the number of already-colored vertices currently assigned color \inline{z}, and for each timestamp \inline{t} it maintains a per-color counter \inline{\rsize{z}} that is lazily initialized to \inline{\tsize{z}} and then decremented once for each distinct neighbor of the current vertex having color \inline{z}. The timestamp array \inline{\flag{\cdot}} is used to ensure that each \inline{\rsize{z}} is initialized at most once per timestamp. The array \inline{\seen{\cdot}} is used to ensure that each already-colored neighbor of a currently processed vertex is counted at most once per timestamp. 

When \inline{\rsize{z} = 0}, the scan has confirmed that the current vertex is adjacent to every already-colored vertex of color \inline{z}, so \inline{z} is feasible for \inline{\overline{G}}. The algorithm assigns the smallest such feasible color, introducing a new color only if none is feasible.

\begin{figure}[h]
    \centering
    \begin{parbox}{5.4in}{    
        \begin{mdframed}[linewidth=0.5pt, roundcorner=7pt, backgroundcolor=gray!5,
                         frametitle={\underline{\textsc{First-Fit-Complement-Coloring}$\br{\C, \Lcal, \pi}$}}]  
        \begin{enumerate}            
            \item Initialize \inline{V \gets \bigcup_{C_\ell \in \C} C_\ell}. \textcolor{teal}{/* \inline{\pi:=\langle v_1, \dots, v_{\card{V}} \rangle} is an ordering of \inline{V}.*/}
            \item Set \inline{\col{v} \gets 0} and \inline{\seen{v} \gets 0}, for all \inline{v \in V}.
            \item Set \inline{t\gets 1}, and \inline{\flag{z} \gets 0}, \inline{\tsize{z} \gets 0}, \inline{\rsize{z} \gets 0}, for all \inline{z \in [1, \card{V}]}.
            \item Set \inline{q \gets 0}. \textcolor{teal}{/* number of colors used */}
            \item For each \inline{v \in V} in the order \inline{\pi} do
            \begin{enumerate}
                \item Set \inline{p \gets q+1} \textcolor{teal}{/* start beyond all colors used */}
                \item For each \inline{\ell \in L_v} do
                \begin{enumerate}
                    \item For each \inline{u \in C_\ell} with \inline{\col{u} > 0} do
                    \begin{enumerate}
                        \item If \inline{\seen{u} = t}, then continue to the next \inline{u}.
                        \item \inline{\seen{u} \gets t}.
                        \item If \inline{\flag{\col{u}} < t}, then set \inline{\rsize{\col{u}} \gets \tsize{\col{u}}}.
                        \item Set \inline{\rsize{\col{u}} \gets \rsize{\col{u}} - 1} and \inline{\flag{\col{u}} \gets t}.
                        \item If \inline{\rsize{\col{u}} = 0}, then set \inline{p \gets \min\brcu{p, \col{u}}}.
                    \end{enumerate}
                \end{enumerate}
                \item If \inline{p = q+1}, then set \inline{q \gets q+1}.
                \item Set \inline{\col{v} \gets p}, \inline{\tsize{p} \gets \tsize{p} + 1}, and \inline{t \gets t + 1}.
            \end{enumerate}            
            \item Return \inline{\brcu{\col{v}}_{v \in V}}.
        \end{enumerate}
        \end{mdframed}    
    }
    \end{parbox}    
    \caption{First-fit greedy coloring of \inline{\overline{G}} using a DCC representation of \inline{G}.}
    \label{fig:algo_coloring_barG}
\end{figure}

\begin{lemma}
\label{lemma:coloring_ff_barG}
Let \inline{\dcc{G}=\br{\C, \Lcal}} be a DCC representation of a graph \inline{G=\br{V,E}} with clique number \inline{\omega}, and let \inline{\pi} be an ordering of \inline{V}. Algorithm~\aref{\textsc{First-Fit-Complement-Coloring}}{fig:algo_coloring_barG} returns the first-fit coloring of \inline{\overline{G}} with respect to \inline{\pi} using \inline{\bigO{\omega \cdot \size{\dcc{G}}}} time and \inline{\bigO{\size{\dcc{G}}}} space.
\end{lemma}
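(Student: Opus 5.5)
We prove correctness with a loop invariant maintained across iterations of Step~5, and bound the running time the same way as in \lemref{lemma:coloring_ff}. The invariant at the start of the iteration processing $v$ (timestamp $t$) is that $\tsize{z} = \card{K_z}$ for every color $z \le q$, where $K_z$ is the current color class, and that every processed vertex has received the first-fit color for $\overline{G}$ with respect to $\pi$. Since $t$ strictly increases, no entry $\seen{\cdot}$ or $\flag{\cdot}$ equals $t$ when the iteration begins. Consequently, the first time a color $z$ is touched in this iteration, $\rsize{z}$ is reset to $\tsize{z}$; this is the lazy initialization.

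The key step is an exact counting claim. After Step~5(b) finishes, for every color $z$ that was touched, $\rsize{z} = \card{K_z} - \card{K_z \cap N(v)}$. Every already-colored neighbor $u$ of $v$ lies in some $C_\ell$ with $\ell \in L_v$, because $\C$ covers the edge $\brcu{u,v}$, so $u$ is encountered during the scan. The $\seen{\cdot}$ check ensures that $u$ is counted only once, even if it shares several cliques with $v$. Conversely, every vertex counted is in $N(v)$, because $v \notin K_z$ (it is still uncolored) and $v$ is the only non-neighbor that could appear in these cliques. Because $\rsize{z}$ only decreases, it equals $0$ at some moment exactly when $K_z \subseteq N(v)$. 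In that case $z$ is feasible for $\overline{G}$, since $v$ has no $\overline{G}$-neighbor in $K_z$.

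Colors that are never touched have $K_z \cap N(v) = \emptyset$ and $K_z \neq \emptyset$, because every color $z \le q$ has been used. Such colors are therefore infeasible, and the algorithm correctly never selects them. Hence $p$ ends as the smallest feasible color, or as $q+1$ if no color is feasible. This is exactly the first-fit rule, and Step~5(d) then restores the invariant on $\tsize{\cdot}$. The one subtle point, which I expect to be the main obstacle to state cleanly, is this asymmetry: feasibility is certified only by touching a color, and untouched colors are automatically infeasible. The argument needs both that $\tsize{z} \ge 1$ for all $z \le q$ and that all neighbors are found through the cover.

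For the running time, the work in Step~5(b) for a vertex $v$ is $\bigO{\sum_{\ell \in L_v} \card{C_\ell}}$, because every inner step takes constant time. Summing over all $v$ and using $\card{C_\ell} \le \omega$ gives $\bigO{\omega \cdot \size{\Lcal}}$, exactly as in \lemref{lemma:coloring_ff}. All other steps cost $\bigO{\card{V} + \size{\dcc{G}}}$. The working space consists of arrays of length $\card{V}$ indexed by vertices and colors, for $\bigO{\size{\dcc{G}}}$ space in total.
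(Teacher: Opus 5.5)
Your proposal is correct and follows essentially the same argument as the paper. The paper also relies on $\rsize{\cdot}$ being lazily reset to $\tsize{z}$ and decremented once per distinct encountered vertex, and on the fact that a non-neighbor in $K_z$ is never encountered, so $z$ is never certified. Its time bound is likewise $\bigO{\sum_{v}\sum_{\ell\in L_v}\card{C_\ell}} \subseteq \bigO{\omega\cdot\size{\dcc{G}}}$. Where you go further is the first-fit minimality direction, which the paper only asserts as straightforward: you show that $\rsize{z}$ reaches $0$ exactly when $K_z \subseteq N(v)$, using the cover property to find every colored neighbor, and that untouched colors $z\le q$ are nonempty and therefore infeasible.
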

\begin{proof}
Consider any edge \inline{\brcu{x,y}} of \inline{\overline{G}}, and WLOG assume \inline{x} appears earlier than \inline{y} in \inline{\pi}.
When \inline{y} is processed at timestamp \inline{t}, the vertex \inline{x} is already colored. Let \inline{z=\col{x}}. Since \inline{\brcu{x,y}} is an edge of \inline{\overline{G}}, there is no clique \inline{C_\ell \in \C} with \inline{\brcu{x,y} \subseteq C_\ell}, hence \inline{x \not\in C_\ell} for every \inline{\ell \in L_y}. Therefore, \inline{x} is never encountered while scanning the cliques of \inline{y}.

During this scan, the counter \inline{\rsize{z}} is initialized to \inline{\tsize{z}} upon the first encounter of any vertex of color \inline{z}, and then decremented once for each distinct encountered vertex of color \inline{z} (distinctness is enforced by \inline{\seen{\cdot}}). Since \inline{x} is not encountered, \inline{\rsize{z}} can be decremented at most \inline{\tsize{z} - 1} times, and thus \inline{\rsize{z}} remains positive and \inline{z} is never feasible for \inline{y}. Hence \inline{\col{y} \ne \col{x}}. Since this holds for every edge, the algorithm returns a proper coloring. It is straightforward to verify that the algorithm implements the first-fit rule with respect to \inline{\pi}.

For running time, the scan work is
\[\bigO{\sum_{v \in V} \sum_{\ell \in L_v} \card{C_\ell}} = \bigO{\sum_{C_\ell \in \C} \card{C_\ell}^2} \subseteq \bigO{\omega\sum_{C_\ell \in \C} \card{C_\ell}} = \bigO{\omega \cdot \size{\dcc{G}}},\]
and all other operations are constant time per scanned entry. The stored arrays are on vertices and colors (at most \inline{\card{V}}), so the space use is \inline{\bigO{\size{\dcc{G}}}}.
\end{proof}

\subsection{Deferred Proofs}
\label{subsec:deferred_pfs_apps}

\LemmaBFSinv*
\begin{proof}
The algorithm maintains the following invariants. At the start of every iteration of the while-loop, if \inline{Q=\langle v_1, \dots, v_r\rangle}, then:
\begin{enumerate}
    \item \inline{\dist{v_i} \leq \dist{v_{i+1}}}, for all \inline{i \in \brcu{1,\dots,  r-1}}, and
    \item \inline{\dist{v_r} \leq \dist{v_1} + 1}.
\end{enumerate}

If invariant (1) holds at the start of every iteration, then the claim follows immediately, because each dequeue operation removes the front element \inline{v_1} and the queue is always ordered by nondecreasing \inline{\dist{\cdot}}.

We prove invariants (1) and (2) by induction over iterations. Initially \inline{Q=\langle s \rangle}, so both invariants hold. Now assume they hold at the start of some iteration with \inline{Q=\langle v_1, \dots, v_r\rangle}, and let \inline{d:=\dist{v_1}}. The algorithm dequeues \inline{v_1}, leaving \inline{Q=\langle v_2, \dots, v_r\rangle}. During this iteration, the only vertices that may be enqueued are discovered in Step~5(b)(ii). Each such vertex \inline{u} is appended to the back of the queue with \inline{\dist{u} = \dist{v_1} + 1 = d+1}.

For invariant (1), removing the first element from a nondecreasing sequence preserves the nondecreasing order of the remaining subsequence \inline{\langle v_2, \dots, v_r\rangle}. If \inline{r>1}, then by invariant (2) we have \inline{\dist{v_r} \leq d+1}, and by invariant (1) every vertex remaining in the queue has distance at most \inline{\dist{v_r}}. Hence every remaining queued vertex has distance at most \inline{d+1}. Appending only vertices of distance exactly \inline{d+1} therefore preserves the nondecreasing order, so invariant (1) continues to hold.

For invariant (2), consider the queue \inline{Q^\prime} at the start of the next iteration. If \inline{Q^\prime} is empty, then (2) holds vacuously. Otherwise, let \inline{f} be the front element of \inline{Q^\prime} and let \inline{t} be the last element of \inline{Q^\prime}. We show that \inline{\dist{t} \leq \dist{f} + 1 }.

If at least one vertex is enqueued in the current iteration, then \inline{t} is newly enqueued and \inline{\dist{t} = d+1}. The front element \inline{f} has distance either \inline{d} or \inline{d+1}. In both cases, \inline{d+1 \leq \dist{f} + 1}, so \inline{\dist{t} \leq \dist{f} + 1}.

If no vertex is enqueued, then \inline{Q^\prime = \langle v_2, \dots, v_r \rangle}. If \inline{r=1}, then \inline{Q^\prime} is empty. Otherwise, \inline{f=v_2} and \inline{t = v_r}. By invariant (1), \inline{\dist{v_2} \geq \dist{v_1}=d}, and by invariant (2), \inline{\dist{v_r} \leq d+1}. Therefore, \inline{\dist{t} =\dist{v_r} \leq d+1 \leq \dist{v_2}+1 =\dist{f}+1}.

Thus invariants (1) and (2) hold at the start of every iteration. In particular, invariant (1) implies that vertices are dequeued in nondecreasing values of \inline{\dist{\cdot}}.
\end{proof}

\LemmaBFSaux*
\begin{proof}
Let \inline{s=v_0, \dots, v_k=v} be a shortest path. We claim that each \inline{v_i} is eventually enqueued. Clearly \inline{v_0=s} is enqueued initially. Moreover, every enqueued vertex is eventually dequeued, since the queue is FIFO and the while-loop runs until \inline{Q} is empty. Fix \inline{i\geq 1}. Because \inline{\C} covers all edges and \inline{\brcu{v_{i-1}, v_i} \in E}, there exists a clique \inline{C_\ell} with \inline{\brcu{v_i, v_{i-1}} \subseteq C_\ell}, equivalently \inline{\ell \in L_{v_{i-1}}}. When \inline{v_{i-1}} is dequeued, either \inline{J_\ell=0} and the algorithm scans \inline{C_\ell} in that iteration, or \inline{J_\ell=1} and \inline{C_\ell} was scanned earlier. In either case, during the (unique) scan of \inline{C_\ell} the algorithm inspects \inline{v_i} and enqueues it if it is undiscovered. Hence \inline{v_i} is enqueued, which implies \inline{v=v_k} is eventually enqueued.

Whenever the algorithm discovers \inline{u} from \inline{v} through a clique \inline{C_\ell} with \inline{\ell \in L_v}, we have \inline{u,v \in C_\ell}, hence \inline{\brcu{u,v} \in E}. Thus \inline{\dist{u} = \dist{v} + 1} is the length of a valid \inline{s}--\inline{u} path (follow parents to \inline{s}). Hence, \inline{\td{u} \leq \dist{u}} for every discovered \inline{u}.
\end{proof}
\section{Deferred Empirical Details}
\label{sec:app_evals}

\subsection{Detailed Datasets}
\label{subsec:apndix_datasets}

\begin{table}[h]
  \centering
  \caption{A description of the datasets}
  \begin{subtable}{0.48\textwidth}
    \centering
    \caption{Part 1. From top to bottom, the four groups of six graphs are referred to as the \emph{SS graphs}, the \emph{BN graphs}, the \emph{SS-L graphs}, and the \emph{BN-L graphs}, respectively.}
    {\footnotesize
    \begin{tabular}{lrrr}
    \toprule
    \multirow{2}{*}{Graph} & \# of Vertices  & \# of Edges     & \multirow{2}{*}{Density} \\
                           & (thousands) & (millions) & \\
    \midrule    
human\_gene1     & 21.89   & 12.32           & 5.14E-02 \\
nd24k            & 72.00   & 14.32           & 5.53E-03 \\
mouse\_gene      & 43.13   & 14.46           & 1.56E-02 \\
coPapersCiteseer & 434.10  & 16.04           & 1.70E-04 \\
RM07R            & 272.64  & 19.91           & 5.36E-04 \\
Emilia\_923      & 923.14  & 20.04           & 4.70E-05 \\
                 &         &                 &          \\
bn\_16m          & 177.58  & 15.67           & 9.94E-04 \\
bn\_41m          & 629.00  & 40.70           & 2.06E-04 \\
bn\_45m          & 695.43  & 45.36           & 1.88E-04 \\
bn\_64m          & 277.35  & 64.35           & 1.67E-03 \\
bn\_66m          & 317.25  & 65.58           & 1.30E-03 \\
bn\_78m          & 329.52  & 77.78           & 1.43E-03 \\
                 &         &                 &          \\
Serena           & 1391.35 & 31.57           & 3.26E-05 \\
audikw\_1        & 943.70  & 38.35           & 8.61E-05 \\
dielFilterV3real & 1102.82 & 44.10           & 7.25E-05 \\
hollywood-2009   & 1107.24 & 56.38           & 9.20E-05 \\
HV15R            & 2017.17 & 162.36          & 7.98E-05 \\
Queen\_4147      & 4147.11 & 162.68          & 1.89E-05 \\
                 &         &                 &          \\
bn\_79m          & 428.84  & 79.11           & 8.60E-04 \\
bn\_103m         & 701.15  & 103.13          & 4.20E-04 \\
bn\_132m         & 742.86  & 131.93          & 4.78E-04 \\
bn\_171m         & 728.87  & 171.23          & 6.45E-04 \\
bn\_210m         & 753.91  & 209.98          & 7.39E-04 \\
bn\_268m         & 784.26  & 267.84          & 8.71E-04 \\
    \bottomrule
    \end{tabular}
    }
   \label{tab:datasets_p1} 
  \end{subtable}
  \hfill
  \begin{subtable}{0.48\textwidth}
    \centering
    \caption{Part 2. From top to bottom, the four groups of six graphs are the \emph{ER graphs}, the \emph{BA graphs}, the \emph{UA graphs}, and the \emph{ER-D graphs}, respectively.}
    {\footnotesize
    \begin{tabular}{lrrr}
    \toprule
    \multirow{2}{*}{Graph} & \# of Vertices  & \# of Edges     & \multirow{2}{*}{Density} \\
                           & (thousands) & (millions) & \\
    \midrule
    er\_10    & 33 & 54.45    & 1.00E-01 \\
er\_30    & 33 & 163.35   & 3.00E-01 \\
er\_50    & 33 & 272.24   & 5.00E-01 \\
er\_70    & 33 & 381.14   & 7.00E-01 \\
er\_90    & 33 & 490.04   & 9.00E-01 \\
er\_95    & 33 & 517.26   & 9.50E-01 \\
          &    &          &          \\
ba\_2k    & 33 & 64.00    & 1.18E-01 \\
ba\_4k    & 33 & 124.00   & 2.28E-01 \\
ba\_6k    & 33 & 180.00   & 3.31E-01 \\
ba\_8k    & 33 & 232.00   & 4.26E-01 \\
ba\_10k   & 33 & 280.00   & 5.14E-01 \\
ba\_12k   & 33 & 323.99   & 5.95E-01 \\
          &    &          &          \\
ua\_2k    & 33 & 64.00    & 1.18E-01 \\
ua\_4k    & 33 & 124.00   & 2.28E-01 \\
ua\_6k    & 33 & 180.00   & 3.31E-01 \\
ua\_8k    & 33 & 232.00   & 4.26E-01 \\
ua\_10k   & 33 & 280.00   & 5.14E-01 \\
ua\_12k   & 33 & 323.99   & 5.95E-01 \\
          &    &          &          \\
er\_22k   & 22 & 229.89   & 9.50E-01 \\
er\_33k   & 33 & 517.26   & 9.50E-01 \\
er\_40k   & 40 & 759.99   & 9.50E-01 \\
er\_47k   & 47 & 1049.26  & 9.50E-01 \\
er\_66k   & 66 & 2069.07  & 9.50E-01 \\
er\_93k   & 93 & 4108.24  & 9.50E-01 \\
    \bottomrule
    \end{tabular}
    }
    \label{tab:datasets_p2}
  \end{subtable}
\end{table}

\hyref{Table}{tab:datasets_p1} and \hyref{Table}{tab:datasets_p2} describe the 48 graphs in our datasets.
In \hyref{Table}{tab:datasets_p1}, from top to bottom, the four groups of six graphs are referred to as the \emph{SS graphs}, the \emph{BN graphs}, the \emph{SS-L graphs}, and the \emph{BN-L graphs}, respectively.
In \hyref{Table}{tab:datasets_p2}, from top to bottom, the four groups of six graphs are the \emph{ER graphs}, the \emph{BA graphs}, the \emph{UA graphs}, and the \emph{ER-D graphs}, respectively.

The SS and SS-L collections consist of real-world graphs from the SuiteSparse Matrix Collection~\cite{davis2011university}. The BN and BN-L collections consist of brain networks from the Network Data Repository~\cite{nr}. The BA graphs are generated using the \emph{Barabási–Albert (BA) model} \cite{albert2002statistical}. 
The UA graphs are generated using the \emph{uniform attachment (UA) model} \cite{pekoz2013total}, which retains only the growth component of the BA model. 
The ER graphs and the ER-D graphs are generated using the $G(n,p)$ variant of the \Erdos–\Renyi \ random graph model \cite{erdos1960evolution}.

For the SS, SS-L, BN, BN-L collections, the reported number of vertices excludes isolated vertices, and the number of edges excludes self-loops. 
For the BA graphs and the UA graphs, $xk$ in $BA\_xk$ or $UA\_xk$ denotes the number of edges added for each new vertex. The initial seed graphs used to generate the BA and the UA graphs are chosen to be cliques of order $xk$. 
The graphs in the ER collection labeled $er\_x$ are generated as $G(n,p)$ graphs using edge probability $p=x/100$ and fixed vertex count $n=33000$. The graphs in the ER-D collection labeled $er\_nk$ are generated as $G(n,p)$ graphs with fixed edge probability $p=0.95$.

\subsection{Detailed Construction Quality and Time}
\label{subsec:apndix_evals_con}

\hyref{Section}{subsec:evals_con} summarizes performance results of five DCC construction algorithms on 24 sparse graphs and provides detailed breakdowns for the 12 graphs in the SS and SS-L collections. The remaining 12 sparse graphs belong to the BN and BN-L collections, whose breakdowns are shown in \fref{fig:cons_evals_bn}. The performance breakdowns for the 24 dense graphs are shown in \fref{fig:cons_evals_cr_dense} and \fref{fig:cons_evals_cr_dense2}. 

\begin{figure}[H]
\centering
\begin{tikzpicture}
\begin{groupplot}[
    group style={group size=2 by 2, horizontal sep=1.0cm, vertical sep=0.4cm},
    width=0.47\textwidth,
    height=0.28\textwidth,
    xtick=data,
    grid=both,
    major grid style={gray!25},
    minor grid style={gray!15},
    tick label style={font=\scriptsize},
    label style={font=\small},
    title style={font=\small},
    xticklabel style={rotate=45, anchor=east, font=\scriptsize},
    LP/.style={only marks, mark=square*,   mark size=3.0pt, draw=BrickRed,   fill=BrickRed},
    SP/.style={only marks, mark=triangle*, mark size=3.0pt, draw=blue,       fill=blue},
    GA/.style={only marks, mark=*,         mark size=3.0pt, draw=OliveGreen, fill=OliveGreen},
    LA/.style={only marks, mark=diamond*,  mark size=3.0pt, draw=Sepia,      fill=Sepia},
    PL/.style={only marks, mark=pentagon*, mark size=3.0pt, draw=brown,      fill=brown}
]

\nextgroupplot[
    title={BN graphs},
    ymode=log,
    log basis y={2},
    ymin=1, ymax=16,
    ytick={1,2,4,8,16},
    ylabel={Compression Ratio},
    symbolic x coords={
        bn\_178k\_16m,
        bn\_629k\_41m,
        bn\_695k\_45m,
        bn\_277k\_64m,
        bn\_317k\_66m,
        bn\_330k\_78m
    },
    xticklabels={,,,,,},
    legend columns=5,
    legend style={
        font=\small,
        draw=none,
        at={(1.08,1.24)},
        anchor=south,
        /tikz/every even column/.append style={column sep=0.5em}
    }
]
\addplot[LP] table[x=Graph,y=LP] {\srbn};
\addlegendentry{LP}
\addplot[SP] table[x=Graph,y=SP] {\srbn};
\addlegendentry{SP}
\addplot[GA] table[x=Graph,y=GA] {\srbn};
\addlegendentry{GA}
\addplot[LA] table[x=Graph,y=LA] {\srbn};
\addlegendentry{LA}
\addplot[PL] table[x=Graph,y=PL] {\srbn};
\addlegendentry{PL}

\nextgroupplot[
    title={BN-L graphs},
    ymode=log,
    log basis y={2},
    ymin=1, ymax=16,
    ytick={1,2,4,8,16},
    symbolic x coords={
        bn\_429k\_79m,
        bn\_701k\_103m,
        bn\_743k\_132m,
        bn\_729k\_171m,
        bn\_754k\_210m,
        bn\_784k\_268m
    },
    xticklabels={,,,,,}
]
\addplot[LP] table[x=Graph,y=LP] {\srbnl};
\addplot[SP] table[x=Graph,y=SP] {\srbnl};
\addplot[GA] table[x=Graph,y=GA] {\srbnl};
\addplot[LA] table[x=Graph,y=LA] {\srbnl};
\addplot[PL] table[x=Graph,y=PL] {\srbnl};

\nextgroupplot[
    ymode=log,
    log basis y={10},
    ymin=1, ymax=10000,
    ytick={1,10,100,1000,10000},
    ylabel={Time (seconds)},
    symbolic x coords={
        bn\_178k\_16m,
        bn\_629k\_41m,
        bn\_695k\_45m,
        bn\_277k\_64m,
        bn\_317k\_66m,
        bn\_330k\_78m
    },
    xticklabels={
        bn\_16m,
        bn\_41m,
        bn\_45m,
        bn\_64m,
        bn\_66m,
        bn\_78m
    }
]
\addplot[LP] table[x=Graph,y=LP] {\contimebl};
\addplot[SP] table[x=Graph,y=SP] {\contimebl};
\addplot[GA] table[x=Graph,y=GA] {\contimebl};
\addplot[LA] table[x=Graph,y=LA] {\contimebl};
\addplot[PL] table[x=Graph,y=PL] {\contimebl};

\nextgroupplot[
    ymode=log,
    log basis y={10},
    ymin=1, ymax=10000,
    ytick={1,10,100,1000,10000},
    symbolic x coords={
        bn\_429k\_79m,
        bn\_701k\_103m,
        bn\_743k\_132m,
        bn\_729k\_171m,
        bn\_754k\_210m,
        bn\_784k\_268m
    },
    xticklabels={
        bn\_79m,
        bn\_103m,
        bn\_132m,
        bn\_171m,
        bn\_210m,
        bn\_268m
    }
]
\addplot[LP] table[x=Graph,y=LP] {\contimebnl};
\addplot[SP] table[x=Graph,y=SP] {\contimebnl};
\addplot[GA] table[x=Graph,y=GA] {\contimebnl};
\addplot[LA] table[x=Graph,y=LA] {\contimebnl};
\addplot[PL] table[x=Graph,y=PL] {\contimebnl};

\end{groupplot}
\end{tikzpicture}
\caption{Clique-cover compression ratio \inline{2m/\size{\C}} and construction time for the algorithms in \hyref{Table}{tab:summary_cons} on the BN and BN-L graph collections. Both y-axes are logarithmic.}
\label{fig:cons_evals_bn}
\end{figure}

\pgfplotstableread{
Graph	LP	GA	PL
er\_33k\_10    	1.099753771	2.618851891	2.415718296
er\_33k\_30    	1.462720447	3.891854531	3.308760394
er\_33k\_50    	1.750626567	5.430962675	4.226540782
er\_33k\_70    	1.893735951	8.199917465	5.654443763
er\_33k\_90    	1.974134957	18.20921176	9.77012873
er\_33k\_95    	1.992294747	29.87251532	13.27903738
}\crer

\pgfplotstableread{
Graph	LP	GA	PL            
ba\_33k\_2k    	1.378280643	3.111391551	2.156335267
ba\_33k\_4k    	1.619689863	4.146607131	2.331857723
ba\_33k\_6k    	1.830388569	5.171420507	2.457725948
ba\_33k\_8k    	2.027218074	6.289972516	2.576087528
ba\_33k\_10k   	2.21691672	7.569695874	2.701853494
ba\_33k\_12k   	2.410815135	9.089844509	2.843492274
}\crba

\pgfplotstableread{
Graph	LP	GA	PL
ua\_33k\_2k    	1.321778878	3.053623929	2.274416393
ua\_33k\_4k    	1.599092258	4.075495432	2.443004668
ua\_33k\_6k    	1.841013396	5.117313889	2.555374541
ua\_33k\_8k    	2.053258696	6.277538864	2.660351708
ua\_33k\_10k   	2.247986282	7.627929438	2.773356245
ua\_33k\_12k   	2.438703804	9.244949824	2.904362298
}\crua

\pgfplotstableread{
Graph	LP	GA	PL            
er\_22k\_95    	1.994564837	29.2303409	12.93941714
er\_33k\_95    	1.992294747	29.87251532	13.27903738
er\_40k\_95    	1.991586497	30.0982723	13.36872542
er\_47k\_95    	1.991005506	30.29646234	13.38543842
er\_66k\_95    	1.990226575	30.67735458	13.56415764
er\_93k\_95    	1.989724191	31.0251326	13.6165621
}\crerd
\pgfplotstableread{
Graph	LP	GA	PL
er\_33k\_10    	14.2913	1319.53	555.483
er\_33k\_30    	33.9884	11197.8	5753.66
er\_33k\_50    	36.4548	32233.9	14355.9
er\_33k\_70    	37.1561	53096.2	25960.4
er\_33k\_90    	38.1982	49206.3	32130.5
er\_33k\_95    	38.5249	27860.8	30887.1
}\contimeer

\pgfplotstableread{
Graph	LP	GA	PL
ba\_33k\_2k    	10.4543	2030.64	1027.76
ba\_33k\_4k    	15.6957	6165.57	3509.96
ba\_33k\_6k    	17.5856	11067.7	7384.04
ba\_33k\_8k    	17.7172	14198.2	12293.4
ba\_33k\_10k   	16.5116	20114.7	17614.5
ba\_33k\_12k   	14.5688	23195.3	28622.7
}\contimeba

\pgfplotstableread{
Graph	LP	GA	PL
ua\_33k\_2k    	12.2308	1955.43	949.76
ua\_33k\_4k    	16.5963	6457.55	3342.71
ua\_33k\_6k    	18.3456	12686.5	7230.87
ua\_33k\_8k    	18.5025	17969.9	11615.8
ua\_33k\_10k   	16.9627	24350.9	16942.8
ua\_33k\_12k   	14.8975	23280.6	22405.8
}\contimeua

\pgfplotstableread{
Graph	LP	GA	PL
er\_22k\_95    	16.6938	8497.07	9230.5
er\_33k\_95    	38.5249	27860.8	30887.1
er\_40k\_95    	57.5648	59100.6	55225.4
er\_47k\_95    	81.6472	102804	88785.6
er\_66k\_95    	162.88	308106	244306
er\_93k\_95    	369.28	974505	710559
}\contimeerd

\begin{figure}[H]
\centering
\begin{tikzpicture}
\begin{groupplot}[
    group style={group size=2 by 2, horizontal sep=0.8cm, vertical sep=0.5cm},
    width=0.47\textwidth,
    height=0.28\textwidth,
    xtick=data,
    grid=both,
    major grid style={gray!25},
    minor grid style={gray!15},
    tick label style={font=\scriptsize},
    label style={font=\small},
    title style={font=\small},
    xticklabel style={rotate=45, anchor=east, font=\scriptsize},
    LP/.style={only marks, mark=square*,   mark size=3.0pt, draw=BrickRed,   fill=BrickRed},
    GA/.style={only marks, mark=*,         mark size=3.0pt, draw=OliveGreen, fill=OliveGreen},
    PL/.style={only marks, mark=pentagon*, mark size=3.0pt, draw=brown,      fill=brown}
]

\nextgroupplot[
    title={ER graphs},
    ymode=log,
    log basis y={2},
    ymin=1, ymax=64,
    ytick={1,2,4,8,16,32,64},
    ylabel={Compression Ratio},
    symbolic x coords={
        er\_33k\_10,
        er\_33k\_30,
        er\_33k\_50,
        er\_33k\_70,
        er\_33k\_90,
        er\_33k\_95
    },
    xticklabels={,,,,,},
    legend columns=3,
    legend style={
        font=\small,
        draw=none,
        at={(1.0,1.24)},
        anchor=south,
        /tikz/every even column/.append style={column sep=0.8em}
    }
]
\addplot[LP] table[x=Graph,y=LP] {\crer};
\addlegendentry{LP}
\addplot[GA] table[x=Graph,y=GA] {\crer};
\addlegendentry{GA}
\addplot[PL] table[x=Graph,y=PL] {\crer};
\addlegendentry{PL}

\nextgroupplot[
    title={BA graphs},
    ymode=log,
    log basis y={2},
    ymin=1, ymax=16,
    ytick={1,2,4,8,16,32,64},
    symbolic x coords={
        ba\_33k\_2k,
        ba\_33k\_4k,
        ba\_33k\_6k,
        ba\_33k\_8k,
        ba\_33k\_10k,
        ba\_33k\_12k
    },
    xticklabels={,,,,,}
]
\addplot[LP] table[x=Graph,y=LP] {\crba};
\addplot[GA] table[x=Graph,y=GA] {\crba};
\addplot[PL] table[x=Graph,y=PL] {\crba};

\nextgroupplot[
    ymode=log,
    log basis y={10},
    ymin=5, ymax=100000,
    ytick={10,100,1000,10000,100000},
    ylabel={Time (seconds)},
    symbolic x coords={
        er\_33k\_10,
        er\_33k\_30,
        er\_33k\_50,
        er\_33k\_70,
        er\_33k\_90,
        er\_33k\_95
    },
    xticklabels={
        er\_10,
        er\_30,
        er\_50,
        er\_70,
        er\_90,
        er\_95
    }
]
\addplot[LP] table[x=Graph,y=LP] {\contimeer};
\addplot[GA] table[x=Graph,y=GA] {\contimeer};
\addplot[PL] table[x=Graph,y=PL] {\contimeer};

\nextgroupplot[
    ymode=log,
    log basis y={10},
    ymin=5, ymax=100000,
    ytick={10,100,1000,10000,100000},
    symbolic x coords={
        ba\_33k\_2k,
        ba\_33k\_4k,
        ba\_33k\_6k,
        ba\_33k\_8k,
        ba\_33k\_10k,
        ba\_33k\_12k
    },
    xticklabels={
        ba\_2k,
        ba\_4k,
        ba\_6k,
        ba\_8k,
        ba\_10k,
        ba\_12k
    }
]
\addplot[LP] table[x=Graph,y=LP] {\contimeba};
\addplot[GA] table[x=Graph,y=GA] {\contimeba};
\addplot[PL] table[x=Graph,y=PL] {\contimeba};

\end{groupplot}
\end{tikzpicture}
\caption{Clique-cover compression ratio \inline{2m/\size{\C}} and construction time for three algorithms in \hyref{Table}{tab:summary_cons} on the ER and BA graph collections. Both y-axes are logarithmic.}
\label{fig:cons_evals_cr_dense}
\end{figure}

\begin{figure}[H]
\centering
\begin{tikzpicture}
\begin{groupplot}[
    group style={group size=2 by 2, horizontal sep=0.8cm, vertical sep=0.5cm},
    width=0.47\textwidth,
    height=0.28\textwidth,
    xtick=data,
    grid=both,
    major grid style={gray!25},
    minor grid style={gray!15},
    tick label style={font=\scriptsize},
    label style={font=\small},
    title style={font=\small},
    xticklabel style={rotate=45, anchor=east, font=\scriptsize},
    LP/.style={only marks, mark=square*,   mark size=3.0pt, draw=BrickRed,   fill=BrickRed},
    GA/.style={only marks, mark=*,         mark size=3.0pt, draw=OliveGreen, fill=OliveGreen},
    PL/.style={only marks, mark=pentagon*, mark size=3.0pt, draw=brown,      fill=brown}
]

\nextgroupplot[
    title={UA graphs},
    ymode=log,
    log basis y={2},
    ymin=1, ymax=16,
    ytick={1,2,4,8,16,32,64},
    ylabel={Compression Ratio},
    symbolic x coords={
        ua\_33k\_2k,
        ua\_33k\_4k,
        ua\_33k\_6k,
        ua\_33k\_8k,
        ua\_33k\_10k,
        ua\_33k\_12k
    },
    xticklabels={,,,,,},
    legend columns=3,
    legend style={
        font=\small,
        draw=none,
        at={(1.0,1.24)},
        anchor=south,
        /tikz/every even column/.append style={column sep=0.8em}
    }
]
\addplot[LP] table[x=Graph,y=LP] {\crua};
\addlegendentry{LP}
\addplot[GA] table[x=Graph,y=GA] {\crua};
\addlegendentry{GA}
\addplot[PL] table[x=Graph,y=PL] {\crua};
\addlegendentry{PL}

\nextgroupplot[
    title={ER-D graphs},
    ymode=log,
    log basis y={2},
    ymin=1, ymax=64,
    ytick={1,2,4,8,16,32,64},
    symbolic x coords={
        er\_22k\_95,
        er\_33k\_95,
        er\_40k\_95,
        er\_47k\_95,
        er\_66k\_95,
        er\_93k\_95
    },
    xticklabels={,,,,,}
]
\addplot[LP] table[x=Graph,y=LP] {\crerd};
\addplot[GA] table[x=Graph,y=GA] {\crerd};
\addplot[PL] table[x=Graph,y=PL] {\crerd};

\nextgroupplot[
    ymode=log,
    log basis y={10},
    ymin=5, ymax=100000,
    ytick={10,100,1000,10000,100000},
    ylabel={Time (seconds)},
    symbolic x coords={
        ua\_33k\_2k,
        ua\_33k\_4k,
        ua\_33k\_6k,
        ua\_33k\_8k,
        ua\_33k\_10k,
        ua\_33k\_12k
    },
    xticklabels={
        ua\_2k,
        ua\_4k,
        ua\_6k,
        ua\_8k,
        ua\_10k,
        ua\_12k
    }
]
\addplot[LP] table[x=Graph,y=LP] {\contimeua};
\addplot[GA] table[x=Graph,y=GA] {\contimeua};
\addplot[PL] table[x=Graph,y=PL] {\contimeua};

\nextgroupplot[
    ymode=log,
    log basis y={10},
    ymin=5, ymax=10000000,
    ytick={10,100, 1000,10000,100000,1000000},
    symbolic x coords={
        er\_22k\_95,
        er\_33k\_95,
        er\_40k\_95,
        er\_47k\_95,
        er\_66k\_95,
        er\_93k\_95
    },
    xticklabels={
        er\_22k,
        er\_33k,
        er\_40k,
        er\_47k,
        er\_66k,
        er\_93k
    }
]
\addplot[LP] table[x=Graph,y=LP] {\contimeerd};
\addplot[GA] table[x=Graph,y=GA] {\contimeerd};
\addplot[PL] table[x=Graph,y=PL] {\contimeerd};

\end{groupplot}
\end{tikzpicture}
\caption{Clique-cover compression ratio \inline{2m/\size{\C}} and construction time for three algorithms in \hyref{Table}{tab:summary_cons} on the UA and ER-D graph collections. Both y-axes are logarithmic.}
\label{fig:cons_evals_cr_dense2}
\end{figure}

On the denser graphs, two construction algorithms 
\aref{\textsc{Succinct-Peeling (SP)}}{fig:algo_dcc_ss} and \aref{\textsc{Local-Admissibility (LA)}}{fig:algo_dcc_ls} were not competitive in runtime with the other three algorithms, namely, \aref{\textsc{Lov{\'a}sz-Peeling (LP)}}{para:x_con}, \aref{\textsc{Global-Admissibility (GA)}}{fig:algo_dcc_as}, and 
\aref{\textsc{Local-Peeling (PL)}}{para:x_con}. In terms of quality, GA consistently produced better covers on all dense graphs. Hence, on dense graphs, we report results only for LP, GA, and PL. 

Across all graph instances, the geometric means of the compression ratios achieved by LP, GA, and PL are 1.9, 8.9, and 4.5, respectively. LP is the fastest algorithm, but it gives poor compression ratios. PL produces substantially smaller compression ratios than GA. Although GA has higher asymptotic runtime complexity than PL, in practice GA runs faster on sparse graphs and remains very competitive with PL on dense graphs.

Overall, GA stands out in cover quality among all algorithms and in runtime among all algorithms except LP. We also found that the space use of GA stays close to linear in \inline{m}, with an empirical constant of about \inline{4} across our datasets. Thus, it is practically competitive in space as well.

We therefore used GA to construct the base representations for our algorithmic applications. Before using these covers in applications, we greedily removed vertex-clique assignments while maintaining edge coverage, making the resulting covers assignment-minimal. This post-processing improved the geometric mean of compression ratios from 8.86 to 9.45 while taking negligible time.

\pgfplotstableread{
Graph	crub	cr
human\_gene1   	605.363134	7.227977527
nd24k          	155.3982878	7.749160862
mouse\_gene    	373.3726214	5.521531387
coPapersCiteseer 	38.80906343	38.46667898
RM07R            	65.39491079	21.47378792
Emilia\_923      	20.80380657	7.309307233
}\sscrub

\pgfplotstableread{
Graph	crub	cr		
Serena           	21.05983408	7.727228636
audikw\_1        	34.56531948	13.96553374
dielFilterV3real 	35.95306868	33.46350088
hollywood-2009 	56.88735588	19.07026276
HV15R          	76.69723107	16.55894022
Queen\_4147    	26.65609557	10.09495568
}\sslcrub

\pgfplotstableread{
Graph	crub	cr		
bn\_429k\_79m  	181.4598506	4.114807332
bn\_701k\_103m 	143.3604006	10.26641355
bn\_743k\_132m 	173.4638098	11.06021016
bn\_729k\_171m 	221.2431018	11.69321693
bn\_754k\_210m 	267.73738	12.29737897
bn\_784k\_268m 	325.1272057	9.196073739		
}\bnlcrub		

\pgfplotstableread{
Graph	crub	cr
er\_33k\_10    	1650.361443	2.686251728
er\_33k\_30    	4949.974182	3.976166536
er\_33k\_50    	8249.758121	5.538302369
er\_33k\_70    	11549.65991	8.352181855
er\_33k\_90    	14849.68779	18.55556292
er\_33k\_95    	15674.64876	30.48542554
}\ercrub
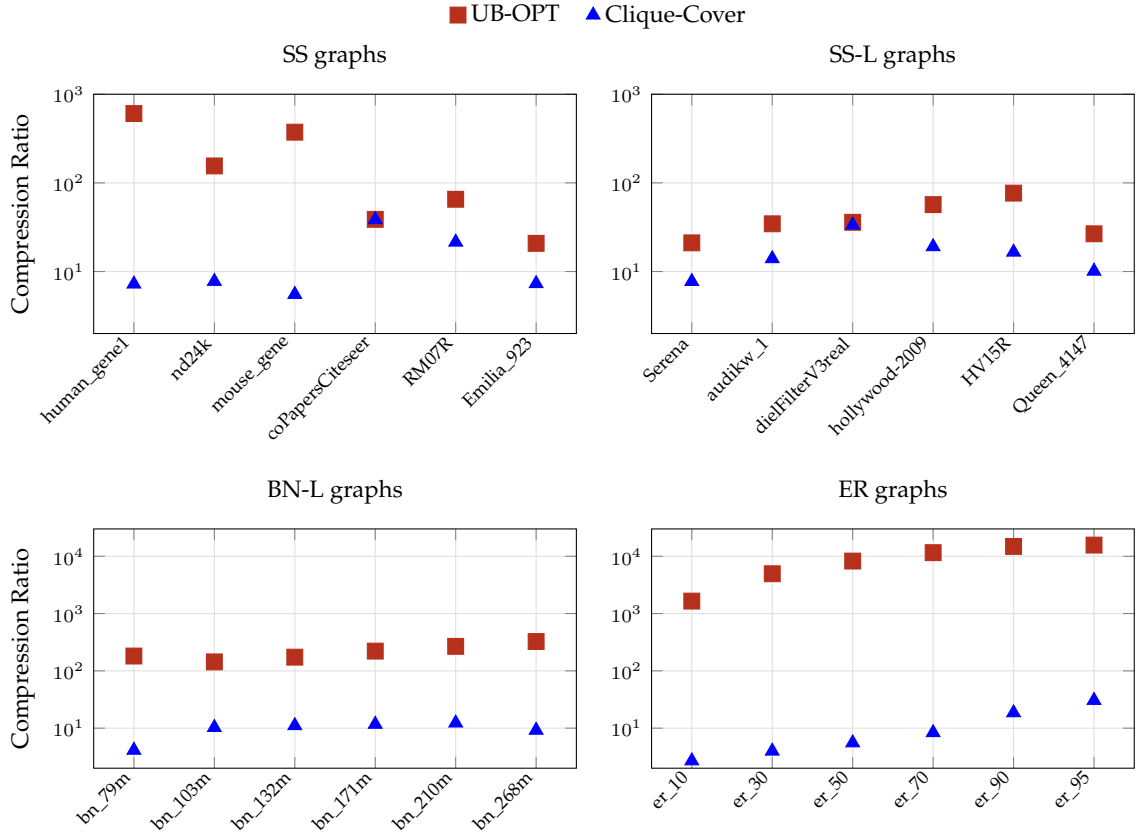
\begin{figure}[h]
\centering
\begin{tikzpicture}
\begin{groupplot}[
    group style={group size=2 by 2, horizontal sep=1.0cm, vertical sep=2.6cm},
    width=0.47\textwidth,
    height=0.28\textwidth,
    xtick=data,
    grid=both,
    major grid style={gray!25},
    minor grid style={gray!15},
    tick label style={font=\scriptsize},
    label style={font=\small},
    title style={font=\small},
    xticklabel style={rotate=45, anchor=east, font=\scriptsize},
    UB/.style={only marks, mark=square*,   mark size=3.0pt, draw=BrickRed, fill=BrickRed},
    CR/.style={only marks, mark=triangle*, mark size=3.0pt, draw=blue,     fill=blue}
]

\nextgroupplot[
    title={SS graphs},
    ymode=log,
    log basis y={10},
    ymin=2, ymax=1000,
    ytick={10,100,1000,10000},
    ylabel={Compression Ratio},
    symbolic x coords={
        human\_gene1,
        nd24k,
        mouse\_gene,
        coPapersCiteseer,
        RM07R,
        Emilia\_923
    },
    xticklabels={
        human\_gene1,
        nd24k,
        mouse\_gene,
        coPapersCiteseer,
        RM07R,
        Emilia\_923
    },
    legend columns=2,
    legend style={
        font=\small,
        draw=none,
        at={(1.05,1.24)},
        anchor=south,
        /tikz/every even column/.append style={column sep=0.8em}
    }
]
\addplot[UB] table[x=Graph,y=crub] {\sscrub};
\addlegendentry{UB-OPT}
\addplot[CR] table[x=Graph,y=cr] {\sscrub};
\addlegendentry{Clique-Cover}

\nextgroupplot[
    title={SS-L graphs},
    ymode=log,
    log basis y={10},
    ymin=2, ymax=1000,
    ytick={10,100,1000,10000},
    symbolic x coords={
        Serena,
        audikw\_1,
        dielFilterV3real,
        hollywood-2009,
        HV15R,
        Queen\_4147
    },
    xticklabels={
        Serena,
        audikw\_1,
        dielFilterV3real,
        hollywood-2009,
        HV15R,
        Queen\_4147
    }
]
\addplot[UB] table[x=Graph,y=crub] {\sslcrub};
\addplot[CR] table[x=Graph,y=cr] {\sslcrub};

\nextgroupplot[
    title={BN-L graphs},
    ymode=log,
    log basis y={10},
    ymin=2, ymax=30000,
    ytick={10,100,1000,10000},
    ylabel={Compression Ratio},
    symbolic x coords={
        bn\_429k\_79m,
        bn\_701k\_103m,
        bn\_743k\_132m,
        bn\_729k\_171m,
        bn\_754k\_210m,
        bn\_784k\_268m
    },
    xticklabels={
        bn\_79m,
        bn\_103m,
        bn\_132m,
        bn\_171m,
        bn\_210m,
        bn\_268m
    }
]
\addplot[UB] table[x=Graph,y=crub] {\bnlcrub};
\addplot[CR] table[x=Graph,y=cr] {\bnlcrub};

\nextgroupplot[
    title={ER graphs},
    ymode=log,
    log basis y={10},
    ymin=2, ymax=30000,
    ytick={10,100,1000,10000},
    symbolic x coords={
        er\_33k\_10,
        er\_33k\_30,
        er\_33k\_50,
        er\_33k\_70,
        er\_33k\_90,
        er\_33k\_95
    },
    xticklabels={
        er\_10,
        er\_30,
        er\_50,
        er\_70,
        er\_90,
        er\_95
    }
]
\addplot[UB] table[x=Graph,y=crub] {\ercrub};
\addplot[CR] table[x=Graph,y=cr] {\ercrub};

\end{groupplot}
\end{tikzpicture}
\caption{Clique-cover compression ratio \inline{2m/\size{\C}} and UB-OPT=\inline{2m/\br{\sum_{v \in V} \lceil \card{N(v)}/\core{v} \rceil}}, an upper bound on the optimal compression ratio. Both y-axes are logarithmic.} 
\label{fig:cons_evals_cr_ub}
\end{figure}

We now compare the compression ratios of the clique covers obtained above with an upper bound on the optimal compression ratio.
In \fref{fig:cons_evals_cr_ub}, we compare the observed compression ratios against \emph{UB-OPT} = \inline{2m/\br{\sum_{v \in V} \lceil \card{N(v)}/\core{v} \rceil}}. Here \inline{\core{v}} is the coreness of \inline{v} (see \hyref{Section}{subsec:kcores} for the definition). This upper bound is tighter than the simpler bound \inline{\min\brcu{2m/n,d}}. To see that UB-OPT is valid, it suffices to show that for every clique cover \inline{\C}, \inline{\size{\C} \geq \sum_{v \in V} \lceil \card{N(v)}/\core{v} \rceil }.

Fix a vertex \inline{v}. Any clique containing \inline{v} has size at most \inline{\core{v} + 1}. Hence each such clique covers at most \inline{\core{v}} edges incident on \inline{v}. Since \inline{\C} covers all \inline{\card{N(v)}} edges incident on \inline{v}, we must have \inline{\card{L_v}\core{v} \geq \card{N(v)}}, that is, \inline{\card{L_v} \geq \lceil \card{N(v)}/\core{v} \rceil}. Summing over all vertices gives \[\size{\C} = \size{\Lcal} = \sum_{v \in V} \card{L_v} \geq \sum_{v \in V} \lceil \card{N(v)}/\core{v} \rceil. \]
 
In \fref{fig:cons_evals_cr_ub}, we observe that on the sparse graphs the achieved compression ratios are much closer to UB-OPT, and for some sparse graphs they are near-optimal. In contrast, UB-OPT is not very informative for the dense graphs. The dense synthetic graphs, especially the ER graphs, are generally difficult to compress. In \inline{G(n,p)} with constant \inline{p}, we have \inline{\omega = \bigTheta{\log n}} with high probability \cite{bollobas1976cliques}. Therefore, \inline{\size{\C} \geq 2m / \br{\omega - 1} = \bigTheta{2m/ \log n}}, so such graphs do not admit compression ratios asymptotically better than \inline{\bigTheta{\log n}}.

\subsection{Detailed Application Performance}
\label{subsec:apndix_evals_app}

\hyref{Section}{subsec:evals_apps} reports collection-wise storage compression ratios for our datasets. The instance-wise ratios are shown in \hyref{Table}{tab:space_details}. Surprisingly, two sparse graphs, namely, coPapersCiteseer, dielFilterV3real, admit better compression ratios than all dense synthetic graphs. Overall, the sparse graphs show better compressibility than the dense synthetic graphs. The dense graphs do not exhibit strong compressibility, but their average compression ratio matches that of the sparse graphs largely because of the ER-D collection. The ER and ER-D graph families are provably hard to compress (see the discussion in \hyref{Appendix}{subsec:apndix_evals_con}).

For algorithm execution, we measure representation-specific memory by accounting for additional data structures needed either to realize the representation in memory or to support efficient execution on it. For adjacency lists, we count the representation-specific memory as \inline{2m+n} units of space. Different DCC-based algorithms use different amounts of representation-specific memory. For example, maximal matching uses only a clique cover, requiring \inline{\size{\C} + \card{\C}} units of space. In contrast, BFS uses \inline{\C}, \inline{\Lcal}, and one indicator \inline{J_\ell} for each clique in \inline{\C}, and therefore requires \inline{\size{\C} + \card{\C} + \size{\Lcal} + \card{V} + \card{\C} = 2\br{\size{\C} + \card{\C}}+ \card{V}} units of space.

Using this accounting, the geometric means of the execution memory ratios for \aref{Connected Components}{fig:algo_ccs_uf}, \aref{BFS}{fig:algo_bfs}, \aref{DFS}{fig:algo_dfs}, \aref{Maximal Matching}{fig:algo_matching},  \aref{First-Fit Coloring}{fig:algo_coloring}, and \aref{$k$-Core-Decomposition}{fig:algo_kcores} are 8.8, 4.35, 3.35, 8.88, 4.48, and 4.39, respectively. Connected components and maximal matching achieve  execution memory ratios of up to \inline{36\times}, while the other applications achieve up to \inline{16\times}.

\begin{table}[H]
  \centering
  \caption{Space used for storing adjacency-list and DCC representations and the corresponding storage compression ratio (SCR). Over all instances the geometric mean of SCR is 9.36.}
  \begin{subtable}{0.48\textwidth}
    \centering
    \caption{Part 1. The SS, BN, SS-L, BN-L graphs.}
    {\footnotesize
    \begin{tabular}{l|rr|r}
    \toprule
    \multirow{2}{*}{Graph} &    \multicolumn{2}{r|}{Space (in MB)} & \multirow{2}{*}{SCR}\\
    &Adj&DCC&\\
    \midrule
    human\_gene1     & 130.56  & 18.10  & 7.21    \\
nd24k            & 159.13  & 20.60  & 7.72    \\
mouse\_gene      & 157.86  & 28.78  & 5.48    \\
coPapersCiteseer & 197.38  & 5.32   & 37.12   \\
RM07R            & 250.75  & 11.67  & 21.48   \\
Emilia\_923      & 263.02  & 35.98  & 7.31    \\
                 &         &        &         \\
bn\_16m          & 185.63  & 78.46  & 2.37    \\
bn\_41m          & 533.91  & 64.46  & 8.28    \\
bn\_45m          & 598.12  & 73.31  & 8.16    \\
bn\_64m          & 808.17  & 188.41 & 4.29    \\
bn\_66m          & 838.41  & 190.97 & 4.39    \\
bn\_78m          & 985.48  & 217.21 & 4.54    \\
                 &         &        &         \\
Serena           & 434.22  & 56.18  & 7.73    \\
audikw\_1        & 493.54  & 35.31  & 13.98   \\
dielFilterV3real & 582.07  & 17.34  & 33.57   \\
hollywood-2009   & 756.04  & 39.62  & 19.08   \\
HV15R            & 2306.21 & 139.30 & 16.56   \\
Queen\_4147      & 2398.88 & 237.62 & 10.10   \\
                 &         &        &         \\
bn\_79m          & 1023.32 & 252.78 & 4.05    \\
bn\_103m         & 1353.77 & 131.99 & 10.26   \\
bn\_132m         & 1736.59 & 157.04 & 11.06   \\
bn\_171m         & 2254.77 & 192.85 & 11.69   \\
bn\_210m         & 2765.62 & 224.87 & 12.30   \\
bn\_268m         & 3525.22 & 384.81 & 9.16    \\
\midrule
\multicolumn{3}{r}{Geometric mean of SCR} & 9.29\\
    \bottomrule
    \end{tabular}
    }
   \label{tab:space_details_p1} 
  \end{subtable}
  \hfill
  \begin{subtable}{0.48\textwidth}
    \centering
    \caption{Part 2. The ER, BA, UA, ER-D graphs.}
    {\footnotesize
    \begin{tabular}{l|rr|r}
    \toprule
    \multirow{2}{*}{Graph} &    \multicolumn{2}{r|}{Space (in MB)} & \multirow{2}{*}{SCR}\\
    &Adj&DCC&\\
    \midrule
    er\_10    & 588.22   & 221.30  & 2.66    \\
er\_30    & 1764.49  & 451.17  & 3.91    \\
er\_50    & 2940.74  & 541.18  & 5.43    \\
er\_70    & 4117.04  & 502.84  & 8.19    \\
er\_90    & 5293.39  & 290.55  & 18.22   \\
er\_95    & 5587.45  & 186.40  & 29.98   \\
          &          &         &         \\
ba\_2k    & 656.94   & 204.42  & 3.21    \\
ba\_4k    & 1284.68  & 297.20  & 4.32    \\
ba\_6k    & 1877.97  & 344.93  & 5.44    \\
ba\_8k    & 2436.28  & 364.34  & 6.69    \\
ba\_10k   & 2959.16  & 364.44  & 8.12    \\
ba\_12k   & 3442.81  & 350.62  & 9.82    \\
          &          &         &         \\
ua\_2k    & 667.25   & 216.93  & 3.08    \\
ua\_4k    & 1299.29  & 317.71  & 4.09    \\
ua\_6k    & 1894.30  & 368.28  & 5.14    \\
ua\_8k    & 2451.88  & 386.50  & 6.34    \\
ua\_10k   & 2971.84  & 382.84  & 7.76    \\
ua\_12k   & 3452.07  & 364.09  & 9.48    \\
          &          &         &         \\
er\_22k   & 2409.50  & 81.96   & 29.40   \\
er\_33k   & 5587.45  & 186.40  & 29.98   \\
er\_40k   & 8294.77  & 274.61  & 30.21   \\
er\_47k   & 11534.80 & 379.30  & 30.41   \\
er\_66k   & 23014.33 & 746.71  & 30.82   \\
er\_93k   & 46078.95 & 1476.34 & 31.21   \\
\midrule
                 \multicolumn{3}{r}{Geometric mean of SCR} & 9.42 \\
    \bottomrule
    \end{tabular}
    }
    \label{tab:space_details_p2}
  \end{subtable}
\label{tab:space_details}  
\end{table}

\hyref{Section}{subsec:evals_apps} also summarizes compute-time and total-time speedup of the six evaluated applications on all graph instances and provides breakdowns of total-time into read-time and compute-time for \aref{Connected Components}{fig:algo_ccs_uf} on the SS and SS-L graph collections. \hyref{Table}{tab:app_time_breakdown} shows these read-time and compute-time breakdowns for all graph instances, and \hyref{Table}{tab:app_compute_speedups} shows collection-wise compute-time summary for the four applications that achieve significant speedups on many graph collections. Notably, in \hyref{Table}{tab:app_time_breakdown}, read-time and compute-time speedups closely track the compression ratios shown in \hyref{Table}{tab:space_details}.

As shown in \hyref{Table}{tab:app_time_breakdown}, all applications consistently achieve significant read-time speedups, and the read-times are consistently an order of magnitude larger than the compute-times. This pattern persists across all applications. Consequently, connected components, BFS, DFS, and maximal matching benefit in three ways: execution memory ratios, compute-time speedup, and total-time speedup. First-fit coloring and \inline{k}-core-decomposition do not achieve compute-time speedup because of the \inline{\omega}-factor overhead in their runtime, but they still benefit in two ways: execution memory ratios and total-time speedup.

\begin{table}[H]
  \centering
  \caption{Total-time (in seconds) and speedup (SP) breakdown for connected components using DCC representation (Algorithm~\aref{\textsc{Connected-Components}}{fig:algo_ccs_uf}). GM-R and GM-C denote GM of read-time SPs and compute-time SPs, respectively. Over all instances GM-R and GM-C are 6.64 and 4.02, respectively.}
  \begin{subtable}{0.48\textwidth}
    \centering
    \caption{Part 1. The SS, BN, SS-L, BN-L graphs.}
    {\footnotesize
    \begin{tabular}{l|rr|rr}
    \toprule
    \multirow{2}{*}{Graph} & \multicolumn{2}{c|}{Read-time} & \multicolumn{2}{c}{Compute-time} \\
                       & Time     & SP     & Time   & SP        \\
    \midrule
human\_gene1           & 0.26          & 5.49          & 0.014           & 2.21           \\
nd24k                  & 0.29          & 5.66          & 0.010           & 3.39           \\
mouse\_gene            & 0.47          & 3.31          & 0.016           & 2.36           \\
coPapersCiteseer       & 0.08          & 26.91         & 0.008           & 6.18           \\
RM07R                  & 0.13          & 17.42         & 0.007           & 6.59           \\
Emilia\_923            & 0.50          & 5.40          & 0.021           & 2.81           \\
                       &               &               &                 &                \\
bn\_16m                & 1.25          & 1.44          & 0.038           & 1.21           \\
bn\_41m                & 0.90          & 5.32          & 0.037           & 3.30           \\
bn\_45m                & 1.00          & 5.34          & 0.041           & 3.25           \\
bn\_64m                & 2.29          & 3.06          & 0.084           & 2.03           \\
bn\_66m                & 2.26          & 3.18          & 0.075           & 2.32           \\
bn\_78m                & 2.59          & 3.25          & 0.100           & 2.05           \\
                       &               &               &                 &                \\
Serena                 & 0.75          & 5.77          & 0.032           & 2.78           \\
audikw\_1              & 0.39          & 12.08         & 0.023           & 4.58           \\
dielFilterV3real       & 0.18          & 30.90         & 0.016           & 7.81           \\
hollywood-2009         & 0.58          & 11.49         & 0.035           & 4.90           \\
HV15R                  & 1.65          & 11.27         & 0.066           & 6.22           \\
Queen\_4147            & 2.36          & 8.39          & 0.143           & 3.06           \\
                       &               &               &                 &                \\
bn\_79m                & 3.11          & 2.87          & 0.126           & 1.68           \\
bn\_103m               & 1.62          & 7.21          & 0.067           & 4.22           \\
bn\_132m               & 1.85          & 10.30         & 0.079           & 4.52           \\
bn\_171m               & 2.20          & 6.60          & 0.090           & 4.97           \\
bn\_210m               & 2.55          & 9.00          & 0.100           & 5.46           \\
bn\_268m               & 4.34          & 6.71          & 0.163           & 4.23           \\
\midrule
                      &    GM-R           & 6.70          &  GM-C             & 3.46    \\
    \bottomrule
    \end{tabular}
    }
   \label{tab:time_details_p1} 
  \end{subtable}
  \hfill
  \begin{subtable}{0.48\textwidth}
    \centering
    \caption{Part 2. The ER, BA, UA, ER-D graphs.}
    {\footnotesize
    \begin{tabular}{l|rr|rr}
    \toprule
    \multirow{2}{*}{Graph} & \multicolumn{2}{c|}{Read-time} & \multicolumn{2}{c}{Compute-time} \\
                       & Time     & SP     & Time   & SP        \\
    \midrule
    er\_10                 & 5.24          & 1.05          & 0.096           & 1.33           \\
er\_30                 & 8.06          & 2.06          & 0.202           & 1.85           \\
er\_50                 & 7.66          & 3.57          & 0.217           & 2.88           \\
er\_70                 & 6.02          & 6.44          & 0.221           & 3.92           \\
er\_90                 & 2.87          & 17.77         & 0.115           & 9.73           \\
er\_95                 & 1.75          & 29.64         & 0.079           & 14.93          \\
                       &               &               &                 &                \\
ba\_2k                 & 4.68          & 1.44          & 0.103           & 1.45           \\
ba\_4k                 & 5.66          & 2.15          & 0.135           & 2.10           \\
ba\_6k                 & 5.82          & 3.13          & 0.163           & 2.51           \\
ba\_8k                 & 5.58          & 4.31          & 0.164           & 3.25           \\
ba\_10k                & 5.13          & 5.46          & 0.160           & 4.01           \\
ba\_12k                & 4.59          & 7.06          & 0.161           & 4.61           \\
                       &               &               &                 &                \\
ua\_2k                 & 4.98          & 1.28          & 0.114           & 1.31           \\
ua\_4k                 & 6.14          & 2.03          & 0.161           & 1.80           \\
ua\_6k                 & 6.34          & 2.77          & 0.164           & 2.51           \\
ua\_8k                 & 5.99          & 3.91          & 0.171           & 3.13           \\
ua\_10k                & 5.46          & 5.15          & 0.165           & 3.86           \\
ua\_12k                & 4.83          & 6.70          & 0.157           & 4.70           \\
                       &               &               &                 &                \\
er\_22k                & 0.80          & 29.40         & 0.034           & 15.44          \\
er\_33k                & 1.75          & 29.64         & 0.079           & 14.93          \\
er\_40k                & 2.58          & 32.73         & 0.111           & 15.78          \\
er\_47k                & 3.53          & 33.42         & 0.149           & 16.13          \\
er\_66k                & 6.84          & 31.22         & 0.295           & 16.08          \\
er\_93k                & 13.42         & 35.87         & 0.622           & 19.39          \\
\midrule
                        & GM-R        & 6.58       & GM-C         & 4.67   \\
    \bottomrule
    \end{tabular}
    }
    \label{tab:time_details_p2}
  \end{subtable}
\label{tab:app_time_breakdown}  
\end{table}

\begin{table}[H]
\centering
\caption{Collection-wise compute-time speedups of DCC applications.}
\begin{tabular}{l|rr|rr|rr|rr}
\toprule
\multirow{2}{*}{Graph Collection}
& \multicolumn{2}{c|}{\aref{Connected Components}{fig:algo_ccs_uf}}
& \multicolumn{2}{c|}{\aref{BFS}{fig:algo_bfs}}
& \multicolumn{2}{c|}{\aref{DFS}{fig:algo_dfs}}
& \multicolumn{2}{c}{\aref{Matching}{fig:algo_matching}} \\
& GM & Max & GM & Max & GM & Max & GM & Max \\
\midrule
The SS Graphs   & 3.56 & 6.59  & 1.92 & 4.88  & 1.41 & 2.58  & 2.40 & 7.95  \\
The BN Graphs   & 2.24 & 3.30  & 0.99 & 1.34  & 0.73 & 1.04  & 1.58 & 2.38  \\
The SS-L Graphs & 4.58 & 7.81  & 2.50 & 4.50  & 1.60 & 2.51  & 5.40 & 9.14  \\
The BN-L Graphs & 3.93 & 5.46  & 1.48 & 1.79  & 1.22 & 1.82  & 2.32 & 2.82  \\
The ER Graphs   & 3.99 & 14.93 & 1.09 & 9.70  & 0.73 & 6.08  & 1.61 & 8.31  \\
The BA Graphs   & 2.78 & 4.61  & 0.72 & 1.95  & 0.41 & 0.79  & 0.94 & 1.77  \\
The UA Graphs   & 2.64 & 4.70  & 1.00 & 2.79  & 0.37 & 0.66  & 1.18 & 2.62  \\
The ER-D Graphs & 16.23 & 19.39 & 9.10 & 10.16 & 6.72 & 10.67 & 8.81 & 11.25 \\
\midrule
Overall         & 4.02 & 19.39 & 1.63 & 10.16 & 1.05 & 10.67 & 2.29 & 11.25 \\
\bottomrule
\end{tabular}
\label{tab:app_compute_speedups}
\end{table}
\section{Additional Related Work and Future Directions}
\label{sec:related}

\subsection{Related Work on Clique Covers}
\label{subsec:related_cc}

While this work focuses on graph representations based on clique covers, it also contributes directly to the clique-cover literature. We therefore briefly review prior work on clique covers, spanning combinatorial formulations, applications, exact and parameterized algorithms, and polynomial-time constructions.

The problem of covering the edges of a graph with cliques is equivalent to several formulations arising in different settings, including intersection representations of graphs \cite{erdos1966representation} and the keyword conflict problem \cite{kou1978covering}. Many variants of this covering problem have been studied, all sharing the edge-covering aspect but differing in their optimization objectives. These include minimizing the number of cliques, minimizing the total number of vertex-clique assignments, and several weighted variants; see \cite{ullah2022computing} and the references therein.

Among these variants, the two most relevant to our work are minimizing the number of cliques and minimizing the total number of vertex-clique assignments, which we refer to as the cardinality-optimal and assignment-optimal objectives, respectively. The cardinality-optimal objective is NP-hard \cite{kou1978covering, orlin1977contentment}, even for planar graphs \cite{chang2001tree} and graphs of maximum degree six \cite{hoover1992complexity}. It is polynomial-time solvable for some restricted graph classes, including graphs of maximum degree five \cite{hoover1992complexity}, chordal graphs \cite{ma1989clique}, line graphs \cite{orlin1977contentment}, and certain generalizations of line graphs \cite{prisner1995clique}. Moreover, it is inapproximable within a factor of $n^\epsilon$, for some $\epsilon > 0$ unless $P = NP$ \cite{lund1994hardness}.

In contrast, the assignment-optimal objective has received much less attention \cite{ennis2012assignment, ullah2021clique, ullah2022computing}. This objective is NP-hard \cite{ullah2022computing}, and its constant-factor approximability remains unresolved; see \hyref{Appendix}{subsect:approx_aopt} for discussion.

Earlier work on clique covers and their applications was surveyed by \cite{pullman2006clique, DBLP:journals/dam/Roberts85}. Exact and parameterized algorithms have since been studied for several clique-cover variants \cite{fomin2025edge, gramm2006data, hevia2023solving, ullah2022computing}. Some of these variants are also motivated by domain-specific applications; for recent examples, see \cite{helling2018constructing, markham2023neuro, ullah2022computing, wen2023w2sat, wen2025hyperplr, zhang2023fixed} and the references therein.

\begin{table}[h]
    \centering    
    \caption{Reinterpretation of prior clique-cover construction algorithms through the minimality notions of this paper. The bounds under \inline{\size{\C}} are tight. Only the bottom two algorithms produce composition-minimal covers, but their size bounds are too large for compact graph representations.}
    \begin{tabular}{clllc}
    \toprule
    \textbf{Minimality} & \textbf{Construction} &
    \boldmath{\inline{\size{\C}}} & \textbf{Runtime} & \textbf{Target Minimality?}\\
    \midrule
    
    \multirow{2}{*}{\textbf{Inclusion}}
    & \cite{kellerman1973determination} & \(\bigO{nm}\) & \(\bigO{nm^2}\) & \multirow{2}{*}{\textbf{\ding{55}}} \\
    & \cite{conte2020large} & \(\bigO{m}\) & \(\bigO{dm}\) &\\
    
    \midrule
    
    \multirow{2}{*}{\textbf{Support}}
    & \cite{kou1978covering} & \(\bigO{nm}\) & \(\bigO{nm^2}\)  & \multirow{2}{*}{\textbf{\ding{55}}}\\
    & \cite{gramm2006data} & \(\bigO{m}\) & \(\bigO{nm}\) &\\
    
    \midrule
    
    \multirow{2}{*}{\textbf{Composition}}
    & \cite{piepho2004algorithm} & \boldmath{\(3^{\bigO{n}}\)} & \(3^{\bigO{n}}\)  & \multirow{2}{*}{\textbf{\ding{51}}}\\
    & \cite{helling2018constructing} & \boldmath{\(\bigO{nm}\)} & \(\bigO{n^4}\) & \\
    \bottomrule
    \end{tabular}
    \label{tab:prior_algos}
\end{table}

We now reinterpret several prior clique-cover construction algorithms through the minimality notions introduced in this paper. \hyref{Table}{tab:prior_algos} summarizes this reinterpretation with six prior construction algorithms most relevant to our setting, that is, they are intended to run fast, without guarantees of optimality. Five of them run in polynomial time, while one has exponential worst-case complexity.

The first construction is due to \cite{kellerman1973determination}, which produces inclusion-minimal covers. This algorithm processes vertices sequentially. For each current vertex, it first tries to insert the vertex into previously created cliques whenever possible, and then creates new cliques to cover the remaining uncovered incident edges. \cite{kou1978covering} added a post-processing step that removes every clique whose edges are all covered by the remaining cliques, thereby making the output support-minimal. 

\cite{gramm2006data} analyzed the construction of \cite{kellerman1973determination} together with the post-processing of \cite{kou1978covering} and improved the runtime. They also improved the size bound, but the resulting covers remain support-minimal. \cite{conte2020large} studied several construction algorithms under a common framework that repeatedly selects an uncovered edge and grows a clique around it. Their constructions ensure that no clique is contained in another, so the resulting covers are inclusion-minimal. 

\cite{piepho2004algorithm} proposed a construction that starts with a single clique containing all vertices and processes the non-edges sequentially. For each non-edge \inline{\brcu{u,v}}, every current clique containing both \inline{u} and \inline{v} is replaced by two copies, one with \inline{u} removed and the other with \inline{v} removed, followed by the deletion of absorbed cliques. This produces composition-minimal covers. On the Moon--Moser family, the construction generates all maximal cliques, resulting in covers with \inline{3^{n/3}} cliques.

\cite{helling2018constructing} constructs a vertex-label assignment in which adjacent vertices share a label and non-adjacent vertices do not. Since each label induces a clique, obtaining clique covers from the label sets is straightforward. Their algorithm works by exploring maximal cliques and assigning the same label to all vertices in such a clique. Consequently, the resulting covers are composition-minimal but their size is \inline{\bigTheta{nm}} in the worst case. 

All size bounds in \hyref{Table}{tab:prior_algos} are tight, meaning that there are explicit families where they are realized. In particular, the tightness of the \inline{\bigO{nm}} bounds follows from the graph family used in the proof of \lemref{lemma:sc_size}. Hence, prior construction algorithms do not have the succinctness property required for compact graph representations.

\subsection{Related Work on Graph Compression}
\label{subsec:related_gc}

Related work on compressed graph representations can be broadly grouped into three categories: (1) compression designed to speed up algorithms, (2) compression designed to support efficient queries on general graphs, and (3) compression designed for specific graph classes. 
Our work is primarily comparable to the first category and only tangentially related to the second, since we focus on representation-aware algorithms rather than black-box use of the DCC query interface. Accordingly, we describe below the works most relevant to the algorithmic applications of DCC representations. For broader coverage of graph compression, we refer the reader to \cite{besta2018survey}.

Feder and Motwani describe a graph-restructuring algorithm that partitions the edges of a bipartite graph into bicliques \cite{FederMotwani1995}; this is referred to as the FM algorithm in subsequent work. The approach can also be applied to general graphs after a natural transformation to a bipartite graph. At a high level, the FM algorithm repeatedly strips one biclique at a time, similar to clique peeling, and replaces each extracted biclique by a star centered at a new intermediate vertex. For a biclique with bipartition \inline{U} and \inline{V}, it introduces a new vertex adjacent to all vertices in \inline{U \cup V} and removes all edges between \inline{U} and \inline{V}. Thus, the biclique is represented using \inline{\card{U}+ \card{V}} edges instead of \inline{\card{U}\cdot \card{V}} edges, while preserving the information required for path- and connectivity-based computations. Feder and Motwani show that for a bipartite graph with \inline{m} edges and \inline{n} vertices on each side, this biclique-stripping procedure produces a restructured graph of size \inline{\bigO{m \br{\log\br{n^2/m}}/\log n}}. This gives a factor of \inline{\log n} savings when \inline{m =\bigTheta{n^2}}. They then show applications of this graph restructuring to speed up algorithms that depend only on path and connectivity information, including augmenting-path-based maximum matching algorithms.

Unlike DCC, FM restructuring has a narrower algorithmic scope. Many of the applications we develop for DCC, such as connected components via union--find, depth-first search, and maximal independent set, are not naturally suited to execution on FM-restructured graphs. Thus, FM is not directly comparable to DCC in terms of breadth of algorithmic applicability. Their size bounds are likewise complementary in strength: DCC has the sharper bound \inline{\bigO{\min\brcu{m, \omega \sigma}}} for dense graphs with \inline{\sigma = \smallO{n}}, whereas the FM bound is not as sharp in any density regime but varies smoothly with graph density.

Recent work suggests that FM-style restructuring, as a compression method, is comparable only to the weakest construction in our evaluation, namely \aref{\textsc{Lov{\'a}sz-Peeling}}{para:x_con}. \cite{chavan2025clique} improve on the design of the FM algorithm and report that even on dense synthetic bipartite graphs, an FM-like approach achieves a maximum compression ratio of only about \inline{3.9\times}. On low-density real-world graphs, they report compression ratios of only about \inline{1.2\times} to \inline{1.9\times}. In contrast, our stronger constructions achieve more than \inline{9\times} compression on average, with a maximum of \inline{37\times}. Taken together, these results suggest that, for general graphs, FM-like approaches are not competitive with DCC as compression methods.

Similar FM-style biclique-to-star compression methods have been proposed in subsequent work, where biclique-like patterns are identified using frequent itemset mining \cite{buehrer2008scalable}. Unlike FM restructuring, these methods do not have a provable global bound on the compressed size. Karande, Chellapilla, and Andersen build on this compression scheme and show that selected web-graph algorithms can be executed directly on the compressed graph \cite{karande2009speeding}. Similar to FM restructuring, the algorithmic scope of this approach is narrow.

WebGraph is a framework for queryable graph compression that retains the adjacency-list view of a graph \cite{boldi2004webgraph}. It compresses sorted adjacency lists by exploiting locality and similarity among nearby lists. It first rewrites each adjacency list as a list of gaps, so consecutive or nearby neighbors become small integers. It then uses copy-list-based compression: a node may refer to a previous adjacency list within a fixed window, and the current list is described by a bit string indicating which neighbors of the reference list also appear in the current list, together with the neighbors not present in the reference list. The copy pattern is stored as runs of copy and non-copy blocks rather than as raw bits, and consecutive extra nodes are further grouped into intervals. The residual neighbors are stored using gaps and compact integer codes. In addition to this layered compression scheme, WebGraph maintains auxiliary data structures to support efficient access to compressed adjacency lists. In particular, it provides lazy iterators that enumerate neighbors directly from the compressed form without decoding the full list.

The underlying objects compressed by WebGraph are adjacency lists. DCC changes the underlying combinatorial representation of graphs rather than refining the encoding of adjacency lists. We also note that the compression scheme of WebGraph can be applied to any family of sets over a common universe; therefore, it can also be used to compress DCC representations. WebGraph is an effective framework for graph compression in practice. Accordingly, we compare the performance of DCC applications with WebGraph-based implementations in \hyref{Section}{subsec:evals_wg}.

\subsection{Future Directions for Representation Construction}
\label{subsec:future_cons}

Unlike many graph encodings that do not directly support algorithmic execution, DCC representations can be treated as first-class objects, much like adjacency lists or matrices, on which algorithms operate directly. This suggests that clique-cover-based representations could serve as a complementary storage format alongside formats such as CSR and Matrix Market. Faster construction algorithms could make such a format more practical for integration into existing graph repositories such as the SuiteSparse Matrix collection~\cite{davis2011university} and the Network Data Repository~\cite{nr}, or motivate the development of repositories dedicated to such representations.

Our evaluation of the construction algorithms shows that enforcing succinctness produces better covers, but at a high computational cost. Although we designed several algorithms for constructing succinct covers and made them efficient for moderately dense graphs, in the worst case these algorithms still have runtime quadratic in the number of edges. Reducing this cost is an important direction for large graphs.

One possible approach to addressing this cost is to design parallel construction algorithms. For this, one may exploit the edge-clique-graph viewpoint in a space-efficient manner and design new constructions by drawing on parallel graph coloring literature.

A related direction is to support dynamic updates to clique-cover-based representations. This is particularly relevant for evolving graphs with small perturbations, where reconstructing a representation from scratch may become prohibitively expensive. In this setting, admissibility structures are natural primitives for dynamic maintenance algorithms.

\subsection{Future Directions for Algorithmic Applications}
\label{subsec:future_apps}

Although streaming algorithms are often motivated by dynamically generated data, they can also be useful for executing algorithms on static data in memory-scarce settings. Unfortunately, many fundamental graph problems do not admit efficient streaming solutions. Several of the problems considered here, including BFS, DFS, and MIS, are known to be difficult in streaming settings. This suggests that, in memory-scarce settings, DCC-based algorithms can help solve other problems with known streaming lower bounds, such as maximum matching \cite{kapralov2021space}.

\cite{ullah2025weighted} show that streaming support can be integrated into parallel computation, combining the benefits of small-space computation with parallel speedups. However, the limitations of sequential streaming settings persist in this integration as well. DCC-based design may help address some of these limitations for computation on static graphs. Hence, designing a new class of representation-aware, space-efficient parallel algorithms would complement work in streaming settings. 

A particularly appealing direction is to extend DCC-style algorithm design to weighted graphs. This would require designing DCC-style representations for weighted graphs. A straightforward approach is to partition the edges into thresholded weight classes and maintain one DCC for each class. Algorithm execution could then operate on the resulting collection with additional bookkeeping. Using this representation, the ideas underlying our DCC-based BFS can be extended to compute approximate shortest paths in weighted graphs, a class of problems with strong streaming lower bounds \cite{feigenbaum2009graph}.

\end{document}